\documentclass[11pt]{article}
\usepackage{times}

\usepackage{thm-restate}
\usepackage{subcaption}
\usepackage{caption}
\usepackage[paperwidth=199.8mm,paperheight=297mm,centering,hmargin=20mm,vmargin=2cm]{geometry}

\usepackage[dvipsnames]{xcolor}
\usepackage{framed}
\definecolor{shadecolor}{rgb}{0.9,0.9,0.9}
\usepackage{mathtools}
\usepackage{amsmath}
\usepackage[shortlabels]{enumitem}
\usepackage[titletoc,title]{appendix}
\usepackage{graphicx,epic,eepic,epsfig,amsmath,latexsym,amssymb,verbatim,color}
\usepackage{dsfont}

\usepackage{float}
\usepackage{tikz}
\usepackage[strict]{changepage}
\usepackage{hyperref}
\hypersetup{colorlinks=true,citecolor=blue,linkcolor=blue,filecolor=blue,urlcolor=blue,breaklinks=true}

\usepackage[marginal]{footmisc}
\usepackage{url}
\usepackage{theorem}
\newtheorem{definition}{Definition}
\newtheorem{proposition}[definition]{Proposition}
\newtheorem{lemma}[definition]{Lemma}

\newtheorem{theorem}[definition]{Theorem}
\newtheorem{corollary}[definition]{Corollary}

\usepackage{colortbl}
\usepackage{pifont}
\newcommand{\cmark}{\ding{51}}%
\definecolor{Gray}{gray}{0.92}
\definecolor{Gray2}{gray}{0.75}
\definecolor{maroon}{cmyk}{0,0.87,0.68,0.32}
\usepackage{booktabs} 

\def\squareforqed{\hbox{\rlap{$\sqcap$}$\sqcup$}}
\def\qed{\ifmmode\squareforqed\else{\unskip\nobreak\hfil
\penalty50\hskip1em\null\nobreak\hfil\squareforqed
\parfillskip=0pt\finalhyphendemerits=0\endgraf}\fi}
\def\endenv{\ifmmode\;\else{\unskip\nobreak\hfil
\penalty50\hskip1em\null\nobreak\hfil\;
\parfillskip=0pt\finalhyphendemerits=0\endgraf}\fi}
\newenvironment{proof}{\noindent \textbf{{Proof~} }}{\hfill $\blacksquare$}

\newcounter{remark}
\newenvironment{remark}[1][]{\refstepcounter{remark}\par\medskip\noindent%
\textbf{Remark~\theremark #1 } }{\medskip}

\newcounter{example}

\mathchardef\ordinarycolon\mathcode`\:
\mathcode`\:=\string"8000
\def\vcentcolon{\mathrel{\mathop\ordinarycolon}}
\begingroup \catcode`\:=\active
  \lowercase{\endgroup
  \let :\vcentcolon
  }

\usepackage{cleveref}
\usepackage{graphicx}
\usepackage{xcolor}

\RequirePackage[framemethod=default]{mdframed}
\newmdenv[skipabove=7pt,
skipbelow=7pt,
backgroundcolor=darkblue!15,
innerleftmargin=5pt,
innerrightmargin=5pt,
innertopmargin=5pt,
leftmargin=0cm,
rightmargin=0cm,
innerbottommargin=5pt,
linewidth=1pt]{tBox}

\newmdenv[skipabove=7pt,
skipbelow=7pt,
backgroundcolor=blue2!25,
innerleftmargin=5pt,
innerrightmargin=5pt,
innertopmargin=5pt,
leftmargin=0cm,
rightmargin=0cm,
innerbottommargin=5pt,
linewidth=1pt]{dBox}
\newmdenv[skipabove=7pt,
skipbelow=7pt,
backgroundcolor=darkkblue!15,
innerleftmargin=5pt,
innerrightmargin=5pt,
innertopmargin=5pt,
leftmargin=0cm,
rightmargin=0cm,
innerbottommargin=5pt,
linewidth=1pt]{sBox}
\definecolor{darkblue}{RGB}{0,76,156}
\definecolor{darkkblue}{RGB}{0,0,153}
\definecolor{blue2}{RGB}{102,178,255}
\definecolor{darkred}{RGB}{195,0,0}

\newcommand{\nc}{\newcommand}
\nc{\rnc}{\renewcommand}
\nc{\beg}{\begin{equation}}
\nc{\eeq}{{\end{equation}}}
\nc{\beqa}{\begin{eqnarray}}
\nc{\eeqa}{\end{eqnarray}}
\nc{\lbar}[1]{\overline{#1}}
\nc{\bra}[1]{\langle#1|}
\nc{\ket}[1]{|#1\rangle}
\nc{\ketbra}[2]{|#1\rangle\!\langle#2|}
\nc{\braket}[2]{\langle#1|#2\rangle}

\nc{\proj}[1]{| #1\rangle\!\langle #1 |}
\nc{\avg}[1]{\langle#1\rangle}
\nc{\Rank}{\operatorname{Rank}}
\nc{\smfrac}[2]{\mbox{$\frac{#1}{#2}$}}
\nc{\tr}{\operatorname{Tr}}
\nc{\ox}{\otimes}
\nc{\dg}{\dagger}
\nc{\dn}{\downarrow}
\nc{\cA}{{\cal A}}
\nc{\cB}{{\cal B}}
\nc{\cC}{{\cal C}}
\nc{\cD}{{\cal D}}
\nc{\cE}{{\cal E}}
\nc{\cF}{{\cal F}}
\nc{\cG}{{\cal G}}
\nc{\cH}{{\cal H}}
\nc{\cI}{{\cal I}}
\nc{\cJ}{{\cal J}}
\nc{\cK}{{\cal K}}
\nc{\cL}{{\cal L}}
\nc{\cM}{{\cal M}}
\nc{\cN}{{\cal N}}
\nc{\cO}{{\cal O}}
\nc{\cP}{{\cal P}}
\nc{\cQ}{{\cal Q}}
\nc{\cR}{{\cal R}}
\nc{\cS}{{\cal S}}
\nc{\cT}{{\cal T}}
\nc{\cV}{{\cal V}}
\nc{\cX}{{\cal X}}
\nc{\cY}{{\cal Y}}
\nc{\cZ}{{\cal Z}}
\nc{\cW}{{\cal W}}
\nc{\csupp}{{\operatorname{csupp}}}
\nc{\qsupp}{{\operatorname{qsupp}}}
\nc{\var}{{\operatorname{var}}}
\nc{\rar}{\rightarrow}
\nc{\lrar}{\longrightarrow}
\nc{\polylog}{{\operatorname{polylog}}}
\nc{\1}{{\mathds{1}}}
\nc{\wt}{{\operatorname{wt}}}
\nc{\av}[1]{{\left\langle {#1} \right\rangle}}
\nc{\supp}{{\operatorname{supp}}}

\def\a{\alpha}
\def\b{\beta}
\def\di{\diamondsuit}

\def\ve{\varepsilon}

\def\x{\xi}

\def\o{\omega}

\def\U{\Upsilon}

\nc{\RR}{{{\mathbb R}}}
\nc{\CC}{{{\mathbb C}}}
\nc{\FF}{{{\mathbb F}}}
\nc{\NN}{{{\mathbb N}}}
\nc{\ZZ}{{{\mathbb Z}}}
\nc{\PP}{{{\mathbb P}}}
\nc{\QQ}{{{\mathbb Q}}}
\nc{\UU}{{{\mathbb U}}}
\nc{\EE}{{{\mathbb E}}}
\nc{\DD}{{{\mathbb D}}}
\nc{\id}{{\operatorname{id}}}

\nc{\CHSH}{{\operatorname{CHSH}}}

\nc{\be}{\begin{equation}}
\nc{\ee}{{\end{equation}}}
\nc{\bea}{\begin{eqnarray}}
\nc{\eea}{\end{eqnarray}}
\nc{\<}{\langle}
\rnc{\>}{\rangle}
\nc{\rU}{\mbox{U}}

\nc{\ob}[1]{#1}

\nc{\SEP}{{\text{\rm SEP}}}
\nc{\NS}{{\text{\rm NS}}}
\nc{\LOCC}{{\text{\rm LOCC}}}
\nc{\PPT}{{\textbf{\rm PPT}}}
\nc{\EXT}{{\text{\rm EXT}}}
\nc{\Sym}{{\operatorname{Sym}}}


\nc{\ERLO}{{E_{\text{r,LO}}}}
\nc{\ERLOCC}{{E_{\text{r,LOCC}}}}
\nc{\ERPPT}{{E_{\text{r,PPT}}}}
\nc{\ERLOCCinfty}{{E^{\infty}_{\text{r,LOCC}}}}
\nc{\Aram}{{\operatorname{\sf A}}}

\usepackage{tikz}
\RequirePackage{doi}
\usepackage{hyperref}
\hypersetup{colorlinks=true,citecolor=blue,linkcolor=blue,filecolor=blue,urlcolor=blue,breaklinks=true}

\makeatletter
\def\grd@save@target#1{%
  \def\grd@target{#1}}
\def\grd@save@start#1{%
  \def\grd@start{#1}}
\tikzset{
  grid with coordinates/.style={
    to path={%
      \pgfextra{%
        \edef\grd@@target{(\tikztotarget)}%
        \tikz@scan@one@point\grd@save@target\grd@@target\relax
        \edef\grd@@start{(\tikztostart)}%
        \tikz@scan@one@point\grd@save@start\grd@@start\relax
        \draw[minor help lines,magenta] (\tikztostart) grid (\tikztotarget);
        \draw[major help lines] (\tikztostart) grid (\tikztotarget);
        \grd@start
        \pgfmathsetmacro{\grd@xa}{\the\pgf@x/1cm}
        \pgfmathsetmacro{\grd@ya}{\the\pgf@y/1cm}
        \grd@target
        \pgfmathsetmacro{\grd@xb}{\the\pgf@x/1cm}
        \pgfmathsetmacro{\grd@yb}{\the\pgf@y/1cm}
        \pgfmathsetmacro{\grd@xc}{\grd@xa + \pgfkeysvalueof{/tikz/grid with coordinates/major step}}
        \pgfmathsetmacro{\grd@yc}{\grd@ya + \pgfkeysvalueof{/tikz/grid with coordinates/major step}}
        \foreach \x in {\grd@xa,\grd@xc,...,\grd@xb}
        \node[anchor=north] at (\x,\grd@ya) {\pgfmathprintnumber{\x}};
        \foreach \y in {\grd@ya,\grd@yc,...,\grd@yb}
        \node[anchor=east] at (\grd@xa,\y) {\pgfmathprintnumber{\y}};
      }
    }
  },
  minor help lines/.style={
    help lines,
    step=\pgfkeysvalueof{/tikz/grid with coordinates/minor step}
  },
  major help lines/.style={
    help lines,
    line width=\pgfkeysvalueof{/tikz/grid with coordinates/major line width},
    step=\pgfkeysvalueof{/tikz/grid with coordinates/major step}
  },
  grid with coordinates/.cd,
  minor step/.initial=.2,
  major step/.initial=1,
  major line width/.initial=2pt,
}
\makeatother

\usepackage{thmtools}
\usepackage{thm-restate}
\usepackage{etoolbox}
\makeatletter
\def\problem@s{}
\newcounter{problems@cnt}

\newcommand{\allproblems}{\problem@s}
\makeatother

\usepackage{stmaryrd}
\usepackage{soul}
\usepackage{tcolorbox}
\usepackage{mathrsfs}
\usepackage{relsize}
\usetikzlibrary{shapes,arrows,spy,positioning,snakes}
\usepackage{fancybox, graphicx}
\usepackage{etoolbox}
\appto\appendix{\addtocontents{toc}{\protect\setcounter{tocdepth}{1}}}
\appto\listoffigures{\addtocontents{lof}{\protect\setcounter{tocdepth}{1}}}
\appto\listoftables{\addtocontents{lot}{\protect\setcounter{tocdepth}{1}}}

\usepackage{multirow}
\usepackage{makecell}
\usepackage{diagbox}

\definecolor{colorone}{rgb}{1,0.36,0.03}
\definecolor{colortwo}{rgb}{0.54,0.71,0.03}
\definecolor{colorthree}{rgb}{0.01,0.51,0.93}
\definecolor{colorfour}{rgb}{0.47,0.26,0.58}

\makeatletter
\newcommand*\rel@kern[1]{\kern#1\dimexpr\macc@kerna}
\newcommand*\widebar[1]{%
  \begingroup
  \def\mathaccent##1##2{%
    \rel@kern{0.8}%
    \overline{\rel@kern{-0.8}\macc@nucleus\rel@kern{0.2}}%
    \rel@kern{-0.2}%
  }%
  \macc@depth\@ne
  \let\math@bgroup\@empty \let\math@egroup\macc@set@skewchar
  \mathsurround\z@ \frozen@everymath{\mathgroup\macc@group\relax}%
  \macc@set@skewchar\relax
  \let\mathaccentV\macc@nested@a
  \macc@nested@a\relax111{#1}%
  \endgroup
}
\makeatother

\nc{\supre}{\text{supremum} \ }
\nc{\sdp}{\text{sdp}}

\nc{\Renyi}{\text{R\'{e}nyi} }
\nc{\bcV}{\boldsymbol \cV}
\nc{\bD}{\boldsymbol D}
\nc{\bR}{\boldsymbol R}
\nc{\sfT}{\mathsf T}
\nc{\EBset}{\boldsymbol{\cV}_{\Sigma}}

\nc{\CP}{\text{\rm CP}}
\nc{\herm}{\text{\rm Herm}}
\nc{\OPT}{\text{\rm OPT}}
\nc{\plsdagger}[1]{#1^{\mathsf H}}
\nc{\DPScone}{\mathcal{DPS}}
\nc{\SEPcone}{\mathcal{SEP}}
\nc{\BP}{\mathcal{BP}}
\nc{\cppp}{\text{cppp}}
\nc{\Pos}{\text{Pos}}
\nc{\bi}{\text{\rm bi}}

\nc{\mana}{\mathbb{M}}
\nc{\mA}{\mathbb{A}}

\usepackage{adjustbox}
\newcommand{\ssum}{\mathord{\adjustbox{valign=C,totalheight=.7\baselineskip}{$\sum$}}}

\nc{\dbp}[1]{\left\llbracket#1 \right\rrbracket_{\mathsf P}}
\nc{\dbe}[1]{\left\llbracket#1 \right\rrbracket_{\mathsf E}}
\nc{\dbh}[1]{\left\llbracket#1 \right\rrbracket_{\mathsf H}}
\nc{\dbl}[1]{\left\llbracket#1 \right\rrbracket_{\mathsf L}}

\nc{\dbpbig}[1]{\big\llbracket#1 \big\rrbracket_{\mathsf P}}
\nc{\dbebig}[1]{\big\llbracket#1 \big\rrbracket_{\mathsf E}}
\nc{\dbhbig}[1]{\big\llbracket#1 \big\rrbracket_{\mathsf H}}
\nc{\dblbig}[1]{\big\llbracket#1 \big\rrbracket_{\mathsf L}}

\nc{\dbpBig}[1]{\Big\llbracket#1 \Big\rrbracket_{\mathsf P}}
\nc{\dbeBig}[1]{\Big\llbracket#1 \Big\rrbracket_{\mathsf E}}
\nc{\dbhBig}[1]{\Big\llbracket#1 \Big\rrbracket_{\mathsf H}}
\nc{\dblBig}[1]{\Big\llbracket#1 \Big\rrbracket_{\mathsf L}}

\nc{\dbpbigg}[1]{\bigg\llbracket#1 \bigg\rrbracket_{\mathsf P}}
\nc{\dbebigg}[1]{\bigg\llbracket#1 \bigg\rrbracket_{\mathsf E}}
\nc{\dbhbigg}[1]{\bigg\llbracket#1 \bigg\rrbracket_{\mathsf H}}
\nc{\dblbigg}[1]{\bigg\llbracket#1 \bigg\rrbracket_{\mathsf L}}

\nc{\dbpBigg}[1]{\Bigg\llbracket#1 \Bigg\rrbracket_{\mathsf P}}
\nc{\dbeBigg}[1]{\Bigg\llbracket#1 \Bigg\rrbracket_{\mathsf E}}
\nc{\dbhBigg}[1]{\Bigg\llbracket#1 \Bigg\rrbracket_{\mathsf H}}
\nc{\dblBigg}[1]{\Bigg\llbracket#1 \Bigg\rrbracket_{\mathsf L}}

\nc{\bu}{\boldsymbol u}
\nc{\by}{\boldsymbol y}
\nc{\bv}{\boldsymbol v}
\nc{\btheta}{\boldsymbol \theta}
\nc{\SO}{\text{\rm SO}}
\nc{\STAB}{\text{\rm STAB}}
\nc{\CSPO}{\text{\rm CSPO}}
\nc{\CPWP}{\text{\rm CPWP}}

\usepackage{accents}

\begin{document}

\title{\textbf{Geometric \Renyi Divergence and its Applications\\ in Quantum Channel Capacities}}

\author{\normalsize Kun Fang~\thanks{Department of Applied Mathematics and Theoretical Physics, University of Cambridge, UK. \ \emph{kf383@cam.ac.uk}} \and \normalsize Hamza Fawzi~\thanks{Department of Applied Mathematics and Theoretical Physics, University of Cambridge, UK. \ \emph{h.fawzi@damtp.cam.ac.uk}}}
     
\date{\today}
\maketitle

\begin{abstract}

We present a systematic study of the \emph{geometric \Renyi divergence} (GRD), also known as the maximal \Renyi divergence, from the point of view of quantum information theory. We show that this divergence, together with its extension to channels, has many appealing structural properties. For example we prove a chain rule inequality that immediately implies the ``amortization collapse'' for the geometric \Renyi divergence, addressing an open question by Berta et al. [arXiv:1808.01498, Equation~(55)] in the area of quantum channel discrimination.
As applications, we explore various channel capacity problems and construct new channel information measures based on the geometric \Renyi divergence, sharpening the previously best-known bounds based on the max-relative entropy 
while still keeping the new bounds single-letter efficiently computable. A plethora of examples are investigated and the improvements are evident for almost all cases.

\end{abstract}

\newpage
{
  \hypersetup{linkcolor=black}
  \tableofcontents
}

\newpage

\section{Introduction}

In information theory, an imperfect communication link between a sender and a receiver is modeled as a noisy channel. The \emph{capacity} of such a channel is defined as the maximum rate at which information can be transmitted through the channel reliably. This quantity establishes the ultimate boundary between communication rates that are achievable in principle by a channel coding scheme and those that are not. A remarkable result by Shannon~\cite{Shannon1948} states that the capacity of a classical channel is equal to the mutual information of this channel, thus completely settling this capacity problem by a single-letter formula. 
Quantum information theory generalizes the classical theory, incorporating quantum phenomena like entanglement that have the potential to enhance communication capabilities. Notably, the theory of quantum channels is much richer but less well-understood than that of its classical counterpart. For example, quantum channels have several distinct capacities, depending on what one is trying to use them for, and what additional resources are brought into play. These mainly include the classical capacity, private capacity and quantum capacity, with or without the resource assistance such as classical communication and prior shared entanglement. The only solved case for general quantum channels is the entanglement-assisted classical capacity, which is given by the quantum mutual information of the channel~\cite{Bennett2002} and is believed as the most natural analog to Shannon's formula. The capacities in other communication scenarios are still under investigation. Some recent works (e.g~\cite{Wang2019channelmagic,seddon2019quantifying}) also extend the use of quantum channels to generate quantum resources such as magic state, a key ingredient for fault-tolerant quantum computation. The capability of a channel to generate such resource is thus characterized by its corresponding generation capacity.

In general, the difficulty in finding exact expressions for the channel capacities has led to a wide body of works to construct achievable (lower) and converse (upper) bounds. We will defer the detailed discussion of these bounds to the following individual sections. There are several important and highly desirable criteria that one would like from any bound on channel capacities. Specifically, one is generally interested in bounds that are:
\begin{itemize}
    \item \textbf{single-letter;} i.e., the bound depends only on a single use of the channel. Several well-established channel coding theorems state that the quantum channel capacity is equal to its corresponding regularized information measure (e.g. the quantum capacity of a channel is equal to its regularized coherent information~\cite{Lloyd1997,Shor2002a,Devetak2005a}). However, these regularized formulas are simply impossible to evaluate in general using finite computational resources, thus not informative enough in spite of being able to write down as formal mathematical expressions. A single-letter formula could be more mathematically tractable and provides a possibility of its evaluation in practice.
    \item \textbf{computable;} i.e., the formula can be explicitly computed for a given quantum channel. This is essentially required by the nature of capacity that quantifies the ``capability'' of a channel to transmit information or generate resource.  An (efficiently) computable converse bound can help to assess the performance of a channel coding scheme in practice and can also be used as a benchmark for the succeeding research. Note that a single-letter formula is not sufficient to guarantee its computability.  An example can be given by the quantum squashed entanglement, which admits a single-letter formula but whose computational complexity is proved to be NP-hard~\cite{Huang2014}.  
    \item \textbf{general;} i.e., the bounds holds for arbitrary quantum channels without requiring any additional assumption on their structure, such as degradability or covariance. There are bounds working well for specific quantum channels with a certain structure or sufficient symmetry. However, the noise in practice can be much more versatile than expected and more importantly does not necessarily possess the symmetry we need. A general bound is definitely preferable for the sake of practical interest.
    \item \textbf{strong converse;} i.e., if the communication rate exceeds this bound, then the success probability or the fidelity of transmission of any channel coding scheme converges to zero as the number of channel uses increases. In contrast, the (weak) converse bound only requires the convergence to a scalar not equal to one. Thus a strong converse bound is conceptually more informative than a weak converse bound, leaving no room for the tradeoff between the communication rate and its success probability or fidelity. If a strong converse bound is tight for a channel, then we call this channel admits the strong converse property. This property is known to hold for all memoryless channels in the classical information theory~\cite{Wolfowitz1978} while it remains open in the quantum regime in general (except for the entanglement-assisted classical capacity~\cite{Bennett2014}). A strong converse bound may witness the strong converse property of certain quantum channels (e.g.~\cite{tomamichel2017strong,Wilde2016c}), further sharpening our understanding of the quantum theory.
\end{itemize}


\subsection{Main contributions}

In this paper we propose new bounds on quantum channel capacities that satisfy all criteria mentioned above and that improve on previously known bounds. The main novelty of this work is that our bounds all rely on the so-called \emph{geometric \Renyi divergence}.
We establish several remarkable properties for this \Renyi divergence that are particularly useful in quantum information theory and show how they can be used to provide bounds on quantum channel capacities.

\paragraph{Geometric \Renyi divergence}
The \emph{geometric \Renyi divergence} (GRD), is defined as~\cite{matsumoto2015new}
\begin{align*}
  \widehat D_\a(\rho\|\sigma)\equiv\frac{1}{\a-1} \log \tr \left[\sigma^{\frac12} \left(\sigma^{-\frac12}\rho \sigma^{-\frac12}\right)^{\a}\sigma^{\frac12}\right], \quad \alpha \in (1,2].
\end{align*}
The quantity $\widehat D_\a$ is also known as the maximal-\Renyi divergence \cite{matsumoto2015new} as it can be shown to be the maximal divergence among all quantum \Renyi divergences satisfying the data-processing inequality.
Different from the widely studied Petz \Renyi divergence~\cite{petz1986quasi}  or sandwiched \Renyi divergence \cite{muller2013quantum,Wilde2014a}, the GRD converges to the Belavkin-Staszewski relative entropy~\cite{belavkin1982c} when $\a \rightarrow 1$. 
The geometric \Renyi divergence of two channels $\cN$ and $\cM$ is defined in the usual way as:
\begin{align*}
\widehat D_{\a}(\cN\|\cM) \equiv \max_{\rho_A\in \cS(A)} \widehat D_{\a}(\cN_{A'\to B} (\phi_{AA'})\|\cM_{A'\to B}(\phi_{AA'})),
\end{align*}
where $\cS(A)$ is the set of quantum states and $\phi_{AA'}$ is a purification of $\rho_A$.
We establish the following key properties of GRD which hold for any $\alpha \in (1,2]$:
\begin{enumerate}
  \item It lies between the Umegaki relative entropy and the max-relative entropy for all $\alpha \in (1,2]$,
  \begin{align*}
    D(\rho\|\sigma) \leq \widehat D_{\a}(\rho\|\sigma) \leq D_{\max}(\rho\|\sigma).
  \end{align*}
  \item Its channel divergence admits a closed-form expression,
  \begin{align*}
  \widehat D_\a(\cN_{A\to B}\|\cM_{A\to B}) = \frac{1}{\a-1} \log \left\|\tr_B \left[J_{\cM}^{\frac12} \left(J_{\cM}^{-\frac12}J_{\cN} J_{\cM}^{-\frac12}\right)^{\a}J_{\cM}^{\frac12}\right]\right\|_{\infty},
\end{align*}
where $J_{\cN}$ and $J_{\cM}$ are the corresponding Choi matrices of $\cN$ and $\cM$ respectively.
  \item Its channel divergence is additive under tensor product of channels,
  \begin{align*}
    \widehat D_{\a}(\cN_1\ox \cN_2\|\cM_1\ox \cM_2) = \widehat D_{\a}(\cN_1\|\cM_1) + \widehat D_{\a}(\cN_2\|\cM_2).
  \end{align*}
  \item Its channel divergence is sub-additive under channel composition,
  \begin{align*}
    \widehat D_{\a}(\cN_2\circ \cN_1\|\cM_2\circ \cM_1) \leq \widehat D_{\a}(\cN_1\|\cM_1) + \widehat D_{\a}(\cN_2\|\cM_2).
  \end{align*}  
  \item It satisfies the chain rule for any quantum states $\rho_{RA}$, $\sigma_{RA}$ and quantum channels $\cN$ and $\cM$,
  \begin{align*}
    \widehat{D}_{\alpha}(\cN_{A\to B}(\rho_{RA}) \| \cM_{A \to B}(\sigma_{RA})) &\leq \widehat{D}_{\alpha}( \rho_{RA} \| \sigma_{RA}) + \widehat{D}_{\alpha}(\cN \| \cM).
  \end{align*}
\end{enumerate}
These properties set a clear difference of GRD with other \Renyi divergences. Of particular importance is the chain rule property, which immediately implies that the ``amortization collapse'' for the geometric \Renyi divergence, addressing an open question from~\cite[Eq.~(55)]{Berta2018} in the area of quantum channel discrimination. Moreover, due to the closed-form expression of the channel divergence and the semidefinite representation of the matrix geometric means~\cite{fawzi2017lieb}, any optimization $\min_{\cM \in \bcV} \widehat D_{\a}(\cN\|\cM)$ can be computed as a semidefinite program if $\bcV$ is a set of channels characterized by semidefinite conditions.

\paragraph{Applications in quantum channel capacities}
We utilize the geometric \Renyi divergence to study several different channel capacity problems, including (1) unassisted quantum capacity, (2) two-way assisted quantum capacity, (3) two-way assisted quantum capacity of bidirectional quantum channels, (4) unassisted private capacity, (5) two-way assisted private capacity, (6) unassisted classical capacity, (7) magic state generation capacity, as listed in Table~\ref{tab: capacity tasks}. Most existing capacity bounds are based on the max-relative entropy due to its nice properties, such as triangle inequality or semidefinite representations. However, these bounds are expected to be loose as the max-relative entropy stands at the top among the family of quantum divergences. For the bounds based on the Umegaki's relative entropy, they are unavoidably difficult to compute in general due to their minimax optimization formula. In this work, we construct new channel information measures based on the geometric \Renyi divergence, sharpening the previous bounds based on the max-relative entropy in general while still keeping the new bounds single-letter efficiently computable.~\footnote{For unassisted and two-way assisted private capacities, the new bounds are efficiently computable for general qubit channels.}  
A plethora of examples are analyzed in each individual sections and the improvements are evident for almost all cases. 

\setlength\extrarowheight{2pt}
\begin{table}[H]
\centering
\begin{tabular}{c|cl|c|r}
\toprule[2pt]
Tasks & & Capacities & Previous bounds ($D$ or $D_{\max}$) & New bounds ($\widehat D_{\alpha}$)\\
\hline
\multirow{3}{*}{Quantum} & (1) & unassisted $Q $ & $R$~\cite{tomamichel2017strong}, $R_{\max}$~\cite{Wang2017d} & $\widehat R_{\a}$ [Thm.~\ref{thm: main result quantum unassisted}]\\
& (2) & two-way $Q^{\leftrightarrow}$ & $R_{\max}$~\cite{Berta2017a} & $\widehat R_{\a,\Theta}$ [Thm.~\ref{thm: main result quantum assisted}]\\
& (3) & two-way $Q^{\bi,\leftrightarrow}$ & $R^\bi_{\max}$~\cite{Bauml2018} & $\widehat R^\bi_{\a,\Theta}$ [Thm.~\ref{thm: main result quantum assisted bidirectional}]\\
\midrule[1pt]
\multirow{2}{*}{Private} & (4) & unassisted $P$ & $E_R$~\cite{Pirandola2015b,Wilde2016c}, $E_{\max}$~\cite{Christandl2016} & $\widehat E_{\a}$ [Thm.~\ref{thm: main result private unassisted}]\\
 & (5) & two-way $P^{\leftrightarrow}$ & $E_{\max}$~\cite{Christandl2016} & $\widehat E_{\a,\Sigma}$ [Thm.~\ref{thm: main result private assisted}]\\
\midrule[1pt]
\multirow{1}{*}{Classical} & (6) & unassisted $C$ & $C_{\beta}$, $C_{\zeta}$~\cite{Wang2016g} & $\widehat \Upsilon_{\a}$ [Thm.~\ref{thm: main result classical}]\\
\midrule[1pt]
\multirow{1}{*}{Magic} & (7) & adaptive $C_{\psi}$ & $\theta_{\max}$~\cite{Wang2019channelmagic} & $\widehat \theta_{\a}$ [Thm.~\ref{thm: main result magic}]\\
\bottomrule[2pt]  
\end{tabular}
\caption{\small Quantum information tasks studied in this paper, and new bounds on capacities obtained using the geometric \Renyi divergence $\widehat D_{\a}$.
}
\label{tab: capacity tasks}
\end{table}

The significance of this work is at least two-fold. First, from the technical side, we showcase that the geometric \Renyi divergence, which has not been exploited so far in the quantum information literature, is actually quite useful for channel capacity problems. We regard our work as an initial step towards other interesting applications and expect that the technical tools established in this work can also be used in, for example, quantum network theory, quantum cryptography, as the max-relative entropy also appears as the key entropy in these topics. We include another explict application in quantum channel discrimination task in Appendix~\ref{app: Quantum channel discrimination}. Second, our new capacity bounds meet all the aforementioned desirable criteria and improve the previously best-known results in general, making them suitable as new benchmarks for computing the capacities of quantum channels.

\section{Preliminaries}
\label{sec: preliminaries}

A quantum system, denoted by capital letters (e.g., $A$, $B$), is usually modeled by finite-dimensional Hilbert spaces (e.g., $\cH_A$, $\cH_B$). The set of linear operators and the set of positive semidefinite operators on system $A$ are denoted as $\cL(A)$ and $\cP(A)$, respectively. The identity operator on system $A$ is denoted by $\1_A$.
The set of quantum state on system $A$ is denoted as $\cS(A)\equiv \{\rho_A\,|\, \rho_A \geq 0,\,\tr\rho_A = 1\}$. A sub-normalized state is a positive semidefinite operator with trace no greater than one.
For any two Hermitian operators $X$, $Y$, we denote $X \ll Y$ if their supports has the inclusion $\supp(X) \subseteq \supp(Y)$. The trace norm of $X$ is given by $\|X\|_1 \equiv \tr \sqrt{X^\dagger X}$. The operator norm $\|X\|_\infty$ is defined
as the maximum eigenvalue of $\sqrt{X^\dagger X}$.
The set of completely positive (CP) maps from $A$ to $B$ is denoted as $\CP(A:B)$. 
A \emph{quantum channel or quantum operation} $\cN_{A\to B}$ is a completely positive and trace-preserving linear map from $\cL(A)$ to $\cL(B)$. A \emph{subchannel or suboperation} $\cM_{A\to B}$ is a completely positive and trace non-increasing linear map from $\cL(A)$ to $\cL(B)$. Let $\ket{\Phi}_{A'A} = \sum_{i} \ket{i}_{A'}\ket{i}_A$ be the unnormalized maximally entangled state. Then the \emph{Choi matrix} of a linear map $\cE_{A'\to B}$ is defined as $J_{AB}^{\cE} = (\cI_{A}\ox\cE_{A'\to B})(\ket{\Phi}\bra{\Phi}_{A'A})$. We will drop the identity map $\cI$ and identity operator $\1$ if they are clear from the context. The logarithms in this work are taken in the base two.

\subsubsection*{Notation for semidefinite representation}

For the simplicity of presenting a semidefinite program, we will introduce some new notations to denote semidefinite conditions. Denote the positive semidefinite condition $X \geq 0$ as $\dbp{X}$, the equality condition $X = 0$ as $\dbe{X}$, the Hermitian condition $X = X^\dagger$ as $\dbh{X}$ and the linear condition $\dbl{X}$ if $X$ is certain linear operator. We also denote the Hermitian part of $X$ as $\plsdagger{X} \equiv X + X^\dagger$.

\subsubsection*{Quantum divergences}

A functional $\bD: \cS \times \cP \to \mathbb R$ is a \emph{generalized divergence} if it satisfies the data-processing inequality 
\begin{align}
\bD(\cN(\rho)\|\cN(\sigma)) \leq \bD(\rho\|\sigma).
\end{align} 
The \emph{sandwiched \Renyi divergence} is defined as~\cite{muller2013quantum,Wilde2014a}
\begin{align}
     \widetilde D_\a(\rho\|\sigma) \equiv \frac{1}{\a-1} \log \tr \left[\sigma^{\frac{1-\a}{2\a}} \rho \sigma^{\frac{1-\a}{2\a}}\right]^\a,
\end{align} 
which is the smallest quantum \Renyi divergence that satisfies a data-processing inequality, and has been widely used to prove the strong converse property (e.g.~\cite{Wilde2014a,tomamichel2017strong}).
In particular, the sandwiched \Renyi divergence is non-decreasing in terms of $\a$, interpolating the \emph{Umegaki relative entropy} $D(\rho\|\sigma) \equiv \tr [\rho\, (\log \rho - \log \sigma)]$~\cite{Umegaki1962} and the \emph{max-relative entropy} $D_{\max}(\rho\|\sigma) \equiv \min\{\log t\,|\,\rho \leq t \sigma\}$~\cite{Renner2005,Datta2009} as its two extreme cases,
    \begin{align}\label{eq: sand renyi and relative entropy}
        D(\rho\|\sigma) = \lim_{\a \to 1} \widetilde D_\a(\rho\|\sigma) \leq \widetilde D_\a(\rho\|\sigma) \leq \lim_{\a \to \infty} \widetilde D_{\a}(\rho\|\sigma) = D_{\max}(\rho\|\sigma).
    \end{align}
Another commonly used quantum variant is the \emph{Petz \Renyi divergence}~\cite{petz1986quasi} defined as
\begin{align}
    \widebar D_\a(\rho\|\sigma) \equiv \frac{1}{\a - 1} \log \tr \left[\rho^\a \sigma^{1-\a}\right],
\end{align}
which attains operational significance in the quantum generalization of Hoeffding’s and Chernoff’s bound on the success probability in binary hypothesis testing~\cite{Nussbaum2009,Audenaert2007}. At the limit of $\a\to 0$, the Petz \Renyi divergence converges to the min-relative entropy~\cite{Datta2009},
\begin{align}
  \lim_{\a \to 0} \widebar D_{\a}(\rho\|\sigma) = -\log \tr \Pi_\rho \sigma \equiv D_{\min}(\rho\|\sigma), 
\end{align}
where $\Pi_\rho$ is the projector on the support of $\rho$.
Due to the the Lieb-Thirring trace inequality~\cite{Lieb1991}, it holds for all $\a \in (1,\infty)$ that
\begin{align}\label{eq: sandwiched and petz relation}
    \widetilde D_\a(\rho\|\sigma) \leq \widebar D_\a(\rho\|\sigma).
\end{align}
Both $\widetilde D_\a$ and $\widebar D_\a$ recover the Umegaki relative entropy $D$ at the limit of $\a \to 1$. But they are not easy to compute or to optimize over in general. 

For any generalized divergence $\bD$, the generalized channel divergence between quantum channel $\cN_{A'\to B}$ and subchannel $\cM_{A'\to B}$ is defined as~\cite{Leditzky2018}
\begin{align}
\bD(\cN\|\cM) \equiv \max_{\rho_A\in \cS(A)} \bD(\cN_{A'\to B} (\phi_{AA'})\|\cM_{A'\to B}(\phi_{AA'})),
\end{align}
where $\phi_{AA'}$ is a purification of $\rho_A$.
In particular, the max-relative channel divergence is independent of the input state~\cite[Lemma 12]{Berta2018}, 
\begin{align}\label{eq: Dmax channel divergence}
  D_{\max}(\cN\|\cM) = D_{\max}(J_{\cN}\|J_{\cM}),
\end{align}
where $J_{\cN}$ and $J_{\cM}$ are the corresponding Choi matrices of $\cN$ and $\cM$ respectively.

\section{Geometric \Renyi divergence}
\label{sec: Technical tools}

In this section, we investigate the geometric \Renyi divergence and its corresponding channel divergence. Our main contribution in this section is to prove several crucial properties of these divergences which are summarized in Theorem~\ref{thm: summary of properties}.  These properties will be extensively used in the following sections.

\subsection{Definitions and key properties}

\begin{definition}[\cite{matsumoto2015new}]
Let $\rho$ be a quantum state and $\sigma$ be a sub-normalized state with $\rho \ll \sigma$ and $\a \in (1,2]$, their geometric R\'{e}nyi divergence~\footnote{It is also called the maximal \Renyi divergence (see e.g.~\cite[Section 4.2.3]{Tomamichel2015b}) as it is the largest possible quantum \Renyi divergence satisfying the data-processing inequality. We here use the term ``geometric''  as its closed-form expression is given by the matrix geometric means, depicting the nature of this quantity.} is defined as
\begin{align}
    \widehat D_\a(\rho\|\sigma)\equiv\frac{1}{\a-1} \log \tr G_{1-\a}(\rho,\sigma),
\end{align}
where $G_\a(X,Y)$ is the weighted matrix geometric mean defined as
\begin{align}
  G_\a(X,Y)  \equiv X^{\frac12}\left(X^{-\frac12}Y X^{-\frac12}\right)^\a X^{\frac12}.
\end{align}
\end{definition}

\begin{remark}
  The geometric \Renyi divergence converges to the \emph{Belavkin-Staszewski relative entropy}~\cite{belavkin1982c},
\begin{align}\label{eq: geo and BS}
  \lim_{\a \to 1} \widehat D_\a(\rho\|\sigma) = \widehat D(\rho\|\sigma) \equiv \tr \rho \log \big[\rho^{1/2}\sigma^{-1}\rho^{1/2} \big].
\end{align}
Note that $D(\rho\|\sigma) \leq \widehat D(\rho\|\sigma)$ in general and they coincide for commuting $\rho$ and $\sigma$~\cite{Hiai1991}. Some basic properties such as joint-convexity, data-processing inequality and the continuity of the geometric \Renyi divergence (or more generally, maximal $f$-divergence) of states can be found in~\cite{matsumoto2015new}. Further studies on its reversibility under quantum operations are given in~\cite{hiai2017different,bluhm2019strengthened}.  Moreover, the weighted matrix geometric mean admits a semidefinite representation~\cite{fawzi2017lieb} (see also Lemma~\ref{geometric SDP general lemma} in Appendix~\ref{app: technical lemmas}).
\end{remark}

\begin{definition}
  For any quantum channel $\cN_{A'\to B}$, subchannel $\cM_{A'\to B}$, and $\a \in (1,2]$, their geometric \Renyi channel divergence is defined as
  \begin{align}
\widehat D_{\a}(\cN\|\cM) \equiv \max_{\rho_A\in \cS(A)} \widehat D_{\a}(\cN_{A'\to B} (\phi_{AA'})\|\cM_{A'\to B}(\phi_{AA'})),
\end{align}
where $\phi_{AA'}$ is a purification of $\rho_A$.
\end{definition}

The following Theorem summarizes several crucial properties of the geometric \Renyi divergence and its channel divergence. We present their detailed proofs in the next section.
\begin{theorem}[Main technical results]\label{thm: summary of properties}
The following properties of the geometric \Renyi divergence and its channel divergence hold.
\begin{enumerate}
  \item ({Comparison with $D$ and $D_{\max}$}): For any quantum state $\rho$, sub-normalized quantum state $\sigma$ with $\rho \ll \sigma$ and $\a \in (1,2]$, it holds 
  \begin{equation}
  \label{eq:comparisonDDaDmax}
    D(\rho\|\sigma) \leq \widehat D_{\a}(\rho\|\sigma) \leq D_{\max}(\rho\|\sigma).
  \end{equation}
  \item ({Closed-form expression of the channel divergence}): For any quantum channel $\cN_{A'\to B}$, subchannel $\cM_{A'\to B}$ and $\a \in (1,2]$, the geometric \Renyi channel divergence is given by
  \begin{align}
    \widehat D_\a(\cN\|\cM) = \frac{1}{\a-1}\log \Big\|\tr_B G_{1-\a}(J_{AB}^{\cN},J_{AB}^{\cM})\Big\|_\infty,
  \end{align}
  where $J_{AB}^{\cN}$ and $J_{AB}^{\cM}$ are the corresponding Choi matrices of $\cN$ and $\cM$ respectively.
    Moreover, for the Belavkin-Staszewski channel divergence, its has the closed-form expression:
  \begin{align}
    \widehat D(\cN\|\cM) = \left\|\tr_B\, \left\{ (J^{\cN}_{AB})^{\frac{1}{2}} \log\left[ (J^{\cN}_{AB})^{\frac{1}{2}} (J^{\cM}_{AB})^{-1} (J^{\cN}_{AB})^{\frac{1}{2}} \right]   (J^{\cN}_{AB})^{\frac{1}{2}} \right\}  \right\|_\infty \, .
  \end{align}
  \item ({Additivity under tensor product}):  Let $\cN_1$ and $\cN_2$ be two quantum channels and let $\cM_1$ and $\cM_2$ be two subchannels. Then for any $\a \in (1,2]$, it holds that
  \begin{align}
    \widehat D_\a(\cN_1\ox \cN_2\|\cM_1\ox \cM_2) = \widehat D_\a(\cN_1\|\cM_1) + \widehat D_\a(\cN_2\|\cM_2).
  \end{align}
  \item ({Chain rule}): Let $\rho$ be a quantum state on $\cH_{RA}$, $\sigma$ be a subnormalized state on $\cH_{RA}$ and $\cN_{A\to B}$ be a quantum channel, $\cM_{A\to B}$ be a subchannel and $\alpha \in (1,2]$. Then
\begin{align}
\widehat{D}_{\alpha}(\cN_{A\to B}(\rho_{RA}) \| \cM_{A\to B}(\sigma_{RA})) &\leq \widehat{D}_{\alpha}( \rho_{RA} \| \sigma_{RA}) + \widehat{D}_{\alpha}(\cN \| \cM) \, .
\end{align}
\item ({Sub-additivity under channel composition}): For any quantum channels $\cN^1_{A\to B}$, $\cN^2_{B\to C}$, any subchannels $\cM^1_{A\to B}$, $\cM^2_{B\to C}$ and $\a \in (1,2]$, it holds
  \begin{align}
    \widehat D_{\a}(\cN_2\circ \cN_1\|\cM_2\circ \cM_1) \leq \widehat D_{\a}(\cN_1\|\cM_1) + \widehat D_{\a}(\cN_2\|\cM_2).
  \end{align}
\item (Semidefinite representation): Let $\bcV$ be a set of subchannels from $A$ to $B$ characterized by certain semidefinite conditions. For any quantum channel $\cN_{A\to B}$ and $\alpha(\ell) = 1+2^{-\ell}$ with $\ell \in \NN$, the optimization $\min_{\cM \in \bcV} \widehat D_{\a}(\cN\|\cM)$ can be computed by a semidefinite program: 
\begin{gather}
  2^\ell\cdot \log \min\ y \quad \text{\rm s.t.}\quad \dbhbig{M,\{N_i\}_{i=0}^\ell, J_{\cM}, y},\notag\\[2pt]
   \dbp{\begin{matrix}
    M & J_{\cN}\\
    J_{\cN} & N_{\ell}
  \end{matrix}},
  \left\{\dbp{\begin{matrix}
    J_{\cN} & N_{i} \\
    N_{i} & N_{i-1}
  \end{matrix}}\right\}_{i=1}^\ell, 
  \dbebigg{N_0 - J_{\cM}},
  \dbpbigg{y\1_A - \tr_B M}, \cM \in \bcV,
  \label{eq:  SDP formula for maximal Renyi channel divergence}
\end{gather}
where $J_{\cN}$ and $J_{\cM}$ are the corresponding Choi matrices of $\cN$ and $\cM$ respectively.
Here the short notation that $\dbp{X}$, $\dbe{X}$ and $\dbh{X}$ represent the positive semidefinite condition $X \geq 0$, the equality condition $X = 0$ and the Hermitian condition $X = X^\dagger$, respectively.
\end{enumerate}
\end{theorem}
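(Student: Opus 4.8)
The plan is to read off the value of the semidefinite program directly from the closed‑form expression in part~(2) of the theorem, after unrolling the matrix geometric mean $G_{1-\alpha}$ into the displayed chain of $2\times2$ linear matrix inequalities. Since $\alpha=\alpha(\ell)=1+2^{-\ell}$ we have $\alpha-1=2^{-\ell}$ and $1-\alpha=-2^{-\ell}$, so by part~(2), for every subchannel $\cM$,
\[
 \widehat D_\alpha(\cN\|\cM)=2^\ell\,\log\big\|\tr_B G_{-2^{-\ell}}(J_{\cN},J_{\cM})\big\|_\infty .
\]
As $G_{-2^{-\ell}}$ maps positive operators to positive operators and $\tr_B$ is positive, the argument of the operator norm is positive, and $\|Z\|_\infty=\min\{y:\ y\1_A\ge Z\}$ for $Z\ge 0$; moreover $x\mapsto 2^\ell\log x$ is increasing. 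Hence it is enough to show the program computes $\min\{\,\|\tr_B G_{-2^{-\ell}}(J_{\cN},J_{\cM})\|_\infty:\ \cM\in\bcV\,\}$. Introducing the slack $M$ with $M\ge G_{-2^{-\ell}}(J_{\cN},J_{\cM})$ is harmless: monotonicity of $\tr_B$ still forces $y\1_A\ge\tr_B M\ge\tr_B G_{-2^{-\ell}}(J_{\cN},J_{\cM})$, and $M=G_{-2^{-\ell}}(J_{\cN},J_{\cM})$ is always admissible.

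The heart of the argument is to identify the chain of block constraints (with $N_0=J_{\cM}$) with the epigraph $\{M:\ M\ge G_{-2^{-\ell}}(J_{\cN},J_{\cM})\}$. I will use three facts, each obtained by conjugating with $J_{\cN}^{-1/2}$ and applying functional calculus: (i) every Hermitian $T$ with $\big(\begin{smallmatrix}A&T\\ T&B\end{smallmatrix}\big)\ge 0$ satisfies $T\le A\# B=G_{1/2}(A,B)$, the choice $T=A\# B$ is feasible, and $(A\# B)A^{-1}(A\# B)=B$; (ii) the geodesic‑halving identity $G_{1/2}(X,G_t(X,Y))=G_{t/2}(X,Y)$; (iii) $G_{-1}(X,G_t(X,Y))=X\,G_t(X,Y)^{-1}X=G_{-t}(X,Y)$. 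For the lower bound on $\min y$: any feasible point has all $N_i\ge 0$ (each is a diagonal block of a PSD matrix) and, by (i), $N_i\le G_{1/2}(J_{\cN},N_{i-1})$; operator monotonicity of $Z\mapsto J_{\cN}\# Z$ together with (ii) then gives, starting from $N_0=J_{\cM}=G_1(J_{\cN},J_{\cM})$ and inducting, $N_i\le G_{2^{-i}}(J_{\cN},J_{\cM})$, so $N_\ell\le G_{2^{-\ell}}(J_{\cN},J_{\cM})$; the final block forces $M\ge J_{\cN}N_\ell^{-1}J_{\cN}$, and since inversion is operator antitone, using (iii), $M\ge J_{\cN}\,G_{2^{-\ell}}(J_{\cN},J_{\cM})^{-1}J_{\cN}=G_{-2^{-\ell}}(J_{\cN},J_{\cM})$, whence $y\ge\|\tr_B G_{-2^{-\ell}}(J_{\cN},J_{\cM})\|_\infty$. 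For the matching upper bound: pick an optimal $\cM^\star\in\bcV$ and set $N_i:=G_{2^{-i}}(J_{\cN},J_{\cM^\star})$, $M:=G_{-2^{-\ell}}(J_{\cN},J_{\cM^\star})$, $y:=\|\tr_B M\|_\infty$; facts (i)--(iii) show every block constraint holds with equality ($N_iJ_{\cN}^{-1}N_i=N_{i-1}$, $J_{\cN}N_\ell^{-1}J_{\cN}=M$, $N_0=J_{\cM^\star}$), and the objective evaluates to $2^\ell\log\|\tr_B G_{-2^{-\ell}}(J_{\cN},J_{\cM^\star})\|_\infty=\min_{\cM\in\bcV}\widehat D_\alpha(\cN\|\cM)$. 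This whole step can alternatively be delegated to the semidefinite representation of weighted matrix geometric means, Lemma~\ref{geometric SDP general lemma}.

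I expect the only genuine obstacle to be the usual rank‑deficiency bookkeeping: the Schur‑complement steps and identities (i)--(iii) were stated for invertible operators, while $J_{\cN}$, $J_{\cM}$ and the intermediate $N_i$ may be singular. I would handle this exactly as in Lemma~\ref{geometric SDP general lemma} — restrict everything to the common support of $J_{\cN}$ and $J_{\cM}$ (if $J_{\cN}\not\ll J_{\cM}$ then $\widehat D_\alpha(\cN\|\cM)=+\infty$ and the program is infeasible, consistently), and pass to the limit $J_{\cM}+\varepsilon\1\to J_{\cM}$ as $\varepsilon\downarrow 0$, using continuity of the weighted geometric means in both arguments and lower semicontinuity of the optimal value of the program. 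The remaining checks — that the Hermiticity conditions are exactly what makes the block inequalities well posed, that $\tr_B$ and $\|\cdot\|_\infty$ enter linearly, and that the whole program is a genuine SDP whenever $\bcV$ is semidefinite‑representable — are routine.
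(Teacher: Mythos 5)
Your proposal addresses only one of the six claims in Theorem~\ref{thm: summary of properties}. What you prove is Item~6 (the semidefinite representation), and you do so \emph{assuming} Item~2 (the closed-form expression for the channel divergence) as an input. The theorem, however, also asserts the comparison $D\leq\widehat D_\alpha\leq D_{\max}$ (Item~1), the closed form itself (Item~2), additivity under tensor products (Item~3), the chain rule (Item~4), and sub-additivity under composition (Item~5); none of these is argued anywhere in your text. These are not routine add-ons: in the paper, Item~2 is proved via the transformer inequality $G_{1-\alpha}(KXK^\dagger,KYK^\dagger)\leq K\,G_{1-\alpha}(X,Y)K^\dagger$ for $\alpha\in(1,2]$ (Lemma~\ref{lem_transformer}), applied to the representation $\cN_{A'\to B}(\phi_{AA'})=\sqrt{\rho_A}\,J^{\cN}_{AB}\sqrt{\rho_A}$ together with a continuity argument for full-rank inputs; Item~4, the chain rule, is the central technical novelty of the whole paper (it is what yields the amortization collapse) and again rests on the transformer inequality combined with multiplicativity of $G_{1-\alpha}$ under tensor products and the identity $\<\Phi_{SA}|Y_{RA}\ox\1_S|\Phi_{SA}\>=\tr_A Y_{RA}$; Item~1 needs monotonicity of $\widehat D_\alpha$ in $\alpha$ plus a Schur-complement argument showing $\widehat D_2\leq D_{\max}$; Items~3 and~5 then follow from Item~2 and Item~4 respectively (Lemmas~\ref{lem: maximal Renyi channel divergence additivity} and~\ref{lem: subadditivity of D alpha}). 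A proof of the stated theorem must supply these arguments, and in particular cannot take Item~2 for granted when deriving Item~6.

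On the part you do treat, your argument is essentially sound and, interestingly, more self-contained than the paper's: where the paper's Lemma~\ref{lem: SDP representation of the channel information measure} simply invokes the semidefinite representation of weighted matrix geometric means from Lemma~\ref{geometric SDP general lemma} (i.e.\ \cite{fawzi2017lieb}) plus the epigraph form of $\|\cdot\|_\infty$, you re-derive the equivalence between the chain of $2\times 2$ block constraints with $N_0=J_{\cM}$ and the epigraph condition $M\geq G_{-2^{-\ell}}(J_{\cN},J_{\cM})$, using Ando's maximality characterization of $A\#B$, the halving identity $G_{1/2}(X,G_t(X,Y))=G_{t/2}(X,Y)$, operator monotonicity of the geometric mean, antitonicity of inversion, and the scaling $G_t(X,Y)X^{-1}G_t(X,Y)=G_{2t}(X,Y)$; the feasibility construction for the upper bound and the induction for the lower bound are both correct, and your handling of rank deficiency (restriction to supports, $J_{\cM}+\varepsilon\1$ limit) is the standard fix. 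But as it stands the submission proves Lemma~\ref{lem: SDP representation of the channel information measure} conditional on Lemma~\ref{lem: maximal Renyi channel divergence SDP}, not Theorem~\ref{thm: summary of properties}.
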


\begin{remark}
Inequality \eqref{eq:comparisonDDaDmax} acts as a starting point of our improvement on the previous capacity bounds built on the max-relative entropy. The closed-form expression of the channel divergence directly leads to the additivity property in Item 3 and the semidefinite representation in Item 6. These properties should be contrasted with the situation for the Petz or sandwiched \Renyi divergence for channels which is unclear how to be calculated efficiently. 
The chain rule is another fundamental property that sets a difference of the geometric \Renyi divergence with other variants.  Using the notion of \emph{amortized channel divergence}~\cite{Berta2018}
\begin{align}
  \widehat{D}_{\alpha}^{A}(\cN \| \cM) \equiv \max_{\rho_{RA},\sigma_{RA} \in \cS(RA)} \Big[\widehat{D}_{\alpha}(\cN_{A \to B}(\rho_{RA}) \| \cM_{A\to B}(\sigma_{RA})) -  \widehat{D}_{\alpha}( \rho_{RA} \| \sigma_{RA})\Big],
\end{align} 
the chain rule is equivalent to
\begin{align} \label{eq_amortized}
\widehat{D}_{\alpha}^{A}(\cN \| \cM) = \widehat{D}_{\alpha}(\cN \| \cM)  \quad \text{for} \quad \alpha \in (1,2].
\end{align}
That is, the ``amortization collapse'' for the geometric \Renyi divergence. This solves an open question from~\cite[Eq.~(55)]{Berta2018} in the area of quantum channel discrimination. 
\end{remark}

\begin{remark}
Note that the chain rule property in Item 4 does not hold for the Umegaki relative entropy in general unless we consider the regularized channel divergence~\cite{Fang2019chainrule}. That is, the amortization does not collapse for the Umegaki relative entropy. This indicates that the results we obtained in this work based on the geometric \Renyi divergence cannot be easily extended to the Umegaki relative entropy.
\end{remark}

\begin{remark}
Except for the condition $\cM \in \bcV$, the semidefinite representation in the above Item 6 with $\a(\ell) = 1 + 2^{-\ell}$ is described by $\ell + 3$ linear matrix inequalities, each of size no larger than $2d \times 2d$ with $d = |A||B|$. Thus the computational complexity (time-usage) for computing $\min_{\cM \in \bcV} \widehat D_{\a}(\cN\|\cM)$ is the same as computing $\min_{\cM \in \bcV}  D_{\max}(\cN\|\cM)$. In practice, taking $\ell = 0$ ($\alpha = 2$) already gives an improved result and choosing $\ell$ around $8 \sim 10$ will make the separation more significant. Moreover, a sight modification can be done~\cite{fawzi2017lieb} to compute the optimization for any $\a \in (1,2]$. But we will restrict our attention, without loss of generality, to the discrete values $\a(\ell) = 1 + 2^{-\ell}$ with $\ell \in \NN$.
\end{remark}



\subsection{Detailed proofs}

In the following, we give a detailed proof of each property listed in Theorem~\ref{thm: summary of properties}.

\begin{lemma}[Comparison with $D$ and $D_{\max}$]\label{thm: divergence chain inequality}
    For any quantum state $\rho$, sub-normalized quantum state $\sigma$ with $\rho \ll \sigma$ and $\a \in (1,2]$, the following relation holds
    \begin{align}\label{divergence chain inequality}
        D(\rho\|\sigma)\leq \widetilde D_\a(\rho\|\sigma) \leq \widebar D_\a(\rho\|\sigma)  \leq \widehat D_\a(\rho\|\sigma) \leq D_{\max}(\rho\|\sigma).
    \end{align}
\end{lemma}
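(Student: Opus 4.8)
The plan is to prove the chain of four inequalities in \eqref{divergence chain inequality} piece by piece, since several links are already standard and only the two involving $\widehat D_\a$ require real work. The first inequality $D(\rho\|\sigma)\leq \widetilde D_\a(\rho\|\sigma)$ is the well-known monotonicity of the sandwiched R\'enyi divergence in $\a$, recorded in \eqref{eq: sand renyi and relative entropy}, and the second inequality $\widetilde D_\a(\rho\|\sigma)\leq \widebar D_\a(\rho\|\sigma)$ is exactly \eqref{eq: sandwiched and petz relation}, the Lieb--Thirring consequence already cited. The last inequality $\widehat D_\a(\rho\|\sigma)\leq D_{\max}(\rho\|\sigma)$ should follow by a direct operator argument: writing $t = 2^{D_{\max}(\rho\|\sigma)}$ so that $\rho\leq t\sigma$, one has $\sigma^{-1/2}\rho\,\sigma^{-1/2}\leq t\1$ on the relevant support, and since $x\mapsto x^\a$ is operator monotone is \emph{false} for $\a>1$ — instead I would use that $(\sigma^{-1/2}\rho\,\sigma^{-1/2})^\a \leq t^{\a-1}\,\sigma^{-1/2}\rho\,\sigma^{-1/2}$, which holds because $0\leq X\leq t\1$ implies $X^\a = X^{\a-1/2}\cdot X\cdot$\dots; more cleanly, for $0\leq X\leq t\1$ and $\a\in(1,2]$ we have $X^{\a}\leq t^{\a-1}X$ (apply the scalar inequality $x^\a\le t^{\a-1}x$ for $x\in[0,t]$ to the eigenvalues of $X$). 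Conjugating by $\sigma^{1/2}$ gives $G_\a(\sigma,\rho)\le t^{\a-1}\rho$ hence, after noting $G_{1-\a}(\rho,\sigma) = G_\a(\sigma,\rho)$ by the transformer identity for weighted geometric means, $\tr G_{1-\a}(\rho,\sigma)\le t^{\a-1}\tr\rho = t^{\a-1}$, which yields $\widehat D_\a(\rho\|\sigma)\le \log t = D_{\max}(\rho\|\sigma)$.

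The heart of the proof is the remaining link $\widebar D_\a(\rho\|\sigma)\leq \widehat D_\a(\rho\|\sigma)$, i.e. that the Petz R\'enyi divergence is dominated by the geometric one. Since $\a-1>0$, this is equivalent to showing the trace inequality
\[
  \tr\!\big[\rho^\a\sigma^{1-\a}\big]\;\leq\;\tr G_{1-\a}(\rho,\sigma)\;=\;\tr\!\big[\rho^{1/2}\big(\rho^{-1/2}\sigma\rho^{-1/2}\big)^{1-\a}\rho^{1/2}\big].
\]
The natural tool here is the integral representation / operator-convexity machinery for the function $t\mapsto t^{1-\a}$ with exponent $1-\a\in[-1,0)$, which is operator convex and operator monotone decreasing. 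A clean route is to invoke the variational (maximal) characterization of $\widehat D_\a$: Matsumoto's maximal $f$-divergence dominates every other $f$-divergence satisfying data processing, and $\widebar D_\a$ arises from the $f$-divergence with $f(t)=t^\a$ (up to normalization and the monotone transformation $\frac1{\a-1}\log$), which does satisfy data processing for $\a\in(1,2]$. Concretely I would: (i) recall that $D_f^{\max}(\rho\|\sigma)\ge D_f(\rho\|\sigma)$ for any operator-convex $f$ with $f(1)=0$, where $D_f^{\max}$ is the maximal $f$-divergence computed via the geometric mean and $D_f$ is the "standard" (Petz-type) $f$-divergence; (ii) specialize to $f_\a(t) = \frac{t^\a - 1}{\a-1}$ (or $t^\a$), checking it is operator convex on $(0,\infty)$ for $\a\in(1,2]$; (iii) translate the resulting $f$-divergence inequality into the stated R\'enyi-divergence inequality by applying the increasing function $x\mapsto\frac1{\a-1}\log(1+(\a-1)x)$ to both sides, or by working directly with the unnormalized traces.

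The main obstacle is step (ii)–(iii): making sure the identification of $\widehat D_\a$ with the maximal $f$-divergence for $f_\a(t)=t^\a$ is exactly right (the paper defines $\widehat D_\a$ through $\tr G_{1-\a}(\rho,\sigma)$, and one must verify $\tr G_{1-\a}(\rho,\sigma) = \tr\!\big[\sigma\, g(\sigma^{-1/2}\rho\sigma^{-1/2})\big]$ for the appropriate perspective function, i.e. that the geometric-mean formula is genuinely the Kubo--Ando maximal $f$-divergence with $f=f_\a$), and confirming the operator-convexity range. An alternative, more self-contained obstacle-avoiding route is to prove $\tr[\rho^\a\sigma^{1-\a}]\le \tr G_{1-\a}(\rho,\sigma)$ directly via a data-processing argument: realize both sides through the pinching/measurement channel or through the substitution $X=\sigma^{-1/2}\rho\sigma^{-1/2}$, reducing to the scalar-to-matrix comparison $\tr[\sigma^{1/2}X^\a\sigma^{1/2}\cdot\text{(something)}]$ versus $\tr[\sigma^{1/2} \sigma^{-1/2}\dots]$ — but this tends to reproduce exactly the operator-Jensen inequality for $t^{1-\a}$, so I would present the $f$-divergence argument as the clean one and remark that it is essentially due to Matsumoto. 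I expect the write-up to be short once the operator convexity of $f_\a$ and the maximality property of $\widehat D_\a$ are cited from \cite{matsumoto2015new}.
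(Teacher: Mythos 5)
Your proposal is correct, and it matches the paper on three of the four links: the first two inequalities are quoted from \eqref{eq: sand renyi and relative entropy} and \eqref{eq: sandwiched and petz relation} exactly as the paper does, and the link $\widebar D_\a \leq \widehat D_\a$ is ultimately discharged, as in the paper, by citing the maximality of the geometric (maximal $f$-) divergence among divergences obeying data processing \cite{matsumoto2015new}; your extra discussion of operator convexity of $f_\a(t)=t^\a$ and the identification of $\tr G_{1-\a}(\rho,\sigma)$ with the maximal $f$-divergence is just the content of that citation spelled out.

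Where you genuinely diverge is the last inequality $\widehat D_\a(\rho\|\sigma)\leq D_{\max}(\rho\|\sigma)$. The paper first invokes monotonicity of $\widehat D_\a$ in $\a$ to reduce to $\a=2$, then uses the Schur-complement characterization $\widehat D_2(\rho\|\sigma)=\log\min\{\tr M : \left[\begin{smallmatrix} M & \rho\\ \rho & \sigma\end{smallmatrix}\right]\geq 0\}$ and exhibits the feasible point $M=t\rho$. You instead argue directly for every $\a\in(1,2]$: with $\rho\leq t\sigma$ one gets $X:=\sigma^{-1/2}\rho\,\sigma^{-1/2}\leq t\1$, the scalar bound $x^\a\leq t^{\a-1}x$ on $[0,t]$ gives $X^\a\leq t^{\a-1}X$, and conjugation by $\sigma^{1/2}$ yields $\tr\big[\sigma^{1/2}X^\a\sigma^{1/2}\big]\leq t^{\a-1}\tr\rho=t^{\a-1}$, i.e.\ $\widehat D_\a\leq\log t$. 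This is valid and arguably cleaner: it needs neither the monotonicity in $\a$ nor the SDP formulation, and it proves the bound at each $\a$ separately. Two small touch-ups: the identity you call the ``transformer identity'', namely $G_{1-\a}(\rho,\sigma)=G_\a(\sigma,\rho)$, is really the symmetry $X\#_t Y = Y\#_{1-t}X$ of the weighted geometric mean (which is precisely the consistency between the paper's two ways of writing $\widehat D_\a$, so you could equally well work straight from the expression $\tr[\sigma^{1/2}(\sigma^{-1/2}\rho\,\sigma^{-1/2})^{\a}\sigma^{1/2}]$ and skip the identity altogether); and the inverses should be read as generalized inverses on $\supp(\sigma)$, which is harmless since $\rho\ll\sigma$ guarantees $\sigma^{1/2}X\sigma^{1/2}=\rho$.
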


\begin{proof}
    The first two inequalities follow from Eqs.~\eqref{eq: sand renyi and relative entropy} and~\eqref{eq: sandwiched and petz relation}. The third inequality follows since the geometric \Renyi divergence is the largest \Renyi divergence satisfying the data-processing inequality (see~\cite{matsumoto2015new} or~\cite[Eq.~(4.34)]{Tomamichel2015b}). It remains to prove the last one.
    Since the geometric R\'{e}nyi divergence is monotonically non-decreasing with respect to $\a$~\footnote{This is clear from the minimization formula of $\widehat D_\a$ in~\cite[Eq. (11)]{matsumoto2015new} and the monotonicity of classical \Renyi divergence.}, it suffices to show that $\widehat D_2(\rho\|\sigma) \leq D_{\max}(\rho\|\sigma)$. Recall that $D_{\max}(\rho\|\sigma) = \min\{\log t\,|\, \rho \leq t \sigma\}$. Denote the optimal solution as $t$, and we have $D_{\max}(\rho\|\sigma) = \log t$ with $0 \leq \rho \leq t \sigma$. 
    Note that 
    \begin{align}\label{hat D 2 SDP}
        \widehat D_2(\rho\|\sigma) & = \log \tr \left[\rho^{} \sigma^{-1} \rho\right]\notag\\
    & = \log \min \big\{\tr M \,\big|\, \rho \sigma^{-1} \rho \leq M \big\}
     = \log \min \bigg\{\tr M \,\bigg| \begin{bmatrix}
            M & \rho\\ \rho & \sigma
        \end{bmatrix} \geq 0\bigg\},
    \end{align}
    where the last equality follows from the Schur complement characterization of the block positive semidefinite matrix.
    Take $M = t \rho$, and we have
    \begin{align}
        \begin{bmatrix}
            M & \rho\\ \rho & \sigma
        \end{bmatrix} = \begin{bmatrix}
            t\rho & \rho\\ \rho & \sigma
        \end{bmatrix} \geq \begin{bmatrix}
            t\rho & \rho\\ \rho & t^{-1} \rho
        \end{bmatrix} = \begin{bmatrix}
            t & 1\\ 1 & t^{-1}
        \end{bmatrix}\ox \rho \geq 0.
    \end{align}
    Thus $M = t \rho$ is a feasible solution of optimization~\eqref{hat D 2 SDP} which implies $\widehat D_2(\rho\|\sigma) \leq \log \tr [t \rho] = \log t = D_{\max}(\rho\|\sigma)$. This completes the proof.
\end{proof}

\vspace{0.2cm}
Compared with $D_{\max}$, it is clear that $\widehat D_\a$ gives a tighter approximation of the Umegaki relative entropy $D$ from above. We provide a concrete example in Figure~\ref{fig: renyi compare} to give an intuitive understanding of the relations between different divergences. 

\begin{figure}[H]
	\centering
\begin{tikzpicture}
  \begin{scope}[scale=0.8]
  \draw[thick,->] (1.2,0) -- (10.7,0);
  \foreach \x in {2,4,6,8,10}
  \draw (\x cm,1.5pt) -- (\x cm,-1.5pt);
  \node at (2,-0.3) {\footnotesize $1.0$};
  \node at (4,-0.3) {\footnotesize $1.25$};
  \node at (6,-0.3) {\footnotesize $1.5$};
  \node at (8,-0.3) {\footnotesize $1.75$};
  \node at (10,-0.3) {\footnotesize $2.0$};
  \node at (11,0) {$\a$};
  \node[inner sep=0pt] at (6,3.1) {\includegraphics[width=6.48cm]{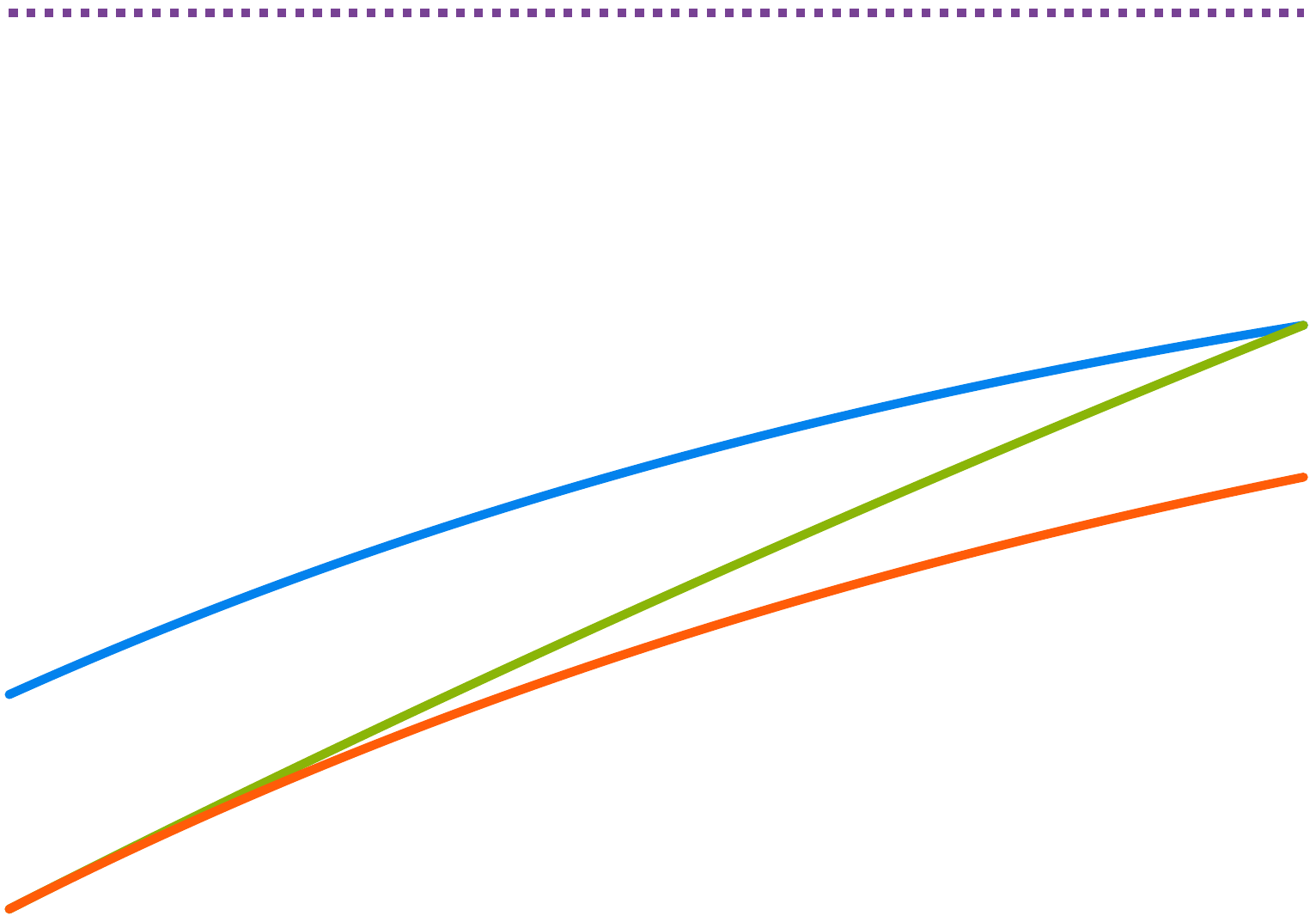}};
  \draw[very thick,gray,dashed] (2,0) -- (2,6);
  \draw[very thick,gray,dashed] (10,0) -- (10,6);  
  \node at (1.45,5.85) {\footnotesize $D_{\max}$};
   \node[draw, star, star points=5, star point ratio=.5,fill=black,black,scale=0.8] at (2,0.35) {};
  \node at (1.55,0.4) {\footnotesize $D$};
   \node at (2,1.66) [circle,fill,inner sep=2pt]{};
   \node at (1.55,1.7) {\footnotesize $\widehat D$};
   \node at (10,3.95) [circle,fill,inner sep=2pt]{};
   \node at (10.4,4) {\footnotesize $\widehat D_2$};
   \node at (6,3.4) {\footnotesize $\widehat D_\a$};
   \node at (5.5,2.4) {\footnotesize $\widebar D_\a$};
   \node at (8,2.2) {\footnotesize $\widetilde D_\a$};
   \node at (4.4,4.8) { $\rho = \frac{1}{4}\left[\begin{smallmatrix}
   	2 & 1 & 1\\[2pt] 1 & 1 & 1\\[2pt] 1 & 1 & 1
   \end{smallmatrix}\right]$};
   \node at (7.5,4.8) { $\sigma = \frac{1}{8}\left[\begin{smallmatrix}
   	4 & 0 & 0\\[2pt] 0 & 3 & 0\\[2pt] 0 & 0 & 1
   \end{smallmatrix}\right]$};       
  \end{scope}
\end{tikzpicture}
\caption{\small Relations between Umegaki relative entropy $D$, Belavkin-Staszewski relative entropy $\widehat D$, max-relative entropy $D_{\max}$, sandwiched \Renyi divergence $\widetilde D_{\a}$, Petz \Renyi divergence $\overline{D}_{\a}$ and geometric \Renyi divergence $\widehat D_{\a}$.}
\label{fig: renyi compare}
\end{figure}

\begin{lemma}[Closed-form expression]\label{lem: maximal Renyi channel divergence SDP}
  For any quantum channel $\cN_{A'\to B}$, subchannel $\cM_{A'\to B}$ and $\a \in (1,2]$, the geometric \Renyi channel divergence is given by
  \begin{align}\label{eq: maximal renyi channel divergence 1}
    \widehat D_\a(\cN\|\cM) = \frac{1}{\a-1}\log \Big\|\tr_B G_{1-\a}(J_{AB}^{\cN},J_{AB}^{\cM})\Big\|_\infty,
  \end{align}
  where $J_{AB}^{\cN}$ and $J_{AB}^{\cM}$ are the corresponding Choi matrices of $\cN$ and $\cM$ respectively.
  Moreover, for the Belavkin-Staszewski channel divergence, its has the closed-form expression:
  \begin{align}
    \widehat D(\cN\|\cM) = \left\|\tr_B\, \left\{ (J^{\cN}_{AB})^{\frac{1}{2}} \log\left[ (J^{\cN}_{AB})^{\frac{1}{2}} (J^{\cM}_{AB})^{-1} (J^{\cN}_{AB})^{\frac{1}{2}} \right]   (J^{\cN}_{AB})^{\frac{1}{2}} \right\} \right\|_\infty \, .
  \end{align}
\end{lemma}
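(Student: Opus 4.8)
The plan is to reduce the channel divergence to an optimization over pure input states and then exploit the homogeneity and transformation properties of the weighted matrix geometric mean $G_\a$. First I would fix a purification $\phi_{AA'} = \ket{\phi}\bra{\phi}$ of an input state $\rho_A$ and write $\cN_{A'\to B}(\phi_{AA'})$ and $\cM_{A'\to B}(\phi_{AA'})$ in terms of the Choi matrices. Writing $\ket{\phi}_{AA'} = (\sqrt{\rho_A}\ox \1_{A'})\ket{\Phi}_{AA'}$, one has $\cN_{A'\to B}(\phi_{AA'}) = (\sqrt{\rho_A}\ox\1_B)\, J^{\cN}_{AB}\, (\sqrt{\rho_A}\ox\1_B)$, and similarly for $\cM$. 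So with $X = J^{\cN}_{AB}$, $Y = J^{\cM}_{AB}$ and $P = \sqrt{\rho_A}\ox\1_B$, we need to evaluate $\tr\, G_{1-\a}(P X P, P Y P)$ and maximize over density operators $\rho_A$.

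The key algebraic fact I would invoke is the transformer (congruence) identity for the matrix geometric mean: for invertible $P$, $G_\a(P X P^\dagger, P Y P^\dagger) = P\, G_\a(X,Y)\, P^\dagger$. Strictly this holds for invertible $P$; since $P = \sqrt{\rho_A}\ox\1_B$ need not be invertible, I would either restrict to full-rank $\rho_A$ and take limits using the continuity of $\widehat D_\a$ (cited from \cite{matsumoto2015new}), or handle the support conditions directly — the assumption $\cN(\phi)\ll\cM(\phi)$ together with $J_{\cN}\ll J_{\cM}$ makes this routine. Granting the identity, $\tr\, G_{1-\a}(PXP, PYP) = \tr\big[(\rho_A\ox\1_B)\, G_{1-\a}(X,Y)\big] = \tr\big[\rho_A\, \tr_B G_{1-\a}(J^{\cN}_{AB},J^{\cM}_{AB})\big]$, using that $P^2 = \rho_A\ox\1_B$ and the partial trace. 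Then, since $\frac{1}{\a-1}\log(\cdot)$ is monotone increasing for $\a>1$ and $G_{1-\a}(X,Y)\ge 0$, maximizing over $\rho_A\in\cS(A)$ gives $\max_{\rho_A}\tr[\rho_A\, W] = \|W\|_\infty$ with $W = \tr_B G_{1-\a}(J^{\cN}_{AB},J^{\cM}_{AB})$, which is exactly \eqref{eq: maximal renyi channel divergence 1}.

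For the Belavkin–Staszewski channel divergence I would take the limit $\a\to 1$ on both sides. On the right, $\tfrac{1}{\a-1}\log\|W_\a\|_\infty$ with $W_\a = \tr_B G_{1-\a}(X,Y)$: writing $G_{1-\a}(X,Y) = X^{1/2}(X^{-1/2}YX^{-1/2})^{1-\a}X^{1/2}$ and differentiating at $\a=1$ using $\tfrac{d}{d\a}Z^{1-\a}\big|_{\a=1} = -Z\log Z$, one gets $\tfrac{d}{d\a}W_\a\big|_{\a=1} = -\tr_B\{X^{1/2}\log(X^{1/2}Y^{-1}X^{1/2})X^{1/2}\}$ (using $X^{1/2}(X^{-1/2}YX^{-1/2})X^{1/2}=Y$, so $W_1 = \tr_B X$ which has operator norm... one must be careful here since $\tr_B X = \1_A$ only when $\cN$ is trace-preserving). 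Actually the cleanest route is to pass the limit through the already-established formula $\widehat D_\a(\cN\|\cM)=\max_{\rho_A}\widehat D_\a(\cN(\phi)\|\cM(\phi))$, use $\lim_{\a\to1}\widehat D_\a = \widehat D$ pointwise (with joint continuity to interchange $\lim$ and $\max$ over the compact set $\cS(A)$), apply \eqref{eq: geo and BS} and the same congruence computation at the level of $\widehat D$: $\widehat D(\cN(\phi)\|\cM(\phi)) = \tr[\rho_A\,\tr_B\{(J_{\cN})^{1/2}\log((J_{\cN})^{1/2}(J_{\cM})^{-1}(J_{\cN})^{1/2})(J_{\cN})^{1/2}\}]$, and maximize.

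\textbf{Main obstacle.} The technical heart is justifying the congruence identity $G_\a(PXP^\dagger,PYP^\dagger)=P\,G_\a(X,Y)\,P^\dagger$ when $P$ is not invertible, i.e.\ handling the support/rank-deficiency issues for non-full-rank inputs $\rho_A$ and for Choi matrices of non-unital channels; this is where I expect to lean on the continuity of $\widehat D_\a$ and a limiting argument $\rho_A \to \rho_A + \eps\1_A$ (normalized), together with the hypothesis $\cN(\phi)\ll\cM(\phi)$ to keep the divergence finite. The rest — the partial-trace manipulation and the $\max_{\rho_A}\tr[\rho_A W]=\|W\|_\infty$ step — is routine.
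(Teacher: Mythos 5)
Your proposal is correct and follows essentially the same route as the paper's proof: reduce the channel divergence to $\widehat D_\a(\sqrt{\rho_A}J^{\cN}\sqrt{\rho_A}\,\|\,\sqrt{\rho_A}J^{\cM}\sqrt{\rho_A})$, use the transformer (congruence) identity for $G_{1-\a}$ with a full-rank/continuity argument for non-invertible $\rho_A$, pass the partial trace through, and convert $\max_{\rho_A}\tr[\rho_A W]$ into $\|W\|_\infty$, with the Belavkin--Staszewski case handled identically at the level of the operator relative entropy. The only remark is that the transformer equality for weights in $[-1,0)$ lies outside the standard Kubo--Ando range and the paper proves it separately (Lemma~\ref{lem_transformer} and Corollary~\ref{coro: transformer inequality log}), but invoking it as your key algebraic fact is exactly the paper's approach.
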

\begin{proof}
Note that for any quantum state $\rho_{A}$ and its purification $\phi_{AA'}$, we have the relation
\begin{align} \label{eq_kun1}
\cN_{A'\to B}(\phi_{AA'}) 
=\cN_{A'\to B}\Big(\sqrt{\rho_A} \proj{\Phi}_{AA'} \sqrt{\rho_A} \Big)
= \sqrt{\rho_A} J_{AB}^{\cN}\sqrt{\rho_A} \, .
\end{align}
By definition of the geometric R\'enyi divergence we have
\begin{align}
\widehat D_{\alpha}(\cN\|\cM) & = \max_{\rho_A \in \cS(A)} \widehat D_{\alpha}(\sqrt{\rho_A} J_{AB}^{\cN}\sqrt{\rho_A}\,\|\,\sqrt{\rho_A} J_{AB}^{\cM}\sqrt{\rho_A}\,)\\
& = \frac{1}{\alpha -1} \log \max_{\rho_A \in \cS(A)} \tr\, G_{1-\alpha}(\sqrt{\rho_A} J_{AB}^{\cN}\sqrt{\rho_A},\sqrt{\rho_A} J_{AB}^{\cM}\sqrt{\rho_A}\,)\\
& = \frac{1}{\alpha -1} \log \max_{\rho_A \in \cS(A)} \tr\, \sqrt{\rho_A}\, G_{1-\alpha}(J_{AB}^{\cN},J_{AB}^{\cM})\sqrt{\rho_A} \\
& = \frac{1}{\alpha -1} \log \max_{\rho_A \in \cS(A)} \tr\, G_{1-\alpha}(J_{AB}^{\cN},J_{AB}^{\cM}) (\rho_A \ox \1_B)\\
& = \frac{1}{\alpha -1} \log \max_{\rho_A \in \cS(A)} \tr \big( \left[\tr_B \, G_{1-\alpha}(J_{AB}^{\cN},J_{EB}^{\cM})\right] \rho_A\big)\\
& = \frac{1}{\alpha -1} \log \left\|\tr_B \, G_{1-\alpha}(J_{AB}^{\cN},J_{AB}^{\cM})\right\|_\infty \, ,
\end{align}
where the third step follows from the transformer inequality given in Lemma~\ref{lem_transformer} in Appendix~\ref{app: technical lemmas} and the fact that we can assume by continuity that $\rho_A$ has full rank.\footnote{We recall that this follows from a standard continuity argument which works because the geometric R\'enyi divergence has nice continuity properties, see~\cite[Page 2]{Kubo1980} or \cite[Eq. (72)]{matsumoto2015new}.}
The last line follows from the semidefinite representation of the infinity norm $\|X\|_{\infty} = \max_{\rho \in \cS} \tr X\rho$. 

The expression for $\widehat D$ follows  exactly the same steps by using Corollary~\ref{coro: transformer inequality log} in Appendix~\ref{app: technical lemmas} and replacing the matrix geometric mean with the operator relative entropy.
\end{proof}

\begin{lemma}[Additivity]\label{lem: maximal Renyi channel divergence additivity}
  Let $\cN_1$ and $\cN_2$ be two quantum channels and let $\cM_1$ and $\cM_2$ be two subchannels. Then for any $\a \in (1,2]$ it holds that
  \begin{align}
    \widehat D_\a(\cN_1\ox \cN_2\|\cM_1\ox \cM_2) = \widehat D_\a(\cN_1\|\cM_1) + \widehat D_\a(\cN_2\|\cM_2).
  \end{align}
\end{lemma}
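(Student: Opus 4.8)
The plan is to reduce everything to the closed-form expression for the channel divergence established in Lemma~\ref{lem: maximal Renyi channel divergence SDP}, namely $\widehat D_\a(\cN\|\cM) = \frac{1}{\a-1}\log \|\tr_B G_{1-\a}(J^{\cN},J^{\cM})\|_\infty$, and then to show that each ingredient of the right-hand side---the Choi matrix, the weighted matrix geometric mean $G_{1-\a}$, the partial trace over the output, and the operator norm---is multiplicative under tensor products. First I would record the elementary fact that $J^{\cN_1\ox\cN_2}_{A_1A_2B_1B_2}$ equals $J^{\cN_1}_{A_1B_1}\ox J^{\cN_2}_{A_2B_2}$ up to the fixed permutation unitary that reorders $A_1B_1A_2B_2$ into $A_1A_2B_1B_2$; since this permutation is a unitary conjugation that acts ``blockwise'', it commutes with every operation below (powers, square roots, and ultimately the partial trace $\tr_{B_1B_2}$ and the operator norm on the $A$-systems), so it may be suppressed.

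The key step is multiplicativity of the weighted matrix geometric mean:
\begin{align*}
G_{\beta}(X_1\ox X_2,\;Y_1\ox Y_2) = G_{\beta}(X_1,Y_1)\ox G_{\beta}(X_2,Y_2),
\end{align*}
valid for all $\beta$ and all positive operators. This is immediate from the defining formula $G_\beta(X,Y) = X^{\frac12}(X^{-\frac12}YX^{-\frac12})^\beta X^{\frac12}$ together with the standard identities $(X_1\ox X_2)^{t} = X_1^{t}\ox X_2^{t}$ for real powers of positive operators and $(A_1\ox A_2)(B_1\ox B_2) = (A_1B_1)\ox(A_2B_2)$: applying these step by step inside the formula (to the square roots, to $X^{-\frac12}YX^{-\frac12}$, to its $\beta$-th power, and to the outer products) yields the claim. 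Taking $\beta = 1-\a$ and $X_i = J^{\cN_i}$, $Y_i = J^{\cM_i}$ gives $G_{1-\a}(J^{\cN_1}\ox J^{\cN_2}, J^{\cM_1}\ox J^{\cM_2}) = G_{1-\a}(J^{\cN_1},J^{\cM_1})\ox G_{1-\a}(J^{\cN_2},J^{\cM_2})$.

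Next I would apply $\tr_{B_1B_2}$, using $\tr_{B_1B_2}(P_1\ox P_2) = (\tr_{B_1}P_1)\ox(\tr_{B_2}P_2)$, to obtain that $\tr_{B_1B_2}G_{1-\a}(J^{\cN_1\ox\cN_2},J^{\cM_1\ox\cM_2})$ is the tensor product of the two single-channel operators $\tr_{B_i}G_{1-\a}(J^{\cN_i},J^{\cM_i})$ (these are positive semidefinite, being partial traces of the positive operator $G_{1-\a}$, since $1-\a\in[-1,0)\subset[-1,2]$). Finally, using multiplicativity of the operator norm under tensor products, $\|Z_1\ox Z_2\|_\infty = \|Z_1\|_\infty\|Z_2\|_\infty$, and the fact that $\frac{1}{\a-1}\log$ sends a product to the sum of the two terms, I obtain exactly $\widehat D_\a(\cN_1\ox\cN_2\|\cM_1\ox\cM_2) = \widehat D_\a(\cN_1\|\cM_1) + \widehat D_\a(\cN_2\|\cM_2)$. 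I do not expect a genuine obstacle here: the only points requiring a word of care are the bookkeeping of the tensor-factor permutation on the Choi matrices and the verification that the geometric mean factors through the tensor product, both of which are routine; there is no need for a separate ``$\geq$'' and ``$\leq$'' argument because the closed-form expression turns the additivity into an identity.
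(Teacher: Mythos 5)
Your proposal is correct and follows essentially the same route as the paper: both reduce to the closed-form expression $\widehat D_\a(\cN\|\cM) = \frac{1}{\a-1}\log \|\tr_B G_{1-\a}(J^{\cN},J^{\cM})\|_\infty$ and then invoke multiplicativity of the weighted matrix geometric mean, the partial trace, and the operator norm under tensor products. Your extra remarks on the permutation of tensor factors in the Choi matrix and the explicit verification that $G_\beta$ factors over tensor products are just more detailed bookkeeping of the same argument.
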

\begin{proof}
Due to the closed-form expression in Lemma~\ref{lem: maximal Renyi channel divergence SDP}, we have
  \begin{align}
    \hspace{-0.2cm}\widehat D_\a(\cN_1\ox \cN_2\|\cM_1\ox \cM_2) & = \frac{1}{\a-1}\log \big\|\tr_{B_1B_2} G_{1-\a}(J_{\cN_1}\ox J_{\cN_2},J_{\cM_1}\ox J_{\cM_2})\big\|_\infty\\
    & =  \frac{1}{\a-1}\log \big\|\tr_{B_1B_2} G_{1-\a}(J_{\cN_1},J_{\cM_1})\ox G_{1-\a}(J_{\cN_2},J_{\cM_2})\big\|_\infty\\
    & =  \frac{1}{\a-1}\log \big\|\tr_{B_1} G_{1-\a}(J_{\cN_1},J_{\cM_1})\big\|_{\infty} \big\|\tr_{B_2} G_{1-\a}(J_{\cN_2},J_{\cM_2})\big\|_\infty\\
    & = \widehat D_\a(\cN_1\|\cM_1) + \widehat D_\a(\cN_2\|\cM_2).
  \end{align}
  The first and last lines follow from Lemma~\ref{lem: maximal Renyi channel divergence SDP}. The second and third lines follow since the weighted matrix geometric mean and the infinity norm are multiplicative under tensor product. 
\end{proof}

\begin{lemma}[Chain rule]  \label{lem_chainRule}
Let $\rho$ be a quantum state on $\cH_{RA}$, $\sigma$ be a subnormalized state on $\cH_{RA}$ and $\cN_{A\to B}$ be a quantum channel, $\cM_{A\to B}$ be a subchannel and $\alpha \in (1,2]$. Then
\begin{align}\label{eq_chainrule_geometric}
\widehat{D}_{\alpha}(\cN_{A\to B}(\rho_{RA}) \| \cM_{A\to B}(\sigma_{RA})) &\leq \widehat{D}_{\alpha}( \rho_{RA} \| \sigma_{RA}) + \widehat{D}_{\alpha}(\cN \| \cM) \, .
\end{align}
\end{lemma}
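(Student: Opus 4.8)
The plan is to exploit the closed-form expression for the channel divergence (Lemma~\ref{lem: maximal Renyi channel divergence SDP}) together with two structural features of the weighted matrix geometric mean $G_\alpha$ for $\alpha \in [-1,0)$ (equivalently $1-\alpha$ with $\alpha \in (1,2]$): its operator monotonicity / joint concavity in both arguments, and the transformer inequality $K G_{1-\alpha}(X,Y) K^\dagger \leq G_{1-\alpha}(KXK^\dagger, KYK^\dagger)$ used already in the proof of Lemma~\ref{lem: maximal Renyi channel divergence SDP}. First I would write $\cN_{A\to B}(\rho_{RA}) = \sqrt{\rho_{RA}}\, J^{\cN}_{RAB}\, \sqrt{\rho_{RA}}$ in analogy with \eqref{eq_kun1}, where $J^{\cN}_{RAB}$ denotes $\mathds{1}_R \otimes J^{\cN}_{AB}$ acting on the appropriate spaces (and similarly for $\cM$ and $\sigma$). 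The point of this rewriting is that it exhibits the output states as congruences of the Choi operators by the square roots of the input states, which is exactly the form on which the transformer inequality bites.

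The key computation is then to bound $\tr\, G_{1-\alpha}\big(\cN(\rho_{RA}),\, \cM(\sigma_{RA})\big)$ from above. I would expand $\cN(\rho_{RA}) = \sqrt{\rho_{RA}}\, J^{\cN}\, \sqrt{\rho_{RA}}$, apply the transformer inequality (with $K = \sqrt{\rho_{RA}}$, noting we may assume $\rho_{RA}$ full-rank by continuity) to pull the $\sqrt{\rho_{RA}}$ factors outside: $G_{1-\alpha}(\sqrt{\rho}J^{\cN}\sqrt{\rho}, \sqrt{\rho}J^{\cM}\sqrt{\rho}) = \sqrt{\rho}\, G_{1-\alpha}(J^{\cN},J^{\cM})\, \sqrt{\rho}$ — here equality holds since $\sqrt{\rho}$ is invertible, so no loss is incurred yet. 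Next I would use that $\tr_B G_{1-\alpha}(J^{\cN}_{RAB}, J^{\cM}_{RAB}) \leq \mathds{1}_R \otimes \big\|\tr_B G_{1-\alpha}(J^{\cN}_{AB}, J^{\cM}_{AB})\big\|_\infty \cdot \mathds{1}_A = 2^{(\alpha-1)\widehat D_\alpha(\cN\|\cM)}\, \mathds{1}_{RA}$, which is just the definition of the operator norm combined with the factorization of the Choi operator across $R$. Substituting and tracing over $RA$ gives $\tr\, G_{1-\alpha}(\cN(\rho)\|\cM(\sigma)) \leq 2^{(\alpha-1)\widehat D_\alpha(\cN\|\cM)} \cdot \tr\big[\rho_{RA}\, G_{1-\alpha}(J^{\cN},J^{\cM})\big]$; but wait, I actually want $\tr G_{1-\alpha}(\rho_{RA}\|\sigma_{RA})$ on the right, not $\tr[\rho\, G_{1-\alpha}(J^{\cN},J^{\cM})]$. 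So the cleaner route is: after pulling out the $\sqrt\rho$'s, replace $G_{1-\alpha}(J^{\cN},J^{\cM})$ inside by the operator inequality above, obtaining $G_{1-\alpha}(\cN(\rho),\cM(\sigma)) \leq 2^{(\alpha-1)\widehat D_\alpha(\cN\|\cM)}\, \sqrt{\rho_{RA}}\,\mathds{1}\,\sqrt{\rho_{RA}} = 2^{(\alpha-1)\widehat D_\alpha(\cN\|\cM)}\,\rho_{RA}$. Hmm — but that would give $\widehat D_\alpha(\cN(\rho)\|\cN(\sigma)) \le \widehat D_\alpha(\cN\|\cM)$ alone, which is too strong and clearly wrong; the error is that monotonicity of $G_{1-\alpha}$ in the second argument goes the wrong way. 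The correct statement is that $G_{1-\alpha}(X,Y)$ is operator \emph{antitone} in $Y$ for $1-\alpha \in [-1,0)$, so replacing $J^{\cM}$ by a \emph{larger} operator makes $G$ \emph{smaller}; to get an upper bound on $G(\cN(\rho),\cM(\sigma))$ I must replace $\cM(\sigma)$ by something smaller, i.e. bound $J^{\cM} \geq $ (something) — this forces me instead to keep $\sigma_{RA}$ explicit and argue via joint concavity plus a variational/operator-Jensen step.

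The honest plan, then, is a two-stage argument. Stage one: reduce to the ``state $+$ identity channel versus a fixed channel'' case by data processing and the additivity/tensorization already proved — concretely, use that $\cN_{A\to B}(\rho_{RA})$ is obtained from $\rho_{RA} \otimes (\text{maximally entangled})$ through a channel acting on $A$ and the ancilla, and track how $\widehat D_\alpha$ changes. Stage two, which I expect to be the genuine obstacle: establish the inequality $\tr G_{1-\alpha}(\cN(\rho_{RA})\|\cM(\sigma_{RA})) \leq \big\|\tr_B G_{1-\alpha}(J^{\cN},J^{\cM})\big\|_\infty^{?}\cdot(\ldots)$ with the correct bookkeeping of exponents, which should follow from the operator-concavity of $(X,Y)\mapsto G_{1-\alpha}(X,Y)$ combined with an operator Jensen inequality applied to the Stinespring isometry of $\cN$ (and $\cM$), exactly as in the known chain-rule proofs for $D_{\max}$. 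I would model the write-up on the $D_{\max}$ chain rule argument (e.g.\ the proof that $D_{\max}(\cN(\rho)\|\cM(\sigma)) \le D_{\max}(\rho\|\sigma) + D_{\max}(\cN\|\cM)$ in~\cite{Berta2018}), substituting the geometric-mean operator inequalities for the corresponding max-relative-entropy ones; the fact that $G_{1-\alpha}$ has a semidefinite representation and the same monotonicity/transformer properties as the ``$\alpha=2$'' case (already used in Lemma~\ref{thm: divergence chain inequality} and Lemma~\ref{lem: maximal Renyi channel divergence SDP}) is what makes the substitution go through. The main obstacle is precisely getting the direction of operator monotonicity in the second slot right and combining it with operator convexity so that the channel term appears with coefficient exactly $1$ (not $\alpha$ or $1/\alpha$) in the exponent — this is the step where a naive argument breaks, and it is where the ``geometric'' structure (as opposed to Petz or sandwiched) is essential.
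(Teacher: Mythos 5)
There is a genuine gap: what you have written is an exploration that correctly diagnoses where a naive argument breaks, but it never supplies the idea that actually closes the proof, and your fallback ``stage two'' is left as an unproven appeal to an operator-Jensen/Stinespring argument modeled on $D_{\max}$. Concretely, two problems. First, your starting identity $\cN_{A\to B}(\rho_{RA}) = \sqrt{\rho_{RA}}\,(\1_R \ox J^{\cN}_{AB})\,\sqrt{\rho_{RA}}$ is false for a general mixed state $\rho_{RA}$: the left side is an operator on $RB$ while the right side lives on $RAB$. The identity \eqref{eq_kun1} you are imitating holds only when the input is a purification $\phi_{AA'}$ of $\rho_A$ (so that the ``reference'' system has the right dimension and the pure state is $\sqrt{\rho_A}\ket{\Phi}$); it does not extend to the mixed-state setting of the chain rule, and the contradiction you then run into ($\widehat D_\alpha(\cN(\rho)\|\cM(\sigma))\le\widehat D_\alpha(\cN\|\cM)$ with the state term lost) stems as much from this wrong representation, which silently drops $\sigma_{RA}$, as from the antitonicity issue you blame.

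Second, the missing ingredient is the representation the paper actually uses: $\cN_{A\to B}(\rho_{RA}) = \bra{\Phi_{SA}}\rho_{RA}\ox J^{\cN}_{SB}\ket{\Phi_{SA}}$ and likewise $\cM_{A\to B}(\sigma_{RA}) = \bra{\Phi_{SA}}\sigma_{RA}\ox J^{\cM}_{SB}\ket{\Phi_{SA}}$, i.e.\ both outputs are a \emph{common} congruence, by the (non-invertible) operator $K=\bra{\Phi_{SA}}$, of \emph{tensor-product} arguments. With that in hand the proof is short: the transformer inequality of Lemma~\ref{lem_transformer} pulls $K$ outside $G_{1-\alpha}$, multiplicativity of the weighted geometric mean under tensor products splits it as $G_{1-\alpha}(\rho_{RA},\sigma_{RA})\ox G_{1-\alpha}(J^{\cN}_{SB},J^{\cM}_{SB})$, the channel factor is bounded after $\tr_B$ by $y\,\1_S$ with $y=\|\tr_B G_{1-\alpha}(J^{\cN},J^{\cM})\|_\infty$ (the closed-form expression of Lemma~\ref{lem: maximal Renyi channel divergence SDP}), and the identity $\bra{\Phi_{SA}}Y_{RA}\ox\1_S\ket{\Phi_{SA}}=\tr_A Y_{RA}$ yields exactly the additive split with the channel term appearing with coefficient one. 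Your proposal never reaches this tensorization step, and without it the exponent bookkeeping you flag as the main obstacle remains unresolved; note also that the $D_{\max}$ chain rule is essentially an order-theoretic triviality ($\rho\le 2^{D_{\max}}\sigma$ plus positivity of the maps), so ``substituting geometric-mean inequalities into the $D_{\max}$ argument'' does not by itself produce a proof here.
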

\begin{proof}
Let $\ket{\Phi}_{SA} = \sum_i \ket{i}_S\ket{i}_A$ be the unnormalized maximally entangled state. Denote $J^{\cN}_{SB}$ and $J^{\cM}_{SB}$ as the Choi matrices corresponding to $\cN$ and $\cM$, respectively. Then we have the identities
\begin{align} \label{eq_choiOutput}
\cN_{A\to B}(\rho_{RA}) 
 = \big\<\Phi_{SA}\big|\rho_{RA} \ox J^{\cN}_{SB}\big|\Phi_{SA}\big\> \quad \text{and} \quad
\cM_{A\to B}(\sigma_{RA})  = \big\<\Phi_{SA}\big|\sigma_{RA} \otimes J^{\cM}_{SB}\big|\Phi_{SA}\big\>\, .
\end{align}
For $y = \|\tr_B\, G_{1-\alpha}(J_{SB}^{\cN},J_{SB}^{\cM})\|_\infty$, Lemma~\ref{lem: maximal Renyi channel divergence SDP} ensures that 
\begin{align}\label{eq: chain relation tmp1}
  \widehat D_{\alpha}(\cN\|\cM) = \frac{1}{\alpha-1} \log y 
\end{align}
and by definition of the infinity norm we find
\begin{align}\label{eq: chain relation tmp2}
 \tr_B\, G_{1-\alpha}(J_{SB}^{\cN},J_{SB}^{\cM}) \leq y \,\1_S \, .
\end{align}
By definition of the geometric \Renyi divergence and by using~\eqref{eq_choiOutput} we can write 
\begin{align}
  \widehat D_{\alpha}&\big(\cN_{A\to B}(\rho_{RA})\|\cM_{A\to B}(\sigma_{RA})\big) \notag \\
  & = \frac{1}{\alpha-1} \log \tr\, G_{1-\alpha} (\big\<\Phi_{SA}\big|\rho_{RA} \ox J^{\cN}_{SB}\big|\Phi_{SA}\big\>,\big\<\Phi_{SA}\big|\sigma_{RA} \otimes J^{\cM}_{SB}\big|\Phi_{SA}\big\>)\\
  & \leq \frac{1}{\alpha-1} \log \tr \big\<\Phi_{SA}\big| G_{1-\alpha} (\rho_{RA} \ox J^{\cN}_{SB}, \sigma_{RA} \otimes J^{\cM}_{SB} )\big|\Phi_{SA}\big\>\\
  & = \frac{1}{\alpha-1} \log \tr \big\<\Phi_{SA}\big| G_{1-\alpha} (\rho_{RA}, \sigma_{RA}) \ox G_{1-\alpha} ( J^{\cN}_{SB}, J^{\cM}_{SB}) \big|\Phi_{SA}\big\>\\
  & = \frac{1}{\alpha-1} \log \tr \big\<\Phi_{SA}\big| G_{1-\alpha} (\rho_{RA}, \sigma_{RA}) \ox \tr_B\, G_{1-\alpha} ( J^{\cN}_{SB}, J^{\cM}_{SB}) \big|\Phi_{SA}\big\>\\
  & \leq \frac{1}{\alpha-1} \log \tr \big\<\Phi_{SA}\big| G_{1-\alpha} (\rho_{RA}, \sigma_{RA}) \ox y\,\1_S \big|\Phi_{SA}\big\>\\
  & = \frac{1}{\alpha-1} \log \big(y\, \tr \,G_{1-\alpha} (\rho_{RA},\sigma_{RA}) \big) \label{eq_notTriv}\\
  & = \frac{1}{\alpha-1} \log y + \frac{1}{\alpha - 1} \log \tr\, G_{1-\alpha} (\rho_{RA},\sigma_{RA})\\
  & = \widehat D_{\alpha}(\cN\|\cM) + \widehat D_{\alpha}(\rho_{RA}\|\sigma_{RA}) \, ,
\end{align}
where the first inequality follows from the transformer inequality given in Lemma~\ref{lem_transformer} in Appendix~\ref{app: technical lemmas}.
The third line follows from the multiplicativity of matrix geometric mean under tensor product. 
The second inequality uses from \eqref{eq: chain relation tmp2} and the fact that $X \mapsto \tr\, K X$ is monotone for positive operator $K$. Equation~\eqref{eq_notTriv} follows from the identity $\<\Phi_{SA}| Y_{RA}\ox \1_S|\Phi_{SA}\> = \tr_A\, Y_{RA}$. 
\end{proof}

\begin{lemma}[Sub-additivity]\label{lem: subadditivity of D alpha}
  For any quantum channels $\cN^1_{A\to B}$, $\cN^2_{B\to C}$, any subchannels $\cM^1_{A\to B}$, $\cM^2_{B\to C}$ and $\a \in (1,2]$, it holds
  \begin{align}
    \widehat D_{\a}(\cN_2\circ \cN_1\|\cM_2\circ \cM_1) \leq \widehat D_{\a}(\cN_1\|\cM_1) + \widehat D_{\a}(\cN_2\|\cM_2).
  \end{align}
\end{lemma}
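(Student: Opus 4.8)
The plan is to deduce sub-additivity under channel composition directly from the chain rule (Lemma~\ref{lem_chainRule}) together with the closed-form/definitional properties already established. The key observation is that composing channels is the same as first applying one channel to a suitable bipartite input and then applying the other, so the chain rule can be invoked in a way that the state term on the right-hand side is itself a channel divergence.

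Concretely, I would start from the variational definition of $\widehat D_{\a}(\cN_2\circ\cN_1\|\cM_2\circ\cM_1)$: fix an optimal input state $\rho_A$ with purification $\phi_{RA}$ (writing $R$ for the reference). Then
\begin{align*}
  \widehat D_{\a}(\cN_2\circ\cN_1\|\cM_2\circ\cM_1) = \widehat D_{\a}\big((\cN_2)_{B\to C}(\omega^{\cN}_{RB})\,\|\,(\cM_2)_{B\to C}(\omega^{\cM}_{RB})\big),
\end{align*}
where $\omega^{\cN}_{RB} = (\cN_1)_{A\to B}(\phi_{RA})$ and $\omega^{\cM}_{RB} = (\cM_1)_{A\to B}(\phi_{RA})$. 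Now apply the chain rule of Lemma~\ref{lem_chainRule} with reference system $R$, channel $\cN_2$, subchannel $\cM_2$, and states $\omega^{\cN}_{RB}$, $\omega^{\cM}_{RB}$ on $\cH_{RB}$:
\begin{align*}
  \widehat D_{\a}\big((\cN_2)(\omega^{\cN}_{RB})\,\|\,(\cM_2)(\omega^{\cM}_{RB})\big) \leq \widehat D_{\a}(\omega^{\cN}_{RB}\|\omega^{\cM}_{RB}) + \widehat D_{\a}(\cN_2\|\cM_2).
\end{align*}
It then remains to bound the first term on the right. Since $\omega^{\cN}_{RB} = (\cN_1)_{A\to B}(\phi_{RA})$ and $\omega^{\cM}_{RB} = (\cM_1)_{A\to B}(\phi_{RA})$ arise from applying $\cN_1$ and $\cM_1$ to the \emph{same} purified input $\phi_{RA}$, the definition of the geometric \Renyi channel divergence (a maximum over all purified inputs) immediately gives $\widehat D_{\a}(\omega^{\cN}_{RB}\|\omega^{\cM}_{RB}) \leq \widehat D_{\a}(\cN_1\|\cM_1)$. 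Combining the two displayed inequalities yields the claim.

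The only mildly delicate point is the bookkeeping on reference systems: the channel divergence $\widehat D_{\a}(\cN_1\|\cM_1)$ is defined via purifications of an input state on $A$ (with reference of dimension $|A|$), whereas here the state $\phi_{RA}$ may have a reference $R$ of a different dimension. This is handled by the standard fact — already implicit in the definition used throughout the paper and in the chain rule's use of an arbitrary reference $R$ — that enlarging the reference system does not change the channel divergence, because any $\rho_{RA}$ is obtained from a purification $\phi_{R'A}$ (with $|R'|=|A|$) by a channel on the reference, and the geometric \Renyi divergence satisfies data processing. So I expect the main (very minor) obstacle to be stating this reference-system reduction cleanly; the rest is a two-line chain of inequalities. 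No new inequality is needed beyond Lemma~\ref{lem_chainRule} and data processing for $\widehat D_{\a}$.
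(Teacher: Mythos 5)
Your proposal is correct and takes essentially the same route as the paper: apply the chain rule (Lemma~\ref{lem_chainRule}) to the outer pair $(\cN_2,\cM_2)$ with the states $\cN_1(\phi_{RA})$, $\cM_1(\phi_{RA})$, and then absorb the remaining state divergence into $\widehat D_{\alpha}(\cN_1\|\cM_1)$. The only cosmetic difference is in that last step: the paper invokes the chain rule a second time with $\rho_{RA}=\sigma_{RA}=\phi_{RA}$, so the extra term $\widehat D_{\alpha}(\phi_{RA}\|\phi_{RA})=0$ and the reference-dimension bookkeeping you discuss never arises, whereas you bound it directly from the definition of the channel divergence together with isometric invariance/data processing on the reference — both are valid.
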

\begin{proof}
This is a direct consequence of the chain rule in Lemma~\ref{lem_chainRule}. For any pure state $\phi_{AR}$, we have
\begin{align}
  \widehat D_{\a}(\cN_2\circ \cN_1 (\phi_{AR})\|\cM_2\circ \cM_1(\phi_{AR})) & \leq \widehat D_{\alpha}(\cN_2\|\cM_2) +  \widehat D_{\a}(\cN_1 (\phi_{AR})\|\cM_1(\phi_{AR}))\\
  & \leq \widehat D_{\alpha}(\cN_2\|\cM_2) +  \widehat D_{\a}(\cN_1\|\cM_1) + \widehat D_{\alpha}(\phi_{AR}\|\phi_{AR})\\
  & = \widehat D_{\alpha}(\cN_2\|\cM_2) +  \widehat D_{\a}(\cN_1\|\cM_1).
\end{align}
Taking a maximization of $\phi_{AR}$ on the left hand side, we will have the desired result.
\end{proof}

\begin{lemma}[Semidefinite representation]
\label{lem: SDP representation of the channel information measure}
Let $\bcV$ be a set of subchannels from $A$ to $B$ characterized by certain semidefinite conditions. For any quantum channel $\cN_{A\to B}$ and $\alpha(\ell) = 1+2^{-\ell}$ with $\ell \in \NN$, the optimization $\min_{\cM \in \bcV} \widehat D_{\a}(\cN\|\cM)$ can be computed by a semidefinite program: 
\begin{gather}
  2^\ell\cdot \log \min\ y \quad \text{\rm s.t.}\quad \dbhbig{M,\{N_i\}_{i=0}^\ell, J_{\cM}, y},\notag\\[2pt]
   \dbp{\begin{matrix}
    M & J_{\cN}\\
    J_{\cN} & N_{\ell}
  \end{matrix}},
  \left\{\dbp{\begin{matrix}
    J_{\cN} & N_{i} \\
    N_{i} & N_{i-1}
  \end{matrix}}\right\}_{i=1}^\ell, 
  \dbebigg{N_0 - J_{\cM}},
  \dbpbigg{y\1_A - \tr_B M}, \cM \in \bcV,
  \label{eq:  SDP formula for maximal Renyi channel divergence}
\end{gather}
where $J_{\cN}$ and $J_{\cM}$ are the corresponding Choi matrices of $\cN$ and $\cM$ respectively.
\end{lemma}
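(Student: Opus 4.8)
The plan is to derive this semidefinite program directly from the closed‑form expression of Lemma~\ref{lem: maximal Renyi channel divergence SDP} together with the semidefinite representation of the weighted matrix geometric mean from~\cite{fawzi2017lieb} (stated as Lemma~\ref{geometric SDP general lemma} in the appendix). First fix the discretization $\a=\a(\ell)=1+2^{-\ell}$, so that $\a-1=2^{-\ell}$, $\tfrac{1}{\a-1}=2^\ell$ and $1-\a=-2^{-\ell}$; this accounts for the outer factor $2^\ell$ and explains why the mean that appears is $G_{-2^{-\ell}}$. By Lemma~\ref{lem: maximal Renyi channel divergence SDP}, for a fixed subchannel $\cM$ one has $\widehat D_\a(\cN\|\cM)=2^\ell\log\|\tr_B G_{-2^{-\ell}}(J_\cN,J_\cM)\|_\infty$. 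I would then use the elementary identity $\|Z\|_\infty=\min\{y:Z\le y\1\}$ together with positivity and linearity of the partial trace to rewrite this as $2^\ell\log$ of the optimal value of the problem: minimize $y$ over Hermitian $M$ subject to $M\ge G_{-2^{-\ell}}(J_\cN,J_\cM)$ and $y\1_A-\tr_B M\ge 0$, the minimum being attained at $M=G_{-2^{-\ell}}(J_\cN,J_\cM)$ since enlarging $M$ only enlarges $\tr_B M$.

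The remaining task is to replace the constraint $M\ge G_{-2^{-\ell}}(J_\cN,J_\cM)$ by the stated chain of $\ell+1$ block linear matrix inequalities in the auxiliary Hermitian variables $N_0=J_\cM,N_1,\dots,N_\ell$; this is exactly the content of Lemma~\ref{geometric SDP general lemma}, and the short reason those particular blocks work is as follows. By the Schur complement, $\begin{pmatrix}J_\cN & N_i\\ N_i & N_{i-1}\end{pmatrix}\ge 0$ is equivalent to $N_i N_{i-1}^{-1}N_i\le J_\cN$, i.e.\ to $N_i\le G_{1/2}(N_{i-1},J_\cN)$; using the identities $G_{st}(X,Y)=G_s(X,G_t(X,Y))$ and $G_{1/2}(G_t(X,Y),X)=G_{t/2}(X,Y)$ and starting from $N_0=J_\cM=G_1(J_\cN,J_\cM)$, operator monotonicity of the geometric mean gives inductively $N_i\le G_{2^{-i}}(J_\cN,J_\cM)$, hence $N_\ell\le G_{2^{-\ell}}(J_\cN,J_\cM)$. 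Then $\begin{pmatrix}M & J_\cN\\ J_\cN & N_\ell\end{pmatrix}\ge 0$ is equivalent to $M\ge J_\cN N_\ell^{-1}J_\cN=G_{-1}(J_\cN,N_\ell)$, and antitonicity of inversion together with $G_{-1}(J_\cN,G_{2^{-\ell}}(J_\cN,J_\cM))=G_{-2^{-\ell}}(J_\cN,J_\cM)$ yields $M\ge G_{-2^{-\ell}}(J_\cN,J_\cM)$ for every feasible point; conversely the ``maximal'' chain $N_i=G_{2^{-i}}(J_\cN,J_\cM)$, $M=G_{-2^{-\ell}}(J_\cN,J_\cM)$ is itself feasible. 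So the projection of the feasible set onto $(M,J_\cN,J_\cM)$ is precisely $\{M\ge G_{-2^{-\ell}}(J_\cN,J_\cM)\}$.

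Putting the pieces together, I would append the condition $\cM\in\bcV$ and minimize jointly over $\cM$, the $N_i$, $M$ and $y$; since $\log$ and multiplication by $2^\ell$ are monotone, the value $2^\ell\log(\min y)$ equals $\min_{\cM\in\bcV}2^\ell\log\|\tr_B G_{1-\a}(J_\cN,J_\cM)\|_\infty=\min_{\cM\in\bcV}\widehat D_\a(\cN\|\cM)$, which is the claim. The Hermiticity tags $\dbh{\cdot}$ merely declare the matrix variables, and by hypothesis $\cM\in\bcV$ is a semidefinite condition, so the whole program is an SDP of the advertised size ($\ell+3$ linear matrix inequalities of size at most $2d\times 2d$ with $d=|A||B|$).

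The step I expect to be the real obstacle is not this geometric‑mean algebra (which is borrowed essentially verbatim from~\cite{fawzi2017lieb}) but the rank‑deficient case: when $J_\cM$, or one of the $N_i$, is not invertible, the expressions $N_i^{-1}$, $N_\ell^{-1}$ and $G_{-2^{-\ell}}$ must be read through pseudoinverses together with the range conditions that the block positive semidefinite constraints automatically encode, and one has to check that this matches the convention $\widehat D_\a(\cN\|\cM)=+\infty$ when $J_\cN\not\ll J_\cM$ (so that the SDP is then infeasible or unbounded). I would dispatch this by the same continuity/support argument used in the proof of Lemma~\ref{lem: maximal Renyi channel divergence SDP}: perturb $J_\cM\mapsto J_\cM+\varepsilon\1$, run the invertible argument, and let $\varepsilon\to 0$, verifying that optimal values converge and that feasibility is preserved in the limit.
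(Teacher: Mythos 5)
Your proposal is correct and follows essentially the same route as the paper: the paper's proof simply combines the closed-form expression of Lemma~\ref{lem: maximal Renyi channel divergence SDP}, the representation $\|X\|_\infty=\min\{y\,|\,X\leq y\1\}$, and the semidefinite representation of the weighted matrix geometric mean from~\cite{fawzi2017lieb} (Lemma~\ref{geometric SDP general lemma}), which are exactly your ingredients. Your second paragraph merely re-derives the content of that cited lemma (and your support/continuity remark is the same convention already handled there and in Lemma~\ref{lem: maximal Renyi channel divergence SDP}), so no new argument is needed.
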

\begin{proof}
This is a direct consequence of the closed-form expression in Lemma~\ref{lem: maximal Renyi channel divergence SDP} and the semidefinite representation of the weighted matrix geometric means in~\cite{fawzi2017lieb} (see also Lemma~\ref{geometric SDP general lemma} in Appendix~\ref{app: technical lemmas}), as well as the semidefinite representation of the infinity norm of an Hermitian operator $\|X\|_\infty = \min \{y\,|\, X \leq y\1\}$.
\end{proof}

\newpage
\section{Quantum communication}
\label{sec: Quantum communication}

\subsection{Backgrounds}

The \emph{quantum capacity} of a noisy quantum channel is the maximum rate at which it can reliably transmit quantum information over asymptotically many uses of the channel. There are two different quantum capacities of major concern, the (unassisted) quantum capacity $Q$ and the two-way assisted quantum capacity $Q^{\leftrightarrow}$ , depending on whether classical communication is allowed between each channel uses.
 
The well-established quantum capacity theorem shows
that the quantum capacity is equal to the regularized channel coherent information~\cite{Lloyd1997,Shor2002a,Devetak2005a,Schumacher1996a,Barnum2000,Barnum1998},
\begin{align}\label{eq: quantum channel coding theorem}
    Q(\cN) = \lim_{n\to \infty} \frac{1}{n} I_{c}(\cN^{\ox n}) = \sup_{n\in \NN} \frac{1}{n} I_{c}(\cN^{\ox n}),
\end{align}
where $I_c(\cN)\equiv \max_{\rho \in \cS} \left[H(\cN(\rho)) - H(\cN^c(\rho))\right]$ is the channel coherent information, $H$ is the von Neumann entropy and $\cN^c$ is the complementary channel of $\cN$. 
The regularization in~\eqref{eq: quantum channel coding theorem} is necessary in general since the channel coherent information is non-additive~\cite{DiVincenzo1998a,leditzky2018dephrasure} and an unbounded number of channel uses may be required to detect a channel's capacity~\cite{Cubitt2015}. For this reason, the quantum capacity is notoriously difficult to evaluate, not to mention the scenario with two-way classical communication assistance.

Substantial efforts have been made in providing single-letter lower and upper bounds (e.g. \cite{Holevo2001,Muller-Hermes2016,Smith2008a,Sutter2014,Gao2015a,Smith2008b}). Most of them require certain symmetries of the channel to be computable or relatively tight. Of particular interest is a strong converse bound given by Tomamichel et al.~\cite{tomamichel2017strong}.
Inspired by the Rains bound from entanglement theory~\cite{Rains2001}, they introduced the Rains information ($R$) of a quantum channel and further proved that it is a strong converse on quantum capacity. However, $R$ is not known to be computable in general due to its minimax optimization of the Umegaki relative entropy. For the ease of computability, Wang et al.~\cite{Wang2017d} relaxed the Umegaki relative entropy to the max-relative entropy, obtaining a variant known as the max-Rains information ($R_{\max}$). Leveraging the semidefinite representation of the max-relative entropy, they showed that $R_{\max}$ is efficiently computable via a semidefinite program.
It was later strengthened by Berta \& Wilde~\cite{Berta2017a} that $R_{\max}$ is also a strong converse on quantum capacity under two-way classical communication assistance. Since then, the max-Rains information $R_{\max}$  is arguably~\footnote{Another known strong converse bound is the entanglement-assisted quantum capacity~\cite{Bennett2014} which can be estimated by a algorithm in~\cite{fawzi2018efficient,Fawzi2017}. But this bound is usually larger than the max-Rains information since the entanglement assistance is too strong.} the best-known computable strong converse bound on both assisted and unassisted quantum capacities in general.

\subsection{Summary of results}

In this part, we aim to improve the max-Rains information in both assisted and unassisted scenarios. The structure of this part is organized as follows (see also a schematic diagram in Figure~\ref{fig: quantum communiation summary}).

In Section~\ref{sec: Maximal Renyi Rains information} we discuss the unassisted quantum communication. Based on the notion of the generalized Rains information in~\cite{tomamichel2017strong}, we exhibit that the generalized Rains information induced by the geometric \Renyi divergence ($\widehat R_{\a}$) can be computed as a semidefinite program (SDP), improving the previously known result of the max-Rains information~\cite{Wang2017d} in general. That is, we show that
\begin{align*}
Q(\cN) \leq Q^\dagger(\cN)\leq R(\cN) \leq \widehat R_{\a}(\cN) \leq R_{\max}(\cN),\quad \text{with}\ \widehat R_{\a}(\cN)\ \text{SDP computable},
\end{align*}
where $Q(\cN)$ and $Q^{\dagger}(\cN)$ denote the unassisted quantum capacity of channel $\cN$ and its corresponding strong converse capacity, respectively.

In Section~\ref{sec: Maximal Renyi Theta-information}, we study the quantum communication with PPT assistance, an assistance stronger than the two-way classical communication. We introduce the \emph{generalized Theta-information} which is a new variant of channel information inspired by the channel resource theory. More precisely, we define the generalized Theta-information as a ``channel distance'' to the class of subchannels given by the zero set of Holevo-Werner bound ($Q_{\Theta}$)~\cite{Holevo2001}. Interestingly, we show that the max-Rains information $R_{\max}$ coincides with the generalized Theta-information induced by the max-relative entropy $R_{\max,\Theta}$, i.e., $R_{\max} = R_{\max,\Theta}$, thus providing a completely new perspective of understanding the former quantity. Moreover, we prove that the generalized Theta-information induced by the geometric \Renyi divergence ($\widehat R_{\a,\Theta}$) is a strong converse on the PPT-assisted quantum capacity by utilizing an ``amortization argument''. Together with its SDP formula, we conclude that $\widehat R_{\a,\Theta}$ improves the previous result of the max-Rains information~\cite{Berta2017a} in general. That is, we show that
\begin{align*}
Q^{\PPT,\leftrightarrow}(\cN) \leq  Q^{\PPT,\leftrightarrow,\dagger}(\cN) \leq \widehat R_{\a,\Theta}(\cN) \leq R_{\max}(\cN),\quad \text{with}\ \widehat R_{\a,\Theta}(\cN)\ \text{SDP computable},
\end{align*}
where $Q^{\PPT,\leftrightarrow}(\cN)$ and $Q^{\PPT,\leftrightarrow,\dagger}(\cN)$ denote the PPT-assisted quantum capacity of channel $\cN$ and its corresponding strong converse capacity, respectively.

In Section~\ref{sec: Extension to bidirectional channels}, we consider the PPT-assisted quantum communication via bidirectional quantum channels, a more general model than the usual point-to-point channels. We extend the results in Section~\ref{sec: Maximal Renyi Theta-information} to this general model and demonstrate an improvement to the previous result of the bidirectional max-Rains information ($R_{\max}^{\bi}$)~\cite{Bauml2018}. That is, we show that
  \begin{align*}
  Q^{\bi,\PPT,\leftrightarrow}(\cN) \leq Q^{\bi,\PPT,\leftrightarrow,\dagger}(\cN) \leq \widehat R_{\a,\Theta}^{\bi}(\cN) \leq  R_{\max}^{\bi}(\cN), \quad \text{with}\ \widehat R_{\a,\Theta}^{\bi}(\cN)\ \text{SDP computable},
\end{align*}
where $Q^{\bi,\PPT,\leftrightarrow}(\cN)$ and $Q^{\bi,\PPT,\leftrightarrow,\dagger}(\cN)$ denote the PPT-assisted quantum capacity of a bidirectional channel $\cN$ and its corresponding strong converse capacity, respectively.

Finally in Section~\ref{sec: quantum capacity Examples} we investigate several fundamental quantum channels, demonstrating the efficiency of our new strong converse bounds. It turns out that our new bounds work exceptionally well and exhibit a significant improvement on the max-Rains information for almost all cases.

\begin{figure}[H]
\centering
\begin{tikzpicture}
\draw[fill=gray!10,opacity=0.3] (7,-11.5) rectangle node[gray,opacity=1,midway,shift={(0.2,2.7)}]{{\small SDP computable}} (13,-16.5);
\node (QPPT) at (0,-12) {$Q^{\text{PPT},\leftrightarrow}$};
\node (QPPTdagger) at (2.5,-12) {$Q^{\text{PPT},\leftrightarrow,\dagger}$};
\node (Qtwoway) at (0,-14) {$Q^{\leftrightarrow}$};
\node[circle,fill=magenta!10,inner sep=1pt,minimum size=2pt] (Qtwowaydagger) at (2.5,-14) {$Q^{\leftrightarrow,\dagger}$};
\node (Q) at (0,-16) {$Q$};
\node[circle,fill=magenta!10,inner sep=1pt,minimum size=2pt] (Qdagger) at (2.5,-16) {$Q^{\dagger}$};
\node (R) at (5,-16) {$R$};
\node[red] (Ra) at (7.5,-16) {$\widehat R_\a$};
\node[red] (Rasdp) at (7.5,-12) {$\widehat R_{\a,\Theta}$};
\node[red] (RmaxTheta) at (10,-12) {$R_{\max,\Theta}$};
\node (Rmax) at (10,-16) {$R_{\max}$};
\node (Qtheta) at (12.5,-16) {$Q_{\Theta}$};

\draw[very thick,->,dotted,gray] (R) -- node[black,rotate = -40,midway,shift={(-0.2,0.15)}] {\footnotesize cov.} node[black,rotate=-39,midway,shift={(-0.2,-0.2)}] {\scriptsize \cite{tomamichel2017strong}} (Qtwowaydagger);
\draw[thick,<->] (RmaxTheta) -- node[rotate = 90, midway,shift={(0.1,0.2)}] {\footnotesize Prop.~\ref{prop: Rains and Theta information}} (Rmax);
\draw[thick,->] (Qtheta) -- node[midway,shift={(0.1,0.2)}] {\scriptsize \cite{Wang2017d}} node[midway,shift={(0.1,-0.2)}] {\footnotesize $\boldsymbol\neq$}(Rmax);
\draw[thick,->] (RmaxTheta) -- node[midway,shift={(0.1,0.2)}] {\footnotesize Lem.~\ref{thm: divergence chain inequality}} node[midway,shift={(0.1,-0.2)}] {\footnotesize $\boldsymbol\neq$} (Rasdp);
\draw[thick,->] (Rmax) -- node[midway,shift={(0.1,0.2)}] {\footnotesize Lem.~\ref{thm: divergence chain inequality}} node[midway,shift={(0.1,-0.2)}] {\footnotesize $\boldsymbol\neq$} (Ra);
\draw[red,very thick,->] (Rasdp) -- node[black,midway,shift={(-0.1,0.2)}] {\footnotesize Thm.~\ref{thm: main result quantum assisted}} (QPPTdagger);
\draw[thick,->] (QPPTdagger) -- (QPPT);
\draw[thick,->] (QPPTdagger) -- (Qtwowaydagger);
\draw[thick,->] (QPPT) -- (Qtwoway);
\draw[thick,->] (Qtwowaydagger) -- (Qtwoway);
\draw[thick,->] (Qtwoway) -- (Q);
\draw[thick,->] (Qtwowaydagger) -- (Qdagger);
\draw[thick,->] (R) -- node[midway,shift={(0,0.2)}] {\scriptsize \cite{tomamichel2017strong}} node[midway,shift={(0.1,-0.2)}] {\footnotesize $\boldsymbol\neq$} (Qdagger);
\draw[thick,->] (Qdagger) -- (Q);
\draw[thick,->] (Ra) -- node[midway,shift={(0,0.2)}] {\footnotesize Lem.~\ref{thm: divergence chain inequality}} node[midway,shift={(0,-0.2)}] {\footnotesize $\boldsymbol\neq$} (R);
\draw[thick,->] (Rasdp) -- node[rotate = 90, midway,shift={(0.1,0.2)}] {\footnotesize Prop.~\ref{prop: Rains and Theta information}} node[rotate=90,midway,shift={(0,-0.2)}] {\footnotesize $\boldsymbol\neq$} (Ra);
\draw[thick,->] (Rmax) -- node[rotate=-30,midway,shift={(-0.2,0.2)}] {\scriptsize \cite{Berta2017a}} (QPPTdagger);
\draw[thick,->] (Rmax) -- (10,-17) --  node[midway,shift={(-0.2,0.2)}]{\scriptsize \cite{Wang2016a,Wang2017d}} (2.5,-17) -- (Qdagger);
\end{tikzpicture}
\caption{\small Relations between different converse bounds for quantum communication. $Q^*$ and $Q^{*,\dagger}$ are the quantum capacity with assistance $*$ and its corresponding strong converse capacity, respectively. $R$, $\widehat R_\a$ and $R_{\max}$ are the generalized Rains information induced by different quantum divergences. $\widehat R_{\a,\Theta}$ and $R_{\max,\Theta}$ are the generalized Theta-information induced by different quantum divergences. $Q_\Theta$ is the Holevo-Werner bound. The circled quantities are those of particular interest in quantum information theory. The key quantities and the main contributions in this part are marked in red. The quantity at the start point of an arrow is no smaller than the one at the endpoint. The double arrow represents that two quantities coincide. The inequality sign represents that two quantities are not the same in general. The dotted arrow represents that the relation holds under certain restrictions, where ``cov.'' stands for ``covariant''. The parameter $\a$ is taken in the interval $(1,2]$. The quantities in the shaded area are SDP computable in general.}
\label{fig: quantum communiation summary}
\end{figure}

\subsection{Unassisted quantum capacity}
\label{sec: Maximal Renyi Rains information}

In this section, we discuss converse bounds on the unassisted quantum capacity~\footnote{We refer to the work~\cite[Section II]{tomamichel2017strong} for rigorous definitions of the unassisted quantum capacity and its strong converse.}.

\begin{definition}[\cite{tomamichel2017strong}]
For any generalized divergence $\bD$, the generalized Rains bound of a quantum state $\rho_{AB}$ is defined as 
\begin{align}\label{eq: generalized Rains bound}
    \bR(\rho_{AB})\equiv \min_{\sigma_{AB} \in \PPT'(A:B)} \bD(\rho_{AB}\|\sigma_{AB}),
\end{align}
where the minimization is taken over the Rains set $\PPT'(A:B) \equiv \big\{\sigma_{AB}\,\big|\, \sigma_{AB} \geq 0,\,\big\|\sigma_{AB}^{\sfT_B}\big\|_1 \leq 1\big\}$.
\end{definition}

\begin{definition}[\cite{tomamichel2017strong}]
For any generalized divergence $\bD$, the generalized Rains information of a quantum channel $\cN_{A'\to B}$ is defined as
\begin{align}\label{generalized Rains information definition eq}
    \boldsymbol R(\cN) \equiv \max_{\rho_A \in \cS(A)} \boldsymbol R(\cN_{A'\to B}(\phi_{AA'})) =  \max_{\rho_A \in \cS(A)} \min_{\sigma_{AB} \in \PPT'(A:B)} \bD(\cN_{A'\to B}(\phi_{AA'})\|\sigma_{AB})
\end{align}
where $\phi_{AA'}$ is a purification of quantum state $\rho_A$. 
\end{definition}
In particular, the Rains information is induced by the Umegaki relative entropy~\cite{tomamichel2017strong},
\begin{align}
    R(\cN) = \max_{\rho_A \in \cS(A)}\min_{\sigma_{AB} \in \PPT'(A:B)}  D(\cN_{A'\to B}(\phi_{AA'})\|\sigma_{AB}).
\end{align}
The max-Rains information is induced by the max-relative entropy~\cite{Wang2017d},
\begin{align}
    R_{\max}(\cN) = \max_{\rho_A \in \cS(A)}\min_{\sigma_{AB} \in \PPT'(A:B)}  D_{\max}(\cN_{A'\to B}(\phi_{AA'})\|\sigma_{AB}).
\end{align}

Denote $\widehat R_{\a}$ as the generalized Rains information induced by the geometric \Renyi divergence. We have the following result.

\begin{theorem}[Main result 1]\label{thm: main result quantum unassisted}
    For any quantum channel $\cN$ and $\a \in (1,2]$, it holds
        \begin{align}
            Q(\cN) \leq Q^\dagger(\cN)\leq R(\cN) \leq \widehat R_{\a}(\cN) \leq R_{\max}(\cN),
        \end{align}
    where $Q(\cN)$ and $Q^{\dagger}(\cN)$ denote the unassisted quantum capacity of channel $\cN$ and its corresponding strong converse capacity, respectively.
\end{theorem}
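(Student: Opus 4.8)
The chain $Q(\cN) \le Q^\dagger(\cN) \le R(\cN)$ is exactly the strong converse result of Tomamichel, Berta, and Wilde~\cite{tomamichel2017strong}, so the only new content is the sandwich $R(\cN) \le \widehat R_\a(\cN) \le R_{\max}(\cN)$. The plan is to derive both inequalities pointwise, i.e.\ for each fixed input state $\rho_A$ (equivalently its purification $\phi_{AA'}$) and each fixed competitor $\sigma_{AB} \in \PPT'(A:B)$, and then propagate through the $\min$ over $\sigma_{AB}$ and the $\max$ over $\rho_A$ that define the generalized Rains information in~\eqref{generalized Rains information definition eq}.

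First I would invoke Lemma~\ref{thm: divergence chain inequality}, which gives $D(\xi\|\tau) \le \widehat D_\a(\xi\|\tau) \le D_{\max}(\xi\|\tau)$ for any state $\xi$ and subnormalized $\tau$ with $\xi \ll \tau$, in particular $\a \in (1,2]$. Applying this with $\xi = \cN_{A'\to B}(\phi_{AA'})$ and $\tau = \sigma_{AB}$ (the support condition $\xi \ll \tau$ can be arranged, or handled by the usual convention that both sides are $+\infty$ when it fails, so the inequalities still hold), we get, for every such $\phi$ and $\sigma$,
\begin{align*}
D(\cN_{A'\to B}(\phi_{AA'})\|\sigma_{AB}) \le \widehat D_\a(\cN_{A'\to B}(\phi_{AA'})\|\sigma_{AB}) \le D_{\max}(\cN_{A'\to B}(\phi_{AA'})\|\sigma_{AB}).
\end{align*}
Taking the minimum over $\sigma_{AB} \in \PPT'(A:B)$ on all three sides preserves the inequalities (the minimum of a pointwise-smaller function is smaller), yielding $R(\cN_{A'\to B}(\phi_{AA'})) \le \widehat R_\a(\cN_{A'\to B}(\phi_{AA'})) \le R_{\max}(\cN_{A'\to B}(\phi_{AA'}))$ for the state-level Rains bounds. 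Then taking the maximum over $\rho_A \in \cS(A)$ (again monotone) gives $R(\cN) \le \widehat R_\a(\cN) \le R_{\max}(\cN)$, which is the remaining claim.

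There is essentially no serious obstacle here: the theorem is a straightforward corollary of the already-established divergence ordering in Lemma~\ref{thm: divergence chain inequality} together with the cited strong converse bound $Q^\dagger(\cN) \le R(\cN)$ from~\cite{tomamichel2017strong}. The only point requiring a word of care is the support/finiteness bookkeeping when $\cN_{A'\to B}(\phi_{AA'}) \not\ll \sigma_{AB}$: in that case $D_{\max}$ (and hence all three quantities evaluated at that $\sigma$) is $+\infty$, so such $\sigma$ never attain the minimum unless every feasible $\sigma$ fails the support condition, in which case all three Rains informations are $+\infty$ simultaneously and the inequalities are trivial. Finally, one should note that $\widehat R_\a(\cN)$ is indeed SDP-computable — this follows from the closed-form channel-divergence expression (Lemma~\ref{lem: maximal Renyi channel divergence SDP}) combined with the semidefinite representation of the matrix geometric mean (Lemma~\ref{lem: SDP representation of the channel information measure}) applied to the semidefinite-representable set $\PPT'(A:B)$ — but this computability claim is auxiliary to the inequality chain asserted in the theorem statement.
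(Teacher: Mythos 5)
Your proposal is correct and follows exactly the paper's route: the first two inequalities are quoted from the strong converse result of~\cite{tomamichel2017strong}, and the last two follow from Lemma~\ref{thm: divergence chain inequality} applied pointwise and propagated through the $\min$ over $\PPT'(A:B)$ and the $\max$ over input states. Your extra remarks on the support/finiteness convention and on SDP computability are sensible bookkeeping but do not change the argument, which matches the paper's proof.
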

\begin{proof}
The first two inequalities follow since the Rains information $R(\cN)$ has been proved to be a strong converse bound on the unassisted quantum capacity~\cite{tomamichel2017strong}. The last two inequalities are direct consequences of the inequalities in Lemma~\ref{thm: divergence chain inequality}. 
\end{proof}

\begin{remark}
  Note that in the limit of $\a \to 1$, the bound $\widehat R_{\a}$ will converge to the Rains information induced by the Belavkin-Staszewski relative entropy as mentioned in Eq.~\eqref{eq: geo and BS}.
\end{remark}

\vspace{0.2cm}
The following result shows how to compute the newly introduced bound $\widehat R_{\a}(\cN)$ as an SDP. 

\begin{proposition}[SDP formula]\label{prop: maximal renyi rains SDP formula}
For any quantum channel $\cN$ and $\a(\ell) = 1+2^{-\ell}$ with $\ell\in \mathbb N$, it holds
\begin{align}
    \widehat R_{\a}(\cN) = \ell \cdot 2^\ell - (2^\ell+1)\log(2^\ell+1) + (2^\ell +1) \log S_\a(\cN),
\end{align}
with $S_\a(\cN)$ given by the following SDP
\begin{gather}
    S_\a(\cN) = \max \ \tr\left[\left(\plsdagger{K} - \ssum_{i=1}^\ell  W_i \right)\boldsymbol\cdot J_{\cN}\right] \quad \text{\rm s.t.}\quad \dblbig{ K,\{ Z_i\}_{i=0}^\ell},\dbhbig{\{ W_i\}_{i=1}^\ell,\rho}, \notag\\
    \dbp{\begin{matrix}
        \rho\ox \1 &  K\\ {K}^\dagger & \plsdagger{ Z}_{\ell}
    \end{matrix}},
    \left\{\dbp{\begin{matrix}
         W_{i} &  Z_{i}\\
         Z_{i}^\dagger & \plsdagger{ Z}_{i-1}
    \end{matrix}}\right\}_{i=1}^\ell, 
    \dbpbigg{\rho\ox \1 \pm \left[\plsdagger{ Z}_0\right]^{\sfT_B}},\dbebigg{\tr \rho - 1}\label{Renyi Rains information max},
\end{gather}
where $J_{\cN}$ is the Choi matrix of $\cN$ and $\plsdagger{X} \equiv X + X^\dagger$ denotes the Hermitian part of $X$.
\end{proposition}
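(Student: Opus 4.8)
The plan is to start from the closed-form expression for the geometric \Renyi channel divergence (Lemma~\ref{lem: maximal Renyi channel divergence SDP}) and the definition of the generalized Rains information, and then to unfold everything into a single semidefinite program. First I would write
\begin{align*}
\widehat R_{\a}(\cN) = \min_{\sigma_{AB} \in \PPT'(A:B)} \widehat D_{\a}(\cN\|\cM_\sigma),
\end{align*}
recognizing that the Rains set of states $\sigma_{AB}$ with $\|\sigma_{AB}^{\sfT_B}\|_1 \le 1$ corresponds exactly (via the Choi isomorphism $J_{\cM} = \sigma_{AB}$) to a semidefinite-representable set $\bcV$ of subchannels; the condition $\|\sigma^{\sfT_B}\|_1\le 1$ is equivalent to $\exists\,\tau \ge \pm\,\sigma^{\sfT_B}$ with $\tr\tau \le 1$, which after taking partial transpose back becomes $\rho\ox\1 \pm [\,\cdot\,]^{\sfT_B}\ge 0$ as in \eqref{Renyi Rains information max} (note the maximally-entangled-state normalization that turns $\tr\tau\le 1$ into $\tr\rho \le 1$ with $\rho$ playing the role of $\tr_B$ of a feasible operator). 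Then I would invoke Lemma~\ref{lem: SDP representation of the channel information measure} with $\bcV$ equal to this set, giving $\min_{\cM\in\bcV}\widehat D_{\a}(\cN\|\cM)$ as the SDP \eqref{eq:  SDP formula for maximal Renyi channel divergence}.

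The bulk of the work is then to take the \emph{dual} of that minimization SDP. The primal SDP \eqref{eq:  SDP formula for maximal Renyi channel divergence} minimizes $y$ subject to a ladder of $2\times 2$ block-PSD constraints (the $N_i$'s implementing the dyadic approximation of the geometric mean via iterated Schur complements), the linear equality $N_0 = J_{\cM}$, the constraint $\tr_B M \le y\1_A$, and $\cM\in\bcV$. I would introduce dual multipliers: a PSD matrix $\begin{bmatrix} W_i & Z_i \\ Z_i^\dagger & \cdot\end{bmatrix}$-type block variable paired with each primal block inequality, the variable $\rho\ox\1$ against the $\tr_B M \le y\1_A$ constraint and against the $\PPT'$ membership condition, and Lagrange/sign multipliers for the PPT inequalities $\rho\ox\1 \pm [\plsdagger{Z}_0]^{\sfT_B}\ge 0$. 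Carefully tracking the complementary-slackness / KKT stationarity conditions, the objective $\tr[(\plsdagger{K} - \sum_i W_i)\cdot J_{\cN}]$ emerges: the $\plsdagger{K}$ term comes from pairing the top-left-to-bottom-right block coupling $M$ with $J_{\cN}$ in the first block inequality, and the $-\sum W_i$ from the ladder blocks. Strong duality holds because both programs are strictly feasible (Slater), so the optimal values coincide, yielding $S_\a(\cN)$.

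Finally I would reconcile the logarithms and constants. Writing $\a = \a(\ell) = 1 + 2^{-\ell}$, we have $\tfrac{1}{\a-1} = 2^\ell$, so Lemma~\ref{lem: maximal Renyi channel divergence SDP} gives $\widehat D_{\a}(\cN\|\cM) = 2^\ell \log\|\tr_B G_{1-\a}(J_{\cN},J_{\cM})\|_\infty$, matching the $2^\ell \cdot \log\min y$ prefactor of the SDP. The affine-in-$\log$ correction $\ell\cdot 2^\ell - (2^\ell+1)\log(2^\ell+1)$ arises from the normalization bookkeeping in the dual: the dyadic-mean SDP representation of $x^{1-\a}$ (equivalently $x^{-2^{-\ell}}$) from~\cite{fawzi2017lieb} carries scaling factors that, when one passes to the dual variables $K, Z_i, W_i$ and extracts a clean maximization, get absorbed into a multiplicative constant $(2^\ell+1)^{-(2^\ell+1)}\cdot 2^{\ell 2^\ell}$ on $S_\a(\cN)$; taking logs produces the stated closed form. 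The main obstacle I anticipate is precisely this dualization bookkeeping: correctly threading the Schur-complement ladder through to its dual, getting every sign on the $\pm[\plsdagger{Z}_0]^{\sfT_B}$ PPT constraint right, verifying Slater's condition for both programs, and pinning down the exact constant $\ell\cdot 2^\ell - (2^\ell+1)\log(2^\ell+1)$ rather than something off by a normalization — the conceptual content is routine SDP duality, but the accounting is delicate and error-prone.
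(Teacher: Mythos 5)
There is a genuine gap, and it is at the very first step. You identify $\widehat R_{\a}(\cN)$ with $\min_{\cM \in \bcV}\widehat D_{\a}(\cN\|\cM)$ for a set $\bcV$ of subchannels whose Choi matrices lie in the Rains set, and then invoke Lemma~\ref{lem: SDP representation of the channel information measure}. But the Rains information is $\max_{\rho_A}\min_{\sigma_{AB}\in\PPT'}\widehat D_{\a}(\cN_{A'\to B}(\phi_{AA'})\|\sigma_{AB})$, where the second argument is an arbitrary operator in $\PPT'$ (after Sion's minimax, a fixed operator independent of $\rho_A$), whereas in a channel divergence $\widehat D_{\a}(\cN\|\cM)$ the second argument is $\cM(\phi_{AA'})=\sqrt{\rho_A}\,J_{\cM}\sqrt{\rho_A}$, which is rescaled along with the input. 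These are different optimizations: one only gets $\bR(\cN)\le\bR_{\Theta}$-type inequalities this way (cf.\ Proposition~\ref{prop: Rains and Theta information}), with equality known only for $D_{\max}$; for the geometric \Renyi divergence the channel-distance quantity $\widehat R_{\a,\Theta}$ is generally \emph{strictly larger} than $\widehat R_{\a}$ (this is exactly the distinction drawn in Figure~\ref{fig: quantum communiation summary}). So your route computes an SDP for a different quantity — essentially the Theta-type bound of Proposition~\ref{prop: SDP formula for maximal Rains theta info} with a modified set — not for $\widehat R_{\a}(\cN)$. One can also see this structurally: dualizing a channel-level minimization as you propose yields a constraint of the form $f\1_{AB}\pm[\plsdagger{Z}_0]^{\sfT_B}\ge 0$ with an extra scalar $-f$ in the objective, not the constraint $\rho\ox\1\pm[\plsdagger{Z}_0]^{\sfT_B}\ge 0$ of~\eqref{Renyi Rains information max}.

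The paper's proof instead works at the \emph{state} level: it writes the minimization SDP for $\widehat R_{\a}(\rho_{AB})$ by combining the geometric-mean representation with the Rains-set representation $\sigma^{\sfT_B}=X-Y$, $\tr(X+Y)\le 1$, dualizes it (Slater), and then — this is the step your proposal has no counterpart for — eliminates the scalar dual variable $f$ of the trace constraint by the homogenizing substitution $\widetilde K=f^{-1/(2^\ell+1)}K$, $\widetilde Z_i=f^{-(2^{\ell-i}+1)/(2^\ell+1)}Z_i$, $\widetilde W_i=f^{-1/(2^\ell+1)}W_i$ and analytic optimization over $f$; this is precisely what produces the constants $\ell\cdot 2^\ell-(2^\ell+1)\log(2^\ell+1)$ and the factor $(2^\ell+1)\log S_\a$, not ``scaling factors in the dyadic-mean representation'' as you suggest. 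Only afterwards does one substitute $\rho_{AB}=\sqrt{\rho_A}J_{\cN}\sqrt{\rho_A}$ and absorb $\rho_A^{\pm 1/2}$ into $K,Z_i,W_i$, which is how the outer maximization over input states merges into a single SDP and why $\rho\ox\1$ (with $\tr\rho=1$) appears in the block and PPT constraints. Without the state-level dualization and the $f$-elimination trick, the maximization over $\rho_A$ cannot be fused with the inner minimization into one SDP, and the stated constants cannot be recovered.
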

\begin{proof}
   The proof involves a non-trivial scaling technique for variables replacement, which is important for simplifying the minimax optimization of $\widehat R_{\alpha}$ to a single SDP. A detailed proof is given in Section~\ref{sec: quantum capacity detailed proofs}.

\end{proof}

\subsection{Two-way assisted quantum capacity}
\label{sec: Maximal Renyi Theta-information}

In this section, we discuss converse bounds on two-way assisted quantum capacity~\footnote{We refer to~\cite[Section 4]{Berta2017a} for rigorous definitions of the PPT/two-way assisted quantum capacity and its strong converse.}.
Recall that the Rains bound in~\eqref{eq: generalized Rains bound} is essentially established as the divergence between the given state and the Rains set --- a set of sub-normalized  states given by the zero set~\footnote{It makes no difference by considering $\|\rho_{AB}^{\sfT_B}\|_1 = 1$ or $\|\rho_{AB}^{\sfT_B}\|_1 \leq 1$.} of the log-negativity $E_N(\rho_{AB})\equiv \log \|\rho_{AB}^{\sfT_B}\|_1$ \cite{Plenio2005b}. With this in mind, we introduce a new variant of the channel's analog of Rains bound, compatible with the notion of channel resource theory. Specifically, consider the Holevo-Werner bound~\cite{Holevo2001} --- a channel's analog of the log-negativity,
\begin{align}
    Q_\Theta(\cN) \equiv \log \|\Theta_B\circ\cN_{A\to B}\|_\di,
\end{align}
where $\Theta$ is the transpose map and 
$\|\cF_{A'\to B}\|_\di\equiv \sup_{X_{AA'}\in \cL(AA')} \|\cF_{A'\to B}(X_{AA'})\|_1/ \|X_{AA'}\|_1$
 is the diamond norm~\cite{kitaev1997quantum}. In particular, this bound can be represented as the following SDP,
\begin{align}
    Q_\Theta(\cN) = \log \min \left\{y\,\Big|\, Y_{AB} \pm J_{\cN}^{\sfT_B} \geq 0,\, \tr _B Y_{AB} \leq y \1_A\right\}.
\end{align}
Inspired by the formulation of the Rains set, we define the set of subchannels given by the zero set of the Holevo-Werner bound $Q_{\Theta}$ as
\begin{align}\label{channel's rains set}
    \bcV_\Theta  \equiv \left\{\cM \in \text{CP}(A:B)\,\Big|\, \exists\, Y_{AB}, \ \text{s.t.}\ Y_{AB} \pm J_{\cM}^{\sfT_B} \geq 0,\, \tr _B Y_{AB} \leq \1_A\right\}.
\end{align}

\begin{definition}[Theta-info.]\label{def: generalized Rains theta infor}
    For any generalized divergence $\bD$, the generalized Theta-information~\footnote{The name follows from the Theta set $\bcV_\Theta$ where $\Theta$ was originally used as the transpose map in the Holevo-Werner bound.} of a quantum channel $\cN_{A'\to B}$ is defined as
    \begin{align}\label{generalized Rains theta information definition eq}
    \bR_{\Theta}(\cN)\equiv \min_{\cM \in \bcV_\Theta}\bD(\cN\|\cM) = \min_{\cM \in \bcV_\Theta} \max_{\rho_A\in \cS(A)} \bD(\cN_{A'\to B} (\phi_{AA'})\|\cM_{A'\to B}(\phi_{AA'})),
\end{align}
where ${\bcV_\Theta}$ is the Theta set in~\eqref{channel's rains set} and $\phi_{AA'}$ is a purification of quantum state $\rho_A$.
\end{definition}

\begin{remark}\label{Rains theta swap}
    On the r.h.s. of Eq.~\eqref{generalized Rains theta information definition eq}, the objective function is concave in $\rho_A$ and convex in $\cM$~\cite[Proposition 8]{wang2019converse}. Thus we can swap the min and max by using Sion's minimax theorem~\cite{Sion1958}.
\end{remark}

The following result compares the generalized Theta-information in~\eqref{generalized Rains theta information definition eq} and the generalized Rains information in~\eqref{generalized Rains information definition eq} presented in the previous section. Interestingly, these two quantities coincide for the max-relative entropy in general.

\begin{proposition}\label{prop: Rains and Theta information}
    For any generalized divergence $\bD$ and any quantum channel $\cN$, it holds    
    \begin{align}\label{eq: rains and theta info}
        \bR(\cN) \leq \bR_\Theta(\cN).
    \end{align}
  Moreover, for the max-relative entropy the equality always holds, i.e,
  \begin{align}\label{eq: max rains and theta info}
    R_{\max}(\cN) = R_{\max,\Theta}(\cN).
  \end{align}
\end{proposition}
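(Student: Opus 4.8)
The plan is to prove the two statements separately. For the inequality $\bR(\cN) \le \bR_\Theta(\cN)$, the key observation is that every subchannel $\cM \in \bcV_\Theta$ produces, when applied to a purification $\phi_{AA'}$ of $\rho_A$, a state $\cM_{A'\to B}(\phi_{AA'})$ that lies in the Rains set $\PPT'(A:B)$. Indeed, if $Y_{AB}$ witnesses $\cM \in \bcV_\Theta$, i.e.\ $Y_{AB} \pm J_{\cM}^{\sfT_B} \ge 0$ and $\tr_B Y_{AB} \le \1_A$, then one checks that $\|\cM_{A'\to B}(\phi_{AA'})^{\sfT_B}\|_1 \le 1$ by relating the partial transpose of the output state to $J_{\cM}^{\sfT_B}$ via the identity $\cM_{A'\to B}(\phi_{AA'}) = \sqrt{\rho_A}\, J_{\cM}\, \sqrt{\rho_A}$ from~\eqref{eq_kun1}, and bounding the trace norm using $Y_{AB}$ as a dominating operator. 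Hence for a fixed $\rho_A$,
\[
\min_{\cM \in \bcV_\Theta} \bD(\cN_{A'\to B}(\phi_{AA'})\|\cM_{A'\to B}(\phi_{AA'})) \ge \min_{\sigma_{AB}\in\PPT'(A:B)} \bD(\cN_{A'\to B}(\phi_{AA'})\|\sigma_{AB}) = \bR(\cN_{A'\to B}(\phi_{AA'})).
\]
Taking $\max_{\rho_A}$ on the left, after swapping min and max on the Theta side via Remark~\ref{Rains theta swap}, yields $\bR_\Theta(\cN) \ge \bR(\cN)$.

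For the reverse inequality in the max-relative entropy case, I would use the state-independence of the max-relative channel divergence, $D_{\max}(\cN\|\cM) = D_{\max}(J_{\cN}\|J_{\cM})$ from~\eqref{eq: Dmax channel divergence}, together with the SDP characterization of $R_{\max}(\cN)$ from~\cite{Wang2017d}: $R_{\max}(\cN) = \log\min\{\,\|\tr_B (V_{AB}+W_{AB})\|_\infty \,:\, V_{AB},W_{AB}\ge 0,\ (V_{AB}-W_{AB})^{\sfT_B} \ge \pm J_{\cN}\,\}$ (the standard max-Rains SDP). The goal is to turn an optimal $(V,W)$ into a subchannel $\cM \in \bcV_\Theta$ with $D_{\max}(J_{\cN}\|J_{\cM})$ matching the value. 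Concretely, given $t$ with $J_{\cN} \le t\, J_{\cM}$ achieving $D_{\max}$, one needs $\cM$ to be completely positive with a partial-transpose-bounding witness $Y_{AB}$ satisfying $\tr_B Y_{AB}\le\1_A$; rescaling by $t$ translates these into exactly the feasibility constraints of the max-Rains SDP. So the two SDPs are related by the substitution $J_{\cM} = J_{\cN}' / t$ type of reparametrization, and one shows each is feasible for the other with the same objective, giving $R_{\max,\Theta}(\cN) \le R_{\max}(\cN)$, hence equality with the already-established inequality~\eqref{eq: rains and theta info}.

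I expect the main obstacle to be the reverse direction: carefully matching the SDP for $R_{\max,\Theta}$ — which involves the minimization over $\cM\in\bcV_\Theta$ of $D_{\max}(J_{\cN}\|J_{\cM})$, itself an SDP with the scaling variable $t$ and the constraints $Y_{AB}\pm J_{\cM}^{\sfT_B}\ge 0$, $\tr_B Y_{AB}\le\1_A$, $J_{\cN}\le t J_{\cM}$, $\cM\in\CP$ — against the known max-Rains SDP, and verifying that the change of variables (essentially absorbing $t$ into $J_{\cM}$ and splitting $Y_{AB}$ into positive and negative parts to form $V_{AB}$ and $W_{AB}$) is a bijection on feasible points preserving the objective. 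The CP constraint $J_{\cM}\ge 0$ on $\cM$ and the trace-nonincreasing aspect must be checked to be automatically implied or harmless; a minor point is that $\cM$ need only be a subchannel, so no trace-preservation constraint complicates the matching. Everything else is routine manipulation of block-positivity and trace-norm inequalities.
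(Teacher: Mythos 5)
Your proposal is correct and takes essentially the same route as the paper: the inequality is obtained by showing $\cM_{A'\to B}(\phi_{AA'})\in\PPT'(A:B)$ for every $\cM\in\bcV_\Theta$ and then swapping min and max via Remark~\ref{Rains theta swap}, and the equality by matching the max-Rains SDP of~\cite{Wang2017d} with $\min_{\cM\in\bcV_\Theta}D_{\max}(J_{\cN}\|J_{\cM})$ through the same rescaling (absorbing $t$ into $J_{\cM}$) and the correspondence $V=(R+J_{\cM}^{\sfT_B})/2$, $Y=(R-J_{\cM}^{\sfT_B})/2$, exactly as in the paper's proof. The only slip, the spurious constraint $(V-W)^{\sfT_B}\geq -J_{\cN}$ in your quoted SDP, is harmless since it is implied by $(V-W)^{\sfT_B}\geq J_{\cN}\geq 0$.
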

\begin{proof}
   A detailed proof is given in Section~\ref{sec: quantum capacity detailed proofs}.
\end{proof}

\vspace{0.2cm}
We proceed to consider the geometric \Renyi divergence and show its amortization property, a key ingredient to proving the strong converse bound on the assisted quantum capacity in Theorem~\ref{thm: main result quantum assisted}.  

Suppose Alice and Bob share a quantum state $\rho_{A'AB'}$ with the system cut $A'A:B'$. Their shared entanglement with respect to the measure $\widehat R_\a$ is given by $\widehat R_\a(\rho_{A'A:B'})$. If Alice redistributes part of her system $A$ through the channel $\cN_{A\to B}$ and Bob receives the output system $B$, then their shared state becomes to $\o_{A':BB'} = \cN_{A\to B}(\rho_{A'A:B'})$ with the shared entanglement evaluated as $\widehat R_\a(\o_{A':BB'})$. The amortization inequality shows that the amount of entanglement change after the state redistribution is upper bounded by the channel's information measure $\widehat R_{\a,\Theta}(\cN)$. 

\begin{proposition}[Amortization]\label{amortization proposition}
    For any quantum state $\rho_{A'AB'}$, any quantum channel $\cN_{A\to B}$ and the parameter $\a \in (1,2]$, it holds
    \begin{align}
    \widehat R_\a(\omega_{A':BB'}) \leq \widehat R_\a(\rho_{A'A:B'}) + \widehat R_{\a,\Theta}(\cN_{A\to B})\quad \text{with} \quad \omega_{A':BB'} = \cN_{A\to B}(\rho_{A'A:B'}).
\end{align}
\end{proposition}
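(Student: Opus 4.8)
The plan is to prove the amortization inequality
\begin{align*}
\widehat R_\a(\omega_{A':BB'}) \leq \widehat R_\a(\rho_{A'A:B'}) + \widehat R_{\a,\Theta}(\cN_{A\to B})
\end{align*}
by combining the chain rule for the geometric \Renyi divergence (Lemma~\ref{lem_chainRule}) with a careful choice of an optimizer of each of the two quantities on the right-hand side, and then checking that the resulting operator lies in the Rains set $\PPT'(A':BB')$. First I would unfold the definitions: pick $\sigma_{A'AB'} \in \PPT'(A'A:B')$ achieving the minimum in $\widehat R_\a(\rho_{A'A:B'}) = \widehat D_\a(\rho_{A'A:B'}\|\sigma_{A'AB'})$, and pick a subchannel $\cM_{A\to B} \in \bcV_\Theta$ achieving the minimum in $\widehat R_{\a,\Theta}(\cN) = \widehat D_\a(\cN\|\cM)$. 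The natural candidate comparator for $\omega_{A':BB'} = \cN_{A\to B}(\rho_{A'A:B'})$ is then $\tau_{A':BB'} \equiv \cM_{A\to B}(\sigma_{A'AB'})$, and the chain rule immediately gives
\begin{align*}
\widehat D_\a(\omega_{A':BB'} \| \tau_{A':BB'}) = \widehat D_\a(\cN_{A\to B}(\rho_{A'A:B'}) \| \cM_{A\to B}(\sigma_{A'AB'})) \leq \widehat D_\a(\rho_{A'A:B'}\|\sigma_{A'AB'}) + \widehat D_\a(\cN\|\cM),
\end{align*}
which is exactly the right-hand side of the claim.

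The remaining — and main — step is to verify that $\tau_{A':BB'} = \cM_{A\to B}(\sigma_{A'AB'})$ belongs to the Rains set $\PPT'(A':BB')$, i.e. that it is positive semidefinite (clear, since $\cM$ is CP and $\sigma \geq 0$) and satisfies $\|\tau_{A':BB'}^{\sfT_{BB'}}\|_1 \leq 1$. Positivity is immediate; the trace-norm bound is the crux. Here I would exploit the defining property of $\bcV_\Theta$: since $\cM \in \bcV_\Theta$ there exists $Y_{AB}$ with $Y_{AB} \pm J_{\cM}^{\sfT_B} \geq 0$ and $\tr_B Y_{AB} \leq \1_A$, which says precisely that the map $\Theta_B \circ \cM_{A\to B}$ has completely bounded trace norm (diamond norm) at most $1$, i.e. $\|\Theta_B \circ \cM\|_\di \leq 1$. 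Then, writing $\tau_{A':BB'}^{\sfT_{BB'}} = (\Theta_{B} \otimes \Theta_{B'})\big(\cM_{A\to B}(\sigma_{A'AB'})\big) = (\Theta_{B}\circ\cM_{A\to B})\big(\sigma_{A'AB'}^{\sfT_{B'}}\big)$ (transposes on disjoint systems commute with the channel action on other systems), I would bound
\begin{align*}
\big\|\tau_{A':BB'}^{\sfT_{BB'}}\big\|_1 = \big\|(\Theta_B \circ \cM_{A\to B})(\sigma_{A'AB'}^{\sfT_{B'}})\big\|_1 \leq \|\Theta_B \circ \cM\|_\di \cdot \big\|\sigma_{A'AB'}^{\sfT_{B'}}\big\|_1 \leq 1 \cdot 1 = 1,
\end{align*}
using $\sigma_{A'AB'} \in \PPT'(A'A:B')$ for the last factor and the definition of the diamond norm (applied with the identity on the spectator system $A'$) for the first inequality.

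The hard part, and where I would be most careful, is this last displayed chain: one must make sure the diamond-norm bound is applied correctly — the diamond norm of $\Theta_B\circ\cM$ is defined with an auxiliary system, so $\|(\id_{A'}\otimes(\Theta_B\circ\cM_{A\to B}))(X)\|_1 \leq \|\Theta_B\circ\cM\|_\di\|X\|_1$ holds for any $X$ on the combined system, and here $X = \sigma_{A'AB'}^{\sfT_{B'}}$ lives on $A'A$ (with $B'$ a further spectator that transposition does not disturb). One also needs the elementary identity $(\Theta_{BB'})\circ\cM_{A\to B} = (\Theta_B\circ\cM_{A\to B})\circ\Theta_{B'}$ as maps, which holds because $\cM$ acts only on $A$ and partial transposes on disjoint subsystems commute. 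Once $\tau_{A':BB'}\in\PPT'(A':BB')$ is established, the definition of $\widehat R_\a(\omega_{A':BB'})$ as a minimum over the Rains set gives $\widehat R_\a(\omega_{A':BB'}) \leq \widehat D_\a(\omega_{A':BB'}\|\tau_{A':BB'})$, and combining with the chain-rule bound above and the choices of $\sigma$ and $\cM$ as optimizers completes the proof.
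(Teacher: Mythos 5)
Your proposal is correct and follows essentially the same route as the paper's proof: choose optimizers $\sigma_{A'AB'}\in\PPT'(A'A:B')$ and $\cM\in\bcV_\Theta$, apply the chain rule of Lemma~\ref{lem_chainRule} to $\cM_{A\to B}(\sigma_{A'AB'})$, and verify membership in $\PPT'(A':BB')$ via $\|\Theta_B\circ\cM_{A\to B}(\sigma_{A'AB'}^{\sfT_{B'}})\|_1\leq\|\Theta_B\circ\cM\|_\di\,\|\sigma_{A'AB'}^{\sfT_{B'}}\|_1\leq 1$, which is exactly the argument in the paper. Your extra care about commuting partial transposes on disjoint systems and the spectator-system form of the diamond-norm bound is a sound elaboration of the same steps.
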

\begin{proof}
  This is a direct consequence of the chain rule property of the geometric \Renyi divergence in Lemma~\ref{lem_chainRule}.  A detailed proof is given in Section~\ref{sec: quantum capacity detailed proofs}. 
\end{proof}

\begin{figure}[H]
\centering
\begin{tikzpicture}
\draw[very thick, colorone] (-5.6,0.7) -- (-3.9,0.7) node[black,midway,shift={(-0.5,0.2)}] {\scriptsize $A_1'$};
\draw[very thick, colorone] (-3,0.7) -- (-1.3,0.7)node[black,midway,shift={(-0.5,0.2)}] {\scriptsize $A_2'$};
\draw[very thick, colorone] (1.6,0.7) -- (3.3,0.7)node[black,midway,shift={(-0.5,0.2)}] {\scriptsize $A_n'$};

\draw[very thick, colorthree] (-5.6,-0.7) -- (-3.9,-0.7) node[black,midway,shift={(-0.5,0.2)}] {\scriptsize $B_1'$};
\draw[very thick, colorthree] (-3,-0.7) -- (-1.3,-0.7) node[black,midway,shift={(-0.5,0.2)}] {\scriptsize $B_2'$};
\draw[very thick, colorthree] (1.6,-0.7) -- (3.3,-0.7) node[black,midway,shift={(-0.5,0.2)}] {\scriptsize $B_n'$};

\draw[very thick, colorone] (-5.6,0) -- (-5,0)  node[black,midway,shift={(0.05,0.2)}] {\scriptsize $A_1$};
\draw[very thick, colorone] (-3,0) -- (-2.4,0) node[black,midway,shift={(0.05,0.2)}] {\scriptsize $A_2$};
\draw[very thick, colorone] (1.6,0) -- (2.2,0) node[black,midway,shift={(0.05,0.2)}] {\scriptsize $A_n$};

\draw[very thick, colorthree] (-4.5,0) -- (-3.9,0) node[black,midway,shift={(0.05,0.2)}] {\scriptsize $B_1$};
\draw[very thick, colorthree] (-1.9,0) -- (-1.3,0) node[black,midway,shift={(0.05,0.2)}] {\scriptsize $B_2$};
\draw[very thick, colorthree] (2.7,0) -- (3.3,0) node[black,midway,shift={(0.05,0.2)}] {\scriptsize $B_n$};

\draw[very thick, colorone] (4.2,0.7)  -- node[black,midway,shift={(0,0.2)}] {\scriptsize $M_A$} (4.9,0.7) -- (5.5,0);
\draw[very thick, colorthree] (4.2,-0.7) -- node[black,midway,shift={(0,0.2)}] {\scriptsize $M_B$} (4.9,-0.7) -- (5.5,0);

\node at (6.2,0) {\scriptsize $\omega_{M_AM_B}$};

\draw[thick,gray,dashed] (-5.48,1.2) -- (-5.48,-1.3) node[black,below] {\scriptsize $\rho^{(1)}_{A_1'A_1B_1'}$};
\draw[thick,gray,dashed] (-2.9,1.2) -- (-2.9,-1.3) node[black,below] {\scriptsize $\rho^{(2)}_{A_2'A_2B_2'}$};
\draw[thick,gray,dashed] (1.7,1.2) -- (1.7,-1.3) node[black,below] {\scriptsize $\rho^{(n)}_{A_n'A_nB_n'}$};

\draw[thick,gray,dashed] (-4.4,1.3) node[black,above] {\scriptsize $\sigma^{(1)}_{A_1'B_1B_1'}$} -- (-4.4,-1.2);
\draw[thick,gray,dashed] (-1.8,1.3) node[black,above] {\scriptsize $\sigma^{(2)}_{A_2'B_2B_2'}$} -- (-1.8,-1.2);
\draw[thick,gray,dashed] (2.8,1.3) node[black,above] {\scriptsize $\sigma^{(n)}_{A_n'B_nB_n'}$} -- (2.8,-1.2);

\draw[very thick, rounded corners = 1] (-6.5,1.1) rectangle (-5.6,-1.1) node[midway] {\large $\cO$};
\draw[very thick, rounded corners = 1] (-3.9,1.1) rectangle (-3,-1.1) node[midway] {\large $\cO$};
\draw[very thick, rounded corners = 1] (-1.3,1.1) rectangle (-0.4,-1.1) node[midway] {\large $\cO$};
\draw[very thick, rounded corners = 1] (0.7,1.1) rectangle (1.6,-1.1) node[midway] {\large $\cO$};
\draw[very thick, rounded corners = 1] (3.3,1.1) rectangle (4.2,-1.1) node[midway] {\large $\cO$};

\draw[very thick, rounded corners = 1] (-5,0.3) rectangle (-4.5,-0.4) node[midway] {$\cN$};
\draw[very thick, rounded corners = 1] (-2.4,0.3) rectangle (-1.9,-0.4) node[midway] {$\cN$};
\draw[very thick, rounded corners = 1] (2.2,0.3) rectangle (2.7,-0.4) node[midway] {$\cN$};

\node at (-0.1,0) [circle,fill,inner sep=1pt]{};
\node at (0.1,0) [circle,fill,inner sep=1pt]{};
\node at (0.3,0) [circle,fill,inner sep=1pt]{};

\node at (-0.1,0.5) [circle,fill,inner sep=1pt]{};
\node at (0.1,0.5) [circle,fill,inner sep=1pt]{};
\node at (0.3,0.5) [circle,fill,inner sep=1pt]{};

\node at (-0.1,-0.5) [circle,fill,inner sep=1pt]{};
\node at (0.1,-0.5) [circle,fill,inner sep=1pt]{};
\node at (0.3,-0.5) [circle,fill,inner sep=1pt]{};

\end{tikzpicture}
\caption{\small A schematic diagram for the protocol of $\cO$-assisted quantum communication that uses a quantum channel $n$ times, where $\cO$ is usually chosen as $\LOCC$ or $\PPT$. Every channel use is interleaved by an operation in the class $\cO$. The goal of such a protocol is to produce an approximate maximally entangled state $\o_{M_A M_B}$ between Alice and Bob. The systems in red are held by Alice while the systems in blue are held by Bob.}
\label{fig: twoway assisted quantum capacity}
\end{figure}
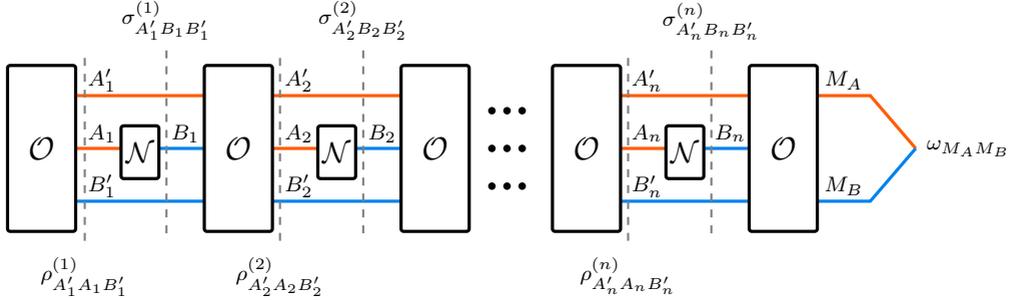

\begin{theorem}[Main result 2]\label{thm: main result quantum assisted}
    For any quantum channel $\cN$ and $\a \in (1,2]$, it holds
        \begin{align}
            Q^{\PPT,\leftrightarrow}(\cN) \leq  Q^{\PPT,\leftrightarrow,\dagger}(\cN) \leq \widehat R_{\a,\Theta}(\cN) \leq R_{\max}(\cN),
        \end{align}
    where $Q^{\PPT,\leftrightarrow}(\cN)$ and $Q^{\PPT,\leftrightarrow,\dagger}(\cN)$ denote the PPT-assisted quantum capacity of channel $\cN$ and its corresponding strong converse capacity, respectively.
\end{theorem}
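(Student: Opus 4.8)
The plan is to establish the four inequalities in the chain from left to right, treating the two outer inequalities as essentially definitional/known and concentrating effort on the middle converse bound $Q^{\PPT,\leftrightarrow,\dagger}(\cN) \leq \widehat R_{\a,\Theta}(\cN)$. The leftmost inequality $Q^{\PPT,\leftrightarrow}(\cN) \leq Q^{\PPT,\leftrightarrow,\dagger}(\cN)$ is immediate, since the strong converse capacity is by definition at least the capacity. The rightmost inequality $\widehat R_{\a,\Theta}(\cN) \leq R_{\max}(\cN)$ follows by combining Proposition~\ref{prop: Rains and Theta information}, which gives $R_{\max}(\cN) = R_{\max,\Theta}(\cN)$, with the pointwise divergence bound $\widehat D_{\a}(\rho\|\sigma) \leq D_{\max}(\rho\|\sigma)$ from Lemma~\ref{thm: divergence chain inequality}: taking the max over inputs and the min over $\cM \in \bcV_\Theta$ preserves this inequality, so $\widehat R_{\a,\Theta}(\cN) \leq R_{\max,\Theta}(\cN) = R_{\max}(\cN)$.

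The substantive part is the middle inequality. I would follow the standard ``amortization argument'' for two-way assisted capacities (as in Berta--Wilde), using Figure~\ref{fig: twoway assisted quantum capacity} as the template for a general $n$-round PPT-assisted protocol: the shared state evolves as a sequence of PPT operations interleaved with $n$ uses of $\cN$, producing a final state $\omega_{M_AM_B}$ that is $\eps$-close to a maximally entangled state of Schmidt rank $2^{nR}$. The key steps are: (i) show $\widehat R_{\a,\Theta}$ is monotone (non-increasing) under PPT operations applied to the bipartite state, so that the interleaving PPT rounds do not increase it — this should follow because $\bcV_\Theta$ is closed under composition with PPT maps and because $\widehat D_\a$ satisfies data processing; (ii) apply the amortization inequality of Proposition~\ref{amortization proposition} iteratively across the $n$ channel uses, telescoping to obtain $\widehat R_\a(\omega_{M_AM_B}) \leq n\, \widehat R_{\a,\Theta}(\cN)$ (the initial product state has $\widehat R_\a = 0$); (iii) lower-bound $\widehat R_\a(\omega_{M_AM_B})$ in terms of $nR$ using a ``converse bound for approximate maximally entangled states'' — specifically, since $\widehat D_\a \geq \widetilde D_\a$ (Lemma~\ref{thm: divergence chain inequality}), one can reuse the known estimate that for a state $\eps$-close to $\Phi_{2^{nR}}$ the generalized Rains bound (w.r.t.\ a Rényi-type divergence with the strong-converse parameter $\a$) is at least $nR$ minus a correction term vanishing in the relevant limit; (iv) combine to get $nR \leq n\,\widehat R_{\a,\Theta}(\cN) + (\text{correction})$, divide by $n$, and take $n \to \infty$ to conclude $R \leq \widehat R_{\a,\Theta}(\cN)$ whenever the fidelity does not go to zero, which is exactly the statement that $\widehat R_{\a,\Theta}(\cN)$ is a strong converse rate.

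For step (iii) I would invoke the relation $\widehat D_\a \geq \widetilde D_\a$ together with the fact that for the sandwiched Rényi Rains bound one has, for $\omega$ with $F(\omega,\Phi_k)^2 \geq 1-\eps$, a bound of the form $\widetilde R_\a(\omega_{M_AM_B}) \geq \log k - \frac{\a}{\a-1}\log\frac{1}{1-\eps}$ (this is the step that produces strong-converse-type behavior, since the penalty stays bounded as $k$ grows). Combining with the telescoped amortization bound and rearranging gives $\log k \leq n\,\widehat R_{\a,\Theta}(\cN) + \frac{\a}{\a-1}\log\frac{1}{1-\eps}$; with $\log k = nR$ this forces $\eps \to 1$ (fidelity to zero) whenever $R > \widehat R_{\a,\Theta}(\cN)$, which is the strong converse claim.

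The main obstacle I anticipate is step (i): verifying carefully that $\widehat R_{\a,\Theta}$ behaves correctly under the interleaving PPT operations, i.e., that one genuinely gets a clean telescoping without accumulating error. The amortization inequality of Proposition~\ref{amortization proposition} already packages the channel uses, but one must check that PPT pre- and post-processing on the bipartite registers is absorbed either into the $\widehat R_\a$ monotonicity of the state term or into the structure of $\bcV_\Theta$; the fact that $\bcV_\Theta$ is defined via the Holevo--Werner/transpose condition, which is preserved under composition with PPT maps, is what makes this work, and spelling out that closure property (together with data processing of $\widehat D_\a$ under the CP maps involved) is the technical heart of the argument. The remaining steps are routine given the chain-rule machinery already developed in Theorem~\ref{thm: summary of properties} and the divergence comparisons in Lemma~\ref{thm: divergence chain inequality}.
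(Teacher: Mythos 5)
Your proposal is correct and follows essentially the same route as the paper: the outer inequalities are handled exactly as you say (definition, and Lemma~\ref{thm: divergence chain inequality} plus Proposition~\ref{prop: Rains and Theta information}), while the middle inequality is proved by monotonicity of the state-level Rains bound $\widehat R_\a$ under the interleaving PPT operations, telescoping via the amortization inequality of Proposition~\ref{amortization proposition} to get $\widehat R_\a(\omega_{M_AM_B}) \leq n\,\widehat R_{\a,\Theta}(\cN)$, and a lower bound $\widehat R_\a(\omega) \geq nr + \tfrac{\a}{\a-1}\log(1-\ve)$ yielding the exponential strong-converse decay. The only cosmetic difference is in that last step: the paper derives the lower bound directly for $\widehat D_\a$ via data processing under a binary measurement and monotonicity of the classical binary R\'enyi divergence, whereas you pass through $\widehat D_\a \geq \widetilde D_\a$ and invoke the known sandwiched-R\'enyi estimate, which gives the same quantitative bound.
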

\begin{proof}
    The first inequality holds by definition. The last inequality holds since we have $ \widehat R_{\a,\Theta}(\cN) \leq R_{\max,\Theta}(\cN) = R_{\max}(\cN)$ by Lemma~\ref{thm: divergence chain inequality} and Proposition~\ref{prop: Rains and Theta information}, respectively. It remains to prove the second one. Once we have the amortization inequality in Proposition~\ref{amortization proposition}, the proof of the second inequality will closely follow the one in~\cite[Theorem 3]{Berta2017a}. Consider $n$ round PPT-assisted quantum communication protocol illustrated in Figure~\ref{fig: twoway assisted quantum capacity}. For each round, denote the input state of $\cN$ as $\rho_{A'AB'}^{\scriptscriptstyle (i)}$ and the output state as $\sigma_{A'BB'}^{\scriptscriptstyle (i)}$. The final state after $n$ rounds communication is denoted as $\o_{M_AM_B}$. Then we have
    \begin{align}
         \widehat R_\a(\omega_{M_AM_B}) & \leq \widehat R_\a\big(\sigma_{A':BB'}^{\scriptscriptstyle (n)}\big)\\
         & = \widehat R_\a\big(\sigma_{A':BB'}^{\scriptscriptstyle (n)}\big) - \widehat R_{\a}\big(\rho^{\scriptscriptstyle (1)}_{A'A:B'}\big)\\
         & \leq \widehat R_\a\big(\sigma_{A':BB'}^{\scriptscriptstyle (n)}\big) + \sum\nolimits_{i=1}^{n-1} \left[\widehat R_\a\big(\sigma_{A':BB'}^{\scriptscriptstyle (i)}\big) - \widehat R_{\a}\big(\rho^{\scriptscriptstyle (i+1)}_{A'A:B'}\big)\right]- \widehat R_{\a}\big(\rho^{\scriptscriptstyle (1)}_{A'A:B'}\big)\\
         & = \sum\nolimits_{i=1}^n \Big[\widehat R_\a\big(\sigma_{A':BB'}^{\scriptscriptstyle (i)}\big) - \widehat R_{\a}\big(\rho^{\scriptscriptstyle (i)}_{A'A:B'}\big)\Big]\\
         & \leq n \widehat R_{\a,\Theta}(\cN).\label{renyi strong converse tmp}
    \end{align} 
    The first and third lines follow from the monotonicity of the geometric \Renyi Rains bound $\widehat R_\a$ with respect to the PPT operations~\cite[Eq.~(22)]{tomamichel2017strong}. The second line follows since $\rho_{A'A:B'}^{\scriptscriptstyle (1)}$ is a PPT state and thus $\widehat R_{\a}(\rho^{\scriptscriptstyle (1)}_{A'A:B'}) = 0$. The last line follows from Proposition~\ref{amortization proposition}. 

    For any communication protocol with triplet $(n,r,\ve)$, denote $k \equiv 2^{nr}$. This implies
    $\tr \Phi_k \,\omega \geq 1-\ve$ with $\Phi_k$ being the $k$-dimensional maximally entangled state. Moreover, for any $\sigma \in \PPT'$, it holds
    $\tr \Phi_k \sigma \leq 1/k$~\cite{Rains2001}. Without loss of generality, we can assume that $\ve \leq 1-2^{-nr}$. Otherwise, the strong converse would already hold for any rates above the capacity since $1-\ve < 2^{-nr}$. Thus for any $\sigma \in \PPT'$ we have
    \begin{align}
        1- \tr \Phi_k \omega \leq \ve \leq 1- 2^{-nr} \leq 1 - \tr \Phi_k \sigma.
    \end{align}
  Let $\cN(\gamma) = (\tr \Phi_k \gamma) \ket{0}\bra{0} + (\tr (\1-\Phi_k) \gamma) \ket{1}\bra{1}$. Due to the data-processing inequality, we have
    \begin{align}\label{eq: quantum twoway main theorem tmp1}
        \widehat D_\a(\omega\|\sigma) \geq \widehat D_\a(\cN(\omega)\|\cN(\sigma)) = \delta_\a( 1-\tr \Phi_k \omega\|1-\tr \Phi_k \sigma) \geq \delta_\a(\ve\|1- 2^{-nr}),
    \end{align}
    where $\delta_\a(p\|q)\equiv \frac{1}{\a-1} \log \big[p^\a q^{1-\a} + (1-p)^\a (1-q)^{1-\a}\big]$ is the binary classical \Renyi divergence. The last inequality in~\eqref{eq: quantum twoway main theorem tmp1} follows from the monotonicity property that $\delta_\a(p'\|q) \leq \delta_\a(p\|q)$ if $p \leq p' \leq q$ and $\delta_\a(p\|q') \leq \delta_\a(p\|q)$ if $p \leq q' \leq q$~\cite{Polyanskiy2010b}.
    Then we have
    \begin{align}
        \widehat R_\a(\omega) & = \min_{\sigma \in \PPT'} \widehat D_{\a}(\omega\|\sigma)\\
        & \geq \delta_\a(\ve\|1- 2^{-nr})\\
        & \geq \frac{1}{\a-1} \log (1-\ve)^\a(2^{-nr})^{1-\a} \\
        & = \frac{\a}{\a-1} \log (1-\ve) + nr.
        \label{renyi strong converse tmp1}
    \end{align}
    Combining Eqs.~\eqref{renyi strong converse tmp} and~\eqref{renyi strong converse tmp1}, we have
    \begin{align}
        \frac{\a}{\a-1} \log (1-\ve) + nr \leq n \widehat R_{\a,\Theta}(\cN),
    \end{align}
    which is equivalent to 
    \begin{align}
        1-\ve \leq 2^{-n \left(\frac{\a-1}{\a}\right)\left[r - \widehat R_{\a,\Theta}(\cN)\right]}.
    \end{align}
    This implies that if the communication rate $r$ is strictly larger than $\widehat R_{\a,\Theta}(\cN)$, the fidelity of transmission $1-\ve$ decays exponentially fast to zero as the number of channel use $n$ increases. Or equivalently, we have the strong converse inequality $Q^{\PPT,\leftrightarrow,\dagger}(\cN) \leq \widehat R_{\a,\Theta}(\cN) $ and completes the proof.
\end{proof}

\vspace{0.2cm}
Let $Q^{\leftrightarrow}$ and $Q^{\leftrightarrow,\dagger}$ be the two-way assisted quantum capacity and its strong converse capacity respectively. We have the following as a direct consequence of Theorem~\ref{thm: main result quantum assisted}, since PPT assistance is stronger.

\begin{corollary}\label{cor: two-way quantum capacity}
For any quantum channel $\cN$ and $\a\in (1,2]$, it holds
\begin{align}
    Q^{\leftrightarrow}(\cN) \leq  Q^{\leftrightarrow,\dagger}(\cN) \leq \widehat R_{\a,\Theta}(\cN) \leq R_{\max}(\cN).
\end{align}
\end{corollary}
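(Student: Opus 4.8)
The plan is to obtain this corollary purely by a resource-inclusion argument that feeds into Theorem~\ref{thm: main result quantum assisted}, with no new analytic content. The key observation is that every LOCC operation between Alice and Bob is in particular a PPT-preserving operation; consequently, any $n$-round two-way (LOCC-)assisted quantum communication protocol of the form depicted in Figure~\ref{fig: twoway assisted quantum capacity} with $\cO = \LOCC$ is also a valid PPT-assisted protocol (take $\cO = \PPT$) achieving exactly the same parameter triplet $(n,r,\ve)$.

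First I would record the two monotonicity facts that follow immediately from this inclusion: $Q^{\leftrightarrow}(\cN) \leq Q^{\PPT,\leftrightarrow}(\cN)$ and $Q^{\leftrightarrow,\dagger}(\cN) \leq Q^{\PPT,\leftrightarrow,\dagger}(\cN)$. For the capacities this is clear, since any rate achievable with LOCC assistance is also achievable with PPT assistance. For the strong converse capacities one uses the contrapositive: any rate that is a strong converse rate under PPT assistance is, a fortiori, a strong converse rate under the weaker LOCC assistance, because the same family of codes is still available and their transmission fidelities still decay to zero. Next, the leftmost inequality $Q^{\leftrightarrow}(\cN) \leq Q^{\leftrightarrow,\dagger}(\cN)$ holds by definition, since the weak capacity is always at most the strong converse capacity (no rate below the weak capacity can be a strong converse rate).

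Finally I would assemble the chain. Combining the monotonicity facts above with Theorem~\ref{thm: main result quantum assisted}, which yields $Q^{\PPT,\leftrightarrow,\dagger}(\cN) \leq \widehat R_{\a,\Theta}(\cN) \leq R_{\max}(\cN)$, we get
\[
Q^{\leftrightarrow}(\cN) \leq Q^{\leftrightarrow,\dagger}(\cN) \leq Q^{\PPT,\leftrightarrow,\dagger}(\cN) \leq \widehat R_{\a,\Theta}(\cN) \leq R_{\max}(\cN),
\]
which is the claimed statement. There is no genuine obstacle here: the only point requiring a little care is ensuring that the operational definitions of the two-way and PPT-assisted (strong converse) capacities are phrased so that the inclusion $\LOCC \subseteq \PPT$ transfers cleanly between them — but this is already built into the framework of~\cite{Berta2017a} that underlies Theorem~\ref{thm: main result quantum assisted}, so nothing beyond a one-line invocation of that inclusion is needed.
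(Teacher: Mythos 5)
Your proposal is correct and is essentially the paper's own argument: the paper derives the corollary in one line from Theorem~\ref{thm: main result quantum assisted} by noting that PPT assistance is stronger than LOCC assistance, which is precisely the inclusion $\LOCC \subseteq \PPT$ you invoke. Your version merely spells out the resulting monotonicity of the capacities and strong converse capacities in more detail.
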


Finally, we present how to compute $\widehat R_{\a,\Theta}(\cN)$ as an SDP.

\begin{proposition}[SDP formula]\label{prop: SDP formula for maximal Rains theta info}
    For any quantum channel $\cN_{A'\to B}$ and $\a(\ell) = 1+2^{-\ell}$ with $\ell \in \mathbb N$, the geometric \Renyi Theta-information can be computed as an SDP:
\begin{gather}
    \widehat R_{\a,\Theta}(\cN)= 2^\ell\cdot \log \min \ y \quad \text{\rm s.t.}\quad \dbhbig{M,\{N_i\}_{i=0}^\ell,R,y},\notag\\[2pt]
     \dbp{\begin{matrix}
        M & J_{\cN}\\
        J_{\cN} & N_{\ell}
    \end{matrix}},
    \left\{\dbp{\begin{matrix}
        J_{\cN} & N_{i} \\
        N_{i} & N_{i-1}
    \end{matrix}}\right\}_{i=1}^\ell, 
    \dbpbigg{R \pm N_0^{\sfT_B}},
    \dbpbigg{\1 - \tr_B R}, \dbpbigg{y \1_A - \tr_B M },\label{eq:  SDP formula for maximal Rains theta info}
\end{gather}
where $J_{\cN}$ is the Choi matrix of $\cN$.
\end{proposition}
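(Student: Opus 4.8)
The plan is to obtain this program as a direct specialization of the generic semidefinite representation in Lemma~\ref{lem: SDP representation of the channel information measure} to the set of subchannels $\bcV_\Theta$ of~\eqref{channel's rains set}. By Definition~\ref{def: generalized Rains theta infor}, $\widehat R_{\a,\Theta}(\cN) = \min_{\cM \in \bcV_\Theta} \widehat D_\a(\cN\|\cM)$, so once I check that $\bcV_\Theta$ is cut out by semidefinite conditions on the Choi matrix $J_\cM$, Lemma~\ref{lem: SDP representation of the channel information measure} applies verbatim. Unwinding~\eqref{channel's rains set}, a completely positive map $\cM$ lies in $\bcV_\Theta$ exactly when there is a Hermitian operator $R_{AB}$ with $R_{AB} \pm J_\cM^{\sfT_B} \geq 0$ and $\1_A - \tr_B R_{AB} \geq 0$; these are linear matrix inequalities in $(J_\cM, R_{AB})$.

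Next I would substitute this description for the abstract constraint ``$\cM \in \bcV$'' in~\eqref{eq:  SDP formula for maximal Renyi channel divergence} and simplify. The equality constraint $N_0 - J_\cM = 0$ there allows elimination of $J_\cM$: I set $J_\cM = N_0$ everywhere, so $R_{AB} \pm J_\cM^{\sfT_B} \geq 0$ becomes $R_{AB} \pm N_0^{\sfT_B} \geq 0$, and I drop $J_\cM$ together with the equality constraint. The block inequality $\bigl[\begin{smallmatrix} J_\cN & N_1 \\ N_1 & N_0\end{smallmatrix}\bigr] \geq 0$ already forces $N_0 \geq 0$, so complete positivity of $\cM$ comes for free and need not be imposed separately. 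Renaming $R_{AB}$ to $R$ leaves precisely the program displayed in the statement, and the prefactor $2^\ell = 1/(\a(\ell)-1)$ is carried over unchanged from Lemma~\ref{lem: SDP representation of the channel information measure} via $\widehat D_\a = \tfrac{1}{\a-1}\log\|\cdot\|_\infty$.

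I do not expect any genuine obstacle; the work is the reduction above plus two routine sanity checks. First, Lemma~\ref{lem: SDP representation of the channel information measure} is stated for sets of \emph{subchannels}, so one must confirm that every $\cM \in \bcV_\Theta$ is trace non-increasing: applying $\tr_B$ to $R_{AB} \geq J_\cM^{\sfT_B}$ and using $\tr_B(J_\cM^{\sfT_B}) = \tr_B J_\cM$ yields $\tr_B J_\cM \leq \tr_B R_{AB} \leq \1_A$. Second, the whole argument rests on the semidefinite representability of the weighted matrix geometric means (Lemma~\ref{geometric SDP general lemma}) and of the infinity norm of a Hermitian operator, which are exactly the ingredients behind Lemma~\ref{lem: SDP representation of the channel information measure}; since that lemma is already established, the present proposition follows immediately.
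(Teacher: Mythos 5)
Your proposal is correct and follows essentially the same route as the paper: the paper's proof is exactly the observation that the program is Lemma~\ref{lem: SDP representation of the channel information measure} specialized to the semidefinite description of $\bcV_\Theta$ in~\eqref{channel's rains set}, with $J_{\cM}$ eliminated via $N_0 = J_{\cM}$. Your additional sanity checks (trace non-increase of members of $\bcV_\Theta$ and positivity of $N_0$ being implied by the block constraints) are accurate and simply make explicit what the paper leaves implicit.
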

\begin{proof}
This directly follows from Lemma~\ref{lem: SDP representation of the channel information measure} and the definition of the Theta set $\bcV_\Theta$ in~\eqref{channel's rains set}.
\end{proof}

\subsection{Extension to bidirectional channels}
\label{sec: Extension to bidirectional channels}

In this section we showcase that the above results for the PPT/two-way assisted quantum capacity can be extended to a more general scenario where Alice and Bob share a bidirectional quantum channel.

A bipartite quantum channel $\cN_{A_1B_1\to A_2B_2}$ is a completely positive trace-preserving map that sending composite system $A_1B_1$ to $A_2B_2$. This channel is called \emph{bidirectional channel} if $A_1A_2$ are held by Alice and $B_1B_2$ are held by Bob. That is, Alice and Bob each input a state to this channel and receive an output~\cite{Bennett2003bidirectional}, as depicited in Figure~\ref{fig: bidirectional channel model}. This is the most general setting for two-party communications and will reduce to the usual point-to-point channel when the dimensions of Bob's input and Alice's output are trivial, i.e., $\dim(\cH_{B_1}) = \dim(\cH_{A_2}) = 1$.

\begin{figure}[H]
\centering
\begin{tikzpicture}
\begin{scope}[shift={(3,-3)}]
\draw[very thick] (-1,0.5) rectangle node[midway] {$\cN$}(1,-0.5);
\draw[very thick,->,colorone] (-2.2,0.2)  -- node[midway,shift={(0,0.25)},black] {$A_1$}(-1,0.2);
\draw[very thick,<-,colorone] (-2.2,-0.2) -- node[midway,shift={(0,-0.25)},black] {$A_2$}(-1,-0.2);
\draw[very thick,<-,colorthree] (1,0.2)  -- node[midway,shift={(0,0.25)},black] {$B_1$}(2.2,0.2);
\draw[very thick,->,colorthree] (1,-0.2) -- node[midway,shift={(0,-0.25)},black] {$B_2$}(2.2,-0.2);
\end{scope}

\end{tikzpicture}

\caption{\small A model of bidirectional quantum channel where $A_1$, $A_2$ are held by Alice and $B_1$, $B_2$ by Bob.}
\label{fig: bidirectional channel model}
\end{figure}
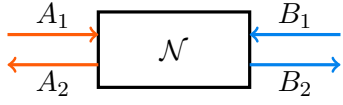

In~\cite{Bauml2018}, the authors introduced the bidirectional version of the max-Rains information as
\begin{gather}\label{eq: bi max Rains info}
  R^{\bi}_{\max}(\cN_{A_1B_1\to A_2 B_2}) \equiv \log \min \|\tr_{A_2B_2} (V_{A_1B_1A_2B_2} + Y_{A_1B_1A_2B_2})\|_{\infty}\quad \text{s.t.}\\[2pt]
  V_{A_1B_1A_2B_2} \geq 0, Y_{A_1B_1A_2B_2} \geq 0, (V_{A_1B_1A_2B_2}-Y_{A_1B_1A_2B_2})^{\sfT_{B_1B_2}} \geq J_{A_1B_1A_2B_2}^{\cN}.\notag
\end{gather}

Let $Q^{\bi,\PPT,\leftrightarrow}$ and $Q^{\bi,\PPT,\leftrightarrow,\dagger}$ be the PPT-assisted quantum capacity of a bidirectional channel and its strong converse capacity  respectively~\footnote{We refer to the work~\cite[Page 2-3]{Bauml2018} for rigorous definitions of the PPT/two-way assisted quantum capacity of a bidirectional channel and its strong converse.}. It was proved in~\cite{Bauml2018} that 
\begin{align}\label{eq: bi max Rains strong converse}
  Q^{\bi,\PPT,\leftrightarrow}(\cN_{A_1B_1\to A_2B_2})\leq Q^{\bi,\PPT,\leftrightarrow,\dagger}(\cN_{A_1B_1\to A_2B_2}) \leq R^\bi_{\max}(\cN_{A_1B_1\to A_2B_2}).
\end{align}
Following a similar approach in Section~\ref{sec: Maximal Renyi Theta-information}, we can further strengthen this bound by exploiting the geometric \Renyi divergence.

We start with a bidirectional version of the Werner-Holevo bound~\footnote{Note that this quantity was also independently introduced in~\cite{Gour2019} as well as in~\cite{Bauml2019} when studying the resource theory of bidirectional quantum channels.}
\begin{align}
  Q^\bi_{\Theta}(\cN_{A_1B_1\to A_2B_2}) \equiv \log \|\Theta_{B_2}\circ \cN_{A_1B_1\to A_2B_2}\circ \Theta_{B_1}\|_\di,
\end{align}
and define its zero set $\bcV^\bi_{\Theta}$ which admits a semidefinite representation as
\begin{align}\label{eq: bi theta set}
  \bcV^\bi_{\Theta} = \Big\{\cM\in \CP(A_1B_1:A_2B_2)\,\Big|\, \exists R_{A_1B_1A_2B_2}, \ \text{s.t.}\ R \pm J_{\cM}^{\sfT_{B_1B_2}} \geq 0,\, \tr_{A_2B_2} R \leq \1_{A_1B_1}\Big\}.
\end{align}
Using the same idea as the point-to-point scenario, we defined the generalized Theta-information of a bidirectional channel $\cN_{A_1B_1\to A_2 B_2}$ as the ``channel distance''
  \begin{align}
  \bR^\bi_\Theta(\cN_{A_1B_1\to A_2B_2})\equiv \min_{\cM \in \bcV^\bi_{\Theta}} \bD(\cN\|\cM),
\end{align}
where $\bD$ is a generalized divergence and the channel divergence follows from the usual definition
\begin{align}
  \bD(\cN\|\cM) \equiv \max_{\phi_{A_1B_1A_3B_3}}\bD(\cN_{A_1B_1\to A_2B_2}(\phi_{A_1B_1A_3B_3})\|\cM_{A_1B_1\to A_2B_2}(\phi_{A_1B_1A_3B_3}))
\end{align}
by maximizing over all the pure states $\phi_{A_1B_1A_3B_3}$.

Following a similar proof of Proposition~\ref{prop: Rains and Theta information}, we can show that the bidirectional max-Rains information defined in~\eqref{eq: bi max Rains info} coincides with the bidirectional Theta-information induced by the max-relative entropy~\footnote{Note that this relation was independently found in the recent work~\cite{Bauml2019} where the authors used this result to simplify a proof in~\cite{Bauml2018} as stated in Eq.~\eqref{eq: bi max Rains strong converse}}. That is, 
\begin{align}
  R^\bi_{\max}(\cN_{A_1B_1\to A_2B_2}) = R^\bi_{\max,\Theta}(\cN_{A_1B_1\to A_2B_2}).
\end{align}

Denote the bidirectional Rains bound as 
$\widehat R_{\a}^{\bi}(\rho) \equiv \min_{\sigma \geq 0, \|\sigma^{\sfT_{B_1B_2}}\|_1 \leq 1} \widehat D_{\alpha}(\rho\|\sigma)$. A similar proof as Proposition~\ref{amortization proposition} will give us the following amortization inequality.

\begin{proposition}[Amortization]\label{prop: amortization bidirectional}
For any quantum state $\rho_{A_1A_3:B_1B_3}$, any bidirectional quantum channel $\cN_{A_1B_1\to A_2B_2}$ and $\a \in (1,2]$, it holds
\begin{gather}
  \widehat R_{\a}^{\bi}(\omega_{A_2A_3:B_2B_3}) \leq \widehat R_{\a}^{\bi}(\rho_{A_1A_3:B_1B_3}) + \widehat R_{\a,\Theta}^{\bi}(\cN_{A_1B_1\to A_2B_2}),  
\end{gather}  
with the output state $\omega_{A_2A_3:B_2B_3} = \cN_{A_1B_1\to A_2B_2}(\rho_{A_1A_3:B_1B_3})$.
\end{proposition}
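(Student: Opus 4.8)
The plan is to transcribe the proof of Proposition~\ref{amortization proposition} almost verbatim, with the point-to-point Rains set and Theta set replaced by their bidirectional analogues and with the chain rule of Lemma~\ref{lem_chainRule} applied to the bipartite channel $\cN_{A_1B_1\to A_2B_2}$, where $A_3B_3$ now plays the role of the reference system and $A_1B_1$ the role of the input system. Concretely, I would first pick optimizers: let $\sigma_{A_1A_3:B_1B_3}$ attain $\widehat R_\a^\bi(\rho_{A_1A_3:B_1B_3})$ (so that $\sigma\geq 0$, $\|\sigma^{\sfT_{B_1B_3}}\|_1\leq 1$ and $\widehat R_\a^\bi(\rho)=\widehat D_\a(\rho\|\sigma)$), and let $\cM_{A_1B_1\to A_2B_2}\in\bcV^\bi_\Theta$ attain $\widehat R^\bi_{\a,\Theta}(\cN)=\widehat D_\a(\cN\|\cM)$. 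Applying Lemma~\ref{lem_chainRule} with reference system $A_3B_3$ and channels $\cN$, $\cM$ yields
\[
\widehat D_\a\big(\cN_{A_1B_1\to A_2B_2}(\rho_{A_1A_3:B_1B_3})\,\big\|\,\cM_{A_1B_1\to A_2B_2}(\sigma_{A_1A_3:B_1B_3})\big)\ \leq\ \widehat D_\a(\rho_{A_1A_3:B_1B_3}\|\sigma_{A_1A_3:B_1B_3})+\widehat D_\a(\cN\|\cM),
\]
and the left-hand argument is exactly $\widehat D_\a(\omega_{A_2A_3:B_2B_3}\|\cM_{A_1B_1\to A_2B_2}(\sigma))$ by the definition of $\omega$.

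The heart of the argument is to verify that $\cM_{A_1B_1\to A_2B_2}(\sigma)$ is a feasible point for the minimization defining $\widehat R_\a^\bi(\omega_{A_2A_3:B_2B_3})$: positivity $\cM(\sigma)\geq 0$ is immediate since $\cM$ is completely positive, and for the partial-transpose trace-norm bound I would commute the transpose on the untouched system $B_3$ through $\cM$ to get
\[
\big(\cM_{A_1B_1\to A_2B_2}(\sigma)\big)^{\sfT_{B_2B_3}}=\big(\id_{A_3B_3}\ox(\Theta_{B_2}\circ\cM\circ\Theta_{B_1})\big)\big(\sigma^{\sfT_{B_1B_3}}\big).
\]
Writing $Y=\sigma^{\sfT_{B_1B_3}}$ with $\|Y\|_1\leq 1$, the definition of the diamond norm then gives $\|(\cM(\sigma))^{\sfT_{B_2B_3}}\|_1\leq\|\Theta_{B_2}\circ\cM\circ\Theta_{B_1}\|_\di\cdot\|Y\|_1\leq 1$, where $\|\Theta_{B_2}\circ\cM\circ\Theta_{B_1}\|_\di\leq 1$ is precisely the condition certified by the semidefinite constraints $R\pm J_{\cM}^{\sfT_{B_1B_2}}\geq 0$, $\tr_{A_2B_2}R\leq\1_{A_1B_1}$ appearing in the definition~\eqref{eq: bi theta set} of $\bcV^\bi_\Theta$ (i.e.\ that $\cM$ lies in the zero set of $Q^\bi_\Theta$). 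Hence $\widehat R_\a^\bi(\omega)\leq\widehat D_\a(\omega\|\cM(\sigma))$, and combining this with the chain-rule inequality above and the optimality of $\sigma$ and $\cM$ gives $\widehat R_\a^\bi(\omega_{A_2A_3:B_2B_3})\leq\widehat R_\a^\bi(\rho_{A_1A_3:B_1B_3})+\widehat R^\bi_{\a,\Theta}(\cN)$, as claimed.

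The only step needing genuine care — and thus the main obstacle — is the partial-transpose bookkeeping in the displayed identity together with the precise translation between membership in $\bcV^\bi_\Theta$ and the diamond-norm bound $\|\Theta_{B_2}\circ\cM\circ\Theta_{B_1}\|_\di\leq1$; once this is in place, everything else is a routine adaptation of Proposition~\ref{amortization proposition} (and parallels the SDP/diamond-norm correspondence already used in the analogue of Proposition~\ref{prop: Rains and Theta information} for the bidirectional setting).
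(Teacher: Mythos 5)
Your proposal is correct and follows essentially the same route the paper intends: the paper proves the bidirectional amortization by transcribing the proof of Proposition~\ref{amortization proposition}, i.e.\ picking optimizers $\sigma$ and $\cM\in\bcV^{\bi}_{\Theta}$, checking that $\cM(\sigma)$ is feasible for $\widehat R_{\a}^{\bi}(\omega)$ via the diamond-norm bound $\|\Theta_{B_2}\circ\cM\circ\Theta_{B_1}\|_\di\leq 1$ certified by the definition of $\bcV^{\bi}_{\Theta}$, and then applying the chain rule of Lemma~\ref{lem_chainRule}. Your partial-transpose identity $(\cM(\sigma))^{\sfT_{B_2B_3}}=(\Theta_{B_2}\circ\cM\circ\Theta_{B_1})(\sigma^{\sfT_{B_1B_3}})$ (using $\Theta_{B_1}\circ\Theta_{B_1}=\id$ and that $B_3$ is untouched) is exactly the bookkeeping needed, so the argument is complete.
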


\vspace{0.2cm}
Using the amortization inequality in Proposition~\ref{prop: amortization bidirectional} and a standard argument as Theorem~\ref{thm: main result quantum assisted}, we have the analog results of Theorem~\ref{thm: main result quantum assisted} and Corollary~\ref{cor: two-way quantum capacity} for bidirectional channels as follows:
\begin{theorem}[Main result 3]\label{thm: main result quantum assisted bidirectional}
  For any bidirectional channel $\cN_{A_1B_1\to A_2B_2}$ and $\a \in (1,2]$, it holds
  \begin{align}
  Q^{\bi,\PPT,\leftrightarrow}(\cN) \leq Q^{\bi,\PPT,\leftrightarrow,\dagger}(\cN) \leq \widehat R_{\a,\Theta}^{\bi}(\cN) \leq  R_{\max}^{\bi}(\cN),
\end{align}
where $Q^{\bi,\PPT,\leftrightarrow}(\cN)$ and $Q^{\bi,\PPT,\leftrightarrow,\dagger}(\cN)$ denote the PPT-assisted quantum capacity of a bidirectional channel $\cN$ and its corresponding strong converse capacity, respectively.
As a consequence, it holds
  \begin{align}
  Q^{\bi,\leftrightarrow}(\cN) \leq Q^{\bi,\leftrightarrow,\dagger}(\cN) \leq \widehat R_{\a,\Theta}^{\bi}(\cN) \leq  R_{\max}^{\bi}(\cN),
\end{align}
where $Q^{\bi,\leftrightarrow}(\cN)$ and $Q^{\bi,\leftrightarrow,\dagger}(\cN)$ denote the two-way assisted quantum capacity of a bidirectional channel $\cN$ and its corresponding strong converse capacity, respectively.
\end{theorem}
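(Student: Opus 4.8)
The plan is to mirror the proof of Theorem~\ref{thm: main result quantum assisted}, replacing each point-to-point ingredient by its bidirectional counterpart. The last two inequalities are the easy part. By Lemma~\ref{thm: divergence chain inequality} we have $\widehat D_\a \leq D_{\max}$ pointwise, and since $\widehat R_{\a,\Theta}^{\bi}$ and $R_{\max,\Theta}^{\bi}$ are minimizations of these divergences over the \emph{same} set $\bcV_\Theta^{\bi}$ of subchannels, we get $\widehat R_{\a,\Theta}^{\bi}(\cN) \leq R_{\max,\Theta}^{\bi}(\cN)$; and by the bidirectional analog of Proposition~\ref{prop: Rains and Theta information} (whose proof is identical, pitting the SDP form of $R_{\max}^{\bi}$ in~\eqref{eq: bi max Rains info} against the semidefinite description of $\bcV_\Theta^{\bi}$ in~\eqref{eq: bi theta set}) we have $R_{\max,\Theta}^{\bi}(\cN) = R_{\max}^{\bi}(\cN)$. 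Together with~\eqref{eq: bi max Rains strong converse} this recovers the rightmost bound. The first inequality $Q^{\bi,\PPT,\leftrightarrow}(\cN) \leq Q^{\bi,\PPT,\leftrightarrow,\dagger}(\cN)$ holds by definition, and the two-way (LOCC-assisted) statement at the end follows since PPT assistance is stronger than LOCC.

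The core is the middle inequality $Q^{\bi,\PPT,\leftrightarrow,\dagger}(\cN) \leq \widehat R_{\a,\Theta}^{\bi}(\cN)$. First I would establish the amortization inequality of Proposition~\ref{prop: amortization bidirectional}; this is where the chain rule (Lemma~\ref{lem_chainRule}) enters, applied to the bipartite channel $\cN_{A_1B_1\to A_2B_2}$ acting on the $A_1B_1$ part of a tripartite state, exactly as in Proposition~\ref{amortization proposition} but keeping track of the cut $A_1A_3:B_1B_3 \to A_2A_3:B_2B_3$ and of the bidirectional Rains set $\{\sigma \geq 0 : \|\sigma^{\sfT_{B_1B_2}}\|_1 \leq 1\}$. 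Then, given an $n$-round PPT-assisted bidirectional protocol (the obvious generalization of the diagram in Figure~\ref{fig: twoway assisted quantum capacity}, where each use of $\cN_{A_1B_1\to A_2B_2}$ is interleaved by a bipartite PPT operation), I would run the telescoping argument: monotonicity of $\widehat R_\a^{\bi}$ under bipartite PPT operations bounds the Rains quantity of the final state by that of the last channel output, the initial shared state is PPT so its $\widehat R_\a^{\bi}$ vanishes, and each per-round difference is bounded by $\widehat R_{\a,\Theta}^{\bi}(\cN)$ via amortization, yielding $\widehat R_\a^{\bi}(\omega_{M_AM_B}) \leq n\,\widehat R_{\a,\Theta}^{\bi}(\cN)$.

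Finally I would convert this into a strong-converse statement exactly as in Theorem~\ref{thm: main result quantum assisted}: for a protocol with parameters $(n,r,\ve)$ producing $\omega$ with $\tr \Phi_k\,\omega \geq 1-\ve$ where $k = 2^{nr}$, use $\tr \Phi_k \sigma \leq 1/k$ for every PPT state $\sigma$ together with the data-processing inequality of $\widehat D_\a$ under the binary test channel $\gamma \mapsto (\tr \Phi_k \gamma)\ketbra{0}{0} + (\tr(\1-\Phi_k)\gamma)\ketbra{1}{1}$ and the monotonicity of the binary \Renyi divergence $\delta_\a$, to deduce $\widehat R_\a^{\bi}(\omega) \geq \tfrac{\a}{\a-1}\log(1-\ve) + nr$; comparing with $\widehat R_\a^{\bi}(\omega) \leq n\,\widehat R_{\a,\Theta}^{\bi}(\cN)$ gives $1-\ve \leq 2^{-n(\frac{\a-1}{\a})[r - \widehat R_{\a,\Theta}^{\bi}(\cN)]}$, hence the strong converse. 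I expect the main obstacle to be bookkeeping rather than conceptual: verifying that monotonicity of $\widehat R_\a^{\bi}$ under the relevant class of bipartite PPT operations holds in the bidirectional setting (the analog of~\cite[Eq.~(22)]{tomamichel2017strong}, spelled out for bidirectional channels in~\cite{Bauml2018}), and that the initial resource state of such a protocol is genuinely PPT across the cut used to define $\widehat R_\a^{\bi}$, so that its Rains bound indeed vanishes.
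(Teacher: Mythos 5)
Your proposal is correct and follows essentially the same route as the paper: the paper also obtains the rightmost bound from Lemma~\ref{thm: divergence chain inequality} together with the bidirectional analog of Proposition~\ref{prop: Rains and Theta information} ($R^{\bi}_{\max}=R^{\bi}_{\max,\Theta}$), and proves the middle inequality via the bidirectional amortization inequality (Proposition~\ref{prop: amortization bidirectional}, itself a consequence of the chain rule) followed by the same telescoping and binary-test strong-converse argument as in Theorem~\ref{thm: main result quantum assisted}.
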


\begin{proposition}[SDP formula]
  For any bidirectional channel $\cN_{A_1B_1\to A_2B_2}$ and $\a(\ell) = 1+2^{-\ell}$ with $\ell \in \NN$, the bidirectional geometric \Renyi Theta-information can be computed as an SDP:
\begin{gather}
  \widehat R_{\a,\Theta}^{\bi}(\cN)= 2^\ell\cdot \log \min \ y \quad \text{\rm s.t.}\quad \dbhbig{M,\{N_i\}_{i=0}^\ell,R,y}, \label{eq:  SDP formula for bidirectional maximal Rains theta info}\\[2pt]
   \dbp{\begin{matrix}
    M & J_{\cN}\\
    J_{\cN} & N_{\ell}
  \end{matrix}},
  \left\{\dbp{\begin{matrix}
    J_{\cN} & N_{i} \\
    N_{i} & N_{i-1}
  \end{matrix}}\right\}_{i=1}^\ell, 
  \dbpbigg{R \pm N_0^{\sfT_{B_1B_2}}},
  \dbpbigg{\1 - \tr_{A_2B_2} R}, \dbpbigg{y\1 - \tr_{A_2B_2} M },\notag
\end{gather}
where $J_{\cN}$ is the Choi matrix of $\cN$.
\end{proposition}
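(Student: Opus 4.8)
The plan is to obtain this SDP by specializing the generic semidefinite representation of Lemma~\ref{lem: SDP representation of the channel information measure} to the bidirectional Theta set $\bcV^\bi_{\Theta}$, exactly mirroring the proof of Proposition~\ref{prop: SDP formula for maximal Rains theta info} in the point-to-point case. First I would observe that, by definition, $\widehat R_{\a,\Theta}^{\bi}(\cN) = \min_{\cM \in \bcV^\bi_{\Theta}} \widehat D_{\a}(\cN\|\cM)$, and that the bidirectional channel divergence $\widehat D_{\a}(\cN\|\cM)$ is nothing but the ordinary geometric \Renyi channel divergence for the map $\cN_{A_1B_1\to A_2B_2}$ with input system $A_1B_1$ and purifying reference $A_3B_3$. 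Hence Lemma~\ref{lem: maximal Renyi channel divergence SDP} gives the closed form $\widehat D_{\a}(\cN\|\cM) = \frac{1}{\a-1}\log\|\tr_{A_2B_2} G_{1-\a}(J_{\cN},J_{\cM})\|_\infty$, and Lemma~\ref{lem: SDP representation of the channel information measure} applies verbatim under the identification $A \mapsto A_1B_1$, $B \mapsto A_2B_2$.

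Next I would use that the set $\bcV^\bi_{\Theta}$ in~\eqref{eq: bi theta set} is already written as a system of semidefinite conditions: $\cM \in \bcV^\bi_{\Theta}$ iff there exists Hermitian $R_{A_1B_1A_2B_2}$ with $R \pm J_{\cM}^{\sfT_{B_1B_2}} \geq 0$ and $\tr_{A_2B_2} R \leq \1_{A_1B_1}$ (positivity of $R$ being implied by these two constraints). Plugging this into the abstract SDP of Lemma~\ref{lem: SDP representation of the channel information measure} in place of the constraint ``$\cM \in \bcV$'', and eliminating the variable $J_{\cM}$ via the equality constraint $\dbebigg{N_0 - J_{\cM}}$ so that $J_{\cM}$ is replaced throughout by $N_0$, the Theta-set conditions become $R \pm N_0^{\sfT_{B_1B_2}} \geq 0$ and $\dbpbigg{\1 - \tr_{A_2B_2} R}$. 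Combined with the unchanged chain of block-matrix inequalities on the nested variables $\{N_i\}_{i=0}^{\ell}$ and $M$ (coming from the semidefinite representation of the weighted matrix geometric mean of~\cite{fawzi2017lieb}, i.e.\ Lemma~\ref{geometric SDP general lemma}) and the infinity-norm representation $\|X\|_\infty = \min\{y : X \leq y\1\}$ giving the constraint $\dbpbigg{y\1 - \tr_{A_2B_2} M}$, this is precisely the displayed system, with the prefactor $2^{\ell}$ and the value $\alpha(\ell) = 1 + 2^{-\ell}$ inherited directly from Lemma~\ref{lem: SDP representation of the channel information measure}.

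Since the argument is a pure specialization, there is no genuine obstacle; the only steps requiring care are bookkeeping ones: verifying that the partial transpose appearing in the Theta constraint acts on Bob's pair of systems $B_1B_2$, that the partial trace defining the channel divergence (hence the constraints involving $\tr_{A_2B_2}$) is taken over the full output $A_2B_2$, and that the semidefinite representation of $G_{1-\a}$ together with $\alpha(\ell)=1+2^{-\ell}$ produces exactly $\ell+2$ block inequalities in the auxiliary variables as in the point-to-point statement. Once these identifications are fixed, the proposition follows in the same way as Proposition~\ref{prop: SDP formula for maximal Rains theta info}.
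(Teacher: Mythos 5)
Your proposal is correct and follows essentially the same route as the paper: the paper's proof is exactly the specialization of Lemma~\ref{lem: SDP representation of the channel information measure} (closed-form channel divergence plus the semidefinite representation of $G_{1-\a}$ and of the infinity norm) to the set $\bcV^\bi_{\Theta}$ of~\eqref{eq: bi theta set}, under the identification $A\mapsto A_1B_1$, $B\mapsto A_2B_2$ and with $J_{\cM}$ eliminated via $N_0=J_{\cM}$. Your additional bookkeeping (partial transpose on $B_1B_2$, partial trace over $A_2B_2$, Hermiticity/positivity of $R$) is exactly what the paper leaves implicit, so there is nothing to add.
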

\begin{proof}
  This directly follows from Lemma~\ref{lem: SDP representation of the channel information measure} and the definition of $\bcV^\bi_{\Theta}$ in~\eqref{eq: bi theta set}.
\end{proof}

\subsection{Examples}
\label{sec: quantum capacity Examples}

In this section, we investigate several fundamental quantum channels as well as their compositions. We use these toy models to test the performance of our new strong converse bounds, demonstrating the improvement on the previous results. 
The semidefinite programs are implemented in MATLAB via the CVX package, by the solver ``Mosek'' with the best precision.~\footnote{All the data and codes can be found on the GitHub page https://github.com/fangkunfred.}

\subsubsection*{Fundamental quantum channels}

The \emph{quantum depolarizing channel} with dimension $d$ is defined as
\begin{align}\label{DP channel definition}
    \cD_p(\rho) = (1-p) \rho + p \1/d, \quad p \in [0,1].
\end{align} 
The \emph{quantum erasure channel} is defined as
\begin{align}\label{ER channel definition}
    \cE_p(\rho) = (1-p) \rho + p \ket{e}\bra{e}, \quad p \in [0,1],
\end{align}
where $\ket{e}$ is an erasure state orthogonal to the input Hilbert space.
The \emph{quantum dephasing channel} is defined as 
\begin{align}
    \cZ_p(\rho) = \left(1-p\right) \rho + p Z \rho Z, \quad p\in[0,1],
\end{align}
where $Z = \ket{0}\bra{0} - \ket{1}\bra{1}$ is the Pauli-$z$ operator. These three classes of channels are covariant with respect to the whole unitary group.
The \emph{generalized amplitude damping} (GAD) channel is defined as
\begin{align}\label{GAD definition}
    \cA_{\gamma,N} (\rho) = \sum_{i=1}^4 A_i \rho A_i^\dagger, \quad \gamma, N\in [0,1]
\end{align}
with the Kraus operators
\begin{alignat}{2}
    & A_1  = \sqrt{1-N} (\ket{0}\bra{0} + \sqrt{1-\gamma}\ket{1}\bra{1}), \quad && A_2 = \sqrt{\gamma(1-N)}\ket{0}\bra{1},\\
& A_3 = \sqrt{N} (\sqrt{1-\gamma}\ket{0}\bra{0}+\ket{1}\bra{1}), && A_4 = \sqrt{\gamma N}\ket{1}\bra{0}.
\end{alignat}
The GAD channel is one of the realistic sources of noise in superconducting-circuit-based quantum computing~\cite{Chirolli2008}, which can viewed as the qubit analogue of the bosonic thermal channel.
When $N = 0$, it reduces to the conventional \emph{amplitude damping channel} with two Kraus operators $A_1$, $A_2$.

\subsubsection*{Comparison for the unassisted quantum capacity}

For the unassisted quantum capacity, we compare the qubit depolarizing channel $\cD_p$, the qubit erasure channel $\cE_p$, the qubit dephasing channel $\cZ_p$ and the generalized amplitude damping channels $\cA_{p,N}$ with different choices of parameter $N$.

Since $\cD_p$, $\cE_p$ and $\cZ_p$ are covariant with respect to the unitary group,  the optimal input state $\rho_A$ of their Rains information is taken at the maximally mixed state~\cite[Proposition 2]{tomamichel2017strong}. Therefore, their Rains information can be computed via the algorithm in~\cite{fawzi2018efficient,Fawzi2017}. Moreover, for any parameters $\gamma,N \in [0,1]$, the GAD channel $\cA_{\gamma,N}$ is covariant with respect to the Pauli-$z$ operator $Z$. That is, $\cA_{\gamma,N}(Z\rho Z) = Z \cA_{\gamma,N}(\rho )Z$ for all quantum state $\rho$. To compute its Rains information, it suffices to perform the maximization over input states with respect to the one-parameter family of states $\rho_A = (1-p)\ket{0}\bra{0} + p \ket{1}\bra{1}$~\cite{Khatri2019}. This can be handled, for example, by MATLAB function ``fminbnd''.

The comparison results are shown in Figure~\ref{uassisted quantum capacity compare 1}. It is clear that the geometric \Renyi Rains information $\widehat R_{\a {\scriptscriptstyle (10)}}$ coincide with the Rains information $R$ for all these channels except for the particular case $\cA_{p,0}$ in subfigure (d). For all cases, $\widehat R_{\a {\scriptscriptstyle (10)}}$ sets a big difference from the max-Rains information $R_{\max}$. 

\begin{figure}[H]
\centering
\begin{adjustwidth}{-0.5cm}{0cm}
\begin{tikzpicture}
\node at (-5.7,0) {\includegraphics[width = 5.3cm]{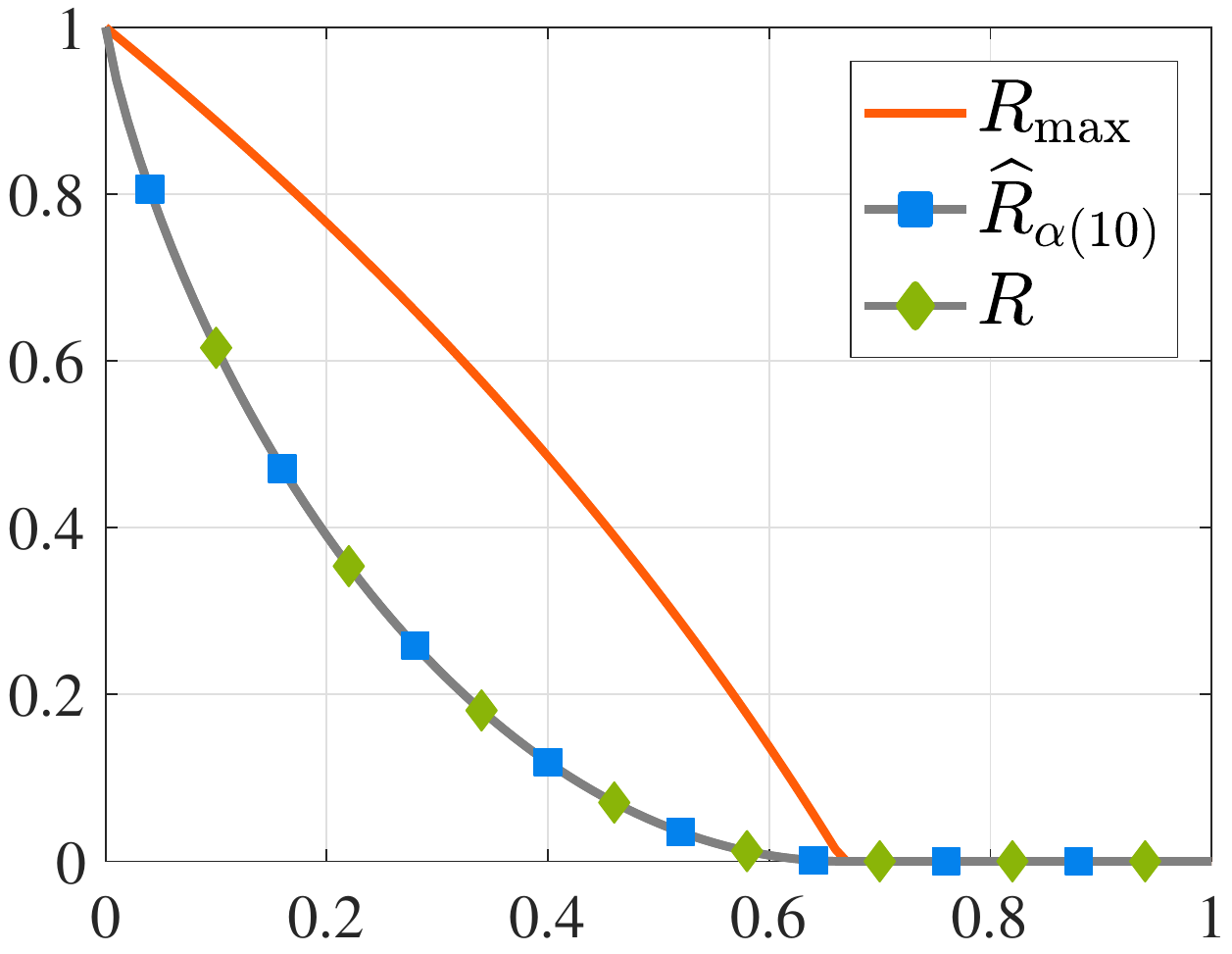}};
\node at (-5.5,-2.4) {\small (a) Qubit depolarizing channel $\cD_p.$};

\node at (0,0) {\includegraphics[width = 5.3cm]{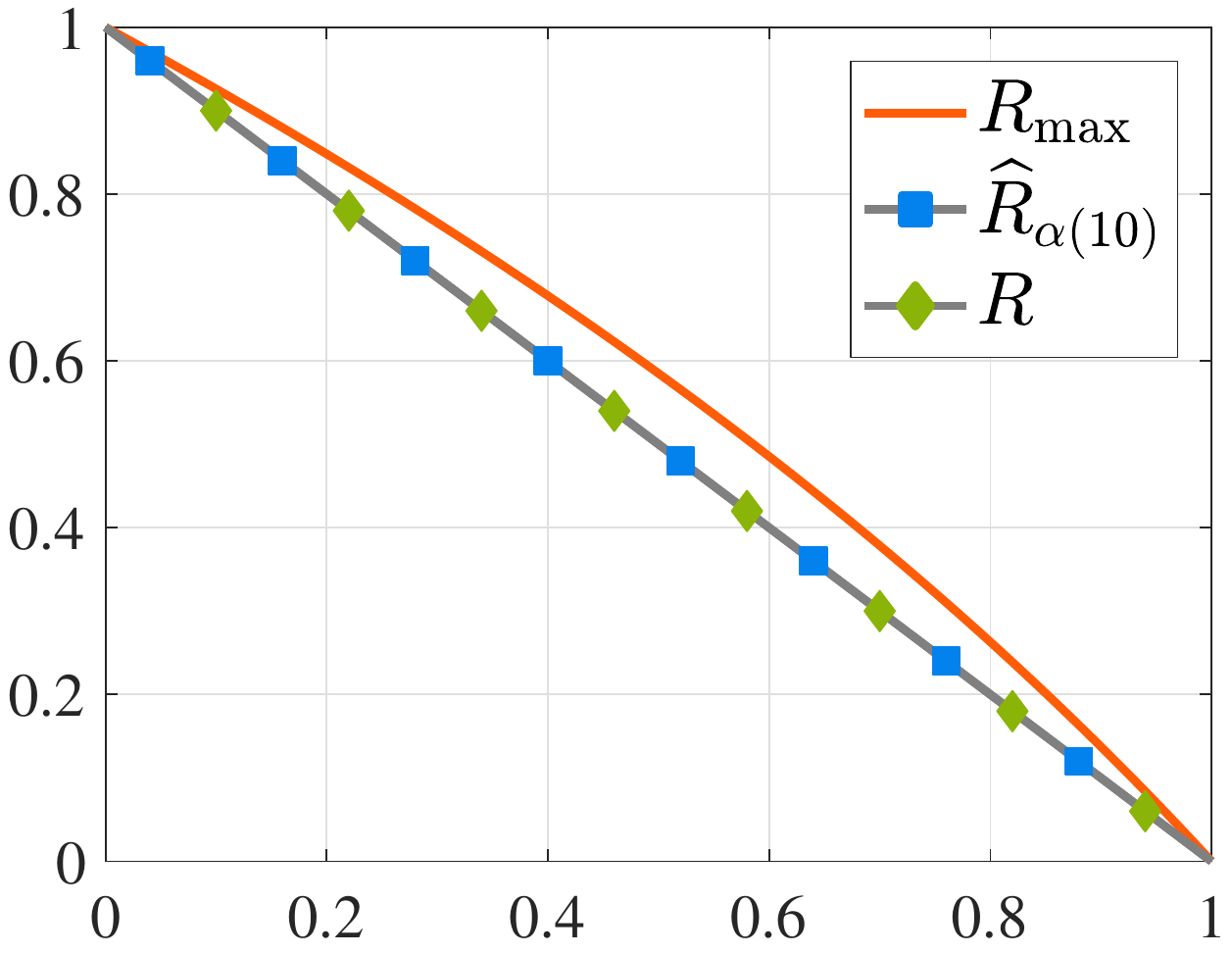}};
\node at (0.2,-2.4) {\small (b) Qubit erasure channel $\cE_p.$};

\node at (5.7,0) {\includegraphics[width = 5.3cm]{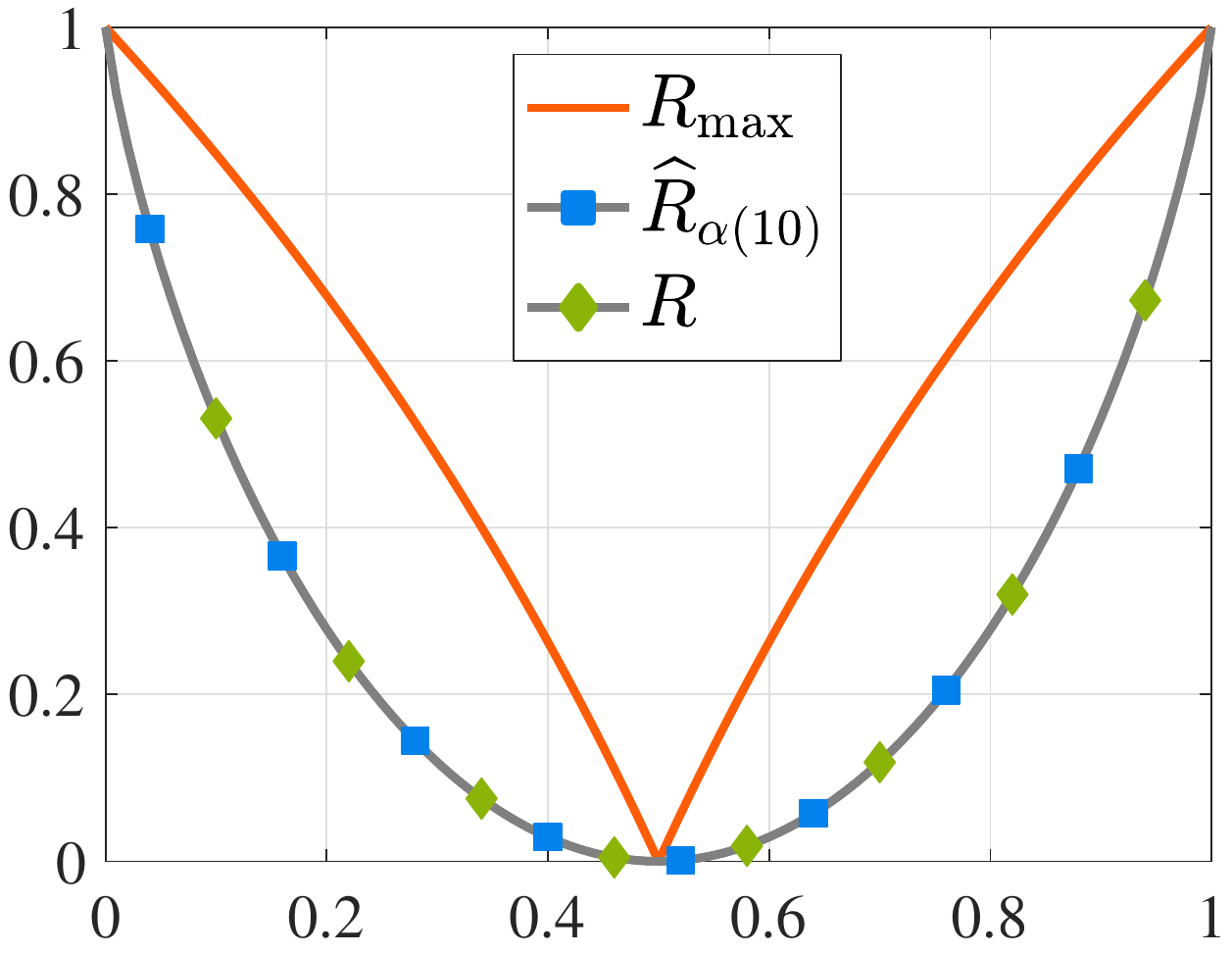}};
\node at (5.9,-2.4) {\small (c) Qubit dephasing channel $\cZ_p.$};

\begin{scope}[shift={(0,-5)}]
\node at (-5.7,0) {\includegraphics[width = 5.3cm]{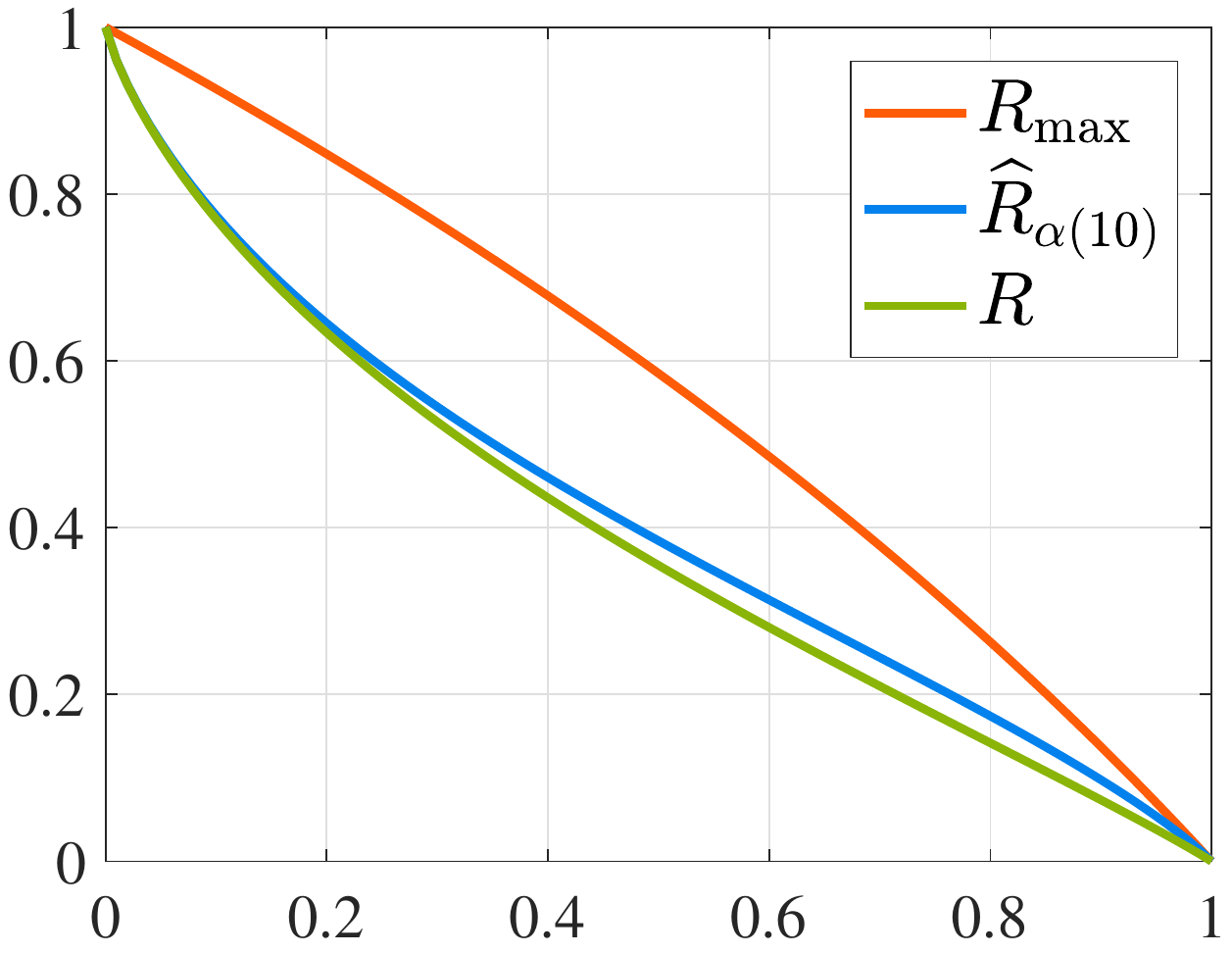}};
\node at (-5.5,-2.4) {\small (d) GAD channel with $N = 0$.};

\node at (0,0) {\includegraphics[width = 5.3cm]{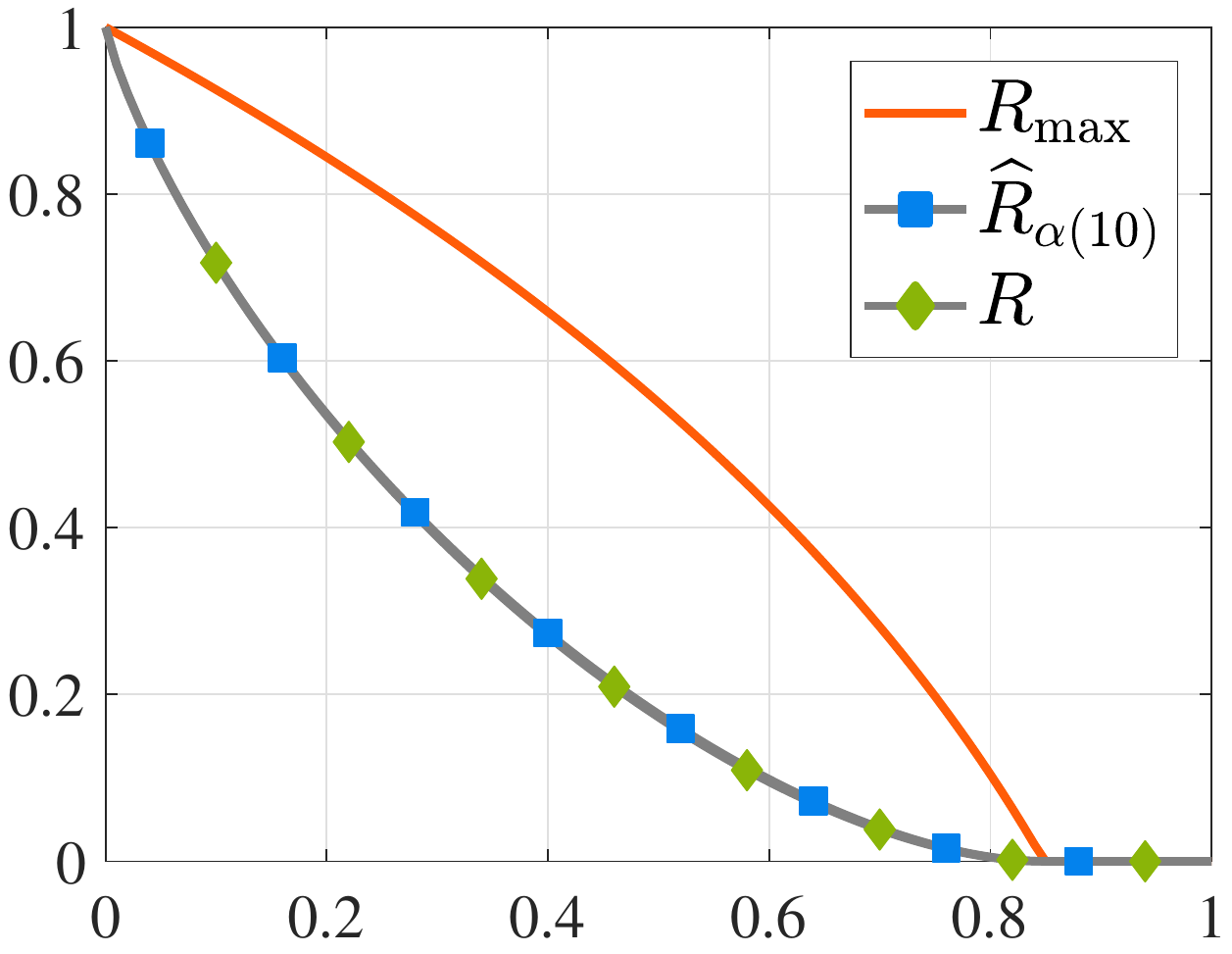}};
\node at (0.2,-2.4) {\small (e) GAD channel with $N = 0.3$.};

\node at (5.7,0) {\includegraphics[width = 5.3cm]{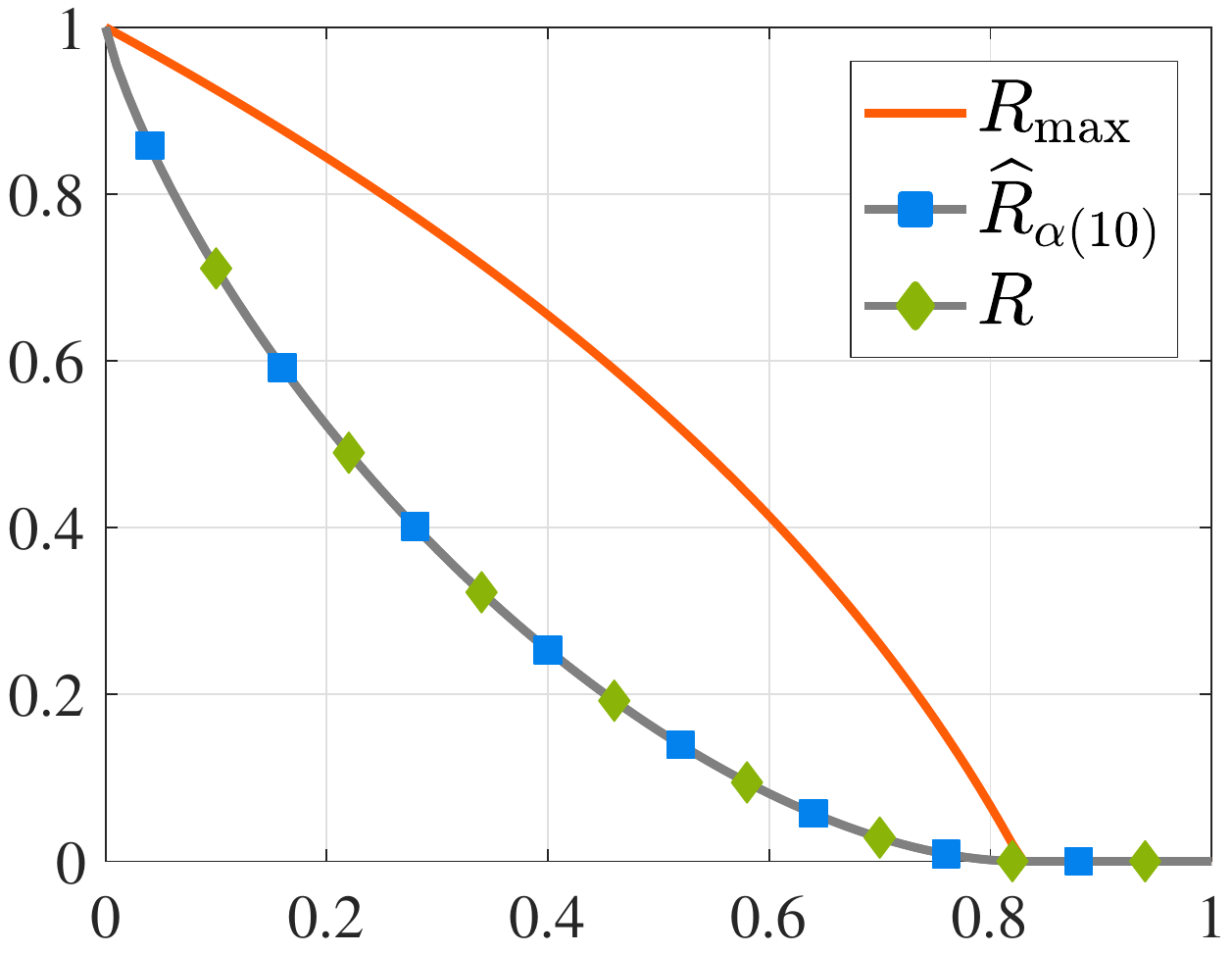}};
\node at (5.9,-2.4) {\small (f) GAD channel with $N = 0.5$.};
\end{scope}

\end{tikzpicture}

\end{adjustwidth}
\caption{\small Comparison of the strong converse bounds on the unassisted quantum capacity of the qubit depolaring channel $\cD_p$, the qubit erasure channel $\cE_p$, the qubit dephasing channel $\cZ_p$ and the generalized amplitude damping channels $\cA_{p,N}$. The horizontal axis takes value of $p \in [0,1]$. }
\label{uassisted quantum capacity compare 1}
\end{figure}

\subsubsection*{Comparison for the two-way assisted quantum capacity}

For the two-way assisted quantum capacity, we consider the channels $\cD_p$, $\cE_p$ and $\cZ_p$ composed with the amplitude damping channel $\cA_{p,0}$, and the generalized amplitude damping channel $\cA_{p,N}$ with different choices of parameter $N$. Note that because these channels are not sufficiently covariant, their Rains information are not known as valid converse bounds on the two-way assisted quantum capacity. 

The comparison result~\footnote{A detailed comparison of the GAD channels with other weak converse bounds in~\cite{Khatri2019} is given in Appendix~\ref{app: Detailed comparison for generalized amplitude damping channel}.} for the two-way assisted quantum capacity is given in Figure~\ref{two-way assisted quantum capacity compare 1}. The geometric \Renyi Theta-information $\widehat R_{\a {\scriptscriptstyle(10)},\Theta}$ demonstrates a significant improvement over the max-Rains information $R_{\max}$ for all these channels except for one particular case $\cA_{p,0}$ in subfigure (d).

\begin{figure}[H]
\centering
\begin{adjustwidth}{-0.5cm}{0cm}
\begin{tikzpicture}
\node at (-5.7,0) {\includegraphics[width = 5.3cm]{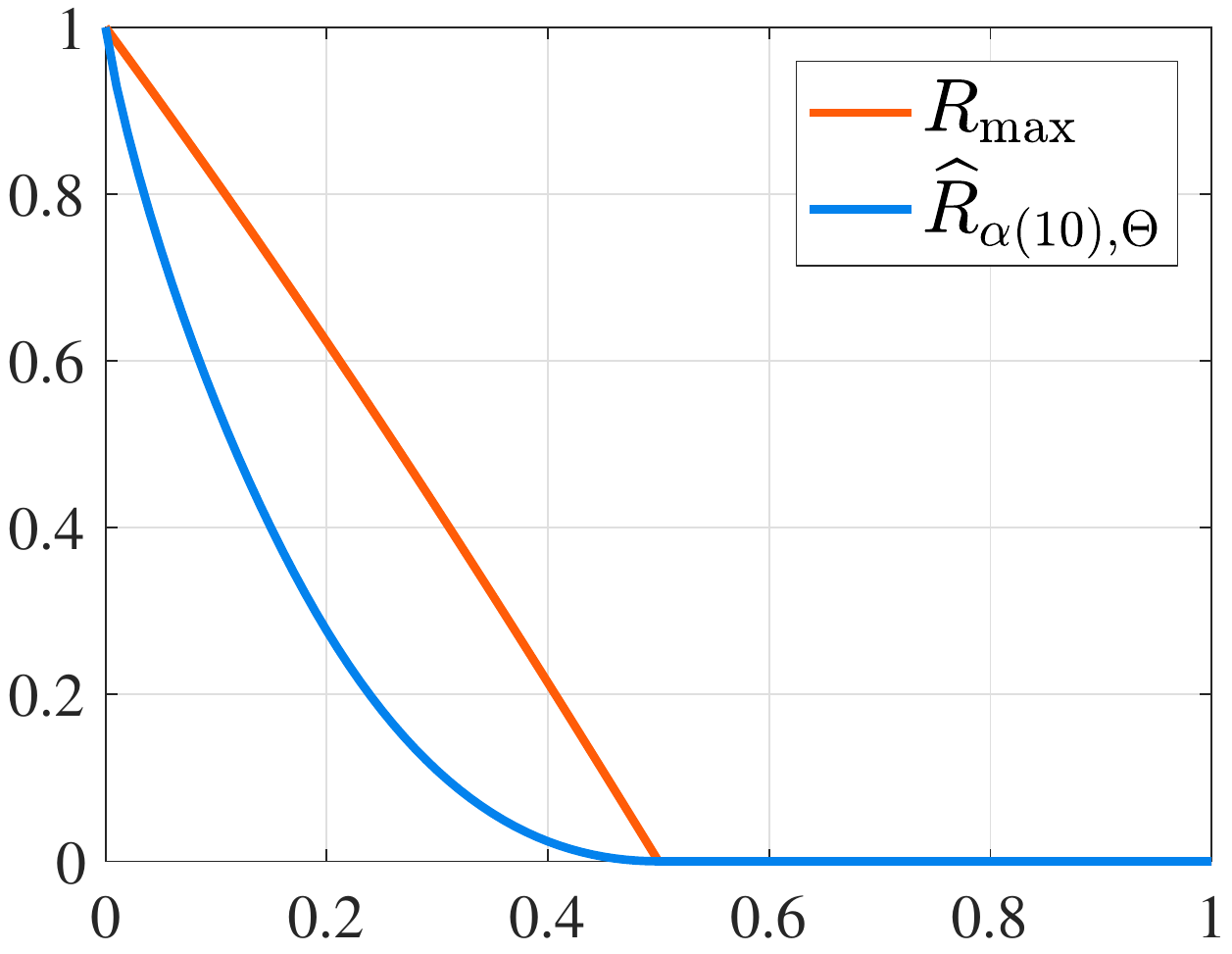}};
\node at (-5.5,-2.4) {\small (a) Composition channel $\cD_p\circ \cA_{p,0}.$};

\node at (0,0) {\includegraphics[width = 5.3cm]{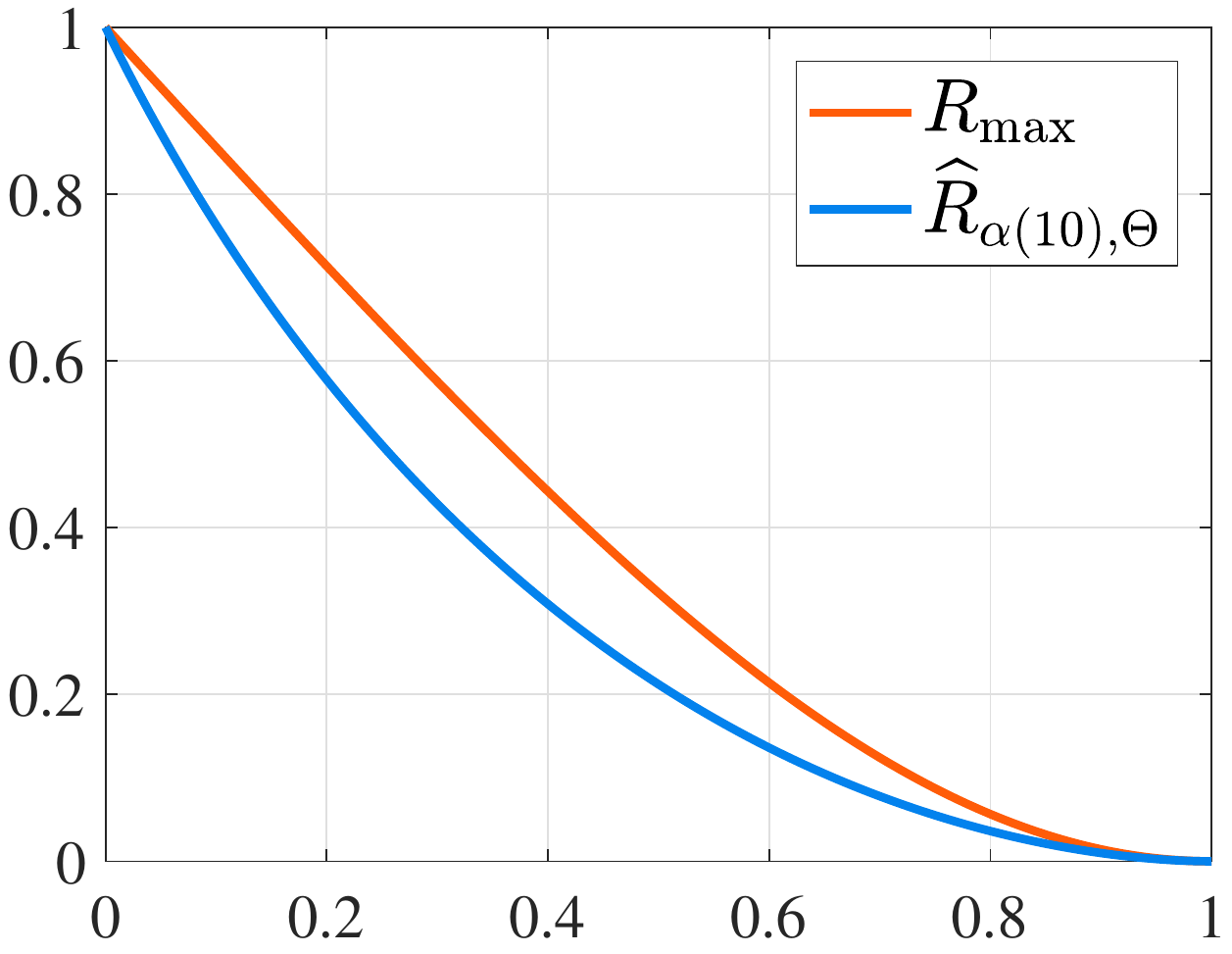}};
\node at (0.2,-2.4) {\small (b) Composition channel $\cE_p\circ \cA_{p,0}.$};

\node at (5.7,0) {\includegraphics[width = 5.3cm]{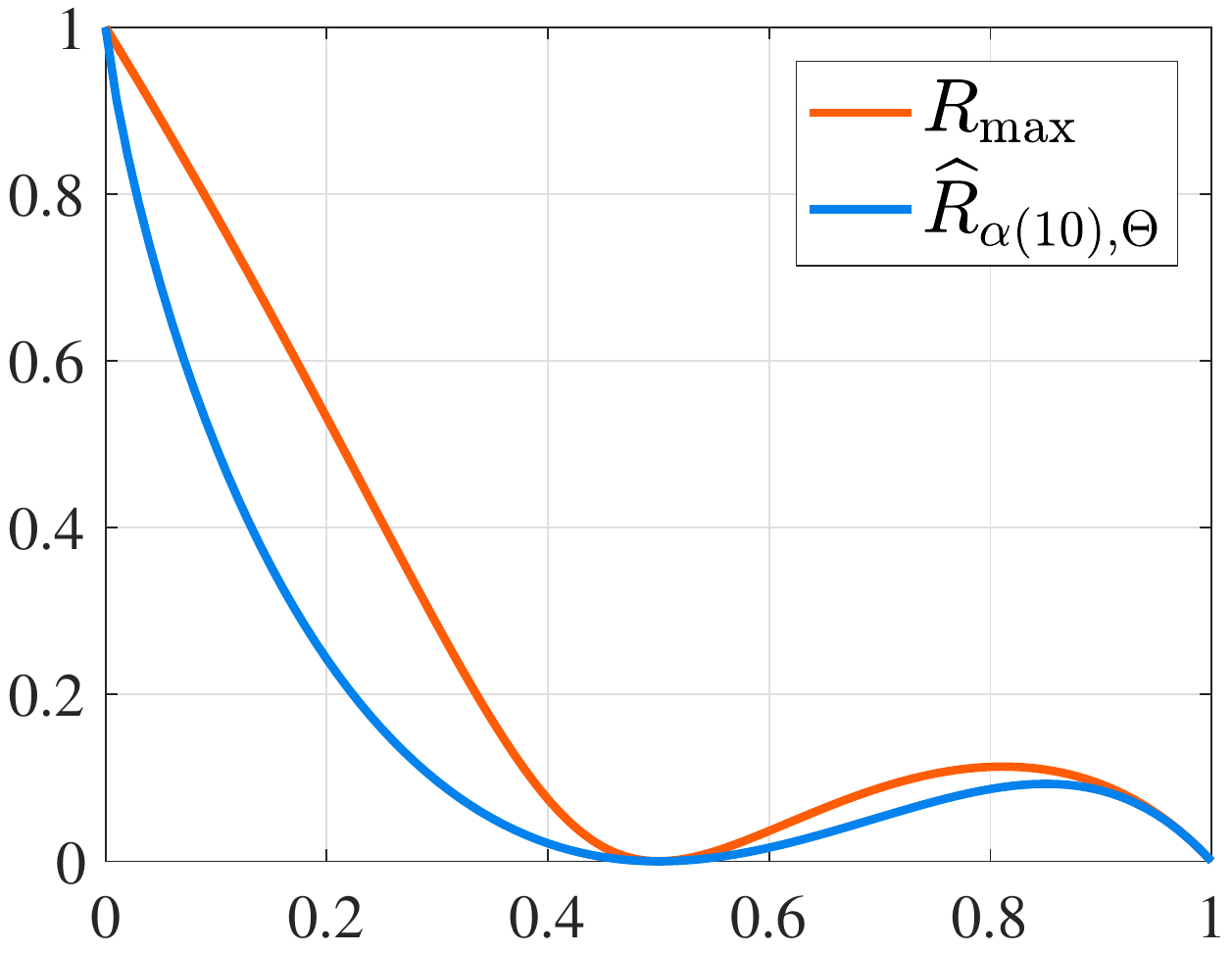}};
\node at (5.9,-2.4) {\small (c) Composition channel $\cZ_p\circ \cA_{p,0}.$};

\begin{scope}[shift={(0,-5)}]
\node at (-5.7,0) {\includegraphics[width = 5.3cm]{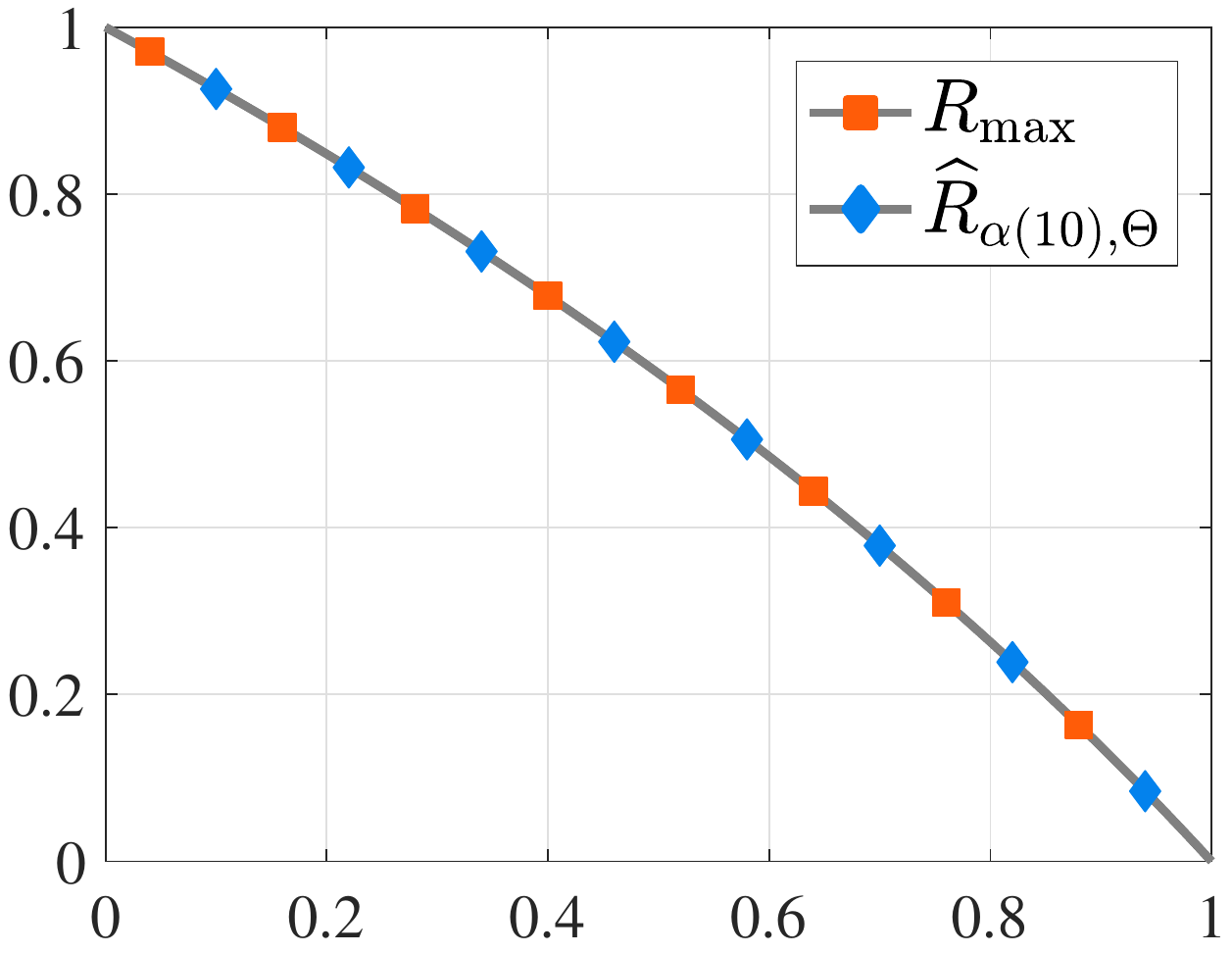}};
\node at (-5.5,-2.4) {\small (d) GAD channel with $N = 0$.};

\node at (0,0) {\includegraphics[width = 5.3cm]{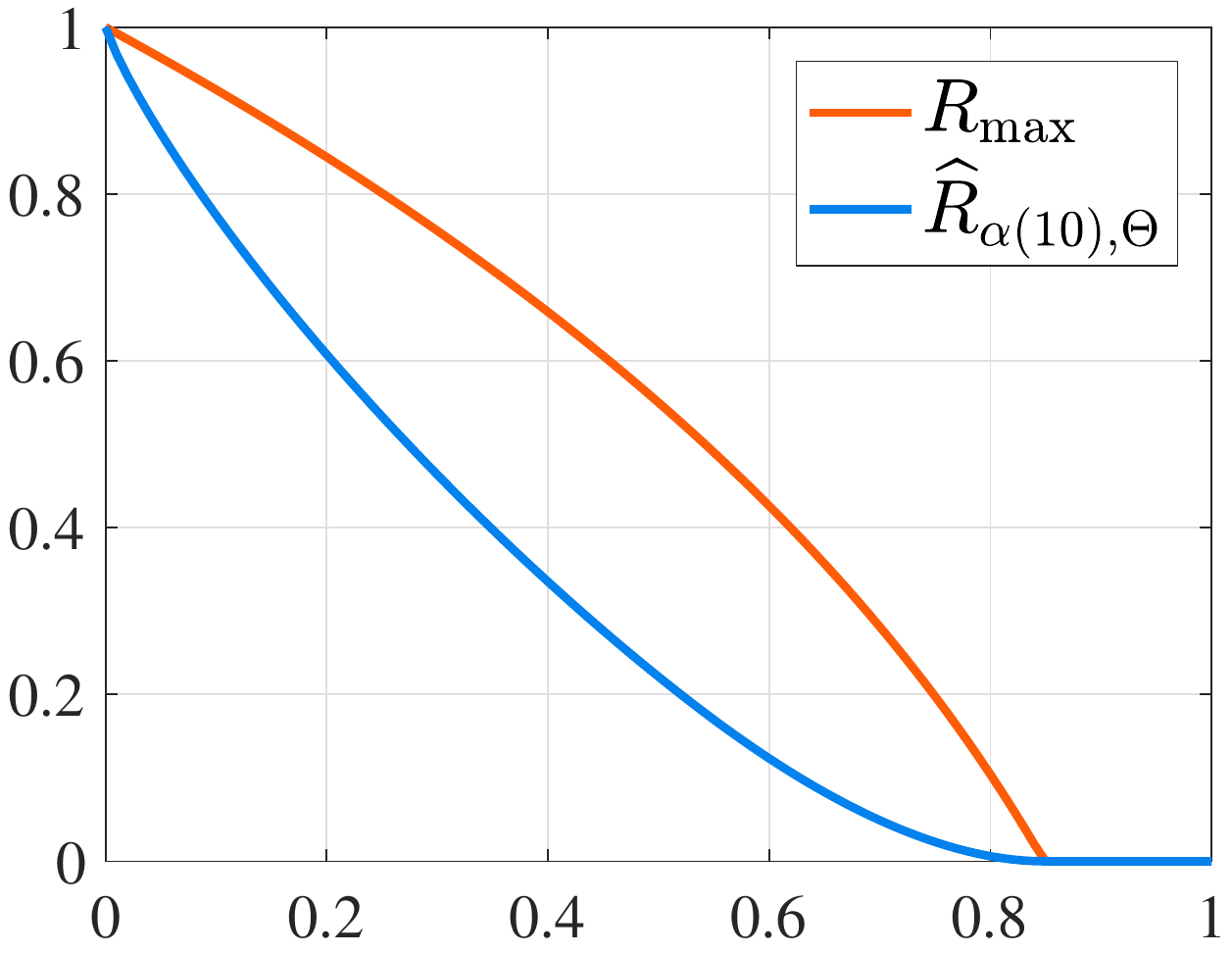}};
\node at (0.2,-2.4) {\small (e) GAD channel with $N = 0.3$.};

\node at (5.7,0) {\includegraphics[width = 5.3cm]{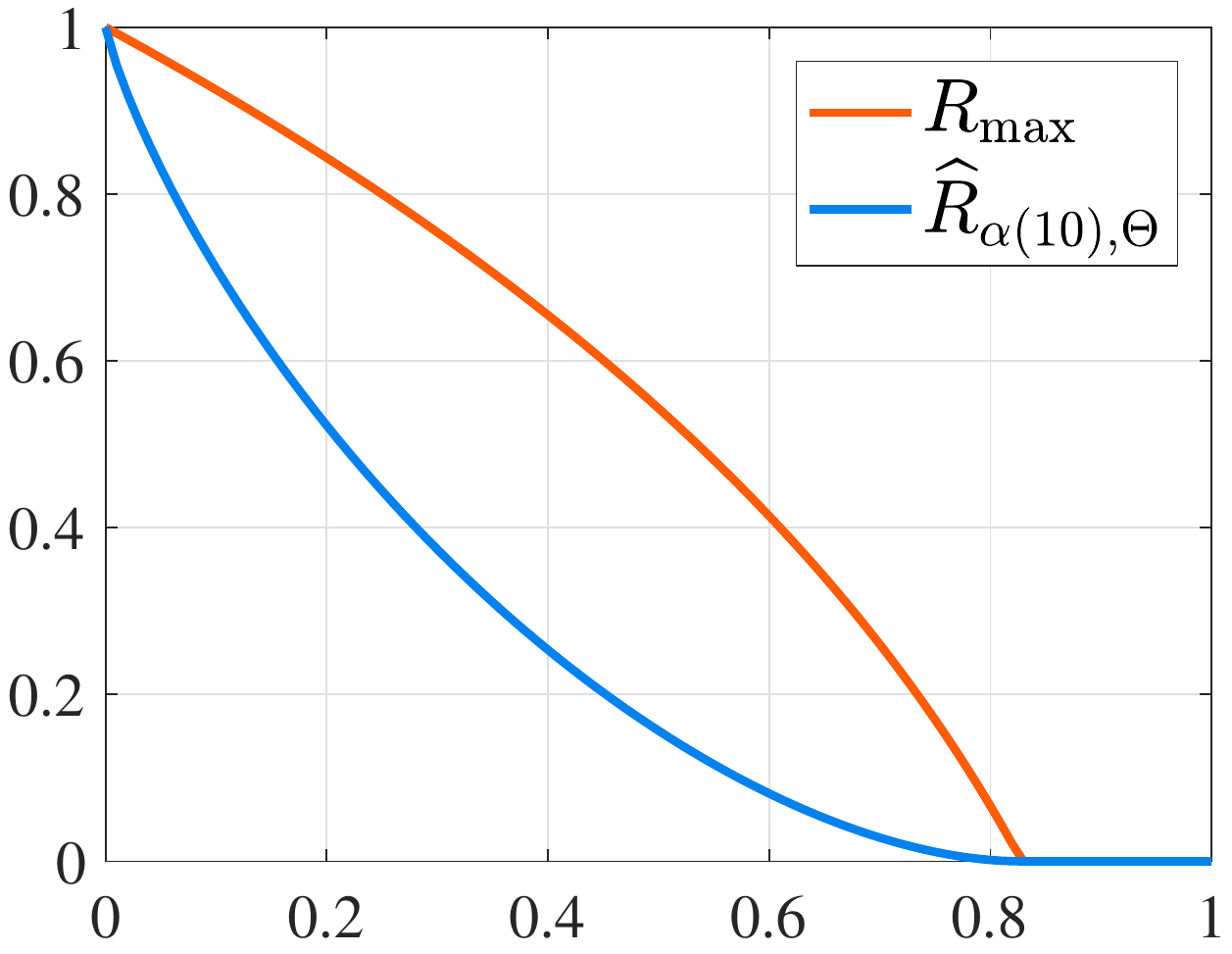}};
\node at (5.9,-2.4) {\small (f) GAD channel with $N = 0.5$.};
\end{scope}
\end{tikzpicture}
\end{adjustwidth}
\caption{\small Comparison of the strong converse bounds on the two-way assisted quantum capacity of the channels $\cD_p$, $\cE_p$ and $\cZ_p$ composed with the amplitude damping channel $\cA_{p,0}$, and the generalized amplitude damping channels $\cA_{p,N}$ with different parameters. The horizontal axis takes value of $p \in [0,1]$.}
\label{two-way assisted quantum capacity compare 1}
\end{figure}

\subsubsection*{Comparison for the two-way assisted quantum capacity of bidirectional channels}

Consider a typical noise in a quantum computer which is modeled as~\cite{Bauml2018}
\begin{align}\label{eq: swap with dephasing channel}
  \cN_{A_1B_1\to A_2B_2}(\rho) = p S\rho S^\dagger + (1-p) U_{\phi} S\rho S U_\phi^\dagger, \quad p \in [0,1]
\end{align}
where $S$ is the swap operator and $U_\phi = \ket{00}\bra{00} + e^{i\phi} \ket{01}\bra{01} + e^{i\phi} \ket{10}\bra{10} + e^{2i\phi} \ket{11}\bra{11}$ is the collective dephasing noise. The comparison result of our new bound $\widehat R^\bi_{\alpha {\scriptscriptstyle (10)},\Theta}$ with the previous bound $R^\bi_{\max}$ is given in Figure~\ref{two-way assisted quantum capacity compare bidirectional}.

\begin{figure}[H]
\centering
\begin{adjustwidth}{-0.5cm}{0cm}
\begin{tikzpicture}
\node at (-5.7,0) {\includegraphics[width = 5.3cm]{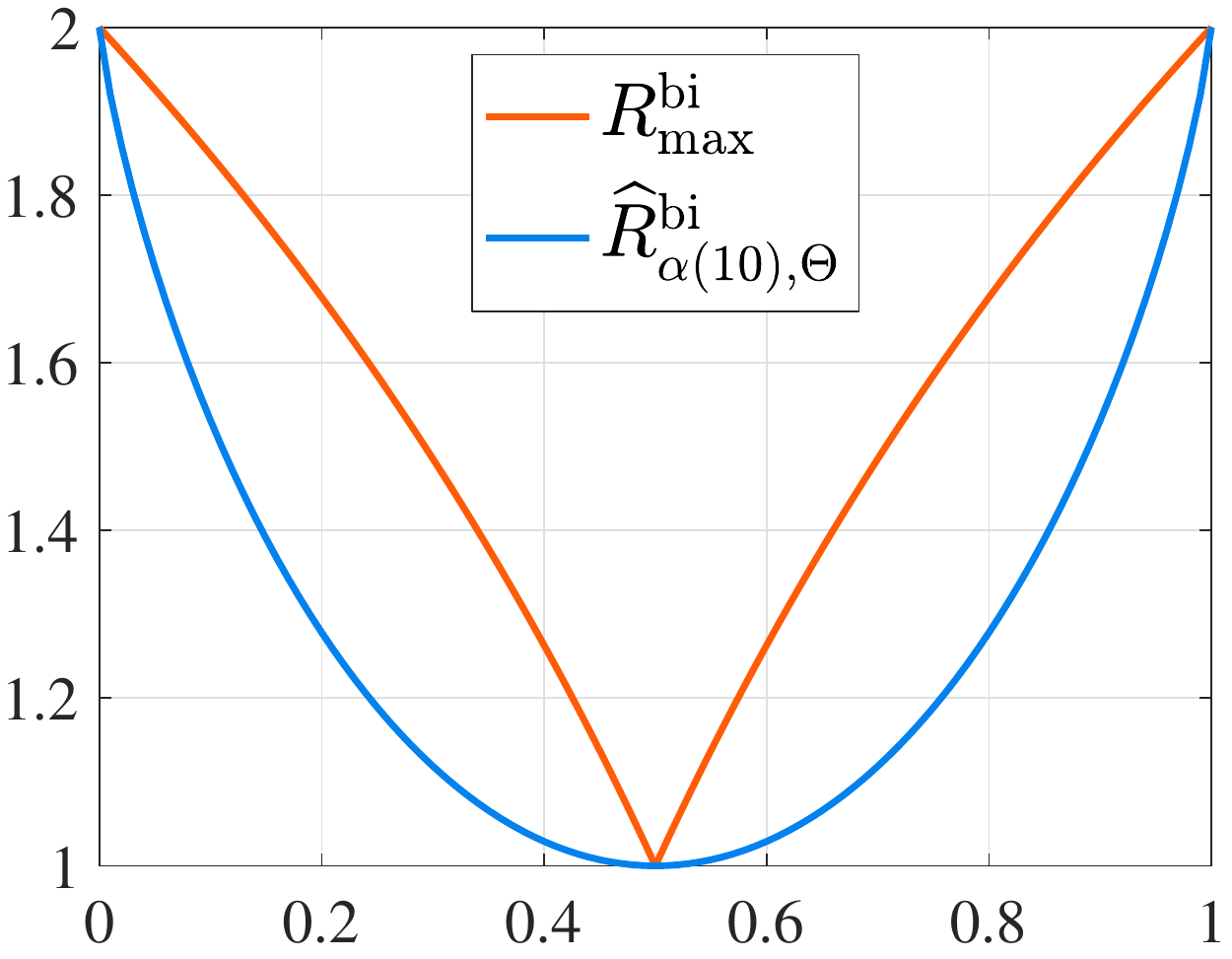}};
\node at (-5.5,-2.4) {\small (a) $\phi = \pi$};

\node at (0,0) {\includegraphics[width = 5.3cm]{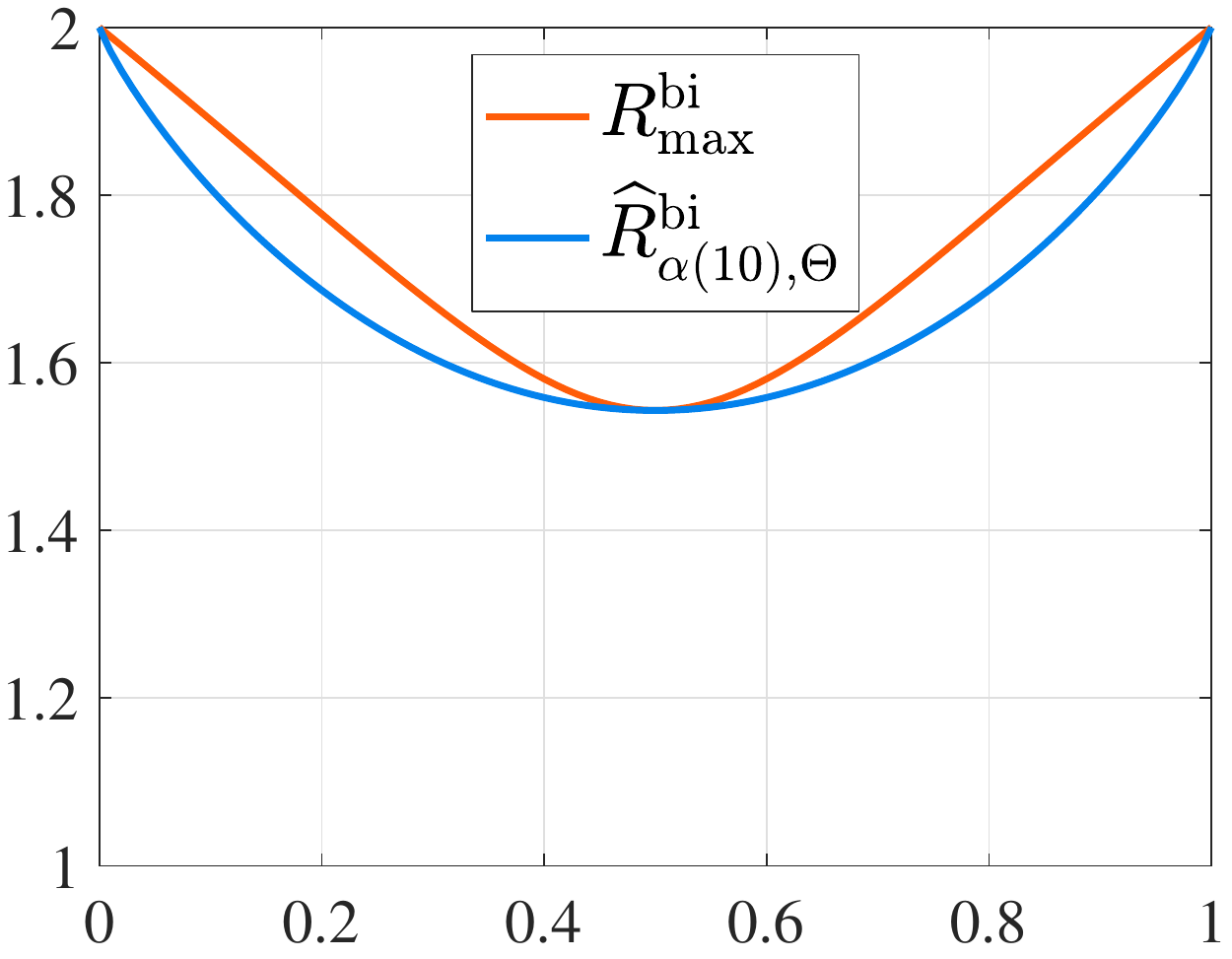}};
\node at (0.2,-2.4) {\small (b) $\phi = \pi/2$};

\node at (5.7,0) {\includegraphics[width = 5.3cm]{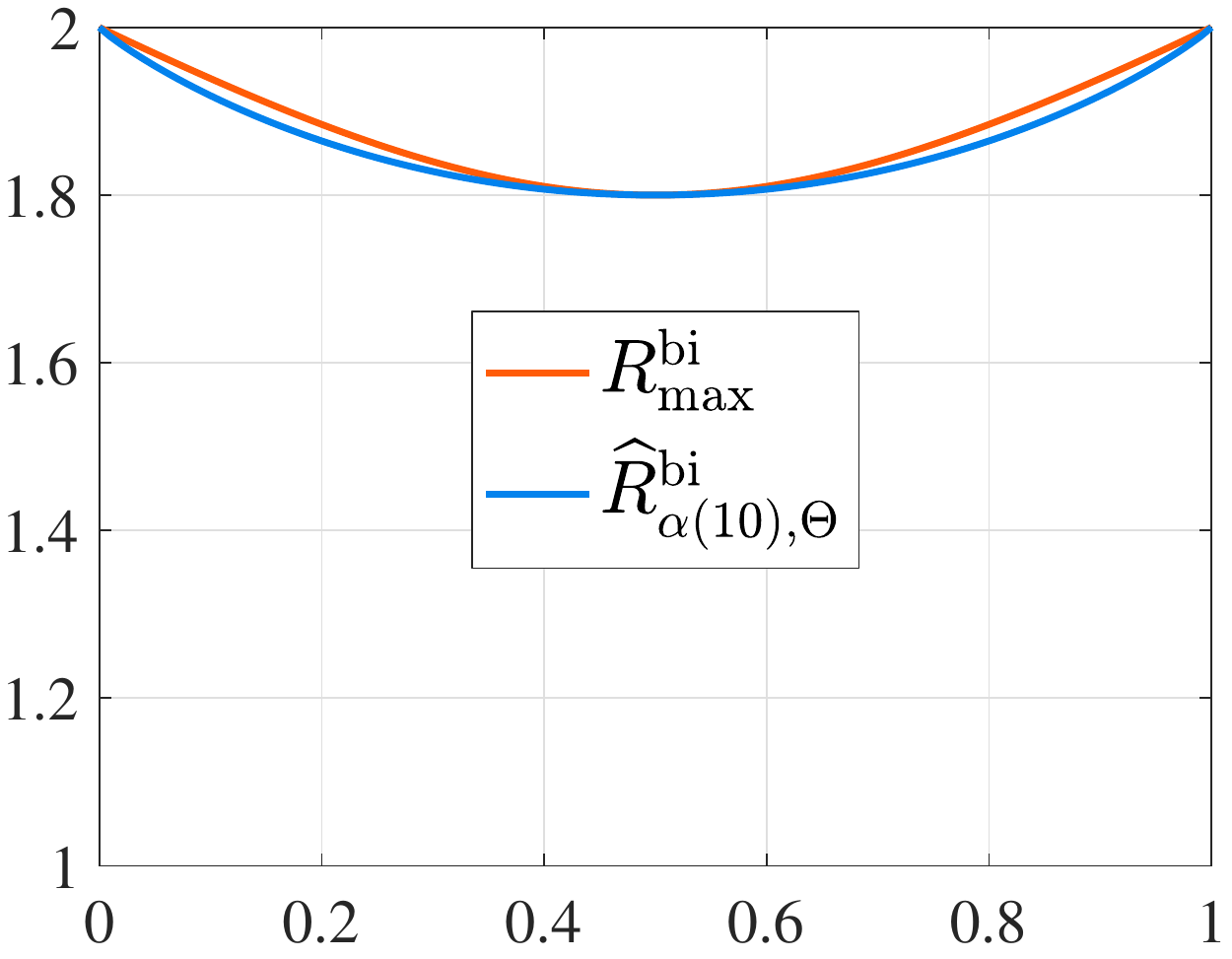}};
\node at (5.9,-2.4) {\small (c) $\phi = \pi/3$};

\end{tikzpicture}
\end{adjustwidth}
\caption{\small Comparison of the strong converse bounds on the two-way assisted quantum capacity of the bidirectional channels in~\eqref{eq: swap with dephasing channel} with the dephasing parameter choosing from $\phi \in \{\pi, \pi/2,\pi/3\}$. The horizontal axis takes value of $p \in [0,1]$.}
\label{two-way assisted quantum capacity compare bidirectional}
\end{figure}

\subsection{Some detailed proofs}
\label{sec: quantum capacity detailed proofs}

In this part, we give the detailed proofs of some aforementioned results.

\vspace{0.2cm}
\noindent \textbf{[Restatement of Proposition~\ref{prop: maximal renyi rains SDP formula}]} 
For any quantum channel $\cN$ and $\a(\ell) = 1+2^{-\ell}$ with $\ell\in \mathbb N$, it holds
\begin{align}
  \widehat R_{\a}(\cN) = \ell \cdot 2^\ell - (2^\ell+1)\log(2^\ell+1) + (2^\ell +1) \log S_\a(\cN),
\end{align}
with $S_\a(\cN)$ given by the following SDP
\begin{gather}
  S_\a(\cN) = \max \ \tr\left[\left(\plsdagger{K} - \ssum_{i=1}^\ell  W_i \right)\boldsymbol\cdot J_{\cN}\right] \quad \text{\rm s.t.}\quad \dblbig{ K,\{ Z_i\}_{i=0}^\ell},\dbhbig{\{ W_i\}_{i=1}^\ell,\rho}, \notag\\
  \dbp{\begin{matrix}
    \rho\ox \1 &  K\\ {K}^\dagger & \plsdagger{ Z}_{\ell}
  \end{matrix}},
  \left\{\dbp{\begin{matrix}
     W_{i} &  Z_{i}\\
     Z_{i}^\dagger & \plsdagger{ Z}_{i-1}
  \end{matrix}}\right\}_{i=1}^\ell, 
  \dbpbigg{\rho\ox \1 \pm \left[\plsdagger{ Z}_0\right]^{\sfT_B}},\dbebigg{\tr \rho - 1}\label{Renyi Rains information max},
\end{gather}
where $J_{\cN}$ is the Choi matrix of $\cN$ and $\plsdagger{X} \equiv X + X^\dagger$ denotes the Hermitian part of $X$.

\vspace{0.2cm}
\begin{proof}
This proof contains two steps. First we derive a \emph{suitable} SDP formula for $\widehat R_\a(\rho_{AB})$ in terms of a maximization problem. Second, we replace $\rho_{AB}$ as the channel's output state $\cN_{A'\to B}(\phi_{AA'})$ and maximize over all the input state $\rho_A$. Since the SDP maximization formula for $\widehat R_\a(\rho_{AB})$ is not necessarily unique, we need to find a suitable one which is able to give us an overall semidefinite optimization in the second step.

\noindent \textit{Step One:} Combining the semidefinite representation of the geometric \Renyi divergence in Lemma~\ref{geometric SDP general lemma} and the semidefinite representation of the Rains set
$\PPT'(A:B)=\big\{\sigma_{AB}\geq 0 \,|\, \sigma_{AB}^{\mathsf{T}_{ B}} = X_{AB} - Y_{AB},\, \tr (X_{AB} + Y_{AB}) \leq 1,\, X_{AB} \geq 0,\, Y_{AB} \geq 0\big\}$,
we have the SDP formula for the geometric \Renyi Rains bound as,
\begin{gather}
  \widehat R_{\a}(\rho_{AB})  = 2^\ell \boldsymbol\cdot \log\min\  \big[\tr M\big] \quad \text{\rm s.t.} \quad \dbhbig{M, \{N_i\}_{i=0}^\ell}, \dbp{X,Y},\notag\\[3pt]
   \dbp{\begin{matrix}
      M & \rho \\ \rho & N_\ell
    \end{matrix}},\, 
    \left\{\dbp{\begin{matrix}
      \rho & N_{i} \\ N_i & N_{i-1}
    \end{matrix}}\right\}_{i=1}^\ell, 
    \dbebigg{N_{0}^{\mathsf{T}_{ B}} - X + Y}, \dbpbigg{1-\tr (X + Y)}.\label{Renyi Rains bound SDP min}
\end{gather}
By the Lagrange multiplier method, the dual SDP is given by
\begin{gather}
  \widehat R_{\a}(\rho_{AB}) = 2^\ell \boldsymbol\cdot \log\max \ \left[\tr\Big[\Big(\plsdagger{K} - \ssum_{i=1}^\ell W_i \Big)\boldsymbol\cdot\rho\Big] - f\right] \quad \text{\rm s.t.}\quad \dblbig{K,\{Z_i\}_{i=0}^\ell},\dbhbig{\{W_i\}_{i=1}^\ell,f}, \notag\\[2pt]
  \dbp{\begin{matrix}
    \1 & K\\ K^\dagger & \plsdagger{Z}_{\ell}
  \end{matrix}},
  \left\{\dbp{\begin{matrix}
    W_{i} & Z_{i}\\
    Z_{i}^\dagger & \plsdagger{Z}_{i-1}
  \end{matrix}}\right\}_{i=1}^\ell, 
  \dbpbigg{f\1 \pm \left[\plsdagger{Z}_0\right]^{\sfT_B}}.\label{Renyi Rains bound SDP max}
\end{gather}
Due to the Slater's condition, we can easily check that the strong duality holds.
Note that both~\eqref{Renyi Rains bound SDP min} and~\eqref{Renyi Rains bound SDP max} are already SDPs for any quantum state $\rho_{AB}$. However, the last condition in~\eqref{Renyi Rains bound SDP max} will introduce an non-linear term if we perform the second step of proof at this stage. The following trick will help us get rid of the variable $f$ which is essential to obtain the final result. 
Note that the last condition above implies $f \geq 0$ and together with its precedent conditions we will necessarily have $f > 0$. Replacing the variables as
\begin{align}
  \widetilde K = f^{-{1}/({2^\ell + 1})}K,\quad \widetilde W_i = f^{-{1}/({2^\ell + 1})}W_i,\quad \widetilde Z_i = f^{-({2^{\ell-i}+1})/({2^\ell + 1})} Z_i,
\end{align} 
we obtain an equivalent SDP of $\widehat R_{\a}(\rho_{AB})$ as
\begin{gather}
   2^\ell \boldsymbol\cdot \log\max \ \left[f^{{1}/({2^\ell + 1})}\tr\Big[\Big(\plsdagger{\widetilde K} - \ssum_{i=1}^\ell \widetilde W_i \Big)\boldsymbol\cdot\rho\Big] - f\right] \quad
  \text{\rm s.t.}\quad \dblbig{\widetilde K,\{\widetilde Z_i\}_{i=0}^\ell},\dbhbig{\{\widetilde W_i\}_{i=1}^\ell,f},\notag\\[2pt]
  \dbp{\begin{matrix}
    \1 & \widetilde K\\ \widetilde{K}^\dagger & \plsdagger{\widetilde Z}_{\ell}
  \end{matrix}},
  \left\{\dbp{\begin{matrix}
    \widetilde W_{i} & \widetilde Z_{i}\\
    \widetilde Z_{i}^\dagger & \plsdagger{\widetilde Z}_{i-1}
  \end{matrix}}\right\}_{i=1}^\ell, 
  \dbpbigg{\1 \pm \left[\plsdagger{\widetilde Z}_0\right]^{\sfT_B}}.
\end{gather}
Denote the objective function $f^{{1}/({2^\ell + 1})} \cdot a - f$ with $a = \tr\big[\big(\plsdagger{\widetilde K} - \sum_{i=1}^\ell \widetilde W_i \big)\boldsymbol\cdot\rho\big] \geq 0$. For any fixed value $a$, the optimal solution is taken at $f = [{a}/({2^\ell+1})]^{1+1/2^\ell}$ with the maximal value $2^\ell [{a}/({2^\ell + 1})]^{1+1/2^\ell}$. Without loss of generality, we can replace the objective function with $2^\ell [{a}/({2^\ell + 1})]^{1+1/2^\ell}$ and get rid of the variable $f$. Direct calculation gives us 
\begin{gather}
  \widehat R_{\a}(\rho_{AB}) = \ell \cdot 2^\ell - (2^\ell+1)\log(2^\ell+1) + (2^\ell +1) \log S_\a(\rho_{AB}) \quad \text{with} \notag\\[2pt]
  S_\a(\rho_{AB}) = \max \ \tr\Big[\Big(\plsdagger{K} - \ssum_{i=1}^\ell  W_i \Big)\boldsymbol\cdot\rho\Big] \quad \text{\rm s.t.}\quad \dblbig{ K,\{ Z_i\}_{i=0}^\ell},\dbhbig{\{ W_i\}_{i=1}^\ell}, \notag\\
  \dbp{\begin{matrix}
    \1 &  K\\ {K}^\dagger & \plsdagger{ Z}_{\ell}
  \end{matrix}},
  \left\{\dbp{\begin{matrix}
     W_{i} &  Z_{i}\\
     Z_{i}^\dagger & \plsdagger{ Z}_{i-1}
  \end{matrix}}\right\}_{i=1}^\ell, 
  \dbpbigg{\1 \pm \left[\plsdagger{ Z}_0\right]^{\sfT_B}}.\label{eq: Renyi Rains bound SDP simplified}
\end{gather}

\noindent \textit{Step Two:} Note that $\cN_{A'\to B}(\phi_{AA'}) = \sqrt{\rho_A} J_{\cN} \sqrt{\rho_A}$ holds for any quantum state $\rho_A$ with purification $\phi_{AA'}$. Thus the final result is straightforward from~\eqref{eq: Renyi Rains bound SDP simplified} by replacing the input state $\rho_{AB}$ as $\sqrt{\rho_A} J_{\cN} \sqrt{\rho_A}$, replacing $K,Z_i,W_i$ as $\rho_A^{-1/2} K \rho_A^{-1/2},\rho_A^{-1/2} Z_i \rho_A^{-1/2}, \rho_A^{-1/2} W_i\rho_A^{-1/2}$ respectively and maximizing over all input state $\rho_A$.
\end{proof}

\vspace{0.2cm}
\noindent \textbf{[Restatement of Proposition~\ref{prop: Rains and Theta information}]} 
  For any generalized divergence $\bD$ and any quantum channel $\cN$, it holds  
  \begin{align}\label{eq: rains and theta info}
    \bR(\cN) \leq \bR_\Theta(\cN).
  \end{align}
  Moreover, for the max-relative entropy the equality always holds, i.e,
  \begin{align}\label{eq: max rains and theta info}
    R_{\max}(\cN) = R_{\max,\Theta}(\cN).
  \end{align}
\begin{proof}
We prove the relation~\eqref{eq: rains and theta info} first. Note that for any pure state $\phi_{AA'}$ and $\cM_{A'\to B} \in \bcV_\Theta$, we have 
\begin{align}
  \|(\cM_{A'\to B}(\phi_{AA'}))^{\sfT_B}\|_1 = \|\Theta_{B}\circ\cM_{A'\to B}(\phi_{AA'})\|_1 \leq \|\Theta_{B}\circ\cM_{A'\to B}\|_\di \leq 1.
\end{align} 
This implies $\cM_{A'\to B}(\phi_{AA'}) \in \PPT'(A:B)$.
Then it holds
\begin{align}
  \bR(\cN) & = \max_{\rho_A \in \cS(A)} \min_{\sigma_{AB} \in \PPT'(A:B)} \bD(\cN_{A'\to B}(\phi_{AA'})\|\sigma_{AB})\\
  & \leq \max_{\rho_A \in \cS(A)} \min_{\cM \in {\bcV_{\Theta}}} \bD(\cN_{A'\to B}(\phi_{AA'})\|\cM_{A'\to B}(\phi_{AA'}))\\
  & =  \min_{\cM \in {\bcV_{\Theta}}} \max_{\rho_A \in \cS(A)} \bD(\cN_{A'\to B}(\phi_{AA'})\|\cM_{A'\to B}(\phi_{AA'}))\\
  & = \bR_\Theta(\cN).  
\end{align}
The first and last line follow by definition. The inequality holds since $\cM_{A'\to B}(\phi_{AA'}) \in \PPT'(A:B)$ and thus the first line is minimizing over a larger set. In the third line, we swap the min and max by the argument in Remark~\ref{Rains theta swap}.

We next prove the equation~\eqref{eq: max rains and theta info}. Recall that the SDP formula of the max-Rains information is given by (\cite[Proposition 5]{Wang2017d} or~\cite[Eq.~(11)]{Wang2016a})
  \begin{align}
    R_{\max}(\cN) = \log \min \big\{\mu\,\big|\,(V-Y)^{\sfT_{B}} \ge J_{\cN},\tr_B(V+Y)\le \mu \1_A, Y,V\geq 0 \big\}.
  \end{align}
Replace $V$ and $Y$ with $\mu V$ and $\mu Y$ respectively, and then denote $N = (V- Y)^{\sfT_B}$, we have
  \begin{align}\label{eq: max Rains tmp1}
  R_{\max}(\cN) = \log \min \Big\{\mu\,\Big|\ J_{\cN} \leq \mu N,\, N = (V-Y)^{\sfT_{B}},\,\tr_B(V+Y)\le \1_A, Y,V\ge0 \Big\}.
\end{align}
Notice that the second to the last conditions define a set of CP maps
\begin{align}\label{theta set alternative representation}
  \bcV  \equiv \left\{\cM \in \CP(A:B)\,\big|\, \exists\, V_{AB}, Y_{AB},\ \text{s.t.}\ J_{\cM}^{\sfT_B} = V - Y, \tr_B (V+Y) \leq \1_A, Y,V \geq 0\right\}.
\end{align}
Combining \eqref{eq: max Rains tmp1} and \eqref{theta set alternative representation}, we obtain $R_{\max}(\cN) = \min_{\cM\in\bcV} D_{\max}(\cN\|\cM)$. Thus it suffices for us to show the equivalence $\bcV= \bcV_\Theta$.
For any $\cM \in \bcV_\Theta$, take $V = (R + J_{\cM}^{\sfT_B})/2$ and $Y = (R - J_{\cM}^{\sfT_B})/2$. Then $V \geq 0$, $Y \geq 0$, $J_{\cM}^{\sfT_B} = V-Y$ and $\tr_B (V+Y) = \tr_B R \leq \1_A$, which implies $\cM \in \bcV$. On the other hand, for any $\cM \in \bcV$, take $R = V + Y$. We can check that $\tr_B R = \tr_B(V+Y) \leq \1_A$, $R + J_{\cM}^{\sfT_B} = (V+Y) + (V-Y) = 2V \geq 0$ and $R - J_{\cM}^{\sfT_B} = (V+Y) - (V-Y) = 2Y \geq 0$, which implies $\cM \in \bcV_\Theta$. Finally we have
\begin{align}
R_{\max}(\cN) = \min_{\cM\in\bcV} D_{\max}(\cN\|\cM) = \min_{\cM\in\bcV_\Theta} D_{\max}(\cN\|\cM) = R_{\max,\Theta}(\cN),
\end{align}
which completes the proof.
\end{proof}

\vspace{0.2cm}
\noindent \textbf{[Restatement of Proposition~\ref{amortization proposition}]} 
  For any quantum state $\rho_{A'AB'}$, any quantum channel $\cN_{A\to B}$ and the parameter $\a \in (1,2]$, it holds
  \begin{align}
  \widehat R_\a(\omega_{A':BB'}) \leq \widehat R_\a(\rho_{A'A:B'}) + \widehat R_{\a,\Theta}(\cN_{A\to B})\quad \text{with} \quad \omega_{A':BB'} = \cN_{A\to B}(\rho_{A'A:B'}).
\end{align}
\begin{proof}
Suppose the optimal solution of $\widehat R_\a(\rho_{A'A:B'})$ and $\widehat R_{\a,\Theta}(\cN_{A\to B})$ are taken at $\sigma_{A'AB'}\in \PPT'(A'A:B')$ and $\cM \in \bcV_{\Theta}$, respectively. Let $ \gamma_{A'BB'} = \cM_{A\to B}(\sigma_{A'AB'})$. We have
  \begin{align}
    \left\|\gamma_{A'BB'}^{\sfT_{BB'}}\right\|_1 = \left\|\Theta_{B}\circ \cM_{A\to B}(\sigma_{A'AB'}^{\sfT_{B'}})\right\|_1 \leq \|\Theta_{B}\circ \cM_{A\to B}\|_{\di} \left\|\sigma_{A'AB'}^{\sfT_{B'}}\right\|_1 \leq 1,
  \end{align}
  where the first inequality follows from the definition of diamond norm and the second inequality follows from the choice of $\sigma_{A'AB'}$ and $\cM_{A\to B}$.
  Thus $\gamma_{A'BB'} \in \PPT'(A':BB')$ and forms a feasible solution for $\widehat R_\a(\omega_{A':BB'})$. Then we have
  \begin{align}
    \widehat R_\a(\omega_{A':BB'}) & \leq \widehat D_{\alpha}(\omega_{A':BB'}\|\gamma_{A'BB'})\\
    & = \widehat D_{\alpha}(\cN_{A\to B}(\rho_{A'A:B'})\|\cM_{A\to B}(\sigma_{A'AB'}))\\
    & \leq \widehat D_{\alpha}(\cN\|\cM) + \widehat D_{\alpha}(\rho_{A'A:B'}\|\sigma_{A'AB'})\\
    & = \widehat R_{\a,\Theta}(\cN_{A\to B}) + \widehat R_\a(\rho_{A'A:B'}).
  \end{align}
  The second inequality follows from the chain rule of the geometric \Renyi divergence in Lemma~\ref{lem_chainRule}, and the last line follows by the optimality assumption of $\cM$ and $\sigma$.
\end{proof}

\newpage
\section{Private communication}
\label{sec: Private communication}

\subsection{Backgrounds}

The \emph{private capacity} of a quantum channel is defined as the maximum rate at which classical information can be transmitted privately from the sender (Alice) to the receiver (Bob). By ``private'', it means a third party (Eve) who has access to the channel environment cannot learn anything about the information that Alice sends to Bob. There are also two different private capacities of major concern, the (unassisted) private capacity $P$ and the two-way assisted private capacity $P^{\leftrightarrow}$, depending on whether classical communication is allowed between each channel uses. 

In the same spirit of the quantum capacity theorem, the private capacity theorem states that the private capacity of a quantum channel is given by its regularized private information~\cite{Cai2004,Devetak2005a},
\begin{align}\label{eq: private channel coding theorem}
    P(\cN) = \lim_{n\to \infty} \frac{1}{n} I_p(\cN^{\ox n}) = \sup_{n\in \NN} \frac{1}{n} I_p(\cN^{\ox n}),
\end{align}
where $I_p(\cN) \equiv \max_{\mathscr E} \left[\chi(\mathscr E, \cN) - \chi(\mathscr E, \cN^c)\right]$
is the private information with the maximization taken over all quantum state ensembles $\mathscr E = \{p_i, \rho_i\}$, $\chi(\mathscr E, \cN) \equiv H(\sum_i p_i \cN(\rho_i)) - \sum_i p_i H(\cN(\rho_i))$ is the Holevo information of the ensemble $\mathscr E$, $H$ is the von Neumann entropy and $\cN^c$ is the complementary channel of $\cN$. The regularization in~\eqref{eq: private channel coding theorem} is necessary in general since the private information is proved to be non-additive~\cite{Smith2008structured} and an unbounded number of channel uses may be required to achieve its private capacity~\cite{Elkouss2015}. 

Despite of their importance in understanding the fundamental limits of quantum key distributions \cite{Bennett1984}, much less is known about the converse bounds on private capacities, mostly due to their inherently involved settings. The squashed entanglement of a channel was proposed in~\cite{Takeoka2014} and proved to be a converse bound on the two-way assisted private capacity. But its strong converse is still left open and the quantity itself is difficult to be computed exactly~\cite{Huang2014}. The entanglement cost of a channel was introduced in~\cite{Berta2015e} and shown to be a strong converse bound on the two-way assisted private capacity~\cite{Christandl2016}. But it was not given by a single-letter formula. A closely related quantity to this part is the relative entropy of entanglement of a channel ($E_R$), which was proved as a (weak) converse bound on the two-way assisted private capacity for channels with ``covariant symmetry''~\cite{Pirandola2015b}. This was later strengthened in~\cite{Wilde2016c} as a strong converse bound on the unassisted private capacity for general quantum channels and a strong converse bound on the two-way assisted private capacity for channels with ``covariant symmetry''. Moreover, the max-relative entropy of entanglement of a channel ($E_{\max}$) was proved as a strong converse bound on the two-way assisted private capacity in general~\cite{Christandl2016}.

\subsection{Summary of results}

In this part, we extend the techniques used in the previous sections to the task of private communication and aim to improve the max-relative entropy of entanglement of a channel in both assisted and unassisted scenarios. The structure is organized as follows (see also a schematic diagram in Figure~\ref{fig: private communiation summary}).

In Section~\ref{sec: Maximal Renyi divergence of entanglement} we discuss the unassisted private communication. While the relative entropy of entanglement $E_R$ established the best known strong converse in this case, the difficulties of its evaluation are two-folds: the optimization over the set of separable states and the minimax optimization of the Umegaki relative entropy. The first difficulty will be automatically removed for qubit channels since separability can be completely characterized by the positive partial transpose conditions~\cite{Horodecki1996}. The second can be handled by relaxing the Umegaki relative entropy to a semidefinite representable one, such as the max-relative entropy. Based on a notion of the generalized relative entropy of entanglement of a channel, we exhibit that the entanglement of a channel induced by the geometric \Renyi divergence ($\widehat E_{\a}$) lies between $E_R$ and $E_{\max}$. That is, we show that
\begin{align*}
P(\cN) \leq P^\dagger(\cN)\leq E_R(\cN) \leq \widehat E_{\a}(\cN) \leq E_{\max}(\cN),
\end{align*}
where $P(\cN)$ and $P^{\dagger}(\cN)$ denote the unassisted private capacity of channel $\cN$ and its corresponding strong converse capacity, respectively. Moreover, $\widehat E_{\a}(\cN)$ is given by a conic program in general and reduces to a semidefinite program for all qubit channels.

In Section~\ref{sec: Maximal Renyi Sigma-information}, we study the private communication with two-way classical communication assistance. We introduce the \emph{generalized Sigma-information} which is a new variant of channel information inspired by the channel resource theory. More precisely, we define the generalized Sigma-information as a ``channel distance'' to the class of entanglement breaking subchannels. We show that the max-relative entropy of entanglement $E_{\max}$ coincides with the generalized Sigma-information induced by the max-relative entropy $E_{\max,\Sigma}$, i.e., $E_{\max} = E_{\max,\Sigma}$, thus providing a completely new perspective of understanding the former quantity. Moreover, we prove that the generalized Sigma-information induced by the geometric \Renyi divergence ($\widehat E_{\a,\Sigma}$) is a strong converse on the two-way assisted private capacity by utilizing an ``amortization argument'', improving the previously best-known result of the max-relative entropy of entanglement~\cite{Christandl2016} in general. That is, we show that
\begin{align*}
P^{\leftrightarrow}(\cN) \leq  P^{\leftrightarrow,\dagger}(\cN) \leq \widehat E_{\a,\Sigma}(\cN) \leq E_{\max}(\cN),
\end{align*}
where $P^{\leftrightarrow}(\cN)$ and $P^{\leftrightarrow,\dagger}(\cN)$ denote the two-way assisted private capacity of channel $\cN$ and its corresponding strong converse capacity, respectively. Moreover, $\widehat E_{\a,\Sigma}(\cN)$ is given by a conic program in general and reduces to a semidefinite program for all qubit channels.


\begin{figure}[H]
\centering
\begin{tikzpicture}
\draw[fill=gray!10,opacity=0.3] (7,-7.5) rectangle node[gray,opacity=1,midway,shift={(0,2.2)}]{{\small SDP computable }} node[gray,opacity=1,midway,shift={(0,1.8)}]{{\small for qubit channels}} (10.8,-10.5);
\node (Ptwoway) at (0,-8) {$P^{\leftrightarrow}$};
\node[circle,fill=magenta!10,inner sep=1pt,minimum size=2pt] (Ptwowaydagger) at (2.5,-8) {$P^{\leftrightarrow,\dagger}$};
\node (Pcppp) at (0,-10) {$P$};
\node[circle,fill=magenta!10,inner sep=1pt,minimum size=2pt] (Pcpppdagger) at (2.5,-10) {$P^{\dagger}$};
\node[red] (Ea) at (7.5,-10) {$\widehat E_\a$};
\node[red] (EaSigma) at (7.5,-8) {$\widehat E_{\a,\Sigma}$};
\node (ER) at (5,-10) {$E_R$};
\node[red] (EmaxSigma) at (10,-8) {$E_{\max,\Sigma}$};
\node (Emax) at (10,-10) {$E_{\max}$};
\draw[very thick,->,dotted,gray] (ER) -- node[black,rotate = -40,midway,shift={(-0.1,0.15)}] {\footnotesize cov.} node[black,rotate=-39,midway,shift={(-0.1,-0.2)}] {\scriptsize \cite{Wilde2016c}} (Ptwowaydagger);
\draw[thick,->] (Ptwowaydagger) -- (Ptwoway);
\draw[thick,->] (Ptwoway) -- (Pcppp);
\draw[thick,->] (Ptwowaydagger) -- (Pcpppdagger);
\draw[thick,->] (Pcpppdagger) -- (Pcppp);
\draw[red,very thick,->] (EaSigma) -- node[black,midway,shift={(0.1,0.2)}] {\footnotesize Thm.~\ref{thm: main result private assisted}} (Ptwowaydagger);
\draw[thick,->] (ER) -- node[midway,shift={(0.1,0.2)}] {\scriptsize \cite{Wilde2016c}}(Pcpppdagger);
\draw[thick,->] (EaSigma) -- node[rotate=90,midway,shift={(0.1,0.2)}] {\footnotesize Prop.~\ref{prop: private Ea Easigma}} node[rotate = 90,midway,shift={(0,-0.2)}] {\footnotesize $\boldsymbol\neq$}  (Ea);
\draw[thick,->] (Ea) -- node[midway,shift={(0.1,0.2)}] {\footnotesize Lem.~\ref{thm: divergence chain inequality}} node[midway,shift={(0.1,-0.2)}] {\footnotesize $\boldsymbol\neq$}  (ER);
\draw[thick,->] (EmaxSigma) -- node[midway,shift={(0.1,0.2)}] {\footnotesize Lem.~\ref{thm: divergence chain inequality}} node[midway,shift={(0.1,-0.2)}] {\footnotesize $\boldsymbol\neq$}  (EaSigma);
\draw[thick,<->] (EmaxSigma) -- node[rotate=90,midway,shift={(0.1,0.2)}] {\footnotesize Prop.~\ref{prop: private Ea Easigma}} (Emax);
\draw[thick,->] (Emax) -- node[midway,shift={(0.1,0.2)}] {\footnotesize Lem.~\ref{thm: divergence chain inequality}} node[midway,shift={(0.1,-0.2)}] {\footnotesize $\boldsymbol\neq$}  (Ea);
\draw[thick,->] (Emax.east) -- (11,-10) -- (11,-6) -- node[midway,shift={(0,0.2)}] {\scriptsize \cite{Christandl2016}} (6,-6) -- (3,-7.8);

\end{tikzpicture}
\caption{\small Relations between different converse bounds for private communication. $P^*$ and $P^{*,\dagger}$ are the private capacity with assistance $*$ and its corresponding strong converse capacity, respectively. $E_R$, $\widehat E_\a$ and $E_{\max}$ are the generalized relative entropy of entanglement of a channel induced by different quantum divergences. $\widehat E_{\a,\Theta}$ and $E_{\max,\Theta}$ are the generalized Sigma-information induced by different quantum divergences. The circled quantities are those of particular interest in quantum information theory. The key quantities and the main contributions in this section are marked in red. The quantity at the start point of an arrow is no smaller than the one at the end point. The double arrow represents that two quantities coincide. The inequality sign represents two quantities are not the same in general. The dotted arrow represents that the relation holds under certain restrictions, where ``cov.'' stands for ``covariant''. The parameter $\a$ is taken in the interval $(1,2]$. The quantities in the shaded area are given by conic programs and are SDP computable for all qubit channels (or more generally channels with dimension $|A||B| \leq 6$).}
\label{fig: private communiation summary}
\end{figure}
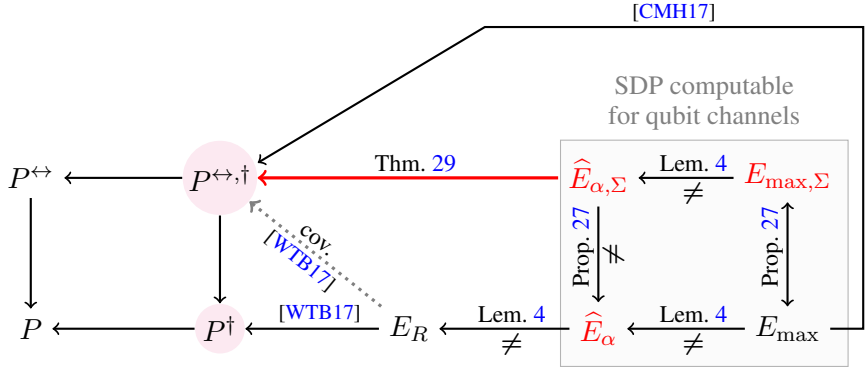

\subsection{Unassisted private capacity}
\label{sec: Maximal Renyi divergence of entanglement}

In this section we discuss converse bounds on the unassisted private capacity~\footnote{We refer to the work~\cite[Section V.A]{Wilde2016c} for rigorous definitions of the unassisted private capacity and its strong converse.}.

\begin{definition}
    For any generalized divergence $\bD$, the generalized relative entropy of entanglement of a quantum state $\rho_{AB}$ is defined as
    \begin{align}
        \boldsymbol E(\rho_{AB}) \equiv \min_{\sigma_{AB}\in \SEP_{\bullet}(A:B)} \boldsymbol D(\rho_{AB}\|\sigma_{AB}),
    \end{align}
    where $\SEP_{\bullet}(A:B)$ is the set of sub-normalized separable states between $A$ and $B$.
\end{definition}

If the generalized divergence satisfies the dominance property, i.e., $\bD(\rho\|\sigma) \geq \bD(\rho\|\sigma')$ if $\sigma \leq \sigma'$, then the optimal solution of the above minimization problem can always be taken at a normalized separable states. Since the dominance property is generic for most divergences of concern, the above definition is consistent with the one defined over the set of normalized separable states (e.g.~\cite{Vedral1998}).

\begin{definition}
    For any generalized divergence $\bD$, the generalized relative entropy of entanglement of a quantum channel $\cN_{A'\to B}$ is defined as 
\begin{align}
    \boldsymbol E(\cN) \equiv \max_{\rho_A \in \cS(A)} \boldsymbol E(\cN_{A'\to B}(\phi_{AA'})) = \max_{\rho_A \in \cS(A)}  \min_{\sigma_{AB}\in\SEP_{\bullet}(A:B)} \bD(\cN_{A'\to B}(\phi_{AA'})\|\sigma_{AB}),
\end{align}
where $\phi_{AA'}$ is a purification of quantum state $\rho_A$.
\end{definition}
In particular, the relative entropy of entanglement for a channel is induced by the Umegaki relative entropy~\cite{Pirandola2015b},
\begin{align}
    E_R(\cN)  = \max_{\rho_A \in \cS(A)}  \min_{\sigma_{AB}\in \SEP_{\bullet}(A:B)} D(\cN_{A'\to B}(\phi_{AA'})\|\sigma_{AB}).
\end{align}
The max-relative entropy of entanglement for a channel is induced by the max-relative entropy~\cite{Christandl2016},
\begin{align}
    E_{\max}(\cN)  = \max_{\rho_A \in \cS(A)}  \min_{\sigma_{AB}\in \SEP_{\bullet}(A:B)} D_{\max}(\cN_{A'\to B}(\phi_{AA'})\|\sigma_{AB}).
\end{align}
These two quantities are known as strong converse bounds for private capacity with and without classical communication assistance, respectively. That is, 
\begin{align}
    P^\dagger(\cN) \leq E_R(\cN)\quad ~\cite{Wilde2016c} \quad \text{and} \quad P^{\leftrightarrow,\dagger}(\cN) \leq E_{\max}(\cN)\quad ~\cite{Christandl2016}.
\end{align}

The computability of $E_R$ is usually restricted to qubit covariant channels where the input state $\rho_A$ can be taken as the maximally mixed states and the set of separable states coincides with the set of PPT states~\cite{Horodecki1996}. The following result relaxes $E_R$ to its geometric \Renyi version $\widehat E_\a$, which is SDP computable for \emph{all qubit channels} and is tighter than $E_{\max}$ in general.

\begin{theorem}[Main result 4]\label{thm: main result private unassisted}
    For any quantum channel $\cN_{A'\to B}$ and $\a \in (1,2]$, it holds
        \begin{align}\label{eq: main result 3}
            P(\cN) \leq P^\dagger(\cN)\leq E_R(\cN) \leq \widehat E_{\a}(\cN) \leq E_{\max}(\cN),
        \end{align}
      where $P(\cN)$ and $P^{\dagger}(\cN)$ denote the unassisted private capacity  and its corresponding strong converse capacity, respectively.
    Moreover, the bound $\widehat E_{\a}(\cN)$ with $\a(\ell) = 1+2^{-\ell}$ and $\ell \in \mathbb N$ can be given as 
\begin{align}
    \widehat E_{\a}(\cN) = \ell \cdot 2^\ell - (2^\ell+1)\log(2^\ell+1) + (2^\ell +1) \log T_\a(\cN),
\end{align}
with $T_\a(\cN)$ given by the following conic program
\begin{gather}
    T_\a(\cN) = \max \ \tr\left[\left(\plsdagger{K} - \ssum_{i=1}^\ell  W_i \right)\boldsymbol\cdot J_{\cN}\right] \quad \text{\rm s.t.}\quad \dblbig{ K,\{ Z_i\}_{i=0}^\ell},\dbhbig{\{ W_i\}_{i=1}^\ell,\rho}, \notag\\
    \dbp{\begin{matrix}
        \rho\ox \1 &  K\\ {K}^\dagger & \plsdagger{ Z}_{\ell}
    \end{matrix}},
    \left\{\dbp{\begin{matrix}
         W_{i} &  Z_{i}\\
         Z_{i}^\dagger & \plsdagger{ Z}_{i-1}
    \end{matrix}}\right\}_{i=1}^\ell,\, 
    \rho\ox \1 - \plsdagger{Z}_0 \in \BP(A:B),\dbebigg{\tr \rho - 1}\label{Renyi Ea max},
\end{gather}
where $J_{\cN}$ is the Choi matrix of $\cN$ and $\BP(A:B)$ is the set of block-positive operators which reduces to a semidefinite cone if $|A||B| \leq 6$.
\end{theorem}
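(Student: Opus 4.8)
The plan is to establish the chain of inequalities first and then derive the conic program by adapting, essentially verbatim, the two-step argument already carried out for Proposition~\ref{prop: maximal renyi rains SDP formula}.

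For the four inequalities: $P(\cN)\le P^\dagger(\cN)$ holds by definition, since a strong converse capacity is never smaller than the capacity, and $P^\dagger(\cN)\le E_R(\cN)$ is precisely the strong converse theorem of~\cite{Wilde2016c}. The remaining two inequalities $E_R(\cN)\le \widehat E_\a(\cN)\le E_{\max}(\cN)$ are a direct consequence of Lemma~\ref{thm: divergence chain inequality}: for every fixed input state $\rho_A$ with purification $\phi_{AA'}$ and every fixed $\sigma_{AB}\in\SEP_{\bullet}(A:B)$ one has $D(\cN_{A'\to B}(\phi_{AA'})\|\sigma_{AB})\le \widehat D_\a(\cN_{A'\to B}(\phi_{AA'})\|\sigma_{AB})\le D_{\max}(\cN_{A'\to B}(\phi_{AA'})\|\sigma_{AB})$; taking the minimum over $\sigma_{AB}$ and then the maximum over $\rho_A$ is monotone, so the three channel quantities inherit the same ordering.

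For the conic-program formula I would mimic the proof of Proposition~\ref{prop: maximal renyi rains SDP formula}. \emph{Step one}: combine the semidefinite representation of $G_{1-\a}$ for $\a(\ell)=1+2^{-\ell}$ (Lemma~\ref{geometric SDP general lemma}) with the conic description $\SEP_{\bullet}(A:B)=\{\sigma\ge 0: \sigma\in\SEP(A:B),\ \tr\sigma\le 1\}$ to write $\widehat E_\a(\rho_{AB})=\min_{\sigma\in\SEP_{\bullet}}\widehat D_\a(\rho_{AB}\|\sigma)$ as a minimization conic program in variables $M,\{N_i\}_{i=0}^{\ell}$, structurally identical to~\eqref{Renyi Rains bound SDP min} except that the partial-transpose constraints on $N_0$ are replaced by $N_0\in\SEP(A:B)$, $\tr N_0\le 1$. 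Dualizing — using that the dual cone of $\SEP(A:B)$ is the block-positive cone $\BP(A:B)$ — yields a maximization conic program analogous to~\eqref{Renyi Rains bound SDP max} in which the constraint $\rho\ox\1\pm[\plsdagger{Z}_0]^{\sfT_B}\ge 0$ is replaced by the single one-sided constraint $f\,\1-\plsdagger{Z}_0\in\BP(A:B)$ together with a scalar $f\ge 0$; Slater's condition still holds, so strong duality applies. Next apply the same rescaling trick $\widetilde K=f^{-1/(2^\ell+1)}K$, $\widetilde W_i=f^{-1/(2^\ell+1)}W_i$, $\widetilde Z_i=f^{-(2^{\ell-i}+1)/(2^\ell+1)}Z_i$, which turns the objective into $f^{1/(2^\ell+1)}a-f$ with $a\ge 0$ and is optimized at $f=(a/(2^\ell+1))^{1+1/2^\ell}$, eliminating $f$ and producing $\widehat E_\a(\rho_{AB})=\ell\cdot 2^\ell-(2^\ell+1)\log(2^\ell+1)+(2^\ell+1)\log T_\a(\rho_{AB})$ with $T_\a(\rho_{AB})$ of the form~\eqref{Renyi Ea max} but with $\rho$ in place of $J_\cN$ and $\1$ in place of $\rho\ox\1$. \emph{Step two}: substitute $\rho_{AB}=\cN_{A'\to B}(\phi_{AA'})=\sqrt{\rho_A}\,J_\cN\sqrt{\rho_A}$, replace $K,Z_i,W_i$ by $\rho_A^{-1/2}K\rho_A^{-1/2}$, $\rho_A^{-1/2}Z_i\rho_A^{-1/2}$, $\rho_A^{-1/2}W_i\rho_A^{-1/2}$ (legitimate after restricting to full-rank $\rho_A$ by continuity), and maximize over $\rho_A\in\cS(A)$; this gives exactly~\eqref{Renyi Ea max}. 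Finally, the SDP-reducibility claim follows because $\BP(A:B)$ coincides with the decomposable cone $\{P+Q^{\sfT_B}: P,Q\ge 0\}$ whenever $|A||B|\le 6$ (equivalently, separability there is characterized by positivity of the partial transpose), and the latter cone is semidefinite-representable; in particular the conic program is an SDP for all qubit channels.

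The main obstacle is the bookkeeping in the dualization: one must track carefully how the block-positive cone — the dual cone of $\SEP$ — takes over the role that the decomposable/partial-transpose cone played in the Rains computation, noting in particular that a cone membership $\sigma\in\SEP$ produces a one-sided $\BP$ constraint rather than the two-sided $\pm[\cdot]^{\sfT_B}\ge 0$ pair that came from the trace-norm condition $\|\sigma^{\sfT_B}\|_1\le 1$. One must also verify that the rescaling trick, which relies only on the homogeneity of the nested geometric-mean SDP and not on the nature of the extra cone constraint, transfers without change, and that Slater's condition holds so that strong duality is available for the (non-polyhedral) conic program; both follow as in the Rains case by exhibiting a strictly feasible point.
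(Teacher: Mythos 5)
Your proposal is correct and follows essentially the same route as the paper: the four inequalities are handled exactly as in the paper (definition, the strong converse of~\cite{Wilde2016c}, and Lemma~\ref{thm: divergence chain inequality}), and the conic program is obtained by repeating the two-step argument of Proposition~\ref{prop: maximal renyi rains SDP formula} with $\SEP_{\bullet}$ in place of the Rains set, the one-sided dual-cone constraint in $\BP(A:B)$ replacing the $\pm$ partial-transpose constraints, and the identity $\BP(A:B)=\{X+Y^{\sfT_B}\,|\,X\geq 0,\,Y\geq 0\}$ for $|A||B|\leq 6$ giving the SDP reduction. Your explicit tracking of the dualization and of why the rescaling and Slater arguments transfer is exactly the content the paper leaves implicit.
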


\begin{proof}
The first inequality in~\eqref{eq: main result 3} follows by definition. The second inequality in~\eqref{eq: main result 3} was proved in~\cite{Wilde2016c}. The last two inequalities in~\eqref{eq: main result 3} are direct consequences of Lemma~\ref{thm: divergence chain inequality}. The derivation of the conic program~\eqref{Renyi Ea max} follows the same steps as Proposition~\ref{prop: maximal renyi rains SDP formula}. The block positive cone $\BP(A:B)$ is the dual cone of the set of separable operators. When the channel dimension satisfies $|A||B| \leq 6$, this cone admits a semidefinite representation as $\BP(A:B)=\{X+Y^{\sfT_B}\,|\, X \geq 0, Y \geq 0\}$~\cite[Table 2.2]{aubrun2017alice}. Thus the conic program~\eqref{Renyi Ea max} reduces to a semidefinite program.
\end{proof}

\subsection{Two-way assisted private capacity}
\label{sec: Maximal Renyi Sigma-information}

In this section we discuss converse bounds on the two-way assisted private capacity~\footnote{We refer to~\cite[Section V.A]{Wilde2016c} for rigorous definitions of the two-way assisted private capacity and its strong converse.}.

A quantum channel $\cN_{A'\to B}$ is called \emph{entanglement breaking} if its output state $\cN_{A'\to B}(\rho_{AA'})$ is separable for any input $\rho_{AA'}$ or equivalently if its Choi matrix is separable~\cite{horodecki2003entanglement}. Since every entanglement breaking channel can be simulated by a measurement-preparation scheme~\cite{horodecki2003entanglement,Holevo1998coding}, any two-way assisted private communication protocol via entanglement breaking channel will end up with a separable state. This indicates that these channels are useless for private communication. With this in mind, we consider the set of \emph{entanglement breaking subchannels} as
\begin{align}
    \EBset \equiv \big\{\cM \in \CP(A:B)\,\big|\, J_{\cM}\in \SEPcone(A:B),\, \tr_B J_{\cM} \leq \1_A\big\},
\end{align} 
where $\SEPcone(A:B)$ denotes the cone of separable operators.

\begin{definition}[Sigma-info.]
    For any generalized divergence $\bD$, the generalized Sigma-information of a quantum channel $\cN_{A'\to B}$
   is defined as 
\begin{align}
    \boldsymbol E_{\Sigma}(\cN) \equiv \min_{\cM \in \EBset}\bD(\cN\|\cM) =   \min_{\cM \in \EBset} \max_{\rho_A \in \cS(A)} \bD(\cN_{A'\to B}(\phi_{AA'})\|\cM_{A'\to B}(\phi_{AA'})),
\end{align}
where $\phi_{AA'}$ is a purification of quantum state $\rho_A$.
\end{definition}
As mentioned in Remark~\ref{Rains theta swap}, the min and max in the above definition can be swapped. 

Analogous to Proposition~\ref{prop: Rains and Theta information}, the following result establishes the relation between the generalized Sigma-information $\boldsymbol E_{\Sigma}$ and the generalized relative entropy of entanglement of a channel $\boldsymbol E$.

\begin{proposition}\label{prop: private Ea Easigma}
    For any generalized divergence $\bD$ and any quantum channel $\cN$, it holds 
    \begin{align}
        \boldsymbol E(\cN)\leq \boldsymbol E_{\Sigma}(\cN).
    \end{align}
  Moreover, for the max-relative entropy the equality always holds, i.e,
  \begin{align}\label{eq: Emax sigma information}
    E_{\max}(\cN)= E_{\max,\Sigma}(\cN).
  \end{align}
\end{proposition}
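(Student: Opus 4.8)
The plan is to mirror the proof of Proposition~\ref{prop: Rains and Theta information}, with the Rains set $\PPT'$ and the Theta set $\bcV_\Theta$ replaced by the set $\SEP_{\bullet}$ of sub‑normalized separable states and the set $\EBset$ of entanglement breaking subchannels.

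For the inequality $\boldsymbol E(\cN)\le\boldsymbol E_{\Sigma}(\cN)$, the key point is that an entanglement breaking subchannel maps a pure state to a sub‑normalized separable state. Indeed, for $\rho_A\in\cS(A)$ with purification $\phi_{AA'}$ and any $\cM\in\EBset$, equation~\eqref{eq_kun1} gives $\cM_{A'\to B}(\phi_{AA'})=\sqrt{\rho_A}\,J_{\cM}\,\sqrt{\rho_A}$, and writing $J_{\cM}=\sum_k P_k\ox Q_k$ with $P_k,Q_k\ge 0$ (possible since $J_{\cM}\in\SEPcone(A:B)$) shows $\cM_{A'\to B}(\phi_{AA'})=\sum_k(\sqrt{\rho_A}\,P_k\,\sqrt{\rho_A})\ox Q_k\in\SEPcone(A:B)$, while $\tr\cM_{A'\to B}(\phi_{AA'})=\tr[\rho_A\,\tr_B J_{\cM}]\le\tr\rho_A=1$ by $\tr_B J_{\cM}\le\1_A$. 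Hence $\{\cM_{A'\to B}(\phi_{AA'}):\cM\in\EBset\}\subseteq\SEP_{\bullet}(A:B)$, so for each fixed $\rho_A$ the inner minimization in $\boldsymbol E(\cN)$ runs over a larger set than the minimization over outputs of entanglement breaking subchannels; combining this with the trivial inequality $\max\min\le\min\max$ (or the min--max exchange of Remark~\ref{Rains theta swap}) yields $\boldsymbol E(\cN)\le\min_{\cM\in\EBset}\max_{\rho_A}\bD(\cN_{A'\to B}(\phi_{AA'})\|\cM_{A'\to B}(\phi_{AA'}))=\boldsymbol E_{\Sigma}(\cN)$.

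For the equality $E_{\max}(\cN)=E_{\max,\Sigma}(\cN)$, I would show that both sides equal
\begin{align*}
\Gamma(\cN)\equiv\log\min\big\{\,\|\tr_B\tau_{AB}\|_\infty\ :\ \tau_{AB}\ge J_{\cN},\ \tau_{AB}\in\SEPcone(A:B)\,\big\}.
\end{align*}
On the $E_{\max,\Sigma}$ side, the input‑independence $D_{\max}(\cN\|\cM)=D_{\max}(J_{\cN}\|J_{\cM})$ from~\eqref{eq: Dmax channel divergence} turns $\min_{\cM\in\EBset}D_{\max}(\cN\|\cM)$ into $\min\{\log t:\ J_{\cN}\le tJ_{\cM},\ J_{\cM}\in\SEPcone(A:B),\ \tr_B J_{\cM}\le\1_A\}$; the substitution $\tau_{AB}=tJ_{\cM}$, together with $\tr_B J_{\cN}=\1_A$ (which forces $t\ge1$) and the observation that the optimum has $t=\|\tr_B\tau_{AB}\|_\infty$, identifies this with $\Gamma(\cN)$. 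On the $E_{\max}$ side, the analogous rescaling $\tilde\sigma_{AB}=t\sigma_{AB}$ in $D_{\max}(\rho_{AB}\|\sigma_{AB})=\min\{\log t:\rho_{AB}\le t\sigma_{AB}\}$ first gives $E_{\max}(\rho_{AB})=\log\min\{\tr\tilde\sigma_{AB}:\tilde\sigma_{AB}\ge\rho_{AB},\ \tilde\sigma_{AB}\in\SEPcone(A:B)\}$; I would then substitute $\rho_{AB}=\cN_{A'\to B}(\phi_{AA'})=\sqrt{\rho_A}\,J_{\cN}\,\sqrt{\rho_A}$ (reducing to full‑rank $\rho_A$ by the usual continuity argument, as in Lemma~\ref{lem: maximal Renyi channel divergence SDP}) and $\tilde\sigma_{AB}=\sqrt{\rho_A}\,\tau_{AB}\,\sqrt{\rho_A}$, which is a bijection of $\SEPcone(A:B)$ onto itself preserving the constraint $\tilde\sigma_{AB}\ge\rho_{AB}$ and sending $\tr\tilde\sigma_{AB}$ to $\tr[\rho_A\,\tr_B\tau_{AB}]$; this presents $E_{\max}(\cN)$ as $\log\max_{\rho_A\in\cS(A)}\min\{\tr[\rho_A\,\tr_B\tau_{AB}]:\tau_{AB}\ge J_{\cN},\ \tau_{AB}\in\SEPcone(A:B)\}$. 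Since the objective is bilinear, $\cS(A)$ is compact convex and the feasible set of $\tau_{AB}$ is convex, Sion's minimax theorem exchanges the max and the min, and $\max_{\rho_A\in\cS(A)}\tr[\rho_A\,\tr_B\tau_{AB}]=\|\tr_B\tau_{AB}\|_\infty$ because $\tr_B\tau_{AB}\ge\tr_B J_{\cN}=\1_A\ge 0$; hence $E_{\max}(\cN)=\Gamma(\cN)$.

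The first part is essentially immediate once one observes that entanglement breaking subchannels send pure states into $\SEP_{\bullet}$. The main obstacle is the second part, specifically the $E_{\max}$ computation: obtaining the correct linear‑conic reformulation, justifying the reduction to full‑rank inputs, and --- crucially --- the min--max exchange, which here is the non‑trivial direction of minimax (unlike in the first part) and depends on the bilinearity produced by the substitution $\tilde\sigma_{AB}=\sqrt{\rho_A}\tau_{AB}\sqrt{\rho_A}$. The remaining ingredients --- that conjugation by $\sqrt{\rho_A}\ox\1_B$ preserves $\SEPcone(A:B)$, and that $\max_{\rho\in\cS}\tr[\rho X]=\|X\|_\infty$ for $X\ge0$ --- are routine.
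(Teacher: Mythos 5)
Your proposal is correct, and its first half, together with the rescaling $Y_{AB}=tJ_{\cM}$ that identifies $E_{\max,\Sigma}(\cN)$ with the conic program $\Gamma(\cN)$, is exactly what the paper does (the paper likewise uses $\cM_{A'\to B}(\phi_{AA'})=\sqrt{\rho_A}J_{\cM}\sqrt{\rho_A}\in\SEP_{\bullet}(A:B)$ plus the min--max argument of the Theta-information proposition, and the identity $D_{\max}(\cN\|\cM)=D_{\max}(J_{\cN}\|J_{\cM})$). The genuine difference is in how the remaining identity $E_{\max}(\cN)=\Gamma(\cN)$ is obtained: the paper simply quotes it from \cite[Lemma~7]{Berta2017a}, whereas you reprove it from scratch by rewriting $D_{\max}$ as a conic program over $\SEPcone(A:B)$, substituting $\rho_{AB}=\sqrt{\rho_A}J_{\cN}\sqrt{\rho_A}$ and $\tilde\sigma_{AB}=\sqrt{\rho_A}\,\tau_{AB}\sqrt{\rho_A}$ for full-rank $\rho_A$, and exchanging $\max_{\rho_A}$ and $\min_{\tau}$ by Sion's theorem before using $\max_{\rho_A}\tr[\rho_A\,\tr_B\tau_{AB}]=\|\tr_B\tau_{AB}\|_\infty$. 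This derivation is sound: conjugation by the invertible $\sqrt{\rho_A}\ox\1_B$ is a bijection of $\SEPcone(A:B)$ preserving the ordering constraint, the bilinear objective with compact convex $\cS(A)$ meets the hypotheses of Sion's theorem, and the restriction to full-rank inputs is justified since the inner minimum is a concave (pointwise minimum of linear) function of $\rho_A$. So your argument buys a self-contained proof at the cost of essentially reproducing the content of the cited Berta--Wilde lemma, while the paper buys brevity by citing it; both yield the same statement.
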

\begin{proof}
This first inequality can be proved in a similar manner as Proposition~\ref{prop: Rains and Theta information} by using the fact that $\cM_{A'\to B}(\phi_{AA'}) = \sqrt{\rho_A} J_{\cM}\sqrt{\rho_A} \in \SEP_{\bullet}(A:B)$ for any $\rho_A \in \cS(A)$ and $\cM \in \EBset$. We now prove the equation~\eqref{eq: Emax sigma information}. It has been shown in~\cite[Lemma 7]{Berta2017a} that
\begin{align}
  E_{\max}(\cN) = \log \min \Big\{\|\tr_B Y_{AB}\|_{\infty}\,\Big|\, J_{\cN} \leq Y_{AB},\, Y_{AB} \in \SEPcone(A:B)\Big\}.
\end{align}
Using the semidefinite representation of infinity norm and replacing $Y_{AB} = t J_{\cM}$, we have
\begin{align}
  E_{\max}(\cN) 
  & = \log \min \Big\{t\,\Big|\, J_{\cN} \leq t J_{\cM},\, J_{\cM} \in \SEPcone(A:B), \tr_B J_{\cM} \leq \1_A\Big\}.
\end{align}
By the definition of $D_{\max}$ and $\EBset$, we have
\begin{align}
  E_{\max}(\cN)  = \min_{\cM \in \EBset} D_{\max}(J_\cN\|J_\cM) = \min_{\cM \in \EBset} D_{\max}(\cN\|\cM)  = E_{\max,\Sigma}(\cN),
\end{align}
where the second equality follows from Eq.~\eqref{eq: Dmax channel divergence}.
\end{proof}

\begin{remark}
    The idea of considering the set of entanglement breaking channels also appears in~\cite[Theorem V.2]{Christandl2016}, where an upper bound of $E_{\max}(\cN)$ is given as $ E_{\max}(\cN) \leq B_{\max}(\cN)$ with
    \begin{align}
    B_{\max}(\cN)\equiv \min \big\{D_{\max}(J_{\cN}\|J_{\cM})\,\big|\, \cM\ \text{is an entanglement breaking quantum channel}\big\}.
    \end{align} 
    However, the key difference here is that $E_{\max,\Sigma}$ is minimizing over all the entanglement breaking \emph{subchannels} which is a strictly superset of entanglement breaking channels. Such extension is essential to get the equality $E_{\max}(\cN) = E_{\max,\Sigma}(\cN)$ instead of an upper bound.
\end{remark}

\vspace{0.2cm}
We further consider the Sigma-information induced by the geometric \Renyi divergence. Following a similar argument as Proposition~\ref{amortization proposition}, we can have the amortization property.

\begin{proposition}[Amortization]\label{amortization proposition priviate capacity}
    For any quantum state $\rho_{A'AB'}$ and quantum channel $\cN_{A\to B}$ and the parameter $\a \in (1,2]$, it holds
    \begin{align}
    \widehat E_\a(\omega_{A':BB'}) \leq \widehat E_\a(\rho_{A'A:B'}) + \widehat E_{\a,\Sigma}(\cN)\quad \text{with} \quad \omega_{A':BB'} = \cN_{A\to B}(\rho_{A'A:B'}).
\end{align}
\end{proposition}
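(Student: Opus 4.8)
The plan is to mirror the proof of Proposition~\ref{amortization proposition} almost verbatim, replacing the Rains set $\PPT'$ by the set $\SEP_{\bullet}$ of sub-normalized separable states, and the Theta set $\bcV_\Theta$ by the set $\EBset$ of entanglement-breaking subchannels. Concretely, let $\sigma_{A'AB'} \in \SEP_{\bullet}(A'A:B')$ be an optimal (or $\eps$-optimal) solution for $\widehat E_\a(\rho_{A'A:B'})$, and let $\cM \in \EBset$ be an optimal solution for $\widehat E_{\a,\Sigma}(\cN)$. Define $\gamma_{A'BB'} \equiv \cM_{A\to B}(\sigma_{A'AB'})$, which will play the role that $\gamma_{A'BB'} = \cM_{A\to B}(\sigma_{A'AB'})$ plays in the Theta case.

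The first step is to check that $\gamma_{A'BB'}$ is feasible for $\widehat E_\a(\omega_{A':BB'})$, i.e.\ that $\gamma_{A'BB'} \in \SEP_{\bullet}(A':BB')$. Writing $\sigma_{A'AB'} = \sum_i p_i\, \alpha^i_{A'A}\ox\beta^i_{B'}$ with $\alpha^i$, $\beta^i$ positive semidefinite, applying $\cM$ to system $A$ gives $\gamma_{A'BB'} = \sum_i p_i\, \cM_{A\to B}(\alpha^i_{A'A})\ox\beta^i_{B'}$. Since $\cM$ is entanglement breaking, each $\cM_{A\to B}(\alpha^i_{A'A})$ is separable across the cut $A':B$, so the whole sum is separable across $A':BB'$; and since $\cM$ is trace non-increasing, $\tr\gamma_{A'BB'}\le 1$, so indeed $\gamma_{A'BB'}\in\SEP_{\bullet}(A':BB')$.

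With feasibility established, the inequality follows exactly as in Proposition~\ref{amortization proposition}: $\widehat E_\a(\omega_{A':BB'}) \le \widehat D_\a(\omega_{A':BB'}\|\gamma_{A'BB'}) = \widehat D_\a(\cN_{A\to B}(\rho_{A'A:B'})\|\cM_{A\to B}(\sigma_{A'AB'})) \le \widehat D_\a(\cN\|\cM) + \widehat D_\a(\rho_{A'A:B'}\|\sigma_{A'AB'}) = \widehat E_{\a,\Sigma}(\cN) + \widehat E_\a(\rho_{A'A:B'})$, where the first inequality uses that $\gamma_{A'BB'}$ is a particular feasible point in the definition of $\widehat E_\a(\omega_{A':BB'})$, the second inequality is the chain rule of the geometric \Renyi divergence (Lemma~\ref{lem_chainRule}), and the last equality uses the optimality of $\cM$ and of $\sigma_{A'AB'}$.

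The only genuinely new ingredient relative to Proposition~\ref{amortization proposition} is the feasibility step: there one uses that a channel in $\bcV_\Theta$ sends the Rains set into itself because $\|\Theta_B\circ\cM\|_\di\le 1$, whereas here one needs that an entanglement-breaking subchannel maps separable inputs across $A'A:B'$ to separable outputs across $A':BB'$. As sketched above this is immediate from the product decomposition of $\sigma_{A'AB'}$ and the definition of entanglement breaking, so I do not expect any real obstacle; everything else is a direct transcription of the chain-rule argument used for the Theta-information.
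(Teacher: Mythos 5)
Your proposal is correct and follows essentially the same route as the paper: reduce everything to the feasibility claim that $\gamma_{A'BB'}=\cM_{A\to B}(\sigma_{A'AB'})\in\SEP_{\bullet}(A':BB')$ whenever $\sigma_{A'AB'}\in\SEP_{\bullet}(A'A:B')$ and $\cM\in\EBset$, and then run the chain-rule argument of the Theta-information amortization verbatim. The only (inessential) difference is that the paper checks the separability step by an explicit Choi-matrix computation decomposing both $\sigma_{A'AB'}$ and $J^{\cM}$ into product terms, while you decompose only $\sigma_{A'AB'}$ and invoke the entanglement-breaking property of $\cM$ directly — the two verifications are equivalent.
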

\begin{proof}
The proof follows the same as Proposition~\ref{amortization proposition}. We only need to show that for any sub-normalized state $\sigma_{A'AB'} \in \SEP_{\bullet}(A'A:B')$ and map $\cM_{A\to B} \in \bcV_{\Sigma}$, it holds $\gamma_{A'BB'}\equiv \cM_{A\to B}(\sigma_{A'AB'}) \in \SEP_{\bullet}(A':BB')$. This can be checked as follows. First it is clear that $\tr \gamma_{A'BB'} \leq 1$ since both $\cM$ and  $\sigma$ are sub-normalized. Denote the tensor product decomposition  $\sigma_{A'AB'} = \sum_{i,j} \sigma^i_{A'A}\ox \sigma^j_{B'}$ and $J^{\cM}_{SB} = \sum_{k,\ell} J^k_{S}\ox J^\ell_B$. Let $\ket{\Phi}_{SA}$ be the unnormalized maximally entangled state. Then we have
\begin{align}
    \gamma_{A'BB'} 
    & = \big\<\Phi_{SA}\big|J^{\cM}_{SB} \ox \sigma_{A'AB'}\big|\Phi_{SA}\big\>
     = \sum_{i,j,k,\ell} \big\<\Phi_{SA}\big| \sigma^i_{A'A}\ox  J^k_{S}\,\big|\Phi_{SA}\big\> \ox \sigma^j_{B'}\ox  J^\ell_B,
\end{align}
where the r.h.s. belongs to $\SEPcone(A':BB')$. This completes the proof.
\end{proof}

\vspace{0.2cm}
Combining the amortization inequality and a similar argument in~\cite[Theorem IV.1.]{Christandl2016}, we can obtain an improved strong converse as follows:

\begin{theorem}[Main result 5]\label{thm: main result private assisted}
    For any quantum channel $\cN_{A'\to B}$ and $\a \in (1,2]$, it holds
        \begin{align}
            P^{\leftrightarrow}(\cN) \leq  P^{\leftrightarrow,\dagger}(\cN) \leq \widehat E_{\a,\Sigma}(\cN) \leq E_{\max}(\cN),
        \end{align}
    where $P^{\leftrightarrow}(\cN)$ and $P^{\leftrightarrow,\dagger}(\cN)$ denote the two-way assisted private capacity of channel $\cN$ and its corresponding strong converse capacity, respectively.
    Moreover, the bound $\widehat E_{\a,\Sigma}(\cN)$ with $\a(\ell) = 1+2^{-\ell}$ and $\ell \in \mathbb N$ can be given by a conic program
\begin{gather}
    \widehat E_{\a,\Sigma}(\cN) = 2^\ell\cdot \log \min \ y \quad \text{\rm s.t.}\quad \dbhbig{M,\{N_i\}_{i=0}^\ell,y},\, N_0 \in \SEPcone(A:B)\notag\\[2pt]
     \dbp{\begin{matrix}
        M & J_{\cN}\\
        J_{\cN} & N_{\ell}
    \end{matrix}},
    \left\{\dbp{\begin{matrix}
        J_{\cN} & N_{i} \\
        N_{i} & N_{i-1}
    \end{matrix}}\right\}_{i=1}^\ell,\, 
    \dbpbigg{\1_A - \tr_B N_0}, \dbpbigg{y\1 - \tr_B M},
\end{gather}
which reduces to a semidefinite program if the dimension  satisfies $|A||B| \leq 6$.
\end{theorem}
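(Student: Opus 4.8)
The plan is to split the statement into two essentially independent parts: (i) the chain of inequalities $P^{\leftrightarrow}(\cN) \leq P^{\leftrightarrow,\dagger}(\cN) \leq \widehat E_{\a,\Sigma}(\cN) \leq E_{\max}(\cN)$, and (ii) the conic program representation of $\widehat E_{\a,\Sigma}(\cN)$. For part (i), the first inequality is immediate from the definitions of capacity and strong converse capacity. The last inequality is obtained exactly as in the quantum case: by Lemma~\ref{thm: divergence chain inequality} we have $\widehat D_{\a} \leq D_{\max}$ pointwise, hence $\widehat E_{\a,\Sigma}(\cN) = \min_{\cM \in \EBset} \widehat D_{\a}(\cN\|\cM) \leq \min_{\cM \in \EBset} D_{\max}(\cN\|\cM) = E_{\max,\Sigma}(\cN) = E_{\max}(\cN)$, where the last equality is Proposition~\ref{prop: private Ea Easigma}. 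The heart of part (i) is the middle inequality $P^{\leftrightarrow,\dagger}(\cN) \leq \widehat E_{\a,\Sigma}(\cN)$, which I would prove by an amortization argument mirroring the proof of Theorem~\ref{thm: main result quantum assisted}.

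For that middle inequality, I would consider an $n$-round two-way (LOCC) assisted private communication protocol, whose figure of merit after $n$ channel uses is a state $\omega_{M_A M_B}$ approximating a private state of $\log k = nr$ bits with error $\ve$. Telescoping along the $n$ rounds and using (a) monotonicity of the geometric \Renyi relative entropy of entanglement $\widehat E_{\a}$ under LOCC, (b) the fact that the initial shared state is separable so $\widehat E_{\a}$ vanishes on it, and (c) the amortization inequality of Proposition~\ref{amortization proposition priviate capacity}, we get $\widehat E_{\a}(\omega_{M_A M_B}) \leq n\,\widehat E_{\a,\Sigma}(\cN)$. On the other hand, a lower bound on $\widehat E_{\a}(\omega)$ follows from the standard fact (see~\cite{Christandl2016,Wilde2016c}) that for a private state $\gamma_k$ and any separable $\sigma$ one has $F(\omega,\gamma_k)$-type / trace-distance estimates giving $\widehat D_{\a}(\omega\|\sigma) \geq \delta_{\a}(\ve\|1-2^{-nr})$ via data processing through a binary measurement, exactly as in Eq.~\eqref{eq: quantum twoway main theorem tmp1}. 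Combining the two bounds yields $1-\ve \leq 2^{-n(\frac{\a-1}{\a})[r - \widehat E_{\a,\Sigma}(\cN)]}$, which is the strong converse statement. The main obstacle here is getting the private-state-versus-separable-state bound in the right form; I would borrow it verbatim from~\cite[Theorem IV.1]{Christandl2016}, noting that the only change is replacing $D_{\max}$ by $\widehat D_{\a}$, which is legitimate since $\widehat D_{\a}$ also satisfies data processing and monotonicity under LOCC.

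For part (ii), the conic program follows from Lemma~\ref{lem: SDP representation of the channel information measure} applied with the set of subchannels $\bcV = \EBset$: the semidefinite conditions there describing $\min_{\cM \in \bcV}\widehat D_{\a}(\cN\|\cM)$ become, upon substituting $J_{\cM} = N_0$, precisely the stated program with the additional constraints $N_0 \in \SEPcone(A:B)$ and $\tr_B N_0 \leq \1_A$ that characterize $\EBset$. When $|A||B| \leq 6$, $\SEPcone(A:B)$ coincides with the PPT cone by the Horodecki criterion~\cite{Horodecki1996}, so the program becomes an honest SDP; I would simply cite this. I expect no real difficulty in part (ii) beyond bookkeeping, so the bulk of the work — and the only genuinely delicate step — is the amortization-plus-private-state argument in part (i).
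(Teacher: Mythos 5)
Your proposal is correct and follows essentially the same route as the paper: the outer inequalities via Lemma~\ref{thm: divergence chain inequality} and Proposition~\ref{prop: private Ea Easigma}, the middle strong-converse inequality via the amortization inequality of Proposition~\ref{amortization proposition priviate capacity} combined with the private-state argument of Christandl--Ferrara (the paper likewise just invokes~\cite[Theorem IV.1]{Christandl2016} together with the telescoping used in Theorem~\ref{thm: main result quantum assisted}), and the conic program via Lemma~\ref{lem: SDP representation of the channel information measure} with $\bcV = \EBset$ and the separability-equals-PPT reduction for $|A||B|\leq 6$.
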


\newpage

\section{Classical communication}
\label{sec: Classical communication}

\subsection{Backgrounds}

The \emph{classical capacity} of a quantum channel is the maximum rate at which it can reliably transmit classical information over asymptotically many uses of the channel. Since classical messages are of major concern here, the communication assistance is usually given by the shared entanglement instead of the two-way classical communication discussed in the quantum/private communication scenarios.  The entanglement-assisted classical capacity has been completely solved as the mutual information of the channel~\cite{Bennett2002}, which is believed to be a natural counterpart in the classical Shannon theory. In this sense, shared entanglement simplifies the quantum Shannon theory. 

When it comes to the unassisted classical capacity, the best-known characterization is given by the classical capacity theorem, which states that the classical capacity of a quantum channel is given by its regularized Holevo information~\cite{Holevo1998,Schumacher1997},
\begin{align}\label{eq: classical channel coding theorem}
    C(\cN) = \lim_{n\to \infty} \frac{1}{n} \chi(\cN^{\ox n}) = \sup_{n\in \NN} \frac{1}{n} \chi(\cN^{\ox n}),
\end{align}
where $\chi(\cN) \equiv \max_{\mathscr E} \chi(\mathscr E, \cN) $ is the Holevo information with the maximization taken over all quantum state ensembles $\mathscr E = \{p_i, \rho_i\}$, $\chi(\mathscr E, \cN) \equiv H(\sum_i p_i\cN(\rho_i)) - \sum_i p_i H(\cN(\rho_i))$ is the Holevo information of the ensemble $\mathscr E$, and $H$ is the von Neumann entropy.
An impressive work by Hastings~\cite{Hastings2009} shows that the Holevo information is not additive in general, indicating the necessity of the regularization in \eqref{eq: classical channel coding theorem}. Moreover, as computing $\chi$ itself is already NP-complete~\cite{Beigi2007}, its regularized quantity for a general quantum channel is expected to be more difficult to evaluate. Even for the qubit amplitude damping channel, its unassisted classical capacity is still unknown~\cite{Wang2016g}. 

Deriving a single-letter expression for the classical capacity of a quantum channel remains a major open problem in quantum information theory. Several general converse bounds are given in~\cite{Leditzky2018} by an ``continuity argument'', extending the idea in~\cite{Sutter2014} from quantum capacity to classical capacity. However, those bounds are not known to be strong converse and typically work well only if the channel possesses certain structures, such as close to entanglement breaking channel or sufficiently covariant~\footnote{These are two main limitations of converse bounds established by using the continuity of the channel capacities.}. Two best-known strong converse bounds are given by $C_\b$ and $C_\zeta$ in~\cite{Wang2016g}, and both bounds are SDP computable. An attempt to improve the bound $C_\b$ is discussed in~\cite{wang2019converse} by a notion called Upsilon-information~($\U$), similar to the Theta-information and Sigma-information in the previous parts. 
However, a (weak) sub-additivity of the Upsilon-information induced by the sandwiched \Renyi divergence is required for showing $\U$ as a strong converse for general quantum channels. This sub-additivity was only proved in~\cite{wang2019converse} for covariant channels while the general case was left open.

\subsection{Summary of results}

In this part, we aim to push forward the analysis in~\cite{wang2019converse} by considering the geometric \Renyi divergence and improve both of the two strong converse bounds $C_\b$ and $C_\zeta$ in general. The structure of this part is organized as follows (see also a schematic diagram in Figure~\ref{fig: classical communiation summary}).

In Section~\ref{sec: Maximal Renyi Upsilon information}, we first study the generalized Upsilon-information induced by the max-relative entropy ($\U_{\max}$) and prove that it is no greater than $C_\b$ and $C_\zeta$ in general. We then discuss the generalized Upsilon-information induced by the geometric \Renyi divergence~($\widehat \U_{\a}$) and show that it is a strong converse on classical capacity by proving its sub-additivity. Due to the relation that $\widehat D_{\a} \leq D_{\max}$, we have $\widehat \U_\a \leq \U_{\max}$. Then we have an improved strong converse bound $\widehat \U_\a$ satisfying
\begin{align*}
C(\cN) \leq C^\dagger(\cN) \leq \widehat \U_\a(\cN) \leq \min\big\{C_\b(\cN),C_\zeta(\cN)\big\} \quad \text{with}\ \widehat \U_{\a}(\cN)\ \text{SDP computable},
\end{align*}
where $C(\cN)$ and $C^{\dagger}(\cN)$ denote the unassisted classical capacity of channel $\cN$ and its corresponding strong converse capacity, respectively.

In Section~\ref{sec: classical capacity Examples},
we investigate several fundamental quantum channels, demonstrating the efficiency of our new strong converse bounds. It turns out that our new bounds work exceptionally well and exhibit a significant improvement on previous results for almost all cases.

\begin{figure}[H]
    \centering
\begin{tikzpicture}
\draw[fill=gray!10,opacity=0.3] (7,0.5) rectangle node[gray,opacity=1,midway,shift={(0,-1.2)}] {\small SDP computable} (13,-2.5);
\node (C) at (0,0) {$C$};
\node[circle,fill=magenta!10,inner sep=2pt,minimum size=2pt] (Cdagger) at (2.5,0) {$C^\dagger$};
\node (U) at (5,-2) {$\U$};
\node[red] (Ua) at (7.5,0) {$\widehat \U_\a$};
\node[red] (Umax) at (10,0) {$\U_{\max}$};
\node (Cbeta) at (12.5,0) {$C_\b$};
\node (Cxi) at (12.5,-2) {$C_\zeta$};
\node (chi)at (1,-2) {$\chi$};

\draw[thick,->] (C) -- (chi);
\draw[very thick,->,dotted,gray] (U) -- node[black,midway,rotate=-42,shift={(0,0.2)}] {\footnotesize cov.} node[black,rotate=-42,midway,shift={(0,-0.2)}] {\scriptsize \cite{wang2019converse}} (Cdagger);
\draw[thick,->] (Umax) -- node[midway,shift={(0.1,0.2)}] {\footnotesize Lem.~\ref{thm: divergence chain inequality}} node[midway,shift={(0.1,-0.2)}] {\footnotesize $\boldsymbol\neq$}  (Ua);
\draw[thick,->] (Cbeta) -- node[midway,shift={(0.1,0.2)}] {\footnotesize Prop.~\ref{prop: comparison of Umax cbeta cxi}} node[midway,shift={(0,-0.2)}] {\footnotesize $\boldsymbol\neq$} (Umax);
\draw[thick,->] (Cxi) -- node[rotate = -40,midway,shift={(0.15,0.2)}] {\footnotesize Prop.~\ref{prop: comparison of Umax cbeta cxi}} node[rotate = -43,midway,shift={(0.1,-0.2)}] {\footnotesize $\boldsymbol\neq$} (Umax);
\draw[thick,->] (Ua) -- (7.5,-2) node[rotate = 90,midway,shift={(0,0.2)}] {\footnotesize Lem.~\ref{thm: divergence chain inequality}} node[rotate=90, midway,shift={(0,-0.2)}] {\footnotesize $\boldsymbol\neq$} -- (U);
\draw[thick,->] (U) -- node[midway,shift={(0.1,0.2)}] {\scriptsize \cite{wang2019converse}} (chi);
\draw[thick,->] (Cdagger) -- (C);
\draw[red,very thick,->] (Ua)  -- node[black,midway,shift={(0,0.2)}] {\footnotesize Thm.~\ref{thm: main result classical}} (Cdagger);
\draw[thick] (Cxi) -- (13.2,-2) -- (13.2,0) -- (Cbeta);
\draw[thick,->] (13.2,-1) -- (13.5,-1) -- (13.5,1.5) -- node[midway,shift={(0.2,0.2)}] {\scriptsize \cite{Wang2016g}}(2.5,1.5) -- (Cdagger);
\end{tikzpicture}
\caption{\small Relations between different converse bounds for classical communication. $C$ and $C^{\dagger}$ are the classical capacity and the strong converse capacity, respectively. $\U$, $\widehat \U_{\a}$ and $\U_{\max}$ are the generalized Upsilon-information induced by different quantum divergences. $C_\b$ and $C_\zeta$ are the SDP strong converse bounds in~\cite{Wang2016g}. $\chi$ is the Holevo information. The circled quantities is the one of particular interest in quantum information theory. The key quantities and the main contributions in this section are marked in red. The quantity at the start point of an arrow is no smaller than the one at the end point. The inequality sign represents two quantities are not the same in general. The dotted arrow represents that the relation holds under certain restrictions, where ``cov.'' stands for ``covariant''. The parameter $\a$ is taken in the interval $(1,2]$. The quantities in the shaded area are SDP computable in general.}
\label{fig: classical communiation summary}
\end{figure}

\subsection{Unassisted classical capacity}
\label{sec: Maximal Renyi Upsilon information}

In this section we discuss converse bounds on the unassisted classical capacity of a quantum channel~\footnote{We refer to~\cite[Section IV.A]{Wang2016g} for rigorous definitions of the unassisted classical capacity and its strong converse.}.

A quantum channel is called \emph{constant channel} or \emph{replacer channel} if it always output a fixed quantum state, i.e., there exists $\sigma_B \in \cS(B)$ such that $\cN_{A\to B}(\rho_A) = \sigma_B$ for all $\rho_A \in \cS(A)$. Unlike quantum or private communication where the sets of useless channels are not completely determined yet, the useless set for classical communication is fully characterized by the set of constant channels. That is, $C(\cN) = 0$ if and only if $\cN$ is a constant channel~\footnote{This can be easily seen from the radius characterization of the Holevo capacity $\chi(\cN) = \min_{\sigma} \max_{\rho} D(\cN(\rho)\|\sigma)$~\cite{ohya1997capacities}.}. As a natural extension, the work~\cite{wang2019converse} proposed to consider the set of \emph{constant-bounded subchannels},
\begin{align}
    \bcV_{cb}\equiv\big\{\cM \in \CP(A:B)\,\big|\, \exists\, \sigma_B \in \cS(B)\ \text{s.t.}\ \cM_{A\to B}(\rho_A)\leq \sigma_B, \forall \rho_A\in \cS(A)\big\}.
\end{align}
It seems not easy to find a semidefinite representation for the set $\bcV_{cb}$ directly. Thus a restriction of $\bcV_{cb}$ was given in~\cite{wang2019converse} as
\begin{gather}
    \bcV_\b  \equiv \big\{\cM\in \CP(A:B)\,\big|\, \beta(J_{\cM})\leq 1\big\}\ \ \text{with}\ \notag\\[2pt]
    \beta(J_{\cM}) \equiv \min\left\{\tr S_B\,\Big|\, 
        R_{AB} \pm J_{\cM}^{\sfT_B} \geq 0,\,
         \1_A \ox S_B \pm R_{AB}^{\sfT_B} \geq 0\,\right\}.\label{beta set definition}
\end{gather}
This subset can be seen as the zero set~\footnote{It makes no difference by considering $\b(J_{\cM}) \leq 1$ or $\b(J_{\cM}) = 1$.} of the strong converse bound $C_\b(\cN) \equiv \log \b(J_{\cN})$~\cite{Wang2016g}. As discussed  in Appendix~\ref{app: A complete hierarchy for the set of constant-bounded maps}, we will see that $\bcV_{cb}$ can be approximated by a complete semidefinite hierarchy, where the subset $\bcV_\b$ can be considered as a symmetrized version of its first level.
In the following, we will proceed our analysis, without loss of generality, over the set $\bcV_\b$. A more detailed discussion of $\bcV_{cb}$ can be found in Appendix~\ref{app: A complete hierarchy for the set of constant-bounded maps}.

\begin{definition}[Upsilon-info.~\cite{wang2019converse}]
For any generalized divergence $\bD$, the generalized Upsilon-information of a quantum channel $\cN_{A'\to B}$ with respect to the set $\bcV_\b$ is defined as
\begin{align}
    \mathbf \U(\cN)\equiv  \min_{\cM \in \bcV_\b}\bD(\cN\|\cM) = \min_{\cM \in \bcV_\b} \max_{\rho_A\in \cS(A)} \bD(\cN_{A'\to B} (\phi_{AA'})\|\cM_{A'\to B}(\phi_{AA'})),
\end{align}
where $\phi_{AA'}$ is a purification of quantum state $\rho_A$.
\end{definition}
As mentioned in Remark~\ref{Rains theta swap}, the min and max in the above definition can be swapped.

\vspace{0.2cm}
Let us first consider the generalized Upsilon-information induced by the max-relative entropy $\U_{\max}$.

\begin{proposition}
    For any quantum channel $\cN$, the generalized Upsilon-information induced by the max-relative entropy $\U_{\max}(\cN)$ is given as an SDP,
    \begin{align}
        \U_{\max}(\cN)  = \log \min \Big\{\tr S_B\,\Big|\, J_{\cN} \leq K_{AB},\,R_{AB} \pm K_{AB}^{\sfT_B} \geq 0,\,
        \1_A \ox S_B \pm R_{AB}^{\sfT_B} \geq 0\,\Big\}.
    \end{align}
\end{proposition}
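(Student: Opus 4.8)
The plan is to unfold the definition $\U_{\max}(\cN) = \min_{\cM\in\bcV_\b}\bD(\cN\|\cM)$ with $\bD = D_{\max}$ into a single optimization and then apply a homogeneity-based rescaling, exactly in the spirit of the proof of Proposition~\ref{prop: Rains and Theta information}. First I would use that the max-relative channel divergence is input-independent, so by Eq.~\eqref{eq: Dmax channel divergence} we have $D_{\max}(\cN\|\cM) = D_{\max}(J_{\cN}\|J_{\cM})$, and combine this with the standard characterization $D_{\max}(X\|Y) = \log\min\{t : X\leq tY\}$ and the definition of $\bcV_\b$ to get
\begin{align*}
\U_{\max}(\cN) = \log\min\big\{t\,\big|\, J_{\cN}\leq tJ_{\cM},\ J_{\cM}\geq 0,\ \beta(J_{\cM})\leq 1\big\}.
\end{align*}
Substituting the SDP expression~\eqref{beta set definition} for $\beta(J_{\cM})$ turns this into a joint minimization over $t$, $J_{\cM}\geq 0$, $R_{AB}$, $S_B$ subject to $J_{\cN}\leq tJ_{\cM}$, $R_{AB}\pm J_{\cM}^{\sfT_B}\geq 0$, $\1_A\ox S_B\pm R_{AB}^{\sfT_B}\geq 0$ and $\tr S_B\leq 1$.

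Next I would perform the change of variables $K_{AB}=tJ_{\cM}$, $\widetilde R_{AB}=tR_{AB}$, $\widetilde S_B=tS_B$; this is invertible for a genuine channel because $J_{\cN}\neq 0$ together with $J_{\cN}\leq tJ_{\cM}$ and $J_{\cN}\geq 0$ forces $t>0$. Since the partial transpose is linear and all the matrix inequalities above are positively homogeneous of degree one (so that, in particular, $\beta(\lambda X)=\lambda\beta(X)$ for $\lambda>0$), the constraints become $J_{\cN}\leq K_{AB}$, $\widetilde R_{AB}\pm K_{AB}^{\sfT_B}\geq 0$, $\1_A\ox\widetilde S_B\pm\widetilde R_{AB}^{\sfT_B}\geq 0$ and $\tr\widetilde S_B\leq t$, where now $K_{AB}\geq 0$ is automatic from $0\leq J_{\cN}\leq K_{AB}$ and $\widetilde S_B\geq 0$ is automatic by adding the two inequalities $\1_A\ox\widetilde S_B\pm\widetilde R_{AB}^{\sfT_B}\geq 0$. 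As $t$ now appears only through $\tr\widetilde S_B\leq t$, the minimum over $t$ is attained at $t=\tr\widetilde S_B$, which leaves precisely
\begin{align*}
\U_{\max}(\cN) = \log\min\big\{\tr\widetilde S_B\,\big|\, J_{\cN}\leq K_{AB},\ \widetilde R_{AB}\pm K_{AB}^{\sfT_B}\geq 0,\ \1_A\ox\widetilde S_B\pm\widetilde R_{AB}^{\sfT_B}\geq 0\big\},
\end{align*}
i.e.\ the claimed formula after renaming $\widetilde R_{AB}\to R_{AB}$, $\widetilde S_B\to S_B$. I would then write out both inclusions explicitly: given a feasible $(K,\widetilde R,\widetilde S)$, the subchannel $\cM$ with $J_{\cM}=K/t$ and $t=\tr\widetilde S_B$ lies in $\bcV_\b$ (since $(\widetilde R/t,\widetilde S/t)$ witness $\beta(J_{\cM})\leq\tr S_B=1$) and satisfies $D_{\max}(J_{\cN}\|J_{\cM})\leq\log t$; conversely an optimal $\cM\in\bcV_\b$ with $\beta$-witnesses $(R,S)$ and $t^\star=2^{\U_{\max}(\cN)}$ yields a feasible point with objective $t^\star\tr S_B\leq t^\star$.

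I do not expect a genuine obstacle here; the only point requiring care is the homogeneity bookkeeping — verifying that $\beta$ is positively homogeneous of degree one, which is what legitimizes the simultaneous rescaling of $K$, $R$ and $S$ by $t$ — together with the trivial edge case $t=0$, which is ruled out because $\cN$ is a channel rather than the zero map. I would state the homogeneity of $\beta$ as a one-line remark before carrying out the substitution, and otherwise the argument is a short rescaling computation parallel to the max-Rains case.
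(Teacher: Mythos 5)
Your proposal is correct and follows essentially the same route as the paper: use the Choi-matrix characterization $D_{\max}(\cN\|\cM)=D_{\max}(J_{\cN}\|J_{\cM})$, unfold the definition of $\bcV_\b$ into a joint SDP over $t$, $J_{\cM}$, $R$, $S$, and then rescale $K=tJ_{\cM}$, $R\to tR$, $S\to tS$ to absorb $t$ into the objective $\tr S_B$. The extra bookkeeping you flag (homogeneity of $\beta$, $t>0$, and the automatic positivity of $K$ and $S$) is correct but routine; the paper performs the identical substitution without further comment.
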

\begin{proof}
By definition we have
$\U_{\max}(\cN)\equiv \min_{\cM \in \boldsymbol{\cV_\beta}} D_{\max}(\cN\|\cM) = \min_{\cM \in \bcV_\beta} D_{\max}(J_{\cN}\|J_{\cM})$,
where the second equality follows from Eq.~\eqref{eq: Dmax channel divergence}.
Then it is clear that 
\begin{align}
\U_{\max}(\cN) & = \log \min \Big\{ t\, \Big|\, J_{\cN} \leq t J_{\cM},\,
\tr G\leq 1,\,W \pm J_{\cM}^{\sfT_B} \geq 0,\,\1 \ox G \pm W^{\sfT_B} \geq 0\Big\}.
\end{align}
Replacing $K = t J_{\cM}$, $S = t G$ and $R = tW$, we have the desired result.
\end{proof}

\vspace{0.2cm}
Besides the bound $C_\b$, there is another SDP strong converse bound given in~\cite{Wang2016g} as
\begin{align}
    C_\zeta(\cN) \equiv \log \min \left\{\tr S_B \Big|\, J_{\cN} \leq K_{AB},\, \1_A \ox S_B \pm K_{AB}^{\sfT_B} \geq 0\,\right\}.
\end{align}
We can show that $\U_{\max}$ is no greater than both of these quantities in general.

\begin{proposition}\label{prop: comparison of Umax cbeta cxi}
For any quantum channel $\cN$, it holds
$\U_{\max}(\cN) \leq \min\big\{C_\b(\cN),C_\zeta(\cN)\big\}$.
\end{proposition}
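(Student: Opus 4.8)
The plan is to exhibit, for each of the two bounds $C_\b$ and $C_\zeta$, an explicit feasible point of the SDP defining $\U_{\max}(\cN)$ whose objective value matches that bound. Recall from the previous proposition that
\begin{align*}
\U_{\max}(\cN) = \log\min\Big\{\tr S_B\,\Big|\, J_{\cN}\leq K_{AB},\ R_{AB}\pm K_{AB}^{\sfT_B}\geq 0,\ \1_A\ox S_B\pm R_{AB}^{\sfT_B}\geq 0\Big\},
\end{align*}
so it suffices to produce, from any feasible solution of the $C_\b$-program (resp. the $C_\zeta$-program), a triple $(K_{AB},R_{AB},S_B)$ feasible above with the same trace.

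First I would handle $C_\b$. The bound is $C_\b(\cN)=\log\b(J_{\cN})$ where $\b(J_{\cN})$ minimizes $\tr S_B$ subject to $R_{AB}\pm J_{\cN}^{\sfT_B}\geq 0$ and $\1_A\ox S_B\pm R_{AB}^{\sfT_B}\geq 0$. Given such $(R_{AB},S_B)$, simply set $K_{AB}=J_{\cN}$; then $J_{\cN}\leq K_{AB}$ trivially holds, and the remaining two constraints are exactly the $C_\b$-constraints. Hence $(J_{\cN},R_{AB},S_B)$ is feasible for the $\U_{\max}$-program with objective $\tr S_B$, giving $\U_{\max}(\cN)\leq C_\b(\cN)$.

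Next I would handle $C_\zeta$. Here $C_\zeta(\cN)=\log\min\{\tr S_B\mid J_{\cN}\leq K_{AB},\ \1_A\ox S_B\pm K_{AB}^{\sfT_B}\geq 0\}$. Given a feasible $(K_{AB},S_B)$, the natural choice is $R_{AB}=K_{AB}^{\sfT_B}$ (equivalently $R_{AB}^{\sfT_B}=K_{AB}$, using that $\sfT_B$ is an involution on Hermitian operators). Then $R_{AB}-K_{AB}^{\sfT_B}=0\geq 0$ and $R_{AB}+K_{AB}^{\sfT_B}=2K_{AB}^{\sfT_B}=2R_{AB}$; note $K_{AB}\geq J_{\cN}$ has $J_{\cN}^{\sfT_B}$... rather, the cleaner route is: from $\1_A\ox S_B\pm K_{AB}^{\sfT_B}\geq 0$ we get $K_{AB}^{\sfT_B}\geq -\1_A\ox S_B$, and since $K_{AB}\geq J_{\cN}\geq 0$ as an operator one does not immediately get $K_{AB}^{\sfT_B}\geq 0$; so instead I would take $R_{AB}=\tfrac12(\1_A\ox S_B + K_{AB}^{\sfT_B})+\tfrac12(\1_A\ox S_B - K_{AB}^{\sfT_B})$ type symmetrization — more precisely set $R_{AB}=\1_A\ox S_B$, which gives $R_{AB}\pm K_{AB}^{\sfT_B}=\1_A\ox S_B\pm K_{AB}^{\sfT_B}\geq 0$ directly, and then the last constraint becomes $\1_A\ox S_B'\pm(\1_A\ox S_B)^{\sfT_B}=\1_A\ox S_B'\pm\1_A\ox S_B^{\sfT_B}\geq 0$, satisfied with $S_B'=S_B^{\sfT_B}$ (which has the same trace). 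So $(K_{AB},\,\1_A\ox S_B,\,S_B^{\sfT_B})$ is feasible with objective $\tr S_B^{\sfT_B}=\tr S_B$, yielding $\U_{\max}(\cN)\leq C_\zeta(\cN)$.

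The main obstacle is the bookkeeping around the partial transpose: one must be careful that $R_{AB}=\1_A\ox S_B$ is positive (true since $S_B\geq 0$, as it appears inside a valid $C_\zeta$ solution), that $(\1_A\ox S_B)^{\sfT_B}=\1_A\ox S_B^{\sfT_B}$, and that $\tr S_B^{\sfT_B}=\tr S_B$; each is routine but must be stated. Combining the two inequalities gives $\U_{\max}(\cN)\leq\min\{C_\b(\cN),C_\zeta(\cN)\}$, completing the proof.
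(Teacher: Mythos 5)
Your proof is correct and takes essentially the same route as the paper, which argues by restricting variables in the $\U_{\max}$ SDP: $K_{AB}=J_{\cN}$ recovers $C_\b$, and $R_{AB}=\1_A\ox S_B$ recovers $C_\zeta$. Your only addition is the careful bookkeeping for the $C_\zeta$ case (taking the new $S$-variable to be $S_B^{\sfT}$, which is positive with the same trace), which just makes the paper's one-line restriction argument fully explicit.
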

\begin{proof}
The result is clear by comparing their SDP formulas. Specifically, by restricting $K_{AB} = J_{\cN}$ in $\U_{\max}$, we can retrieve $C_\b$. By restricting $R_{AB} = \1_A \ox S_B$ in $\U_{\max}$, we can retrieve $C_\zeta$.
\end{proof}

\vspace{0.2cm}
We further consider the generalized Upsilon-information induced by the geometric \Renyi divergence $\widehat \U_\a$. The following sub-additivity is a key ingredient to prove the strong converse of $\widehat \U_\a$ in Theorem~\ref{thm: main result classical}.

\begin{proposition}[Sub-additivity]\label{prop: sub-additivity renyi upsilon}
    For any quantum channels $\cN_1$, $\cN_2$ and $\a \in (1,2]$, it holds
    \begin{align}
        \widehat \U_{\a}(\cN_1\ox \cN_2) \leq \widehat \U_{\a}(\cN_1) + \widehat \U_{\a}(\cN_2).
    \end{align}
\end{proposition}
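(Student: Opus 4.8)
The plan is to mimic the structure of the proofs of the additivity of the channel divergence (Lemma~\ref{lem: maximal Renyi channel divergence additivity}) and of sub-additivity under composition (Lemma~\ref{lem: subadditivity of D alpha}): choose near-optimal feasible elements of the sets $\bcV_\b$ for each factor, take their tensor product, and argue that the tensor product is again feasible for $\bcV_\b$ associated with $\cN_1 \ox \cN_2$. Concretely, let $\cM_1 \in \bcV_\b$ and $\cM_2 \in \bcV_\b$ be optimal subchannels for $\widehat\U_\a(\cN_1)$ and $\widehat\U_\a(\cN_2)$ respectively, so that $\widehat\U_\a(\cN_i) = \widehat D_\a(\cN_i \| \cM_i)$ with $\beta(J_{\cM_i}) \le 1$. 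First I would show that $\cM_1 \ox \cM_2 \in \bcV_\b$, i.e.\ that $\beta(J_{\cM_1} \ox J_{\cM_2}) \le 1$. This should follow because the SDP defining $\beta$ is multiplicative under tensor products: given feasible $(R^{(i)}_{A_iB_i}, S^{(i)}_{B_i})$ for $J_{\cM_i}$ with $\tr S^{(i)} \le 1$, the pair $(R^{(1)} \ox R^{(2)}, S^{(1)} \ox S^{(2)})$ is feasible for $J_{\cM_1} \ox J_{\cM_2}$ after the appropriate system reordering — the two partial-transpose positivity conditions $R \pm J^{\sfT_B} \ge 0$ and $\1 \ox S \pm R^{\sfT_B} \ge 0$ are preserved under tensoring positive operators — and it yields objective value $\tr S^{(1)} \cdot \tr S^{(2)} \le 1$. (One should note here that $J_{\cM_1\ox\cM_2}$, under the natural identification of the Choi matrix of a tensor-product channel, equals $J_{\cM_1}\ox J_{\cM_2}$ up to the standard permutation of tensor factors; this is routine.)

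Then I would invoke the additivity of the geometric Rényi channel divergence, Lemma~\ref{lem: maximal Renyi channel divergence additivity}, to write
\begin{align}
  \widehat\U_\a(\cN_1 \ox \cN_2)
    &= \min_{\cM \in \bcV_\b} \widehat D_\a(\cN_1 \ox \cN_2 \| \cM)\notag\\
    &\le \widehat D_\a(\cN_1 \ox \cN_2 \| \cM_1 \ox \cM_2)\notag\\
    &= \widehat D_\a(\cN_1 \| \cM_1) + \widehat D_\a(\cN_2 \| \cM_2)\notag\\
    &= \widehat\U_\a(\cN_1) + \widehat\U_\a(\cN_2),\notag
\end{align}
where the inequality uses that $\cM_1 \ox \cM_2$ is a feasible point of the minimization (established in the previous paragraph), the middle equality is Lemma~\ref{lem: maximal Renyi channel divergence additivity}, and the last equality is the optimality of $\cM_1, \cM_2$. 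This is exactly the desired bound.

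The main obstacle, and the only step requiring genuine care, is verifying that the $\beta$-SDP is supermultiplicative on feasible points in the required direction — that is, that feasibility of the two constituent SDPs implies feasibility of the tensored SDP with the product objective. The subtlety is purely bookkeeping: the operators $R_{AB}$ live on a bipartite space with a distinguished $B$-subsystem on which the partial transpose acts, and when one tensors $\cN_1$ and $\cN_2$ the new channel has input $A_1A_2$ and output $B_1B_2$, so one must check that $(R^{(1)} \ox R^{(2)})^{\sfT_{B_1B_2}} = (R^{(1)})^{\sfT_{B_1}} \ox (R^{(2)})^{\sfT_{B_2}}$ and that positivity is preserved — both of which hold because partial transpose distributes over tensor products and the tensor product of positive semidefinite operators is positive semidefinite. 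Once this is in place, everything else is immediate from results already proved. (An alternative route, avoiding the explicit SDP manipulation, would be to work instead with the larger set $\bcV_{cb}$ of constant-bounded subchannels, for which tensor-closure is transparent: if $\cM_i(\rho) \le \sigma^{(i)}_{B_i}$ for all inputs then $(\cM_1\ox\cM_2)(\rho) \le \sigma^{(1)}\ox\sigma^{(2)}$ for all inputs, by positivity of the involved maps; but since the proposition is stated for $\bcV_\b$ I would carry out the direct SDP argument.)
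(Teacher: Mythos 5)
Your proposal is correct and takes essentially the same route as the paper: tensor the optimal subchannels $\cM_1\ox\cM_2$, verify $\cM_1\ox\cM_2\in\bcV_\b$ via (sub)multiplicativity of the $\beta$-SDP on feasible points, and then invoke the additivity of the geometric \Renyi channel divergence (Lemma~\ref{lem: maximal Renyi channel divergence additivity}) together with optimality of $\cM_1,\cM_2$. The only spot to tighten is the feasibility check: positivity of $R^{(1)}\ox R^{(2)} \pm \big(J_{\cM_1}\ox J_{\cM_2}\big)^{\sfT_{B_1B_2}}$ does not follow just from ``tensor products of PSD operators are PSD,'' but from decomposing $R^{(i)}\pm J_{\cM_i}^{\sfT_{B_i}}\geq 0$ into positive parts and expanding (the standard fact that $-A\leq a\leq A$ and $-B\leq b\leq B$ imply $-A\ox B\leq a\ox b\leq A\ox B$), and likewise for the second constraint.
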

\begin{proof}
This is a direct consequence of the additivity of the geometric \Renyi channel divergence in Lemma~\ref{lem: maximal Renyi channel divergence additivity} and the sub-additivity of the quantity $\b(\cdot)$ in~\eqref{beta set definition}. More specifically, suppose the optimal solution of $\widehat \U_\a(\cN_1)$ and $\widehat \U_\a(\cN_2)$ are taken at $\cM_1 \in \bcV_\b^1$ and $\cM_2 \in \bcV_\b^2$ respectively. Then we can check that $\cM_1\ox \cM_2 \in \bcV_\b^{12}$ which is a feasible solution for $\widehat \U_\a(\cN_1\ox \cN_2)$. Thus we have 
\begin{align}
    \widehat \U_\a(\cN_1\ox \cN_2) \leq \widehat D_\a(\cN_1\ox \cN_2 \|\cM_1\ox \cM_2) = \widehat D_\a(\cN_1\|\cM_1) + \widehat D_\a(\cN_2\|\cM_2) = \widehat \U_\a(\cN_1) + \widehat \U_\a(\cN_2),\notag
\end{align}
where the last inequality follows by the optimality assumption of $\cM_1$ and $\cM_2$. 
\end{proof}

\vspace{0.2cm}
Based on the sub-additivity, we are ready to show our improved strong converse bound.

\begin{theorem}[Main result 6]\label{thm: main result classical}
    For any quantum channel $\cN$ and $\a \in (1,2]$, it holds
    \begin{align}
        C(\cN) \leq C^\dagger(\cN) \leq \widehat \U_\a(\cN) \leq \U_{\max}(\cN) \leq \min\big\{C_\b(\cN),C_\zeta(\cN)\big\},
    \end{align}
    where $C(\cN)$ and $C^{\dagger}(\cN)$ denote the unassisted classical capacity of channel $\cN$ and its corresponding strong converse capacity, respectively.
\end{theorem}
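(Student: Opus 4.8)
The plan is to establish the chain of inequalities from right to left, reducing the first (and only genuinely new) inequality $C^\dagger(\cN) \leq \widehat \U_\a(\cN)$ to a standard strong-converse argument built on sub-additivity, and to dispatch the remaining three inequalities by results already proved in the excerpt.

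First I would observe that the three right-hand inequalities are immediate. The inequality $C(\cN) \leq C^\dagger(\cN)$ holds by definition, since the strong converse capacity is never smaller than the capacity. The inequality $\widehat \U_\a(\cN) \leq \U_{\max}(\cN)$ follows from the pointwise bound $\widehat D_\a \leq D_{\max}$ in Lemma~\ref{thm: divergence chain inequality}: both quantities are infima of the corresponding channel divergences over the same set $\bcV_\b$, so the smaller integrand yields the smaller infimum. Finally $\U_{\max}(\cN) \leq \min\{C_\b(\cN), C_\zeta(\cN)\}$ is exactly Proposition~\ref{prop: comparison of Umax cbeta cxi}.

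The substantive step is $C^\dagger(\cN) \leq \widehat \U_\a(\cN)$, and here I would follow the template of~\cite{wang2019converse} (and the analogous Theorem~\ref{thm: main result quantum assisted} in this paper). The two ingredients are: (i) the sub-additivity $\widehat \U_\a(\cN^{\ox n}) \leq n\, \widehat \U_\a(\cN)$, which follows by iterating Proposition~\ref{prop: sub-additivity renyi upsilon}; and (ii) a ``meta-converse'' bound relating the success probability of any classical communication protocol over $\cN^{\ox n}$ to $\widehat \U_\a(\cN^{\ox n})$. For (ii), given an $(n,r,\ve)$ code with $k = 2^{nr}$ messages, one uses that for any $\cM \in \bcV_{cb}$ (hence in particular for the optimal $\cM \in \bcV_\b$) the output $\cM(\rho)$ is dominated by a fixed state $\sigma_B$, so the optimal average success probability of distinguishing the $k$ codeword outputs under $\cM^{\ox n}$ is at most $1/k$ by a packing/union-type argument; comparing this with the success probability $\geq 1 - \ve$ under the true channel via data processing of $\widehat D_\a$ through the decoding measurement gives $\widehat D_\a(\cN^{\ox n}(\rho)\|\cM^{\ox n}(\rho)) \geq \delta_\a(\ve \| 1 - 2^{-nr})$, where $\delta_\a$ is the binary classical \Renyi divergence. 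Taking the minimum over $\cM$, then the maximum over input states, yields $\widehat \U_\a(\cN^{\ox n}) \geq \delta_\a(\ve\|1-2^{-nr})$. Combining with (i) and the elementary bound $\delta_\a(\ve\|1-2^{-nr}) \geq \tfrac{\a}{\a-1}\log(1-\ve) + nr$ gives $1 - \ve \leq 2^{-n\left(\frac{\a-1}{\a}\right)\left[r - \widehat \U_\a(\cN)\right]}$, so that any rate $r > \widehat \U_\a(\cN)$ forces the fidelity to decay to zero exponentially; this is precisely the strong converse statement $C^\dagger(\cN) \leq \widehat \U_\a(\cN)$.

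The main obstacle is part (ii), specifically the claim that a protocol using a channel $\cM \in \bcV_\b$ (a constant-bounded subchannel) cannot transmit more than $\log 1$ bits reliably, i.e.\ that $\tr[\Pi_m \,\cM^{\ox n}(\rho_m)] \leq 1/k$ on average; one must be careful that $\bcV_\b$ contains only a \emph{relaxation} of the genuinely constant-bounded maps, so the argument should be run with the property $\cM_{A\to B}(\rho_A) \leq \sigma_B$ that defines $\bcV_{cb} \supseteq \bcV_\b$, together with the monotonicity of $\delta_\a$ in its arguments (as used in~\eqref{eq: quantum twoway main theorem tmp1}) to handle the inequality $\ve \leq 1 - 2^{-nr}$ (assumed without loss of generality, since otherwise $1-\ve < 2^{-nr}$ and the strong converse is trivial). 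The rest is bookkeeping that parallels the proof of Theorem~\ref{thm: main result quantum assisted} almost verbatim, replacing the Rains set by $\bcV_\b$ and the maximally entangled state test by the code's decoding measurement.
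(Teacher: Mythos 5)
Your proposal is correct in substance, but it takes a somewhat different route from the paper for the one nontrivial inequality $C^\dagger(\cN) \leq \widehat \U_\a(\cN)$. The paper does not re-derive a meta-converse for $\widehat D_\a$ at all: it invokes the known bound from~\cite[Proposition~20]{wang2019converse}, which is stated for the \emph{sandwiched} \Renyi Upsilon-information, namely $1-\ve \leq 2^{-n\frac{\a-1}{\a}[r-\frac1n\widetilde\U_\a(\cN^{\ox n})]}$, and then bridges to the geometric quantity via the ordering $\widetilde D_\a \leq \widehat D_\a$ (Lemma~\ref{thm: divergence chain inequality}), so that $\widetilde\U_\a(\cN^{\ox n}) \leq \widehat\U_\a(\cN^{\ox n}) \leq n\widehat\U_\a(\cN)$ by Proposition~\ref{prop: sub-additivity renyi upsilon}. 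You instead rerun the one-shot meta-converse directly with $\widehat D_\a$: the domination $\cM(\rho)\leq\sigma_B$ valid on $\bcV_{cb}\supseteq\bcV_\b$ gives the $1/k$ success bound, data processing of $\widehat D_\a$ through the decoder plus monotonicity of $\delta_\a$ gives $\widehat\U_\a(\cN^{\ox n}) \geq \delta_\a(\ve\|1-2^{-nr})$, and sub-additivity finishes. This is a legitimate, self-contained alternative (it mirrors the paper's own proof of Theorem~\ref{thm: main result quantum assisted}), and it avoids the detour through $\widetilde\U_\a$; what the paper's route buys is brevity, since the hypothesis-testing bookkeeping is delegated to the cited result. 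The remaining three inequalities are handled exactly as in the paper.

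Two small points to tighten. First, your notation $\cM^{\ox n}$ is misleading: the minimization defining $\widehat\U_\a(\cN^{\ox n})$ ranges over \emph{all} constant-bounded (indeed $\b$-bounded) subchannels of the $n$-fold system, not just i.i.d.\ products, so the meta-converse must be run for an arbitrary such $\cM^{(n)}$ (the domination argument works verbatim since $\cM^{(n)}(\rho)\leq\sigma_{B^n}$ with $\tr\sigma_{B^n}\leq 1$); alternatively you may run it only for the tensor power of the single-copy optimizer and replace sub-additivity by additivity of the geometric \Renyi channel divergence, but then you should say so explicitly, since the inequality $\widehat\U_\a(\cN^{\ox n})\geq\delta_\a(\cdot\|\cdot)$ does not follow from testing product maps alone. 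Second, the quantifier order in ``taking the minimum over $\cM$, then the maximum over input states'' should be: for every feasible $\cM^{(n)}$ the code's input witnesses $\max_\rho\widehat D_\a \geq \delta_\a$, hence the min over $\cM^{(n)}$ of the max is at least $\delta_\a$. With these clarifications the argument is complete.
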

\begin{proof}
    The first inequality holds by definition. The third inequality follows from Lemma~\ref{thm: divergence chain inequality}. The last inequality was proved in Proposition~\ref{prop: comparison of Umax cbeta cxi}. It remains to prove the second inequality $C^\dagger(\cN) \leq \widehat \U_\a(\cN)$. For any classical communication protocol with a triplet $(r,n,\ve)$, it holds by a standard argument~\cite[Proposition 20]{wang2019converse} that
    \begin{align}\label{eq: main result 3 proof tmp 1}
        1 - \ve \leq 2^{-n\left(\frac{\a-1}{\a}\right)\left[r-\frac{1}{n}\widetilde \U_\a(\cN^{\ox n})\right]},
    \end{align}
    where $\widetilde \U_\a$ is the Upsilon information induced by the sandwiched R\'{e}nyi divergence $\widetilde D_\a$.
    Due to the sub-additivity of $\widehat \U_\a$ in Proposition~\ref{prop: sub-additivity renyi upsilon} and the inequality in Lemma~\ref{thm: divergence chain inequality}, we have
    \begin{align}\label{eq: main result 3 proof tmp 2}
    n \widehat \U_\a(\cN) \geq \widehat \U_\a(\cN^{\ox n}) \geq \widetilde \U_\a(\cN^{\ox n}).
    \end{align}
Combining~\eqref{eq: main result 3 proof tmp 1} and~\eqref{eq: main result 3 proof tmp 2}, we have
\begin{align}
    1 - \ve \leq  2^{-n\left(\frac{\a-1}{\a}\right)\left[r-\frac{1}{n}\widetilde \U_\a(\cN^{\ox n})\right]} \leq  2^{-n\left(\frac{\a-1}{\a}\right)\left[r-\widehat \U_\a(\cN)\right]}.
\end{align}
This implies that if the communication rate $r$ is strictly larger than $\widehat \U_\a(\cN)$, the success probability of the transmission $1-\ve$ decays exponentially fast to zero as the number of channel use $n$ increases. Or equivalently, we have the strong converse inequality $C^\dagger(\cN) \leq \widehat \U_\a(\cN)$ and completes the proof.
\end{proof}

\vspace{0.2cm}
Finally, we present how to compute the geometric \Renyi Upsilon information.
\begin{proposition}[SDP formula]
For any quantum channel $\cN_{A'\to B}$ and $\a(\ell) = 1+2^{-\ell}$ with $\ell \in \mathbb N$, the geometric \Renyi Upsilon information can be computed by the following SDP:
\begin{gather}
    \widehat \U_\a(\cN)= 2^\ell\cdot \log \min \ y \quad \text{\rm s.t.}\quad \dbhbig{M, \{N_i\}_{i=0}^\ell,R,S,y}, \ \dbpbig{y\1 - \tr_B M} \notag\\[2pt]
     \dbp{\begin{matrix}
        M & J_{\cN}\\
        J_{\cN} & N_{\ell}
    \end{matrix}},
    \left\{\dbp{\begin{matrix}
        J_{\cN} & N_{i} \\
        N_{i} & N_{i-1}
    \end{matrix}}\right\}_{i=1}^\ell, 
    \dbp{R \pm N_0^{\sfT_B}},
    \dbp{\1\ox S \pm R^{\sfT_B}},
     \dbpBig{1 - \tr S}.\label{Renyi Upsilon information SDP min}
\end{gather}
\end{proposition}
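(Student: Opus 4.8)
The plan is to obtain this SDP directly from the semidefinite representation in Lemma~\ref{lem: SDP representation of the channel information measure} by instantiating the abstract set of subchannels $\bcV$ with the set $\bcV_\b$ from~\eqref{beta set definition}. By definition $\widehat \U_\a(\cN) = \min_{\cM \in \bcV_\b} \widehat D_\a(\cN\|\cM)$, so the whole task reduces to unfolding the membership condition $\cM \in \bcV_\b$ into explicit linear matrix inequalities and feeding them into that lemma.

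First I would record the semidefinite description of $\bcV_\b$: a CP map $\cM$ lies in $\bcV_\b$, i.e. $\beta(J_\cM) \le 1$, if and only if there exist Hermitian $R_{AB}$ and $S_B$ with $\tr S_B \le 1$, $R_{AB} \pm J_\cM^{\sfT_B} \ge 0$ and $\1_A \ox S_B \pm R_{AB}^{\sfT_B} \ge 0$ --- this is immediate from~\eqref{beta set definition} (and from the footnote that $\beta \le 1$ and $\beta = 1$ cut out the same set). These are semidefinite conditions in $J_\cM$ and the auxiliary variables $R,S$, so Lemma~\ref{lem: SDP representation of the channel information measure} applies as stated.

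Next I would substitute this description into the SDP of Lemma~\ref{lem: SDP representation of the channel information measure}. The equality constraint $\dbe{N_0 - J_\cM}$ pins $J_\cM = N_0$, so I can eliminate the variable $J_\cM$ and replace $J_\cM^{\sfT_B}$ by $N_0^{\sfT_B}$ throughout. The block ``$\dbebigg{N_0 - J_\cM}$, $\cM \in \bcV_\b$'' then becomes ``$\dbp{R \pm N_0^{\sfT_B}}$, $\dbp{\1 \ox S \pm R^{\sfT_B}}$, $\dbpBig{1 - \tr S}$'', the remaining $\ell+2$ block positivity constraints and $\dbpbig{y\1 - \tr_B M}$ are untouched, and the objective stays $2^\ell \cdot \log \min y$. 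This is exactly the program stated in the proposition.

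There is no genuinely hard step: the argument just combines the closed-form SDP for $\widehat D_\a$ over a semidefinite family of subchannels (Lemma~\ref{lem: SDP representation of the channel information measure}, which itself rests on the closed-form expression of the channel divergence and the semidefinite representability of the matrix geometric mean) with the semidefinite description of $\bcV_\b$. The only points worth a sentence of care are to note that members of $\bcV_\b$ are bona fide subchannels (since $\bcV_\b \subseteq \bcV_{cb}$ and constant-bounded maps are trace non-increasing), so that Lemma~\ref{lem: SDP representation of the channel information measure} is legitimately applicable, and that Slater's condition holds for the resulting program.
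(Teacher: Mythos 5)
Your proposal is correct and follows essentially the same route as the paper, which derives the SDP by instantiating Lemma~\ref{lem: SDP representation of the channel information measure} with the set $\bcV_\b$ from~\eqref{beta set definition} and using the equality constraint $N_0 = J_{\cM}$ to eliminate the Choi variable. The extra observations (that $\bcV_\b$ consists of genuine subchannels and that the resulting constraints are linear matrix inequalities in $J_\cM$, $R$, $S$) are sound and only make explicit what the paper leaves implicit.
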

\begin{proof}
This directly follows from Lemma~\ref{lem: SDP representation of the channel information measure} and the definition of the set $\bcV_\b$ in~\eqref{beta set definition}.
\end{proof}

\subsection{Examples}
\label{sec: classical capacity Examples}

In this section, we study several fundamental quantum channels as well as their compositions. We use these toy models to test the performance of our new strong converse bounds, demonstrating the improvement on the previously known results.  

Consider the depolarizing channel $\cD_p$ defined in~\eqref{DP channel definition}, the erasure channel $\cE_p$ defined in~\eqref{ER channel definition} and the dephrasure channel $\cN_{p,q}(\rho)\equiv (1-q)[(1-p)\rho+pZ\rho Z] + q \tr(\rho)\ket{e}\bra{e}$, where $\ket{e}$ is an erasure flag orthogonal to the input Hilbert space. Since these channels are covariant with respect to the unitary group, their Upsilon informations are known as strong converse bounds~\cite[Proposition 20]{wang2019converse} and can be computed via the algorithm in~\cite{fawzi2018efficient,Fawzi2017}. As for the generalized amplitude damping (GAD) channel $\cA_{\gamma,N}$ defined in~\eqref{GAD definition}, its Upsilon information is not known as a valid converse bound.

\begin{figure}[H]
\centering
\begin{adjustwidth}{-0.5cm}{0cm}
\begin{tikzpicture}
\node at (-5.7,0) {\includegraphics[width = 5.3cm]{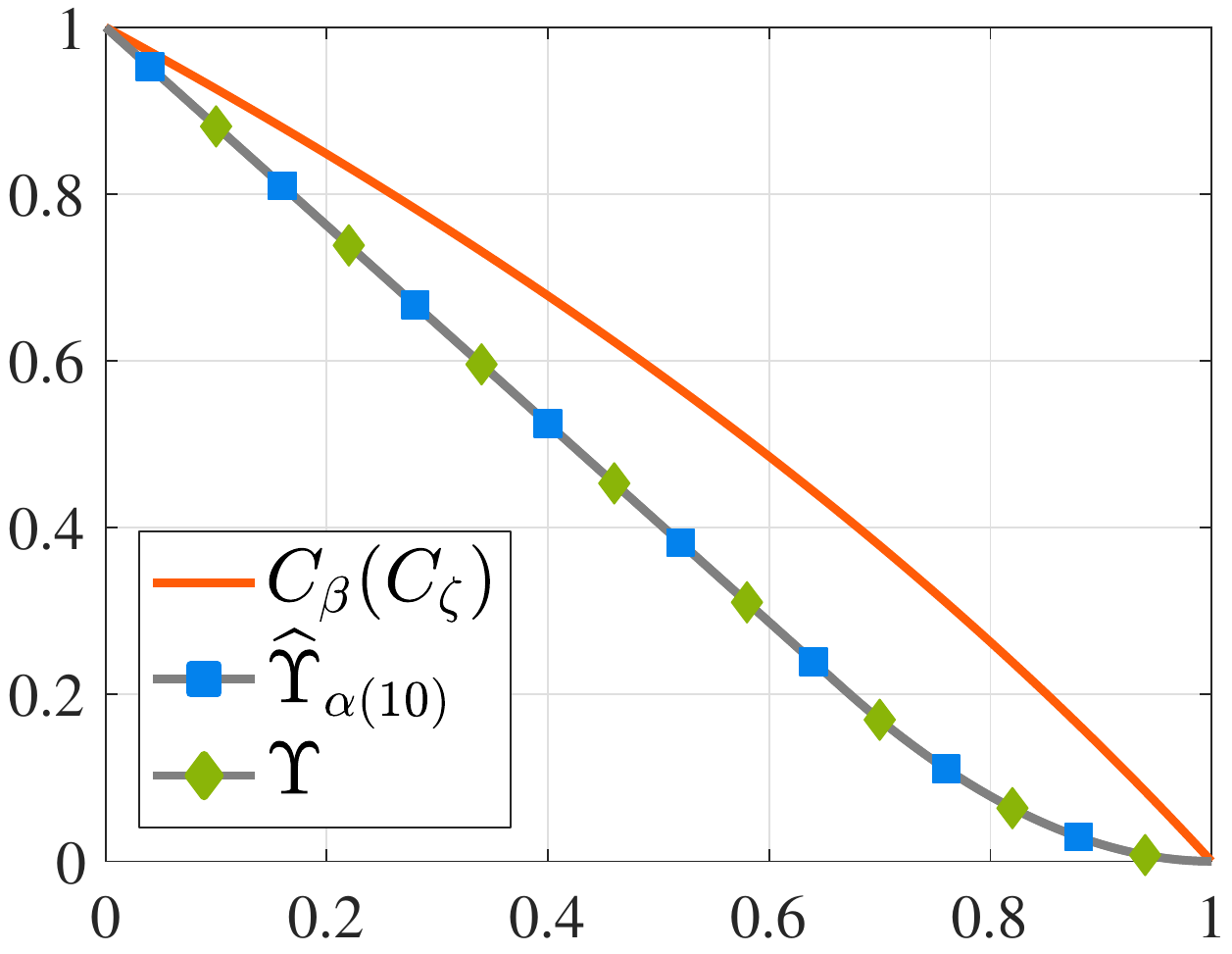}};
\node at (-5.5,-2.4) {\small (a) Qubit depolarizing channel $\cD_p.$};

\node at (0,0) {\includegraphics[width = 5.3cm]{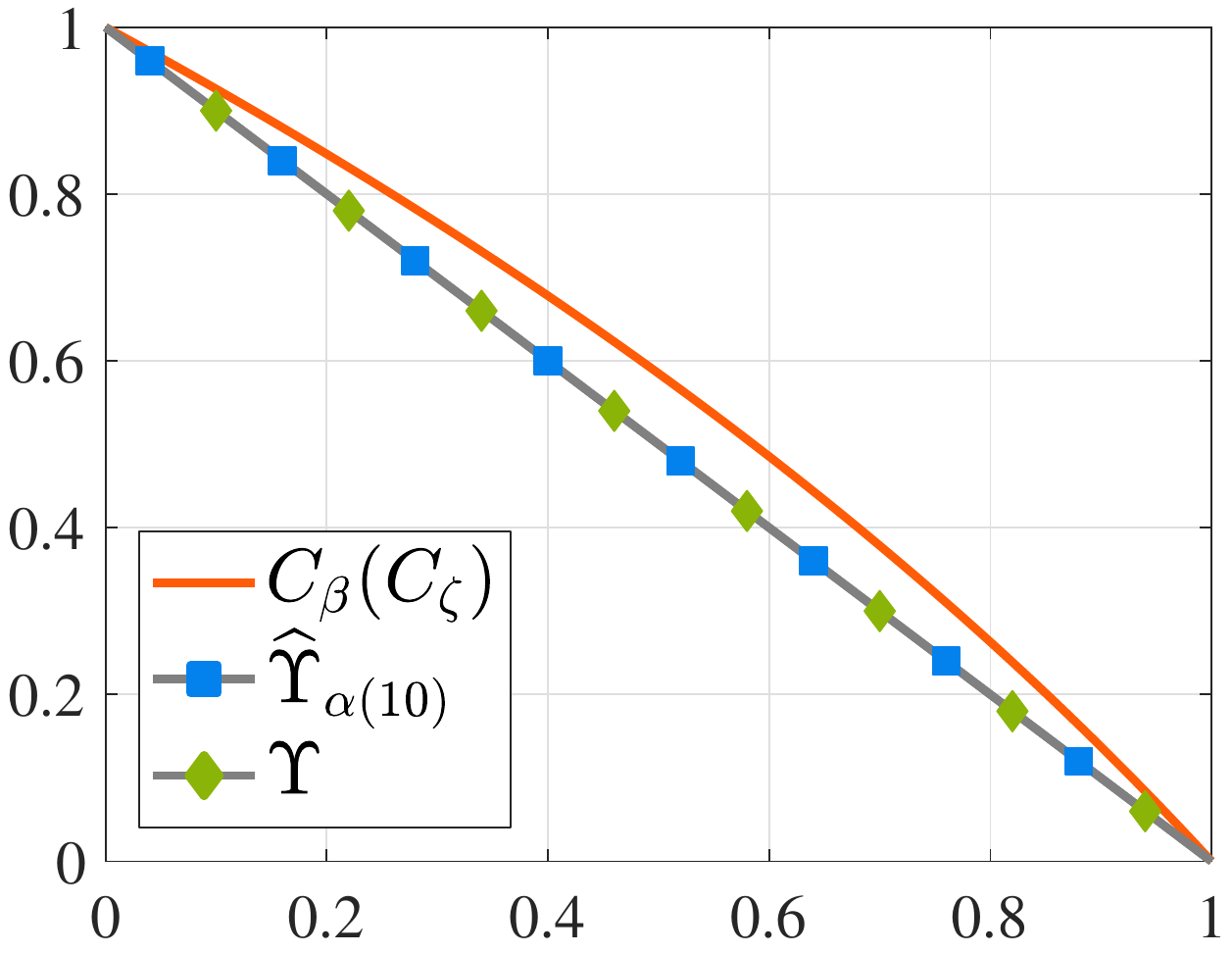}};
\node at (0.2,-2.4) {\small (b) Qubit erasure channel $\cE_p.$};

\node at (5.7,0) {\includegraphics[width = 5.3cm]{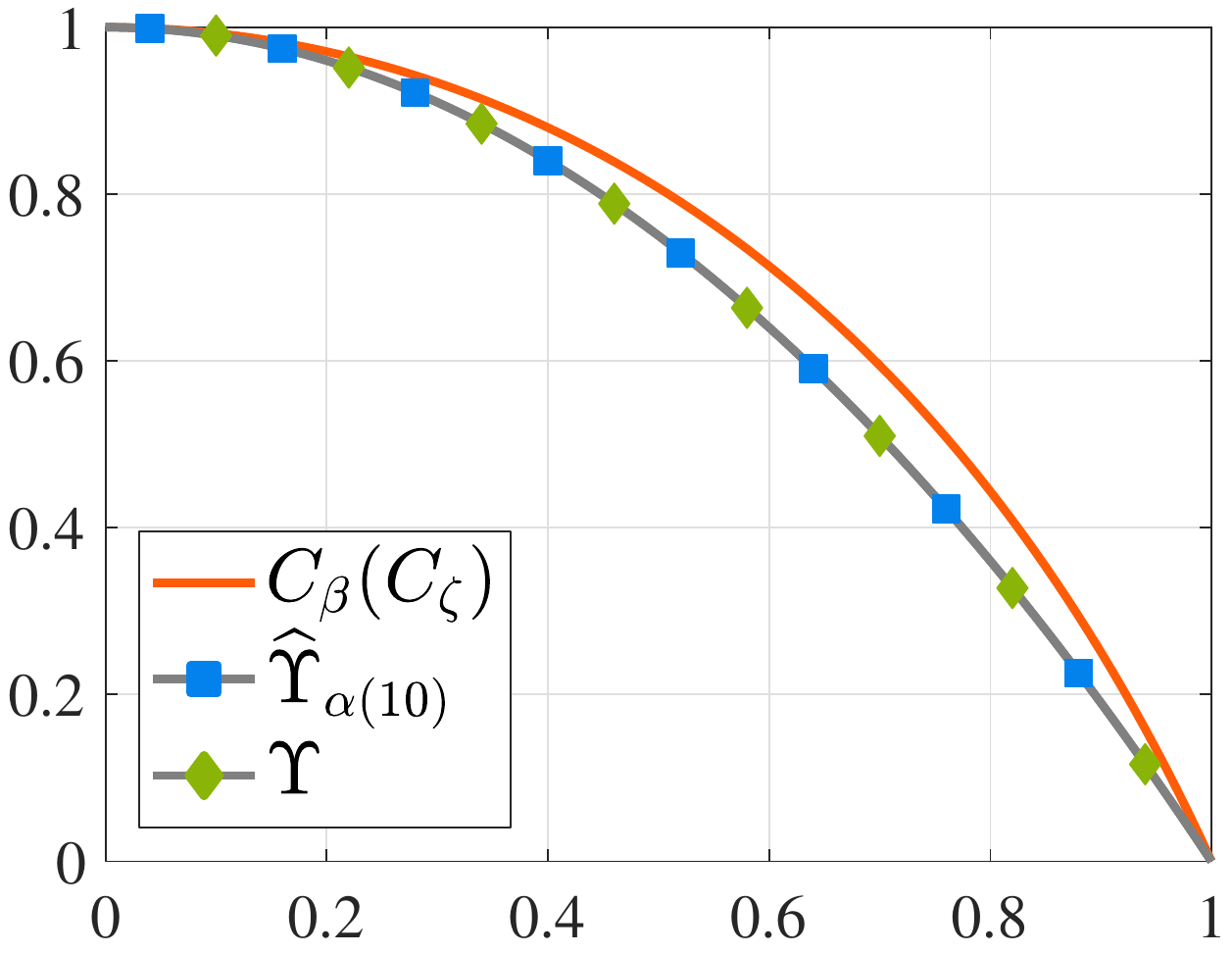}};
\node at (5.9,-2.4) {\small (c) Qubit dephrasure channel $\cN_{p,p^2}.$};

\begin{scope}[shift={(0,-5)}]
\node at (-5.7,0) {\includegraphics[width = 5.3cm]{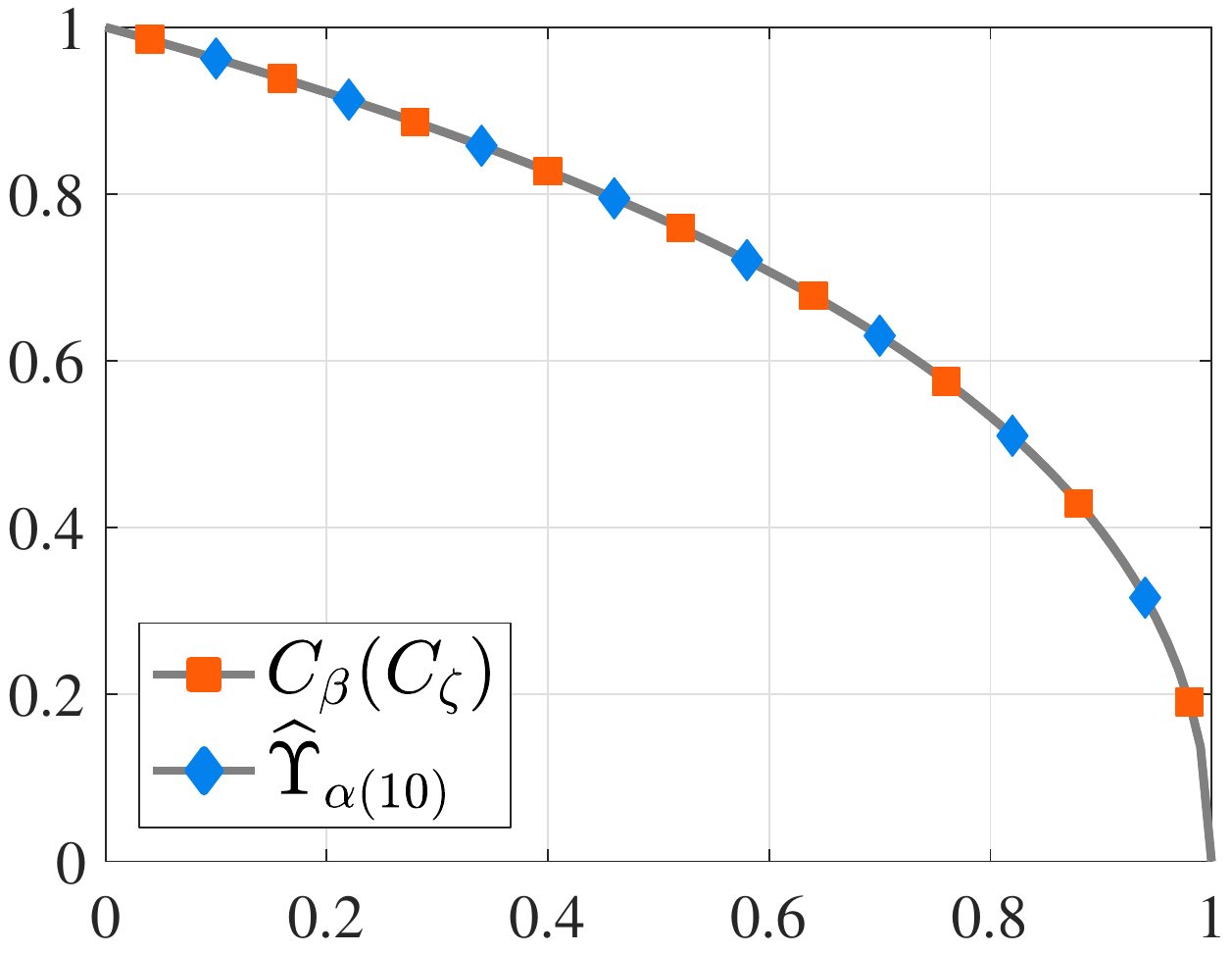}};
\node at (-5.5,-2.4) {\small (d) GAD channel with $N = 0$.};

\node at (0,0) {\includegraphics[width = 5.3cm]{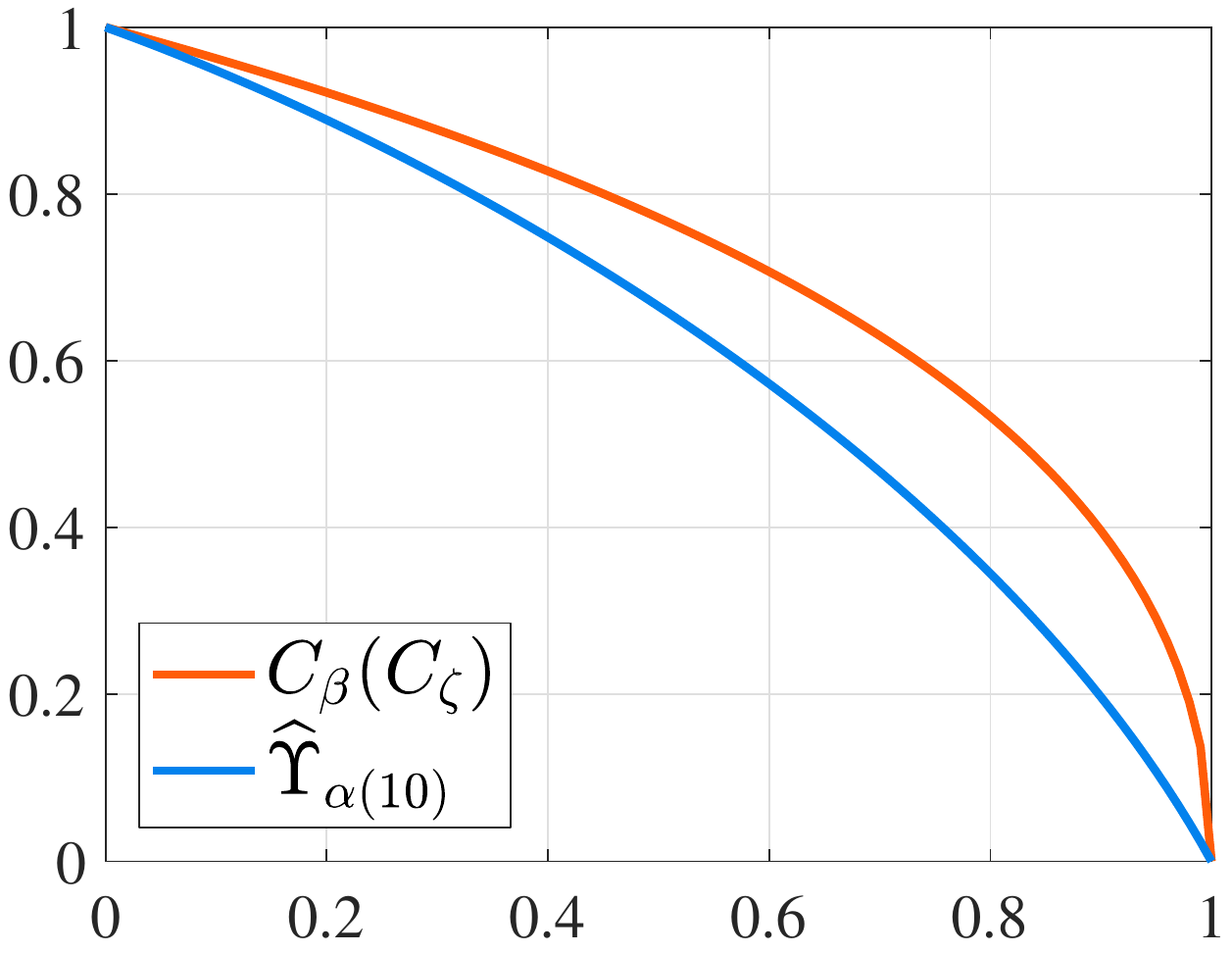}};
\node at (0.2,-2.4) {\small (e) GAD channel with $N = 0.3$.};

\node at (5.7,0) {\includegraphics[width = 5.3cm]{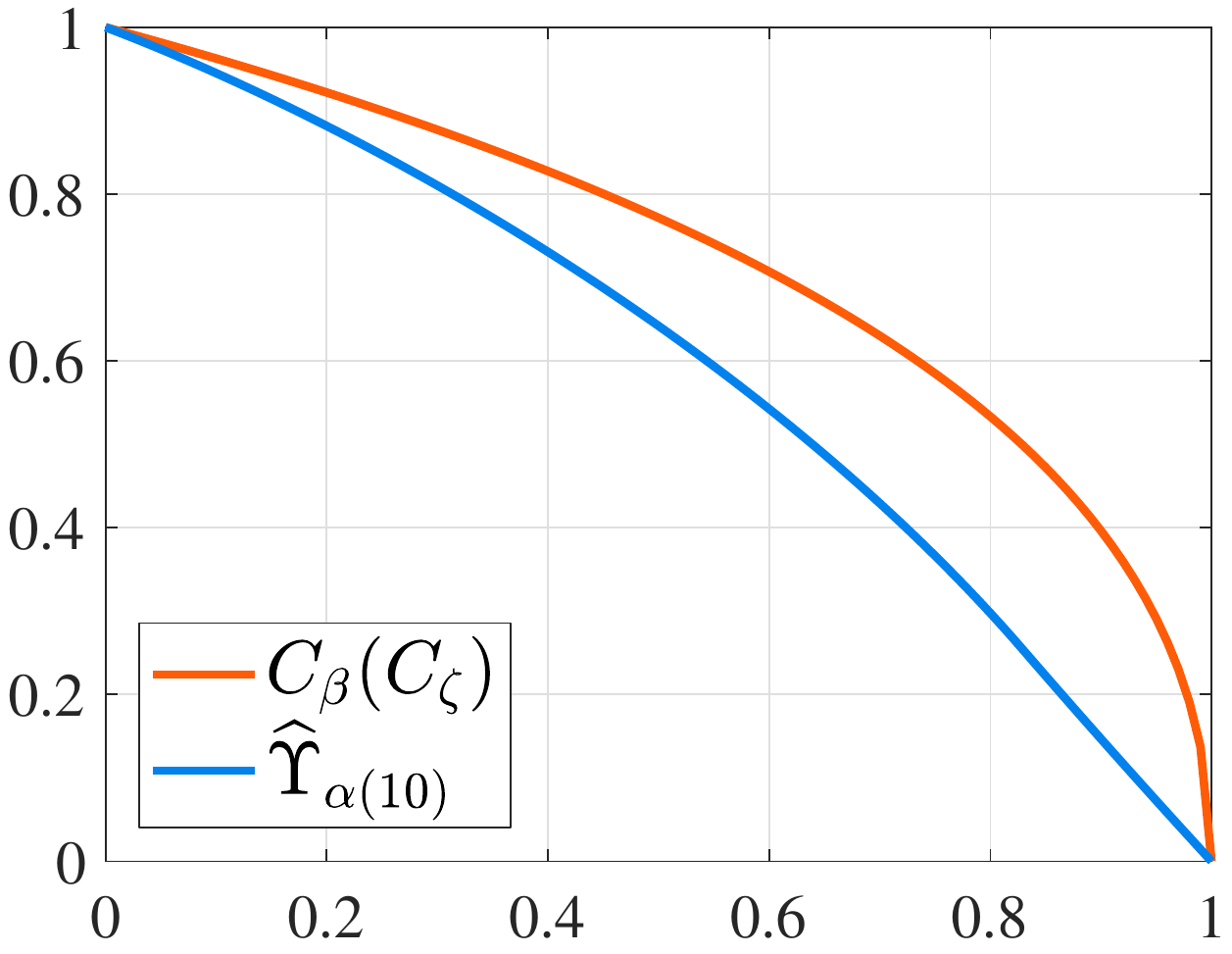}};
\node at (5.9,-2.4) {\small (f) GAD channel with $N = 0.5$.};
\end{scope}
\end{tikzpicture}
\end{adjustwidth}
\caption{\small Comparison of the strong converse bounds on the classical capacity for the qubit depolarizing channel $\cD_p$, the qubit erasure channel $\cE_p$, the qubit dephrasure channel $\cN_{p,q}$ with $q = p^2$, and the generalized amplitude damping channels $\cA_{p,N}$ with different parameters. Note that $C_\beta$ and $C_\zeta$ coincide for all these examples. The horizontal axis takes value of $p \in [0,1]$.}
\label{classical capacity compare 1}
\end{figure}

The comparison results~\footnote{A detailed comparison of the GAD channels with other weak converse bounds in~\cite{Khatri2019} is given in Appendix~\ref{app: Detailed comparison for generalized amplitude damping channel}.} are shown in Figure~\ref{classical capacity compare 1}. It is clear that $\widehat \U_{\a {\scriptscriptstyle (10)}}$ demonstrates significant improvements over $C_\b$ and $C_{\zeta}$ for all these channels except for one particular case $\cA_{p,0}$ in subfigure~(d) where all bounds coincide.  It is interesting to note that an analytical expression of the bounds $C_\beta(\cA_{\gamma,N}) = C_\zeta(\cA_{\gamma,N}) = \log (1+\sqrt{1-\gamma})$ is given in~\cite[Proposition 6]{Khatri2019}, which is independent on the parameter $N$. However, this is clearly not the case for our new bound $\widehat \U_\a$.  For covariant channels $\cD_p$, $\cE_p$ and $\cN_{p,p^2}$, the bound $\widehat \U_{\a {\scriptscriptstyle (10)}}$ also coincides with the Upsilon information $\U$ in subfigures (a-c). In particular, $\widehat \U_{\a {\scriptscriptstyle (10)}}$ is given by $1-p$ in subfigure (b), witnessing again the strong converse property of the qubit erasure channel $C(\cE_p) = C^\dagger(\cE_p) = 1-p$~\cite{Wilde2014b}. Such tightness can also be observed here for the dephrasure channel $\cN_{p,q}$ and it would be easy to show that $\chi(\cN_{p,q}) = C(\cN_{p,q}) = C^{\dagger}(\cN_{p,q}) = 1-q$ which is independent of the dephasing noise parameter $p$. 

From Figure~(\ref{classical capacity compare 1}\,e), $\widehat \U_\a$ does not give improvement for the amplitude damping channel $\cA_{\gamma,0}$. However, when considering the composition channel $\cM_p \equiv \cA_{p,0}\circ \cZ_p$, which was studied by Aliferis et al.~\cite{Aliferis2009} in the context of fault-tolerant quantum computation, the strong converse bound $\widehat \U_\a$ works unexpectedly well, as shown in Figure~\ref{classical capacity compare 3} . 

Since $\cM_p$ is an entanglement-breaking channel at $p = 1/2$, it is expected that $\cM_p$ is approximately entanglement-breaking around this point. Therefore, a converse bound $C_{\rm EB}$ was established in~\cite[Corollary III.7]{Leditzky2018} by using a continuity argument of the classical capacity. It has been shown in~\cite[Figure 5]{Leditzky2018} that this continuity bound $C_{\rm EB}$ gives certain improvement on $C_\b$ for the interval round $p = 1/2$. However, Figure~\ref{classical capacity compare 3} shows that the new strong converse bound $\widehat \U_{\a{\scriptscriptstyle (10)}}$ is much tighter than both $C_{\rm EB}$ and $C_\b$ for all $\cM_p$ with $p\in[0,0.75]$.
The Holevo information $\chi$ is also numerically computed by utilizing the algorithm~\footnote{The MATLAB codes we use are given from~\cite{Leditzky2018}.} in~\cite{Sutter2016b}. We observe that the upper bound $\widehat \U_{\a {\scriptscriptstyle (10)}}$ and the lower bound $\chi$ are very close, leading to a good estimation to the classical capacity of $\cM_p$.

\begin{figure}[H]
\centering
\includegraphics[width=10cm]{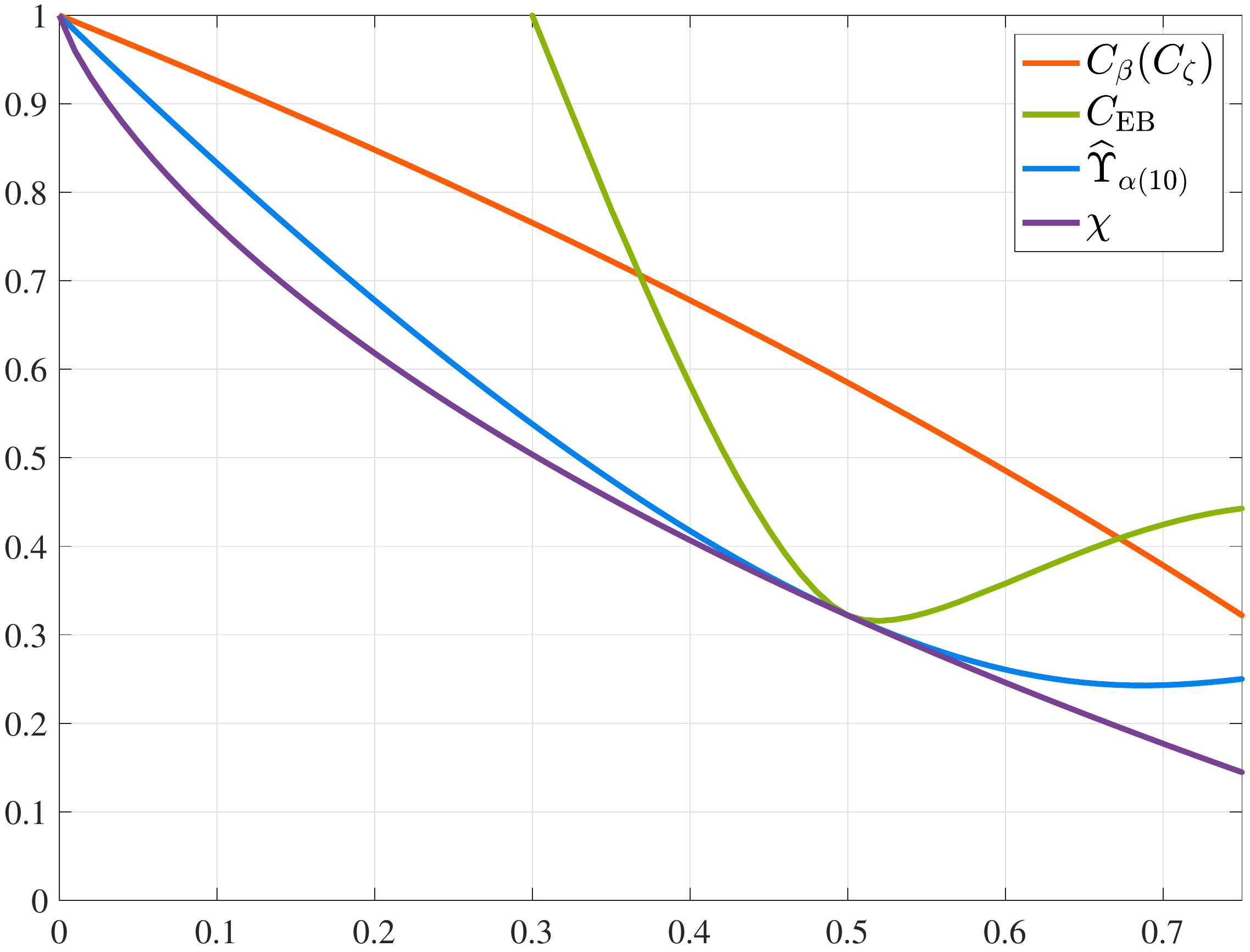}
\caption{\small Upper and lower bounds on the classical capacity of the composition channel $\cM_p = \cA_{p,0}\circ \cZ_p$ with the amplitude damping channel $\cA_{p,0}$ and the dephasing channel $\cZ_p$. All of $C_\b$, $C_\zeta$ and $\widehat \U_{\a{\scriptscriptstyle (10)}}$ are strong converse upper bounds. $C_\b$ coincides with $C_\zeta$ in this case. $C_{\text{EB}}$ is known as a weak converse bound given in~\cite[Corollary III.7]{Leditzky2018}. The Holevo information $\chi$ is a lower bound. The horizontal axis takes value of $p \in [0,0.75]$.}
\label{classical capacity compare 3}
\end{figure}

\newpage
\section{Magic state generation}

\subsection{Backgrounds}

The idea of fault-tolerant quantum computation proposes a reliable framework to implement practical quantum computation against noise and decoherence (e.g.~\cite{Shor1996,kitaev2003fault,knill2004fault,raussendorf2007fault}). Due to the Gottesman-Knill theorem~\cite{gottesman1997stabilizer,aaronson2004improved}, quantum circuits constructed by stabilizer operations 
can be efficiently simulated by a classical computer. Therefore, to fully power the universal quantum computation the stabilizer operations must be supplemented with some other fault-tolerant non-stabilizer resource. A celebrated scheme for this is given by the state injection technique that allows us to implement non-stabilizer operations by mixing the stabilizer operations with a key ingredient called ``magic states''~\cite{gottesman1999demonstrating,zhou2000methodology}. These are non-stabilizer states that must be prepared using the experimentally costly process of magic state distillation (e.g.~\cite{Bravyi2012,Hastings2018}). While extensive efforts have been devoted to construct efficient distillation codes (e.g.~\cite{gottesman1997stabilizer,gottesman1998,Aharonov2008,Hastings2018}), recent study in~\cite{Wang2019channelmagic} as well as~\cite{seddon2019quantifying} initiate the investigation of magic state generation via a general quantum channel, aiming to exploit the power and the limitations of a noisy quantum channel in the scenario of fault-tolerant quantum computation. 

Of particular interest is the work~\cite{Wang2019channelmagic} which identifies a larger class of operations (CPWP operations), that can be efficiently simulated via classical algorithms. Based on this notion of free operations, the authors established a complete resource theory framework and introduced two efficiently computable magic measures for quantum channels, named \emph{mana} ($\mana$) and \emph{max-Thauma} ($\theta_{\max}$) respectively. They proved several desirable properties of these two measures, and further showcased that these channel measures provided strong converse bounds for the task of magic state generation as well as lower bounds for the task of quantum channel synthesis.

\subsection{Summary of results}

In this part, we aim to push forward the study in~\cite{Wang2019channelmagic} by considering the Thauma measure induced by the geometric \Renyi divergence. Our results can be summarized as follows.

In Section~\ref{Geometric Renyi Thauma of a channel}, we prove that the \emph{geometric \Renyi Thauma} of a channel ($\widehat \theta_{\a}$) possesses all the nice properties that are held by the mana and max-Thauma, including the reduction to states, monotonicity under CPWP superchannels, faithfulness, amortization inequality, subadditivity under channel composition, additivity under tensor product as well as a semidefinite representation.

In Section~\ref{Magic state generation capacity}, we exhibit that the geometric \Renyi Thauma of a channel is a strong converse on the magic state generation capacity (the maximum number of magic state that can be produced per channel use of $\cN$ via adaptive protocols), improving the max-Thauma in general. More precisely, we show that 
\begin{align*}
C_\psi(\cN)  \leq C_\psi^{\dagger}(\cN) \leq \frac{\widehat \theta_{\a}(\cN)}{\theta_{\min}(\psi)} \leq \frac{\theta_{\max}(\cN)}{\theta_{\min}(\psi)},\quad \text{with} \ \widehat \theta_{\a}(\cN) \ \text{SDP computable},
\end{align*}
where $C_\psi(\cN)$ and $C_\psi^\dagger(\cN)$ denote the capacity of a channel $\cN$ to generate magic state $\psi$ and its corresponding strong converse capacity, respectively, and $\theta_{\min}(\psi)$ is a constant coefficient for given $\psi$.

In Section~\ref{Quantum channel synthesis}, we show that the geometric \Renyi Thauma can also provide lower bounds for the task of quantum channel synthesis. That is, we prove that the number of channel $\cN'$ required to implement another channel $\cN$ is bounded from below by ${\widehat \theta_{\a}(\cN)}/{\widehat \theta_{\a}(\cN')}$ for all $\a \in (1,2]$, complementing to the previous results by mana and max-Thauma.

\subsection{Preliminaries of the resource theory of magic}
We first review some basic formalism of the resource theory of magic.
Throughout this part, a Hilbert space implicitly has an odd dimension, and if the dimension is not prime, it should be understood to be a tensor product of Hilbert spaces each having odd prime dimension. Let $\{\ket{j}\}_{j=0}^{d-1}$ be the standard computational basis. For a prime number $d$, the generalized Pauli operator (or sometimes called the shift and boost operators) $X,Z$ are respectively defined as
\begin{align}
  X\ket{j} = \ket{j\oplus 1},\quad Z\ket{j} = \o^j \ket{j},\quad \text{with}\quad \o = e^{2\pi i/d},
\end{align}
where $\oplus$ deontes the addition modulo $d$.
The Heisenberg-Weyl operators is defined as~\footnote{The definition here is sightly different from some literatures. We adopt the same notion in~\cite{Wang2018magicstates}.}
\begin{align}
  T_{\bu} = \tau^{-a_1a_2} Z^{a_1} X^{a_2}, \quad\text{with}\quad \tau = e^{(d+1)\pi i /d},\quad \bu = (a_1,a_2) \in \mathbb Z_d \times \mathbb Z_d.
\end{align}
For a system with composite Hilbert space $\cH_A \ox \cH_B$, the Heisenberg-Weyl operators are the tensor product of the Heisenberg-Weyl operators on subsystems $T_{\bu_A \otimes \bu_B} = T_{\bu_A} \ox T_{\bu_B}$. 
For each point $\bu \in \ZZ_d \times \ZZ_d$ in the discrete phase space, there is a corresponding operator
\begin{align}
  \mA_{\bu} \equiv T_{\bu} \mA_0 T_{\bu}^\dagger \quad \text{with}\quad \mA_0 \equiv \frac{1}{d} \sum_{\bu} T_{\bu}
\end{align}
The value of the discrete Wigner representation of a quantum state $\rho$ at $\mA_{\bu}$ is given by
\begin{align}
  W_\rho(\bu) \equiv \frac{1}{d}\tr [\mA_{\bu} \rho].
\end{align}
The Wigner trace and Wigner spectral norm of an Hermitian operator $V$ are defined as
\begin{align}\label{eq: wigner trace norm infinity norm}
  \|V\|_{W,1} \equiv \sum_{\bu} |W_V(\bu)|,\quad \text{and}\quad \|V\|_{W,\infty} \equiv d \max_{\bu} |W_V(\bu)|,
\end{align}
respectively. For any Hermiticity-preserving map $\cN$, its discrete Wigner function is defined as
\begin{align}
  W_{\cN}(\bv|\bu)\equiv \frac{1}{d_B} \tr [\mA_{\bv_B} \cN(\mA_{\bu_A})] = \frac{1}{d_B} \tr [J_{\cN} (\mA_{\bu_A} \ox \mA_{\bv_B})],
\end{align}
with $J_{\cN}$ being the Choi matrix of $\cN$.
The set of quantum states with a non-negative Wigner function is denoted as 
\begin{align}
  \cW_+ \equiv \{\rho \,|\, \rho \geq 0, \tr \rho = 1, W_\rho(\bu) \geq 0, \forall \bu\}.
\end{align}
A quantum operation $\cE$ is \emph{completely positive Wigner preserving} (CPWP) if the following holds for any system $R$ with odd dimension~\cite{Wang2019channelmagic}  
\begin{align}
  \id_R \ox \cE_{A\to B} (\rho_{RA}) \in \cW_+\quad \forall \rho_{RA} \in \cW_+.
\end{align}

\begin{definition}[Mana]
The mana of a quantum state $\rho$ is defined as~\cite{Veitch_2014}
\begin{align}
  \mana(\rho)\equiv \log \|\rho\|_{W,1}.
\end{align}
  The mana of a quantum channel $\cN_{A\to B}$ is defined as~\footnote{This can be seen as an analog of Holevo-Werner bound for quantum capacity of a channel or log-negativity of a quantum state.}~\cite{Wang2019channelmagic}
  \begin{align}\label{eq: mana of a channel}
    \mana(\cN_{A\to B}) \equiv \log \max_{\bu_A} \|\cN_{A\to B}(\mA_{\bu_A})\|_{W,1} = \log \max_{\bu_A} \sum_{\bv_B} \frac{1}{d_B}|\tr J_{\cN} (\mA_{\bu_A}\ox \mA_{\bv_B})|.
  \end{align}
\end{definition}

\begin{definition}[Thauma]
Let $\bD$ be a generalized quantum divergence. The generalized Thauma of a quantum state $\rho$ is defined as~\cite{Wang2018magicstates}
\begin{align}
  \btheta(\rho)\equiv \min_{\sigma \in \cW} \bD(\rho\|\sigma),
\end{align}
where $\cW \equiv \{\sigma\,|\, \mana(\sigma) \leq 0, \sigma \geq 0\}$ is the set of sub-normalized states with non-positive mana.
The generalized Thauma of a quantum channel $\cN_{A\to B}$ is defined as~\cite{Wang2019channelmagic} 
  \begin{align}
    \btheta(\cN)\equiv \min_{\cE \in \bcV_{\cM}} \bD(\cN\|\cE),
  \end{align}
  where $\bcV_{\mana}\equiv \{\cE \in \CP(A:B)| \mana(\cE) \leq 0\}$ is the set of subchannels with non-positive mana.
\end{definition}
In particular, the max-Thauma of a channel is induced by the max-relative entropy~\cite{Wang2019channelmagic}
\begin{align}
  \theta_{\max}(\cN)\equiv \min_{\cE \in \bcV_{\mana}} D_{\max}(\cN\|\cE).
\end{align}

\subsection{Geometric \Renyi Thauma of a channel}
\label{Geometric Renyi Thauma of a channel}

In this section, we investigate the generalized Thauma induced by the geometric \Renyi divergence: 
\begin{align}
  \widehat \theta_{\a}(\rho) \equiv \min_{\sigma \in \cW} \widehat D_{\a}(\rho\|\sigma)\quad \text{and}\quad \widehat \theta_{\a}(\cN) \equiv \min_{\cE\in \bcV_\mana} \widehat D_{\a}(\cN\|\cE).
\end{align}
The authors in \cite{Wang2019channelmagic} proved that the mana and max-Thauma of a channel possess several nice properties, as listed in Table~\ref{tab: thauma properties}. Here we aim to show that all the desirable properties are also held by the geometric \Renyi Thauma as well. These basic properties will be utilized in the next two sections for improving the converse bound on magic state generation capacity and the efficiency of quantum channel synthesis.

\begin{table}[H]
\centering
\begin{tabular}{c|c|c|c}
\toprule[2pt]
 \diagbox{Property}{Quantifier} &  geometric R\'{e}nyi Thauma ($\widehat \theta_{\a}$) & max-Thauma ($\theta_{\max}$) & Mana ($\mana$)\\
 \hline
Reduction to states & \cmark \quad Lem.~\ref{prop: reduction monotone faithful} \hphantom{($\alpha = 2$)} & \cmark & \cmark\\
Monotonicity under CPWP & \cmark \quad Lem.~\ref{prop: reduction monotone faithful} \hphantom{($\alpha = 2$)} & \cmark & \cmark\\
Faithfulness & \cmark \quad Lem.~\ref{prop: reduction monotone faithful} \hphantom{($\alpha = 2$)} & \cmark & \cmark\\
Amortization & \cmark \quad Lem.~\ref{prop: amortization} \hphantom{($\alpha = 2$)} & \cmark  & \cmark \\
Subadditivity under $\circ$ & \cmark \quad Lem.~\ref{prop: subadditivity} \hphantom{($\alpha = 2$)} & \cmark & \cmark \\
SDP computable & \cmark \quad Lem.~\ref{prop: SDP formula for maximal renyi thauma of a channel} \hphantom{($\alpha = 2$)} & \cmark & \cmark\\
Additivity under $\otimes$ & \cmark \quad Lem.~\ref{prop: additivity} ($\alpha = 2$) & \cmark & \cmark \\
\bottomrule[2pt]  
\end{tabular}
\caption{\small Comparison of properties for the geometric \Renyi Thauma, max-Thauma and mana of quantum channels.}
\label{tab: thauma properties}
\end{table}

\begin{lemma}\label{prop: reduction monotone faithful}
The following properties hold for the geometric \Renyi Thauma of a channel when $\alpha \in (1,2]$:
\begin{itemize}
\item ({Reduction to states}): Let $\cN(\rho) = \tr[\rho] \sigma$ be a replacer channel with fixed $\sigma$ for any $\rho$. Then 
\begin{align}
  \widehat \theta_{\a}(\cN) = \widehat \theta_{\a}(\sigma).
\end{align}
\item ({Monotonicity}): Let $\cN$ be a quantum channel and $\Gamma$ be a CPWP superchannel. Then 
\begin{align}
  \widehat \theta_{\a}(\Gamma(\cN)) \leq \widehat \theta_{\a}(\cN).
\end{align}
\item ({Faithfulness}): $\widehat \theta_{\a}(\cN)$ is nonnegative for any quantum channel $\cN$ and 
\begin{align}
  \widehat \theta_{\a}(\cN) = 0 \quad \text{if and only if} \quad \cN \in \CPWP.
\end{align}
\end{itemize}
\end{lemma}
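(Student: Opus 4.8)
The plan is to mirror, step by step, the arguments used in~\cite{Wang2019channelmagic} for the max-Thauma, replacing $D_{\max}$ by $\widehat D_\a$ and invoking the structural properties of the geometric \Renyi divergence established in Theorem~\ref{thm: summary of properties} (most importantly the data-processing inequality, the chain rule/sub-additivity, the comparison $D \le \widehat D_\a \le D_{\max}$, and the closed-form channel expression). For \emph{reduction to states}, I would note that for a replacer channel $\cN(\rho)=\tr[\rho]\,\sigma$, any input purification $\phi_{AA'}$ produces the product output $\rho_A \ox \sigma_B$, so by the additivity/tensor behaviour of $\widehat D_\a$ and the fact that $\widehat D_\a(\rho_A\|\rho_A)=0$ one gets $\widehat D_\a(\cN\|\cE)\ge \widehat\theta_\a(\sigma)$ whenever $\cE\in\bcV_\mana$ is itself chosen to be a replacer channel onto some $\tau\in\cW$; conversely, the optimal $\sigma$-level minimizer $\tau\in\cW$ lifts to a replacer subchannel in $\bcV_\mana$, giving the matching upper bound. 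The only subtlety is to check that a replacer subchannel $\cE(\rho)=\tr[\rho]\,\tau$ with $\tau\in\cW$ indeed satisfies $\mana(\cE)\le 0$, which is immediate from the definition of $\mana$ of a channel in~\eqref{eq: mana of a channel} since $\|\cE(\mA_{\bu_A})\|_{W,1}=\|\tau\|_{W,1}\le 1$.

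For \emph{monotonicity under CPWP superchannels}, I would use the standard decomposition of a superchannel $\Gamma$ into a pre-processing channel and a post-processing channel with a memory system, i.e.\ $\Gamma(\cN)=\cD\circ(\cN\ox\id)\circ\cE$ for quantum channels $\cD,\cE$; then, given an optimal $\cE'\in\bcV_\mana$ for $\widehat\theta_\a(\cN)$, I would show that $\Gamma(\cE')$ (the image of $\cE'$ under the same superchannel, acting on the free subchannel) lies in $\bcV_\mana$ — this is exactly the statement that CPWP superchannels preserve the set of non-positive-mana subchannels, which is proved in~\cite{Wang2019channelmagic} and does not depend on the choice of divergence — and then apply the data-processing inequality for the geometric \Renyi channel divergence (Theorem~\ref{thm: summary of properties}, together with the chain rule / sub-additivity under composition in Item~5) to get $\widehat D_\a(\Gamma(\cN)\|\Gamma(\cE'))\le\widehat D_\a(\cN\|\cE')=\widehat\theta_\a(\cN)$. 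Taking the minimum over $\bcV_\mana$ on the left gives $\widehat\theta_\a(\Gamma(\cN))\le\widehat\theta_\a(\cN)$.

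For \emph{faithfulness}, nonnegativity of $\widehat\theta_\a(\cN)$ follows from $\widehat D_\a\ge D$ (Lemma~\ref{thm: divergence chain inequality}) together with nonnegativity of the relative entropy distance to the free set, or more directly from the closed-form expression in Lemma~\ref{lem: maximal Renyi channel divergence SDP} and $\widehat D_\a\ge 0$ for normalized versus sub-normalized states. If $\cN\in\CPWP$ then $\cN$ itself lies in $\bcV_\mana$ (a CPWP \emph{channel} has $\mana(\cN)\le 0$), so $\widehat\theta_\a(\cN)\le\widehat D_\a(\cN\|\cN)=0$, hence $=0$. Conversely, if $\widehat\theta_\a(\cN)=0$, then since $\widehat D_\a\ge D$ we get $\min_{\cE\in\bcV_\mana}D(\cN\|\cE)=0$; by a compactness argument (the set $\bcV_\mana$ restricted to the relevant trace constraint is compact, and the channel divergence is lower semicontinuous) there is a minimizer $\cE^\star\in\bcV_\mana$ with $D(\cN\|\cE^\star)=0$ on the optimal input, forcing the output states to coincide for a full-rank input and hence $\cN=\cE^\star\in\bcV_\mana$; combined with trace preservation of $\cN$ this yields $\cN\in\CPWP$, exactly as in the max-Thauma case. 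I expect the main obstacle to be the faithfulness converse direction: one must be careful that equality in $\widehat D_\a\ge D$ together with $D=0$ genuinely forces $\cN=\cE^\star$ as channels (not merely on a single input state), which requires invoking that the optimal input state can be taken full-rank and that $D(\rho\|\sigma)=0\Rightarrow\rho=\sigma$, and then separately arguing that a trace-preserving element of $\bcV_\mana$ is in fact CPWP; all of these are already handled in~\cite{Wang2019channelmagic} and carry over verbatim once the divergence-specific inequalities above are in place.
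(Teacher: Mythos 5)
Your proposal is correct and takes essentially the same approach as the paper: the paper proves all three properties by transplanting the generalized-Thauma arguments of~\cite[Propositions 9--11]{Wang2019channelmagic}, exactly as you do, using only generic properties of $\widehat D_\alpha$ (data processing for the channel divergence, the behaviour of free/replacer subchannels in $\bcV_{\mana}$) together with continuity and faithfulness of the divergence. The only cosmetic difference is in the converse direction of faithfulness, where the paper invokes strong faithfulness of $\widehat D_\alpha$ itself ($\widehat D_\alpha(\rho\|\sigma)=0$ iff $\rho=\sigma$), while you route through $\widehat D_\alpha \ge D$ and faithfulness of the Umegaki relative entropy --- an equivalent argument.
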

\begin{proof}
The first two properties directly follow from the argument for the generalized Thauma in~\cite[Proposition 9 and 10]{Wang2019channelmagic}. The third property follows from the argument in~\cite[Proposition 11]{Wang2019channelmagic} and the fact that the geometric \Renyi divergence is continuous and strongly faithful (i.e, $\widehat D_\a(\rho\|\sigma) \geq 0$ in general and $\widehat D_{\a}(\rho\|\sigma) = 0$ if and only if $\rho = \sigma$).
\end{proof}

\begin{lemma}[Amortization]\label{prop: amortization}
  For any quantum state $\rho_{RA}$, any quantum channel $\cN_{A\to B}$ and the parameter $\a\in (1,2]$, it holds
  \begin{align}
    \widehat \theta_{\a}(\cN_{A\to B}(\rho_{RA})) - \widehat \theta_{\a}(\rho_{RA}) \leq \widehat \theta_{\a}(\cN_{A\to B}).
  \end{align}
\end{lemma}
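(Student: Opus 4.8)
The plan is to derive the amortization inequality for the geometric \Renyi Thauma directly from the chain rule for the geometric \Renyi divergence (Lemma~\ref{lem_chainRule}), mirroring exactly the argument used for the geometric \Renyi Rains bound in Proposition~\ref{amortization proposition}. First I would unpack the definitions: $\widehat \theta_{\a}(\rho_{RA}) = \min_{\tau_{RA} \in \cW} \widehat D_\a(\rho_{RA}\|\tau_{RA})$ and $\widehat \theta_{\a}(\cN_{A\to B}) = \min_{\cE \in \bcV_\mana} \widehat D_\a(\cN\|\cE)$, where $\cW$ is the set of sub-normalized states with non-positive mana and $\bcV_\mana$ is the set of subchannels with non-positive mana. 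Let $\tau_{RA} \in \cW$ be optimal for $\widehat \theta_{\a}(\rho_{RA})$ and let $\cE_{A\to B} \in \bcV_\mana$ be optimal for $\widehat \theta_{\a}(\cN_{A\to B})$.

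The key step is to show that $\cE_{A\to B}(\tau_{RA})$ is a feasible point for the minimization defining $\widehat \theta_{\a}(\cN_{A\to B}(\rho_{RA}))$; that is, $\cE_{A\to B}(\tau_{RA}) \in \cW$, i.e.\ it is sub-normalized and has non-positive mana. Sub-normalization is immediate since $\cE$ is trace non-increasing and $\tau$ is sub-normalized. For the mana condition, I would invoke the fact that mana is sub-multiplicative under the action of a non-positive-mana subchannel on a non-positive-mana state --- concretely, $\|\cE_{A\to B}(\tau_{RA})\|_{W,1} \le \|\cE_{A\to B}\|_{\mana\text{-norm}} \cdot \|\tau_{RA}\|_{W,1}$ in the appropriate sense, which gives $\mana(\cE_{A\to B}(\tau_{RA})) \le \mana(\cE_{A\to B}) + \mana(\tau_{RA}) \le 0$. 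This is precisely the ingredient established in~\cite{Wang2019channelmagic} for the amortization of the max-Thauma, so it can be quoted. Once feasibility is in hand, the chain:
\begin{align}
\widehat \theta_{\a}(\cN_{A\to B}(\rho_{RA})) &\le \widehat D_\a(\cN_{A\to B}(\rho_{RA})\|\cE_{A\to B}(\tau_{RA})) \notag\\
&\le \widehat D_\a(\rho_{RA}\|\tau_{RA}) + \widehat D_\a(\cN\|\cE) \notag\\
&= \widehat \theta_{\a}(\rho_{RA}) + \widehat \theta_{\a}(\cN_{A\to B}) \notag
\end{align}
follows, where the first inequality is feasibility, the second is the chain rule of Lemma~\ref{lem_chainRule}, and the last line uses optimality of $\tau_{RA}$ and $\cE_{A\to B}$. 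Rearranging gives the claimed bound.

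The main obstacle I anticipate is verifying the mana-closure claim $\cE \in \bcV_\mana, \tau \in \cW \Rightarrow \cE(\tau) \in \cW$ with full rigor --- one must be careful that the relevant sub-multiplicativity of the Wigner trace norm holds for subchannels (not just channels) and handles the reference system $R$ correctly, including the possibility that $\tau$ acts nontrivially on $R$. The cleanest route is to write $\cE_{A\to B}(\tau_{RA}) = \sum_{\bu_A} \tr_A[(\mathds{1}_R \ox \mA_{\bu_A})\tau_{RA}] \, \frac{1}{d_A}\cE_{A\to B}(\mA_{\bu_A}) \cdot (\text{phase})$ via the Wigner expansion and bound the Wigner trace norm termwise, exactly as is done for states composed with channels in the magic literature; I would cite the corresponding statement in~\cite{Wang2019channelmagic} rather than re-derive it. A secondary minor point is the standard continuity/full-rank caveat needed so that $\widehat D_\a$ and the chain rule apply, but this is routine given the continuity properties of the geometric \Renyi divergence already invoked in Lemma~\ref{lem: maximal Renyi channel divergence SDP}.
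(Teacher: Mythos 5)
Your proposal is correct and follows essentially the same route as the paper: reduce to Proposition~\ref{amortization proposition}'s template by showing that the image $\cE_{A\to B}(\tau_{RA})$ of an optimal $\tau_{RA}\in\cW$ under an optimal $\cE\in\bcV_\mana$ again lies in $\cW$, then apply feasibility, the chain rule of Lemma~\ref{lem_chainRule}, and optimality. The only cosmetic difference is that the paper spells out the mana-closure step as a short explicit bound on $\|\cE_{A\to B}(\tau_{RA})\|_{W,1}$ using the discrete Wigner chain relation of~\cite{Wang2019channelmagic} (exactly the termwise estimate you sketch), rather than merely citing it.
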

\begin{proof}
The proof follows the similar steps as Proposition~\ref{amortization proposition}. We only need to show that for any sub-normalized state $\sigma_{RA} \in \cW$ and any subchannel $\cE \in \bcV_{\mana}$, it holds $\gamma_{RB}\equiv \cE_{A\to B}(\sigma_{RA}) \in \cW$. This can be checked as follows:
  \begin{align}
    \|\gamma_{RB}\|_{W,1} & = \sum_{\bu_R,\bv_B} |W_{\gamma_{RB}}(\bu_R,\bv_B)|\\
    & = \sum_{\bu_R,\bv_B} \Big|\sum_{\by_A} W_{\cE}(\bv_B|\by_A) W_{\sigma_{RA}}(\by_A,\bu_R)\Big|\\
    & \leq \sum_{\bu_R,\bv_B,\by_A} \Big| W_{\cE}(\bv_B|\by_A)\Big| \Big| W_{\sigma_{RA}}(\by_A,\bu_R)\Big|\\
    & =  \sum_{\bu_R,\by_A}  \bigg[\sum_{\bv_B} \Big| W_{\cE}(\bv_B|\by_A)\Big|\bigg] \Big| W_{\sigma_{RA}}(\by_A,\bu_R)\Big|\\
    & \leq \sum_{\bu_R,\by_A} \Big| W_{\sigma_{RA}}(\by_A,\bu_R)\Big|\\
    & \leq 1.
  \end{align}
  The first line is the definition of the Wigner trace norm in~\eqref{eq: wigner trace norm infinity norm}. The second line is a chain relation in~\cite[Lemma 1]{Wang2019channelmagic}. The third line follows from the triangle inequality of the absolute value function. The fourth line follows by grouping the components with respect to index $\bv_B$. The fifth line follows since $\cE\in \bcV_\mana$ and thus $\sum_{\bv_B} \big| W_{\cE}(\bv_B|\by_A)\big| \leq \max_{\by_A}\sum_{\bv_B} \big| W_{\cE}(\bv_B|\by_A)\big| \leq 1$. The last line follows since $\sigma_{RA} \in \cW$. Thus we can conclude that $\gamma_{RB} \in \cW$. This completes the proof.
\end{proof}

\begin{lemma}[Sub-additivity]\label{prop: subadditivity}
  For any two quantum channels $\cN_1$, $\cN_2$ and $\a \in (1,2]$, it holds
  \begin{align}
    \widehat \theta_{\a}(\cN_2\circ \cN_2) \leq \widehat \theta_{\a}(\cN_1) + \widehat \theta_{\a}(\cN_2).
  \end{align}
\end{lemma}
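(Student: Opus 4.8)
The plan is to derive the sub-additivity of $\widehat\theta_\a$ under channel composition as a direct corollary of the chain rule for the geometric \Renyi channel divergence (Lemma~\ref{lem_chainRule}), mimicking exactly the argument already used for Lemma~\ref{lem: subadditivity of D alpha} and for the amortization statement in Lemma~\ref{prop: amortization}. First I would fix optimal (or near-optimal, up to an $\eps$ that I let tend to $0$) free subchannels $\cE_1 \in \bcV_{\mana}$ for $\cN_1$ and $\cE_2 \in \bcV_{\mana}$ for $\cN_2$, so that $\widehat\theta_\a(\cN_i) = \widehat D_\a(\cN_i\|\cE_i)$. The key structural fact to establish is that $\cE_2 \circ \cE_1 \in \bcV_{\mana}$, i.e.\ that the set of subchannels with non-positive mana is closed under composition; this is the composition analogue of the ``$\gamma_{RB} \in \cW$'' closure check carried out inside the proof of Lemma~\ref{prop: amortization}, and indeed it follows from the same chain relation for Wigner functions in~\cite[Lemma 1]{Wang2019channelmagic} together with the triangle inequality and the grouping trick: $\max_{\bu}\sum_{\bw}|W_{\cE_2\circ\cE_1}(\bw|\bu)| \le \max_{\bu}\sum_{\by}|W_{\cE_1}(\by|\bu)| \le 1$.

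Once this closure property is in hand, the proof is immediate. Taking any pure input state $\phi_{AR}$, I would write
\begin{align}
\widehat D_\a(\cN_2\circ\cN_1(\phi_{AR})\|\cE_2\circ\cE_1(\phi_{AR}))
&\le \widehat D_\a(\cE_2\|\cN_2\text{ slot}) \dots
\end{align}
—more precisely, apply Lemma~\ref{lem_chainRule} with the ``channel'' being $\cN_2$ (resp.\ $\cM=\cE_2$) and the ``states'' being $\cN_1(\phi_{AR})$ and $\cE_1(\phi_{AR})$, giving
$\widehat D_\a(\cN_2\circ\cN_1(\phi_{AR})\|\cE_2\circ\cE_1(\phi_{AR})) \le \widehat D_\a(\cN_2\|\cE_2) + \widehat D_\a(\cN_1(\phi_{AR})\|\cE_1(\phi_{AR}))$; then bound the second term by $\widehat D_\a(\cN_1\|\cE_1)$ via the definition of the channel divergence (maximization over inputs). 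Maximizing the left-hand side over $\phi_{AR}$ yields $\widehat D_\a(\cN_2\circ\cN_1\|\cE_2\circ\cE_1) \le \widehat D_\a(\cN_1\|\cE_1) + \widehat D_\a(\cN_2\|\cE_2) = \widehat\theta_\a(\cN_1)+\widehat\theta_\a(\cN_2)$. Finally, since $\cE_2\circ\cE_1 \in \bcV_{\mana}$ is a feasible point in the minimization defining $\widehat\theta_\a(\cN_2\circ\cN_1)$, we get $\widehat\theta_\a(\cN_2\circ\cN_1) \le \widehat D_\a(\cN_2\circ\cN_1\|\cE_2\circ\cE_1)$, which chains together to the claimed inequality.

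The only real obstacle is the closure claim $\cE_2\circ\cE_1 \in \bcV_{\mana}$; everything else is a mechanical invocation of the chain rule and the definition of the channel divergence. I expect the mana-closure argument to be essentially identical to the Wigner-function computation already displayed in the proof of Lemma~\ref{prop: amortization}, with $\bu_R$ replaced by an input index $\bu_A$ and the intermediate system playing the role of the ``$A$'' system there, so it should be a short paragraph rather than a new idea. (Note also the apparent typo in the statement, ``$\widehat\theta_\a(\cN_2\circ\cN_2)$'', which should read ``$\widehat\theta_\a(\cN_2\circ\cN_1)$''; the proof handles the correctly-typed statement.)
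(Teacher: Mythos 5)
Your proposal is correct and follows essentially the same route as the paper: take optimal $\cE_1,\cE_2\in\bcV_{\mana}$, note that $\cE_2\circ\cE_1$ remains in $\bcV_{\mana}$, and bound $\widehat\theta_\a(\cN_2\circ\cN_1)\le\widehat D_\a(\cN_2\circ\cN_1\|\cE_2\circ\cE_1)\le\widehat D_\a(\cN_1\|\cE_1)+\widehat D_\a(\cN_2\|\cE_2)$. The only difference is cosmetic: the paper cites \cite[Proposition 5]{Wang2019channelmagic} for the mana closure under composition and invokes Lemma~\ref{lem: subadditivity of D alpha} directly, whereas you re-derive both (correctly) from the Wigner chain relation and the chain rule of Lemma~\ref{lem_chainRule}; your observation about the typo $\cN_2\circ\cN_2$ is also right.
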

\begin{proof}
  Suppose the optimal solution of $\widehat \theta_{\a}(\cN_1)$ and $\widehat \theta_{\a}(\cN_2)$ are taken at $\cE_1$ and $\cE_2$, respectively. By the subadditivity of the mana under composition, we have $\mana(\cE_2\circ \cE_1) \leq 0$ (see~\cite[Proposition 5]{Wang2019channelmagic}). Thus $\cE_2\circ \cE_1$ is a feasible solution for $\widehat \theta_{\a}(\cN_2\circ \cN_2)$ and we have
  \begin{align}
    \widehat \theta_{\a}(\cN_2\circ \cN_2) \leq \widehat D_{\a}(\cN_2\circ \cN_2\|\cE_2\circ \cE_1) \leq \widehat D_{\a}(\cN_1\|\cE_1) + \widehat D_{\a}(\cN_2\|\cE_2) = \widehat \theta_{\a}(\cN_1) + \widehat \theta_{\a}(\cN_2),
  \end{align}
  where the second inequality follows from Lemma~\ref{lem: subadditivity of D alpha}, the last equality follows from the optimality assumption of $\cE_1$ and $\cE_2$.
\end{proof}

\begin{lemma}[SDP formula]
\label{prop: SDP formula for maximal renyi thauma of a channel}
  For any quantum channel $\cN$ and $\a(\ell) = 1+2^{-\ell}$ with $\ell \in \mathbb N$, it holds
\begin{gather}
  \widehat \theta_{\a}(\cN)= 2^\ell\cdot \log \min \ y \quad \text{\rm s.t.}\quad \dbhbig{M,\{N_i\}_{i=0}^\ell,y},\ \dbpbig{y\1 - \tr_B M}\notag\\[2pt]
   \dbp{\begin{matrix}
    M & J_{\cN}\\
    J_{\cN} & N_{\ell}
  \end{matrix}},
  \left\{\dbp{\begin{matrix}
    J_{\cN} & N_{i} \\
    N_{i} & N_{i-1}
  \end{matrix}}\right\}_{i=1}^\ell, 
  \dbpbigg{1 - \frac{1}{d_B}\sum_{\bv_B} \big|\tr N_0 (\mA_{\bu_A} \ox \mA_{\bv_B})\big|}, \forall \bu_A.
  \label{eq: SDP formula for maximal renyi thauma of a channel}
\end{gather}
\end{lemma}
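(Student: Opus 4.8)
The plan is to reduce this SDP formula to the two ingredients we already have: the closed-form expression for the geometric \Renyi channel divergence in Lemma~\ref{lem: maximal Renyi channel divergence SDP} together with the semidefinite representability of the weighted matrix geometric mean from~\cite{fawzi2017lieb} (restated as Lemma~\ref{geometric SDP general lemma}), and the semidefinite description of the set $\bcV_\mana$ of subchannels with non-positive mana. Indeed, by definition $\widehat\theta_\a(\cN) = \min_{\cE \in \bcV_\mana} \widehat D_\a(\cN\|\cE)$, so the task is to unfold each of the three components of this optimization into linear matrix inequalities and glue them.

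First I would invoke Lemma~\ref{lem: maximal Renyi channel divergence SDP} to write, for a fixed subchannel $\cE$ with Choi matrix $J_\cE$,
\begin{align}
\widehat D_\a(\cN\|\cE) = \frac{1}{\a-1}\log\big\|\tr_B G_{1-\a}(J_\cN, J_\cE)\big\|_\infty
= 2^\ell \log \big\|\tr_B G_{-2^{-\ell}}(J_\cN, J_\cE)\big\|_\infty,
\end{align}
using $\a(\ell) = 1 + 2^{-\ell}$ so that $\frac{1}{\a-1} = 2^\ell$ and $1 - \a = -2^{-\ell}$. Next, using $\|X\|_\infty = \min\{y \mid X \le y\1\}$ for Hermitian $X$, the infinity norm becomes the linear constraint $\dbpbig{y\1_A - \tr_B M}$ once we introduce a variable $M$ upper-bounding the matrix geometric mean, i.e. $M \ge G_{-2^{-\ell}}(J_\cN, J_\cE)$. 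The point of the choice $\a(\ell) = 1 + 2^{-\ell}$ is precisely that, by the dyadic rational approximation of~\cite{fawzi2017lieb} (Lemma~\ref{geometric SDP general lemma}), the condition $M \ge G_{-2^{-\ell}}(J_\cN, J_\cE)$ with the boundary value $N_0 = J_\cE$ has an exact semidefinite representation via the ladder of $2\times 2$ block inequalities $\dbp{\begin{matrix} M & J_\cN \\ J_\cN & N_\ell\end{matrix}}$, $\big\{\dbp{\begin{matrix} J_\cN & N_i \\ N_i & N_{i-1}\end{matrix}}\big\}_{i=1}^\ell$ — this is exactly the structure already used in Lemma~\ref{lem: SDP representation of the channel information measure}. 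So minimizing over $\cE \in \bcV_\mana$ amounts to dropping the explicit Choi matrix $J_\cE$, identifying it with the free variable $N_0$, and adding the constraint ``$N_0$ is the Choi matrix of a subchannel with non-positive mana''.

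The only piece genuinely specific to this lemma is translating $\cE \in \bcV_\mana$ into semidefinite form. By definition $\bcV_\mana = \{\cE \in \CP(A:B) \mid \mana(\cE) \le 0\}$, and from Eq.~\eqref{eq: mana of a channel} we have $\mana(\cE) = \log \max_{\bu_A} \frac{1}{d_B}\sum_{\bv_B} |\tr J_\cE(\mA_{\bu_A}\ox \mA_{\bv_B})|$. Hence $\mana(\cE)\le 0$ is the family of linear (in $N_0 = J_\cE$) inequalities $\frac{1}{d_B}\sum_{\bv_B}|\tr N_0(\mA_{\bu_A}\ox\mA_{\bv_B})| \le 1$ for every $\bu_A$; each absolute value $|\tr N_0(\mA_{\bu_A}\ox\mA_{\bv_B})|$ is a real linear functional of the Hermitian variable $N_0$ (the operators $\mA_\bu$ being Hermitian) and so can be handled by the standard epigraph trick, giving exactly the constraint $\dbpbigg{1 - \frac{1}{d_B}\sum_{\bv_B}|\tr N_0(\mA_{\bu_A}\ox\mA_{\bv_B})|}$, $\forall \bu_A$ appearing in~\eqref{eq: SDP formula for maximal renyi thauma of a channel}. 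Note complete positivity of $\cE$ is automatically enforced by $N_0 = N_0^\dagger \ge 0$, which follows from the block inequalities above forcing $N_0 \ge 0$; and trace non-increase of $\cE$ need not be imposed separately since, as observed already for the Thauma of states, the optimum over sub-normalized objects is attained without loss, and $\mana(\cE)\le 0$ already constrains the relevant norm.

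Assembling these observations, the claimed SDP~\eqref{eq: SDP formula for maximal renyi thauma of a channel} follows directly from Lemma~\ref{lem: SDP representation of the channel information measure} applied with $\bcV = \bcV_\mana$, reading off the semidefinite conditions for $\bcV_\mana$ as above. I would write the proof as essentially one sentence: \emph{This directly follows from Lemma~\ref{lem: SDP representation of the channel information measure} and the definition of the set $\bcV_\mana$ together with its semidefinite characterization via Eq.~\eqref{eq: mana of a channel}.} The main (and only mild) obstacle is the bookkeeping of the absolute-value constraints defining the mana — confirming that each $|\tr N_0(\mA_{\bu_A}\ox\mA_{\bv_B})|$ is a genuine linear-in-$N_0$ quantity so that the set $\bcV_\mana$ is polyhedral in the Choi matrix, which makes the overall program a bona fide SDP; everything else is a transcription of the already-established machinery.
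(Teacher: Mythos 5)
Your proposal is correct and follows essentially the same route as the paper: the paper's proof is exactly the one-liner you anticipate, namely that the formula follows from Lemma~\ref{lem: SDP representation of the channel information measure} (i.e., the closed-form channel divergence plus the semidefinite representation of the weighted geometric mean) together with the definition of mana in Eq.~\eqref{eq: mana of a channel}, with the absolute-value constraints handled by slack variables. Your additional bookkeeping (identifying $J_\cE$ with $N_0$, noting $N_0\geq 0$ is forced by the block inequalities, and that the mana condition is linear in $N_0$) is just an expanded version of that same argument.
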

\begin{proof}
This directly follows from Lemma~\ref{lem: SDP representation of the channel information measure} and the definition of mana in~\eqref{eq: mana of a channel}. Note that the absolute value conditions can be written as semidefinite conditions by introducing slack variables.
\end{proof}

\begin{lemma}[Additivity]
\label{prop: additivity}
  The geometric \Renyi Thauma at $\alpha =2$ is additive under tensor product. That is, for any two quantum channels $\cN_1$, $\cN_2$,  it holds
\begin{align}
  \widehat \theta_2(\cN_1\ox \cN_2) =  \widehat \theta_2(\cN_1) + \widehat \theta_2(\cN_2).
\end{align}
\end{lemma}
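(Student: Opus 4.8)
The plan is to prove additivity of $\widehat\theta_2$ under tensor product by establishing the two inequalities separately, with the nontrivial direction being superadditivity (i.e., $\widehat\theta_2(\cN_1\ox\cN_2)\geq\widehat\theta_2(\cN_1)+\widehat\theta_2(\cN_2)$). The subadditivity direction $\widehat\theta_2(\cN_1\ox\cN_2)\leq\widehat\theta_2(\cN_1)+\widehat\theta_2(\cN_2)$ follows exactly as in Lemma~\ref{prop: subadditivity}: if $\cE_1,\cE_2$ are optimizers for $\widehat\theta_2(\cN_1),\widehat\theta_2(\cN_2)$, then $\mana(\cE_1\ox\cE_2)\leq\mana(\cE_1)+\mana(\cE_2)\leq 0$ by additivity of mana under tensor product (from~\cite{Wang2019channelmagic}), so $\cE_1\ox\cE_2\in\bcV_\mana$ is feasible for $\widehat\theta_2(\cN_1\ox\cN_2)$, and then the additivity of the geometric \Renyi channel divergence under tensor product (Lemma~\ref{lem: maximal Renyi channel divergence additivity}) gives $\widehat\theta_2(\cN_1\ox\cN_2)\leq\widehat D_2(\cN_1\ox\cN_2\|\cE_1\ox\cE_2)=\widehat D_2(\cN_1\|\cE_1)+\widehat D_2(\cN_2\|\cE_2)=\widehat\theta_2(\cN_1)+\widehat\theta_2(\cN_2)$.

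For the superadditivity direction, the approach I would take is to work with the SDP dual of $\widehat\theta_2$ (specializing Lemma~\ref{prop: SDP formula for maximal renyi thauma of a channel} at $\ell=0$, so $\alpha=2$), where the closed-form expression gives $\widehat\theta_2(\cN)=\log\min\{\|\tr_B(J_\cN J_\cE^{-1}J_\cN)\|_\infty\,|\,\cE\in\bcV_\mana\}$ using the representation $G_{-1}(X,Y)=XY^{-1}X$. The key idea is to take an optimal $\cE_{12}$ for $\widehat\theta_2(\cN_1\ox\cN_2)$ and produce feasible solutions for the individual problems by a partial-trace/restriction argument: one fixes an optimal input or dual vector on one tensor factor and uses it to define a subchannel on the other. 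More precisely, following the standard pattern for proving additivity of $\alpha=2$ quantities (as for $C_\zeta$-type bounds and the max-Thauma additivity proof in~\cite{Wang2019channelmagic}), I would write the $\alpha=2$ Thauma as a maximization over dual variables, take the product of optimal dual certificates for $\cN_1$ and $\cN_2$, and verify that this product is dual-feasible for $\cN_1\ox\cN_2$ — using that the mana constraint $\mana(\cE)\leq 0$ (equivalently the Wigner $\ell_1$ bound $\sum_{\bv_B}|\tr N_0(\mA_{\bu_A}\ox\mA_{\bv_B})|/d_B\leq 1$ for all $\bu_A$) tensorizes, since the Heisenberg-Weyl/phase-point operators factorize as $\mA_{\bu_A\bu_B}=\mA_{\bu_A}\ox\mA_{\bu_B}$ and the maximum of a product of $\ell_1$ norms equals the product of maxima.

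Concretely the key steps, in order, are: (1) record the subadditivity direction via Lemma~\ref{prop: subadditivity} with $\alpha=2$; (2) write down the explicit $\alpha=2$ formula $\widehat\theta_2(\cN)=\log\min_{\cE\in\bcV_\mana}\|\tr_B(J_\cN J_\cE^{-1}J_\cN)\|_\infty$ and its SDP dual; (3) given optimal feasible points $\{M_k,N_{0,k},y_k\}$ for $\cN_k$, $k=1,2$, form the tensor products $M_1\ox M_2$, $N_{0,1}\ox N_{0,2}$, $y_1y_2$ and check each SDP constraint in~\eqref{eq: SDP formula for maximal renyi thauma of a channel}: the block-matrix positivity constraints are preserved under tensor product (Kronecker product of PSD block matrices with matching structure, using $\ell=0$), the constraint $y\1-\tr_B M$ tensorizes because $\tr_{B_1B_2}(M_1\ox M_2)=(\tr_{B_1}M_1)\ox(\tr_{B_2}M_2)\leq(y_1\1)\ox(y_2\1)$, and the mana constraint tensorizes by the phase-point operator factorization; (4) conclude $\widehat\theta_2(\cN_1\ox\cN_2)\leq\log(y_1y_2)=\widehat\theta_2(\cN_1)+\widehat\theta_2(\cN_2)$ in the primal-min form, which combined with (1) — wait, one must be careful about direction: the SDP is a minimization, so tensoring feasible points gives $\leq$, which is again subadditivity. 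The genuine superadditivity must instead come from the dual (maximization) SDP, so step (3) should be carried out on the dual certificates. The main obstacle I anticipate is precisely this: verifying that the product of dual-optimal certificates remains dual-feasible, in particular that the mana/Wigner constraint behaves correctly on the dual side and that no cross terms spoil the block-PSD conditions — this is where the restriction to $\alpha=2$ is essential, since for general $\alpha$ the matrix geometric mean SDP uses $\ell$ auxiliary variables whose tensor products do not interact cleanly, which is exactly why the lemma is only stated for $\alpha=2$.
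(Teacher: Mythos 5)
Your subadditivity direction is exactly the paper's: tensor the optimal $\cE_1,\cE_2\in\bcV_{\mana}$, use additivity of mana and of the geometric R\'enyi channel divergence (Lemma~\ref{lem: maximal Renyi channel divergence additivity}), done. For superadditivity you also correctly identify the paper's strategy in outline — pass to the Lagrangian dual of the $\alpha=2$ SDP in Lemma~\ref{prop: SDP formula for maximal renyi thauma of a channel} and tensor optimal dual certificates — but your plan has a genuine gap: you locate the difficulty in \emph{feasibility} of the tensored certificate (the block-PSD conditions and the Wigner constraint), whereas that part is routine (Kronecker products of PSD block matrices are PSD, and the phase-point operators factorize so the constraint $|\tr Z(\mA_{\bu}\ox\mA_{\bv})/d_A|\leq f_{\bu}$ tensorizes). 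What your proposal never addresses is the \emph{objective}. The dual objective is a difference, $\tr[J_{\cN}(K+K^\dagger)]-\sum_{\bu}f_{\bu}$, and plugging in $K_1\ox K_2$, $Z_1\ox Z_2$, $f^1_{\bu}f^2_{\bv}$, $\rho_1\ox\rho_2$ yields $2\,\mathrm{Re}\big(\tr[J_{\cN_1}K_1]\,\tr[J_{\cN_2}K_2]\big)-\big(\sum_{\bu}f^1_{\bu}\big)\big(\sum_{\bv}f^2_{\bv}\big)$, which is not obviously at least the product $\big(2|\tr J_{\cN_1}K_1|-\sum_{\bu}f^1_{\bu}\big)\big(2|\tr J_{\cN_2}K_2|-\sum_{\bv}f^2_{\bv}\big)$ of the two optimal values. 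So tensoring dual-feasible points does not by itself give superadditivity.

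The paper closes this hole with two normalization steps that your plan is missing: (i) a global phase rotation of $K$ so that $\tr[J_{\cN}K]$ is real and nonnegative, and (ii) the rescaling $K\to K/w$, $Z\to Z/w^2$, $f_{\bu}\to f_{\bu}/w^2$, which preserves feasibility; optimizing over $w$ converts the dual value into the fractional form $\log\max\,|\tr J_{\cN}K|^2/\big(\sum_{\bu}f_{\bu}\big)$. In this quadratic-over-linear form the objective \emph{is} multiplicative under tensoring of certificates, and superadditivity follows at once. Equivalently, the rescaling shows that at a dual optimum the two terms balance, $|\tr J_{\cN}K|=\sum_{\bu}f_{\bu}$, after which the naive tensored value equals the product of the individual optima. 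This reformulation — not the absence of the auxiliary variables $N_1,\dots,N_\ell$ — is where the restriction to $\alpha=2$ really enters (the objective becomes quadratic in $K$, enabling the scaling trick), so your closing remark about why the lemma holds only at $\alpha=2$ points at the wrong mechanism, and your appeal to the ``standard pattern'' for $\alpha=2$ additivity gestures at the needed step without supplying it.
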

\begin{proof}
We first prove the sub-additivity. Suppose the optimal solution of $\widehat \theta_2(\cN_1)$ and $\widehat \theta_2(\cN_2)$ are taken at $\cE_1$ and $\cE_2$ respectively. Then we have $\mana(\cE_1\ox \cE_2) = \cM(\cE_1) + \cM(\cE_2) \leq 0$ since mana is additive under tensor product~\cite[Proposition 4]{Wang2019channelmagic}. This implies that $\cE_1\ox \cE_2$ is a feasible solution for $\widehat \theta_2(\cN_1\ox \cN_2)$. Thus we have
\begin{align}
   \widehat \theta_2(\cN_1\ox \cN_2) \leq \widehat D_2(\cN_1\ox \cN_2\|\cE_1\ox \cE_2) = \widehat D_2(\cN_1\|\cE_1) + \widehat D_2(\cN_2\|\cE_2) = \widehat \theta_2(\cN_1) + \widehat \theta_2(\cN_2),
 \end{align}
 where the first equality follows from Lemma~\ref{lem: maximal Renyi channel divergence additivity} and the second equality follows from the optimality assumption of $\cE_1$ and $\cE_2$. 

We now show the super-additivity by utilizing the dual formula of~\eqref{eq: SDP formula for maximal renyi thauma of a channel}.
According to the Lagrangian method, we have the dual problem as
\begin{gather}
  \widehat \theta_{2}(\cN)= \log \max  \tr \left[J_{\cN} \left(K+K^\dagger\right) \right]- \sum_{\bu} f_{\bu} \quad \text{s.t.} \notag\\
    \begin{bmatrix}
    \rho_A \ox \1_B &  K\\ {K}^\dagger & Z
  \end{bmatrix} \geq 0,
  |\tr Z (\mA_{\bu} \ox \mA_{\bv})/d_A| \leq f_{\bu},\forall \bu,\bv, \tr \rho = 1.\label{eq: additivity tmp1}
\end{gather}
It is easy to check that the strong duality holds.
Note that if we replace $K$ as $x K$ with $|x| = 1$, the optimization is unchanged. Thus we can choose scalar $x = \tr (J_{\cN}K)^\dagger/ |\tr J_{\cN}K|$ to make the term $\tr J_{\cN} (xK) = |\tr J_{\cN}K|$ to a real scalar. Thus optimization~\eqref{eq: additivity tmp1} is equivalent to
\begin{gather}
  \widehat \theta_{2}(\cN)= \log \max 2 |\tr J_{\cN} K |- \sum_{\bu} f_{\bu} \quad \text{s.t.} \notag\\
    \begin{bmatrix}
    \rho_A \ox \1_B &  K\\ {K}^\dagger & Z
  \end{bmatrix} \geq 0,
  |\tr Z (\mA_{\bu} \ox \mA_{\bv})/d_A| \leq f_{\bu},\forall \bu,\bv, \tr \rho = 1.\label{eq: additivity tmp2}
\end{gather}
Again, by replacing $\widetilde K = K/w$, $\widetilde Z = Z/w^2$ and $\widetilde f_{\bu} = f_{\bu}/w^2$, we have 
\begin{gather}
  \widehat \theta_{2}(\cN)= \log \max 2 w |\tr J_{\cN} \widetilde K |- w^2 \sum_{\bu} \widetilde f_{\bu} \quad \text{s.t.} \notag\\
    \begin{bmatrix}
    \rho_A \ox \1_B &  \widetilde K\\ {\widetilde K}^\dagger & \widetilde Z
  \end{bmatrix} \geq 0,
  |\tr \widetilde Z (\mA_{\bu} \ox \mA_{\bv})/d_A| \leq \widetilde f_{\bu},\forall \bu,\bv, \tr \rho = 1.\label{eq: additivity tmp3}
\end{gather}
For any fixed $|\tr J_{\cN} \widetilde K |$ and $\sum_{\bu} \widetilde f_{\bu}$, we can quickly check that the optimal solution of the objective function is always taken at $w = |\tr J_{\cN} \widetilde K |/ (\sum_{\bu} \widetilde f_{\bu})$ with the optimal value $|\tr J_{\cN} \widetilde K |^2/ (\sum_{\bu} \widetilde f_{\bu})$. Thus the optimization~\eqref{eq: additivity tmp3} is equivalent to
\begin{gather}
  \widehat \theta_{2}(\cN)= \log \max |\tr J_{\cN} K |^2/ \left(\sum\nolimits_{\bu} f_{\bu}\right) \quad \text{s.t.} \notag\\
    \begin{bmatrix}
    \rho_A \ox \1_B &  K\\ {K}^\dagger & Z
  \end{bmatrix} \geq 0,
  |\tr Z (\mA_{\bu} \ox \mA_{\bv})/d_A| \leq f_{\bu},\forall \bu,\bv, \tr \rho = 1. \label{eq: magic additivity proof tmp1}
\end{gather}
Suppose the optimal solution of $\widehat \theta_2(\cN_1)$ and $\widehat \theta_2(\cN_2)$ are taken at $\{K_1,Z_1, f^1_{\bu},\rho_1\}$ and $\{K_2,Z_2, f^2_{\bv},\rho_2\}$ respectively. We can check that their tensor product $\{K_1\ox K_2, Z_1 \ox Z_2, f^1_{\bu} f^2_{\bv},\rho_1\ox \rho_2\}$ forms a feasible solution for $\widehat \theta_2(\cN_1\ox \cN_2)$ in~\eqref{eq: magic additivity proof tmp1}. Thus we have
\begin{align}
  \widehat \theta_2(\cN_1\ox \cN_2) \geq \log \frac{|\tr (J_{\cN_1}\ox J_{\cN_2}) (K_1\ox K_2)|^2}{\sum_{\bu,\bv} f_{\bu}^1 f_{\bv}^2} =  \widehat \theta_2(\cN_1) + \widehat \theta_2(\cN_2),
\end{align}
which completes the proof.
\end{proof}

\begin{remark}
  Based on numerical observations, we expect that the additivity of the geometric \Renyi Thauma holds for general $\a \in (1,2]$. However, the current proof seems to only work for $\a = 2$.
\end{remark}

\subsection{Magic state generation capacity}
\label{Magic state generation capacity}

In~\cite{Wang2019channelmagic}, the authors study an information task which uses a quantum channel to produce magic states and quantifies the ``magic of channel'' by the amount of magic state generated per channel use. Here, we simply dub it as the \emph{magic state generation capacity}~\footnote{This is analogous to the name of entanglement/coherence generation capacity in the existing literature.} as it characterizes the capability of a channel to generate magic states. The most general protocol to produce a magic state can be proceeded as follows (see Figure~\ref{fig: magic state generation scheme}). 

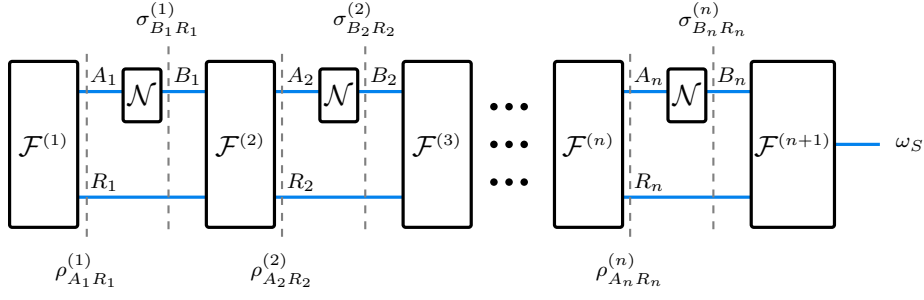
\begin{figure}[H]
\centering
\begin{tikzpicture}

\draw[very thick, colorthree] (-5.6,-0.7) -- (-3.9,-0.7) node[black,midway,shift={(-0.5,0.2)}] {\scriptsize $R_1$};
\draw[very thick, colorthree] (-3,-0.7) -- (-1.3,-0.7) node[black,midway,shift={(-0.5,0.2)}] {\scriptsize $R_2$};
\draw[very thick, colorthree] (1.6,-0.7) -- (3.3,-0.7) node[black,midway,shift={(-0.5,0.2)}] {\scriptsize $R_n$};

\draw[very thick, colorthree] (-5.6,0.7) -- (-5,0.7)  node[black,midway,shift={(0.05,0.2)}] {\scriptsize $A_1$};
\draw[very thick, colorthree] (-3,0.7) -- (-2.4,0.7) node[black,midway,shift={(0.05,0.2)}] {\scriptsize $A_2$};
\draw[very thick, colorthree] (1.6,0.7) -- (2.2,0.7) node[black,midway,shift={(0.05,0.2)}] {\scriptsize $A_n$};

\draw[very thick, colorthree] (-4.5,0.7) -- (-3.9,0.7) node[black,midway,shift={(0.05,0.2)}] {\scriptsize $B_1$};
\draw[very thick, colorthree] (-1.9,0.7) -- (-1.3,0.7) node[black,midway,shift={(0.05,0.2)}] {\scriptsize $B_2$};
\draw[very thick, colorthree] (2.7,0.7) -- (3.3,0.7) node[black,midway,shift={(0.05,0.2)}] {\scriptsize $B_n$};

\draw[very thick, colorthree] (4.4,0) -- (5,0);

\node at (5.4,0) {\scriptsize $\omega_S$};

\draw[thick,gray,dashed] (-5.48,1.2) -- (-5.48,-1.3) node[black,below] {\scriptsize $\rho^{(1)}_{A_1R_1}$};
\draw[thick,gray,dashed] (-2.9,1.2) -- (-2.9,-1.3) node[black,below] {\scriptsize $\rho^{(2)}_{A_2R_2}$};
\draw[thick,gray,dashed] (1.7,1.2) -- (1.7,-1.3) node[black,below] {\scriptsize $\rho^{(n)}_{A_nR_n}$};

\draw[thick,gray,dashed] (-4.4,1.3) node[black,above] {\scriptsize $\sigma^{(1)}_{B_1R_1}$} -- (-4.4,-1.2);
\draw[thick,gray,dashed] (-1.8,1.3) node[black,above] {\scriptsize $\sigma^{(2)}_{B_2R_2}$} -- (-1.8,-1.2);
\draw[thick,gray,dashed] (2.8,1.3) node[black,above] {\scriptsize $\sigma^{(n)}_{B_nR_n}$} -- (2.8,-1.2);

\draw[very thick, rounded corners = 1] (-6.5,1.1) rectangle (-5.6,-1.1) node[midway] { $\cF^{\scriptscriptstyle (1)}$};
\draw[very thick, rounded corners = 1] (-3.9,1.1) rectangle (-3,-1.1) node[midway] { $\cF^{\scriptscriptstyle (2)}$};
\draw[very thick, rounded corners = 1] (-1.3,1.1) rectangle (-0.4,-1.1) node[midway] { $\cF^{\scriptscriptstyle (3)}$};
\draw[very thick, rounded corners = 1] (0.7,1.1) rectangle (1.6,-1.1) node[midway] { $\cF^{\scriptscriptstyle (n)}$};
\draw[very thick, rounded corners = 1] (3.3,1.1) rectangle (4.4,-1.1) node[midway] { $\cF^{\scriptscriptstyle (n+1)}$};

\begin{scope}[shift={(0,0.7)}]
\draw[very thick, rounded corners = 1] (-5,0.3) rectangle (-4.5,-0.4) node[midway] {$\cN$};
\draw[very thick, rounded corners = 1] (-2.4,0.3) rectangle (-1.9,-0.4) node[midway] {$\cN$};
\draw[very thick, rounded corners = 1] (2.2,0.3) rectangle (2.7,-0.4) node[midway] {$\cN$};
\end{scope}
\node at (-0.1,0) [circle,fill,inner sep=1pt]{};
\node at (0.1,0) [circle,fill,inner sep=1pt]{};
\node at (0.3,0) [circle,fill,inner sep=1pt]{};

\node at (-0.1,0.5) [circle,fill,inner sep=1pt]{};
\node at (0.1,0.5) [circle,fill,inner sep=1pt]{};
\node at (0.3,0.5) [circle,fill,inner sep=1pt]{};

\node at (-0.1,-0.5) [circle,fill,inner sep=1pt]{};
\node at (0.1,-0.5) [circle,fill,inner sep=1pt]{};
\node at (0.3,-0.5) [circle,fill,inner sep=1pt]{};

\end{tikzpicture}
\caption{\small A schematic diagram for the magic state generation protocol that uses a quantum channel $n$ times. Every channel use is interleaved by a free CPWP operation $\cF^{\scriptscriptstyle (i)}$. The goal of such a protocol is to produce an approximate magic state $\o_{S}$ in the end.}
 \label{fig: magic state generation scheme}
 \end{figure}

First, we start from preparing a quantum state $\rho^{\scriptscriptstyle{(1)}}_{R_1 A_1}$ via a free CPWP operation $\cF^{\scriptscriptstyle{(1)}}_{\emptyset \to R_1A_1}$. Then we apply the given channel $\cN$ on system $A_1$ and obtain a quantum state $\sigma_{R_1B_1}^{\scriptscriptstyle{(1)}} = \cN_{A_1\to B_1}(\rho^{\scriptscriptstyle{(1)}}_{R_1 A_1})$. After this, we perform another free CPWP operation $\cF^{\scriptscriptstyle{(2)}}_{R_1 B_1 \to R_2A_2}$ and then apply the channel $\cN$ again. These processes can be conducted iteratively $n$ times, and we obtain a quantum state $\sigma^{\scriptscriptstyle{(n)}}_{R_n B_n}$. At the end of such a protocol, a final free CPWP operation $\cF^{\scriptscriptstyle{(n+1)}}_{R_nB_n \to S}$ is performed, producing a quantum state $\omega_S$. 

For any error tolerance $\ve \in [0,1]$, the above procedure defines an $(n,k,\ve)$ $\psi$-magic state generation protocol, if the final state $\omega$ has a sufficiently high fidelity with $k$ copies of the target magic state $\psi$,
\begin{align}
  \tr \omega_S \ket{\psi}\bra{\psi}^{\ox k} \geq 1-\ve.
\end{align}
A rate $r$ is achievable if for all $\ve \in (0,1]$   and $\delta > 0$ and sufficiently large $n$, there exists an $(n,n(r-\delta),\ve)$ $\psi$-magic state generation protocol as depicted above. Then the $\psi$-\emph{magic state generation capacity} of the channel $\cN$ is defined as the supremum of all achievable rates and is denoted as $C_{\psi}(\cN)$. On the other hand, $r_0$ is called a strong converse rate if for every $r > r_0$, the fidelity $1-\ve$ of any generation protocol will decays to zero as the number of rounds $n$ increases. The strong converse capacity, denoted as $C_{\psi}^{\dagger}(\cN)$ is the infimum of all strong converse rates. By definition, we have $C_{\psi}(\cN) \leq C_{\psi}^{\dagger}(\cN)$ in general.

\vspace{0.2cm}
Based on the amortization inequality in Lemma~\ref{prop: amortization}, a similar argument as~\cite[Proposition 20]{Wang2019channelmagic} will give us the following improved bound on the magic state generation capacity:
\begin{theorem}[Main result 7]
\label{thm: main result magic}
  For any quantum channel $\cN$ and $\a \in (1,2]$, it holds
  \begin{align}
    C_\psi(\cN)  \leq C_\psi^{\dagger}(\cN) \leq \frac{\widehat \theta_{\a}(\cN)}{\theta_{\min}(\psi)} \leq \frac{\theta_{\max}(\cN)}{\theta_{\min}(\psi)},    
  \end{align}
  where $\theta_{\min}(\psi) = \min_{\sigma \in \cW} D_{\min}(\psi\|\sigma)$ is the min-Thauma of the magic state $\psi$. 
\end{theorem}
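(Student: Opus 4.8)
The plan is to follow the amortization-based converse argument of~\cite[Proposition 20]{Wang2019channelmagic}, now powered by the geometric \Renyi Thauma rather than the max-Thauma. The first inequality $C_\psi(\cN) \leq C_\psi^{\dagger}(\cN)$ holds by definition. The last inequality $\widehat\theta_{\a}(\cN) \leq \theta_{\max}(\cN)$ is immediate from Lemma~\ref{thm: divergence chain inequality}: for any feasible $\cE \in \bcV_{\mana}$ we have $\widehat D_{\a}(\cN\|\cE) \leq D_{\max}(\cN\|\cE)$ (the channel divergence inequality follows by taking the same optimal input state), so minimizing over $\cE$ preserves the ordering. The substance is the middle inequality $C_\psi^{\dagger}(\cN) \leq \widehat\theta_{\a}(\cN)/\theta_{\min}(\psi)$.

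For that, I would fix an $(n,k,\ve)$ $\psi$-magic state generation protocol as in Figure~\ref{fig: magic state generation scheme}, with intermediate states $\rho^{\scriptscriptstyle(i)}_{R_iA_i}$, $\sigma^{\scriptscriptstyle(i)}_{R_iB_i}$ and output $\omega_S$. The amortization inequality in Lemma~\ref{prop: amortization}, namely $\widehat\theta_{\a}(\cN_{A\to B}(\rho_{RA})) - \widehat\theta_{\a}(\rho_{RA}) \leq \widehat\theta_{\a}(\cN)$, together with the monotonicity of $\widehat\theta_{\a}$ under CPWP operations (Lemma~\ref{prop: reduction monotone faithful}) and the fact that $\widehat\theta_{\a}$ of a state prepared by a free CPWP operation from nothing is zero (faithfulness), yields a telescoping bound $\widehat\theta_{\a}(\omega_S) \leq n\,\widehat\theta_{\a}(\cN)$; this is the exact analog of the chain of inequalities in the proof of Theorem~\ref{thm: main result quantum assisted}. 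On the other side, I would lower-bound $\widehat\theta_{\a}(\omega_S)$ using the data-processing inequality for $\widehat D_{\a}$ applied to the binary measurement $\{\ket{\psi}\bra{\psi}^{\ox k}, \1 - \ket{\psi}\bra{\psi}^{\ox k}\}$: since $\tr\,\omega_S\,\ket{\psi}\bra{\psi}^{\ox k} \geq 1-\ve$ and, for any $\sigma \in \cW$, $\tr\,\sigma\,\ket{\psi}\bra{\psi}^{\ox k} \leq 2^{-k\,\theta_{\min}(\psi)}$ (by definition of $\theta_{\min}(\psi) = \min_{\sigma\in\cW} D_{\min}(\psi\|\sigma)$ and sub-additivity/tensorization of $D_{\min}$ on $\psi^{\ox k}$), the binary classical \Renyi divergence $\delta_{\a}$ gives $\widehat\theta_{\a}(\omega_S) \geq \delta_{\a}(\ve \,\|\, 1 - 2^{-k\theta_{\min}(\psi)}) \geq \frac{\a}{\a-1}\log(1-\ve) + k\,\theta_{\min}(\psi)$, using the same monotonicity of $\delta_{\a}$ in its arguments invoked in~\eqref{eq: quantum twoway main theorem tmp1}.

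Combining the two bounds gives $\frac{\a}{\a-1}\log(1-\ve) + k\,\theta_{\min}(\psi) \leq n\,\widehat\theta_{\a}(\cN)$, i.e. $1 - \ve \leq 2^{-n\left(\frac{\a-1}{\a}\right)\left[\frac{k}{n}\theta_{\min}(\psi) - \widehat\theta_{\a}(\cN)\right]}$. Writing $r = k/n$, this shows that whenever the rate satisfies $r\,\theta_{\min}(\psi) > \widehat\theta_{\a}(\cN)$, equivalently $r > \widehat\theta_{\a}(\cN)/\theta_{\min}(\psi)$, the fidelity $1-\ve$ decays exponentially in $n$; hence $C_\psi^{\dagger}(\cN) \leq \widehat\theta_{\a}(\cN)/\theta_{\min}(\psi)$, completing the proof.

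The main obstacle I anticipate is the lower bound $\widehat\theta_{\a}(\omega_S) \geq k\,\theta_{\min}(\psi) + \frac{\a}{\a-1}\log(1-\ve)$ — specifically, verifying that for every $\sigma \in \cW$ one has $\tr\,\sigma\,\ket{\psi}\bra{\psi}^{\ox k} \leq 2^{-k\,\theta_{\min}(\psi)}$. This requires that the optimal $\sigma$-overlap with $\psi^{\ox k}$ is controlled by the single-copy quantity $\theta_{\min}(\psi)$, which is not automatic since $\cW$ is not closed under tensor factorization in an obvious way; the argument in~\cite{Wang2019channelmagic} handles this via the multiplicativity properties of the mana constraint and $D_{\min}$, and I would reproduce that reasoning here. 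Everything else is a routine transcription of the amortization/data-processing template already used for Theorems~\ref{thm: main result quantum assisted} and~\ref{thm: main result classical}.
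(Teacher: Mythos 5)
Your proposal is correct and follows essentially the same route as the paper's proof: amortization (Lemma~\ref{prop: amortization}) plus CPWP monotonicity for the telescoping bound $\widehat\theta_{\a}(\omega_S)\leq n\,\widehat\theta_{\a}(\cN)$, then data processing through the binary measurement and the monotonicity of $\delta_\a$ for the lower bound, with the overlap estimate $\tr\,\sigma\,\ket{\psi}\bra{\psi}^{\ox k}\leq 2^{-k\theta_{\min}(\psi)}$ taken, exactly as in the paper, from~\cite{Wang2018magicstates}. The "obstacle" you flag is thus handled the same way the paper handles it, by citation rather than re-derivation, so there is no gap.
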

\begin{proof}
The first inequality holds by definition. The last inequality is a direct consequence of the relation $\widehat D_{\a}(\rho\|\sigma) \leq D_{\max}(\rho\|\sigma)$ proved in Lemma~\ref{thm: divergence chain inequality}. It remains to show the second inequality. The main ingredient to prove this is the amortization property of the geometric \Renyi Thauma in Lemma~\ref{prop: amortization}. Consider $n$ round magic state generation protocol as shown in Figure~\ref{fig: magic state generation scheme}. For each round,  denote the input state of the channel $\cN$ as $\rho^{\scriptscriptstyle (i)}_{R_iA_i}$ and the output state as $\sigma^{\scriptscriptstyle (i)}_{R_iB_i}$. The final state after $n$ round operations is denoted as $\omega_S$. Thus we have
\begin{align}
  \widehat \theta_{\a}(\omega_S) & \leq \widehat \theta_{\a}(\sigma^{\scriptscriptstyle (n)}_{R_nB_n})\\
  & = \widehat \theta_{\a}(\sigma^{\scriptscriptstyle (n)}_{R_nB_n}) - \widehat \theta_{\a}(\rho^{\scriptscriptstyle (1)}_{R_1A_1})\\
  & \leq \widehat \theta_{\a}(\sigma^{\scriptscriptstyle (n)}_{R_nB_n}) + \sum\nolimits_{i=1}^{n-1} \left[\widehat \theta_{\a}(\sigma^{\scriptscriptstyle (i)}_{R_iB_i}) - \widehat \theta_{\a}(\rho^{\scriptscriptstyle (i+1)}_{R_iA_i}) \right] - \widehat \theta_{\a}(\rho^{\scriptscriptstyle (1)}_{R_1A_1})\\
  & = \sum\nolimits_{i=1}^{n} \left[\widehat \theta_{\a}(\sigma^{\scriptscriptstyle (i)}_{R_iB_i}) - \widehat \theta_{\a}(\rho^{\scriptscriptstyle (i)}_{R_iA_i})\right]\\
  & \leq n \widehat \theta_{\a}(\cN).\label{magic renyi strong converse tmp}
\end{align}
The first and third lines follow from the monotonicity of the geometric \Renyi Thauma of a quantum state under CPWP operations. The second line follows since  $\widehat \theta_{\a}(\rho^{\scriptscriptstyle (1)}_{R_1A_1}) = 0$. The last line follows from Lemma~\ref{prop: amortization}.

For any $\psi$-magic state generation protocol with triplet $(n,k,\ve)$, denote $r = k/n$. By definition, we have
$\tr \left[\omega_S \ket{\psi}\bra{\psi}^{\ox k}\right] \geq 1-\ve$ . Moreover, for any $\sigma_S \in \cW$, it holds
$\tr \left[\sigma_S \ket{\psi}\bra{\psi}^{\ox k}\right]  \leq 2^{-nr \theta_{\min}(\psi)}$ \cite{Wang2018magicstates}. Without loss of generality, we can assume that $\ve \leq 1-2^{-nr \theta_{\min}(\psi)}$. Otherwise, the strong converse would already hold for any rates above the capacity since $1-\ve < 2^{-nr \theta_{\min}(\psi)}$. Thus for any $\sigma_S \in \cW$, we have the inequalities
\begin{align}
1- \tr  \left[\omega_S \ket{\psi}\bra{\psi}^{\ox k}\right] \leq \ve \leq 1- 2^{-nr \theta_{\min}(\psi)} \leq 1 - \tr \left[\sigma_S \ket{\psi}\bra{\psi}^{\ox k}\right].
\end{align}
Consider a quantum channel $\cN(\gamma) = \left[\tr \ket{\psi}\bra{\psi}^{\ox k} \gamma\right] \ket{0}\bra{0} + \left[ 1 - \tr \ket{\psi}\bra{\psi}^{\ox k} \gamma\right] \ket{1}\bra{1}$. Due to the data-processing inequality, we have
\begin{align}
\widehat D_\a(\omega\|\sigma) & \geq \widehat D_\a(\cN(\omega)\|\cN(\sigma))\notag\\
 & = \delta_\a\Big( 1-\tr \ket{\psi}\bra{\psi}^{\ox k} \omega \Big\|1-\tr \ket{\psi}\bra{\psi}^{\ox k} \sigma\Big) \geq \delta_\a\left(\ve\, \Big\|1- 2^{-nr \theta_{\min}(\psi)}\right),
\end{align}
where $\delta_\a(p\|q)\equiv \frac{1}{\a-1} \log \big[p^\a q^{1-\a} + (1-p)^\a (1-q)^{1-\a}\big]$. The last inequality follows from the monotonicity property that $\delta_\a(p'\|q) \leq \delta_\a(p\|q)$ if $p \leq p' \leq q$ and $\delta_\a(p\|q') \leq \delta_\a(p\|q)$ if $p \leq q' \leq q$~\cite{Polyanskiy2010b}.
Then we have
\begin{align}
\widehat \theta_\a(\omega_S) = \min_{\sigma \in \cW} & \widehat D_{\a}(\omega\|\sigma)  \geq \delta_\a \left(\ve\Big\|1- 2^{-nr \theta_{\min}(\psi)}\right)\notag\\
& \geq \frac{1}{\a-1} \log (1-\ve)^\a \Big(2^{-nr \theta_{\min}(\psi)}\Big)^{1-\a} = \frac{\a}{\a-1} \log (1-\ve) + nr \theta_{\min}(\psi).
\label{magic renyi strong converse tmp1}
\end{align}
Combining Eqs.~\eqref{magic renyi strong converse tmp} and~\eqref{magic renyi strong converse tmp1}, we have
\begin{align}
\frac{\a}{\a-1} \log (1-\ve) + nr \theta_{\min}(\psi) \leq n \widehat \theta_{\a}(\cN),
\end{align}
which is equivalent to 
\begin{align}
1-\ve \leq 2^{-n \theta_{\min}(\psi)\left(\frac{\a-1}{\a}\right)\left[r - \widehat \theta_{\a}(\cN)/\theta_{\min}(\psi)\right]}.
\end{align}
This implies that if the generation rate $r$ is strictly larger than $\widehat \theta_{\a}(\cN)/\theta_{\min}(\psi)$, the fidelity of the generation protocol $1-\ve$ decays exponentially fast to zero as the number of rounds $n$ increases. Or equivalently, we have $C_\psi^{\dagger}(\cN) \leq {\widehat \theta_{\a}(\cN)}/{\theta_{\min}(\psi)}$ and completes the proof.
\end{proof}

\begin{remark}
If the target magic state is $T$ state or $H_+$ state, we have $\theta_{\min}(T)= \log (1+2\sin(\pi/18))$ and $\theta_{\min}(H_+) = \log (3 - \sqrt{3})$, respectively~\cite[Proposition 2]{Wang2018magicstates}.
\end{remark}

\begin{figure}[H]
  \centering
  \begin{tikzpicture}
    \node at (0,0) {\includegraphics[width=6.5cm]{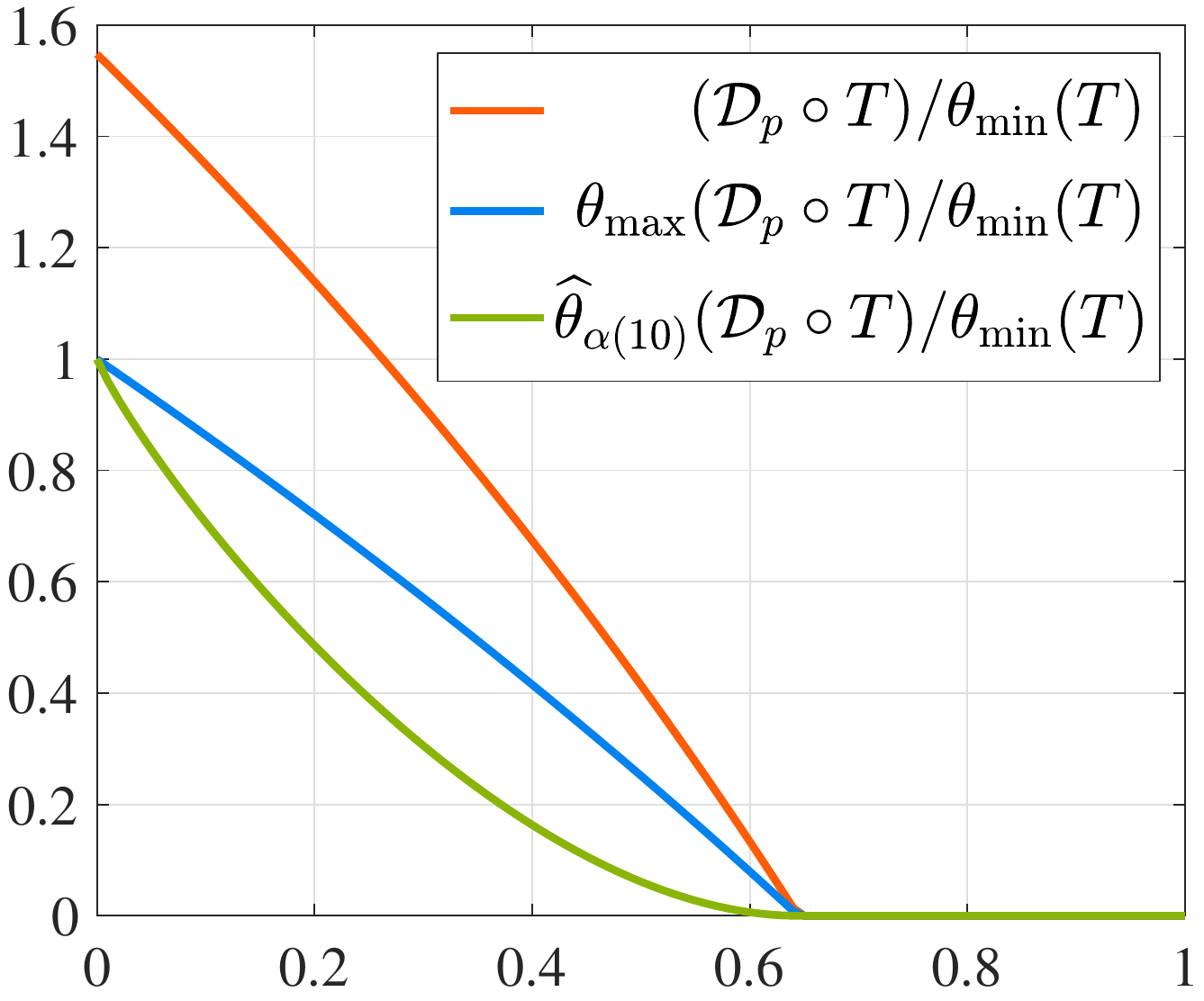}};
    \node at (0.25,2.12) {\normalsize $\mana$};
    \node at (0,-3) {$p$};
  \end{tikzpicture}
  \caption{\small Comparison of the strong converse bounds on the $T$-magic state generation capacity of the qutrit quantum channel $\cD_p \circ T$, where the depolarizing noise parameter $p \in [0,1]$ and $\alpha(10) = 1+2^{-10}$.}
  \label{fig: magic capacity bound compare}
\end{figure}

\vspace{0.2cm}

Consider a qutrit quantum channel $\cD_p\circ T$ composed by a $T$-gate with a qutrit depolarizing noise $\cD_p$. The above Figure~\ref{fig: magic capacity bound compare} compares different converse bounds on the $T$-magic state generation capacity of the channel $\cD_p\circ T$. It is clear that our new bound based on the geometric \Renyi divergence is significantly tighter than the others.

\subsection{Quantum channel synthesis}
\label{Quantum channel synthesis}

Another fundamental question in the resource theory of magic asks how many instances of a given quantum channel $\cN'$ are required to simulate another quantum channel $\cN$, when supplemented with free CPWP operations. Such a general scheme is illustrated in Figure~\ref{fig: channel synthesis}. Denote $S(\cN'\to \cN)$ as the smallest number of $\cN'$ channels required to implement the target channel $\cN$ exactly.

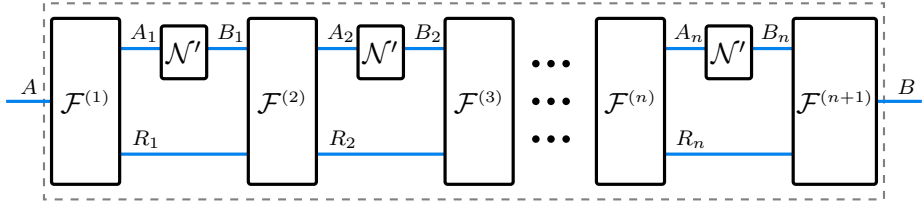
\begin{figure}[H]
\centering
\begin{tikzpicture}
\draw[thick,dashed, gray] (-6.6,1.3) rectangle (4.5,-1.3);
\draw[very thick, colorthree] (-5.6,-0.7) -- (-3.9,-0.7) node[black,midway,shift={(-0.5,0.2)}] {\scriptsize $R_1$};
\draw[very thick, colorthree] (-3,-0.7) -- (-1.3,-0.7) node[black,midway,shift={(-0.5,0.2)}] {\scriptsize $R_2$};
\draw[very thick, colorthree] (1.6,-0.7) -- (3.3,-0.7) node[black,midway,shift={(-0.5,0.2)}] {\scriptsize $R_n$};

\draw[very thick, colorthree] (-5.6,0.7) -- (-5.05,0.7)  node[black,midway,shift={(0.05,0.2)}] {\scriptsize $A_1$};
\draw[very thick, colorthree] (-3,0.7) -- (-2.45,0.7) node[black,midway,shift={(0.05,0.2)}] {\scriptsize $A_2$};
\draw[very thick, colorthree] (1.6,0.7) -- (2.15,0.7) node[black,midway,shift={(0.05,0.2)}] {\scriptsize $A_n$};

\draw[very thick, colorthree] (-4.45,0.7) -- (-3.9,0.7) node[black,midway,shift={(0.05,0.2)}] {\scriptsize $B_1$};
\draw[very thick, colorthree] (-1.85,0.7) -- (-1.3,0.7) node[black,midway,shift={(0.05,0.2)}] {\scriptsize $B_2$};
\draw[very thick, colorthree] (2.75,0.7) -- (3.3,0.7) node[black,midway,shift={(0.05,0.2)}] {\scriptsize $B_n$};

\draw[very thick, colorthree] (4.4,0) -- (5,0);
\node at (4.8,0.2) {\scriptsize $B$};
\draw[very thick, colorthree] (-6.5,0) -- (-7.1,0);
\node at (-6.8,0.2) {\scriptsize $A$};



\draw[very thick, rounded corners = 1] (-6.5,1.1) rectangle (-5.6,-1.1) node[midway] { $\cF^{\scriptscriptstyle (1)}$};
\draw[very thick, rounded corners = 1] (-3.9,1.1) rectangle (-3,-1.1) node[midway] { $\cF^{\scriptscriptstyle (2)}$};
\draw[very thick, rounded corners = 1] (-1.3,1.1) rectangle (-0.4,-1.1) node[midway] { $\cF^{\scriptscriptstyle (3)}$};
\draw[very thick, rounded corners = 1] (0.7,1.1) rectangle (1.6,-1.1) node[midway] { $\cF^{\scriptscriptstyle (n)}$};
\draw[very thick, rounded corners = 1] (3.3,1.1) rectangle (4.4,-1.1) node[midway] { $\cF^{\scriptscriptstyle (n+1)}$};

\begin{scope}[shift={(0,0.7)}]
\draw[very thick, rounded corners = 1] (-5.05,0.3) rectangle (-4.45,-0.4) node[midway] {$\cN'$};
\draw[very thick, rounded corners = 1] (-2.45,0.3) rectangle (-1.85,-0.4) node[midway] {$\cN'$};
\draw[very thick, rounded corners = 1] (2.15,0.3) rectangle (2.75,-0.4) node[midway] {$\cN'$};
\end{scope}
\node at (-0.1,0) [circle,fill,inner sep=1pt]{};
\node at (0.1,0) [circle,fill,inner sep=1pt]{};
\node at (0.3,0) [circle,fill,inner sep=1pt]{};

\node at (-0.1,0.5) [circle,fill,inner sep=1pt]{};
\node at (0.1,0.5) [circle,fill,inner sep=1pt]{};
\node at (0.3,0.5) [circle,fill,inner sep=1pt]{};

\node at (-0.1,-0.5) [circle,fill,inner sep=1pt]{};
\node at (0.1,-0.5) [circle,fill,inner sep=1pt]{};
\node at (0.3,-0.5) [circle,fill,inner sep=1pt]{};

\end{tikzpicture}
\caption{\small A schematic diagram for the quantum channel synthesis protocol that uses $n$ times of the channel $\cN'$. Every channel use is interleaved by a free CPWP operation $\cF^{\scriptscriptstyle (i)}$. The goal of such a protocol is to make the effective channel in the dashed box simulate a target channel $\cN$.}
\label{fig: channel synthesis}
 \end{figure}

The following result establishes new fundamental limits on the quantum channel synthesis problem by employing the geometric \Renyi Thauma of the resource and target channels, respectively.
\begin{theorem}
Let $\cN'$ and $\cN$ be two qudit quantum channels. Then the number of channel $\cN'$ required to implement $\cN$ is bounded from below as 
  \begin{align}
    S(\cN' \to \cN) \geq {\widehat \theta_{\a}(\cN)}/{\widehat \theta_{\a}(\cN')},\quad \forall \a \in (1,2].
  \end{align}
\end{theorem}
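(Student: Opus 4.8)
The plan is to combine the sub-additivity of the geometric \Renyi Thauma under channel composition (Lemma~\ref{prop: subadditivity}) with its monotonicity under CPWP superchannels (Lemma~\ref{prop: reduction monotone faithful}), exactly in the spirit of the resource-theoretic ``amortization'' arguments already used for max-Thauma in~\cite{Wang2019channelmagic}. Suppose $S(\cN'\to\cN) = n$, so that there is a protocol of the form depicted in Figure~\ref{fig: channel synthesis}: free CPWP operations $\cF^{\scriptscriptstyle(1)},\dots,\cF^{\scriptscriptstyle(n+1)}$ interleaved with $n$ uses of $\cN'$, whose overall effect is exactly the target channel $\cN$. First I would observe that this whole protocol can be viewed as a CPWP superchannel $\Gamma$ applied to the tensor power $(\cN')^{\ox n}$, i.e.\ $\cN = \Gamma\big((\cN')^{\ox n}\big)$; this is the standard fact that a sequence of free pre/post-processing maps around $n$ copies of a channel constitutes a free superchannel in the channel resource theory of magic.

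Then the bound follows in two lines. By monotonicity under CPWP superchannels (Lemma~\ref{prop: reduction monotone faithful}), $\widehat\theta_\a(\cN) = \widehat\theta_\a\big(\Gamma((\cN')^{\ox n})\big) \leq \widehat\theta_\a\big((\cN')^{\ox n}\big)$. Next, by iterating the sub-additivity under composition (Lemma~\ref{prop: subadditivity}) — or, for $\a = 2$, directly by the additivity under tensor product in Lemma~\ref{prop: additivity}, and for general $\a$ by applying sub-additivity $n-1$ times to the tensor product written as a composition of channels acting on disjoint systems — we get $\widehat\theta_\a\big((\cN')^{\ox n}\big) \leq n\,\widehat\theta_\a(\cN')$. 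Combining, $\widehat\theta_\a(\cN) \leq n\,\widehat\theta_\a(\cN') = S(\cN'\to\cN)\,\widehat\theta_\a(\cN')$. Assuming $\widehat\theta_\a(\cN') > 0$ (if $\cN'$ is CPWP it cannot synthesize any non-CPWP channel, and if $\cN$ is CPWP the inequality is trivial since the right side is nonnegative by faithfulness), dividing gives $S(\cN'\to\cN) \geq \widehat\theta_\a(\cN)/\widehat\theta_\a(\cN')$ for all $\a\in(1,2]$.

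I expect the only genuinely delicate point to be the first step: making precise that the synthesis protocol is a legitimate CPWP superchannel acting on $(\cN')^{\ox n}$, and in particular that tensoring $n$ copies of $\cN'$ (as opposed to composing them sequentially) is the right object and that the interleaving CPWP maps assemble into a single free superchannel. This is essentially bookkeeping — one identifies the comb/superchannel structure $\cF^{\scriptscriptstyle(n+1)}\circ(\cdot)\circ\cF^{\scriptscriptstyle(n)}\circ\cdots\circ(\cdot)\circ\cF^{\scriptscriptstyle(1)}$ and notes that composition of CPWP maps is CPWP~\cite[Proposition 5]{Wang2019channelmagic} — but it is where the hypotheses of Lemma~\ref{prop: reduction monotone faithful} need to be checked. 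Everything else is an immediate invocation of the already-established properties of $\widehat\theta_\a$, so the proof should be short; I would keep it to a few sentences, citing the relevant lemmas and the analogous argument for mana/max-Thauma in~\cite{Wang2019channelmagic}.
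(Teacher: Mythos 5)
There is a genuine gap at the first—and, as you yourself note, load-bearing—step. You want to view the synthesis protocol as a single CPWP superchannel $\Gamma$ acting on the tensor power, $\cN = \Gamma\big((\cN')^{\ox n}\big)$, and then invoke the monotonicity in Lemma~\ref{prop: reduction monotone faithful}. But that lemma (and the corresponding result in~\cite{Wang2019channelmagic} it relies on) concerns superchannels of the form $\cD\circ(\,\cdot\,\ox\id_M)\circ\cE$ with CPWP pre- and post-processing around a \emph{single} use. The protocol in Figure~\ref{fig: channel synthesis} is an $n$-round adaptive comb: the input to the $(i{+}1)$-th use of $\cN'$ is produced from the output of the $i$-th use. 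Such a sequential object cannot, in general, be rearranged into ``one CPWP pre-processing, then $(\cN')^{\ox n}$ in parallel, then one CPWP post-processing''—adaptive strategies are strictly more general than parallel ones (this is exactly why the paper proves amortization inequalities rather than appealing to superchannel monotonicity in the capacity theorems). So the identity $\cN=\Gamma\big((\cN')^{\ox n}\big)$ with $\Gamma$ a CPWP superchannel is not ``bookkeeping''; it is unjustified and false as a general reduction, and the rest of your argument has nothing to stand on without it.

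The fix is to work with the sequential structure directly, which is what the paper does: write $\cN = \cF^{\scriptscriptstyle (n+1)}\circ\cN'\circ\cF^{\scriptscriptstyle (n)}\circ\cdots\circ\cF^{\scriptscriptstyle (2)}\circ\cN'\circ\cF^{\scriptscriptstyle (1)}$, apply the sub-additivity under composition (Lemma~\ref{prop: subadditivity}) iteratively, and use faithfulness (Lemma~\ref{prop: reduction monotone faithful}) to conclude $\widehat\theta_\a(\cF^{\scriptscriptstyle (i)})=0$, giving $\widehat\theta_\a(\cN)\le n\,\widehat\theta_\a(\cN')$ for all $\a\in(1,2]$ with no detour through $(\cN')^{\ox n}$. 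A secondary remark: even granting your step 1, your step 2 cites Lemma~\ref{prop: additivity}, which is only proved for $\a=2$; the sub-additivity direction $\widehat\theta_\a(\cN_1\ox\cN_2)\le\widehat\theta_\a(\cN_1)+\widehat\theta_\a(\cN_2)$ does hold for all $\a\in(1,2]$ (tensor the optimal feasible subchannels and use Lemma~\ref{lem: maximal Renyi channel divergence additivity} plus additivity of mana), and your alternative route via compositions additionally needs $\widehat\theta_\a(\cN'\ox\id)\le\widehat\theta_\a(\cN')$, which is true but nowhere stated—so this part is fixable, unlike step 1 as written. Your handling of the division (noting $\widehat\theta_\a(\cN')>0$ unless the situation is trivial) is fine.
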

\begin{proof}
  Suppose the optimal simulation protocol requires to use the resource channel $n = S(\cN'\to \cN)$ times and the protocol is given by
  \begin{align}
    \cN = \cF^{\scriptscriptstyle (n+1)} \circ \cN' \circ \cF^{\scriptscriptstyle (n)} \circ \cdots \circ \cF^{\scriptscriptstyle (2)} \circ \cN' \circ \cF^{\scriptscriptstyle (1)},
  \end{align}
  with $\cF^{\scriptscriptstyle (i)}$ being CPWP operations.
  Using the subadditivity of the geometric \Renyi Thauma in Lemma~\ref{prop: subadditivity} iteratively, we have
  \begin{align}
    \widehat \theta_{\a}(\cN) \leq n \widehat \theta_{\a}(\cN') + \sum_{i=1}^{n+1} \widehat \theta_{\a}(\cF^{\scriptscriptstyle (i)}) = n \widehat \theta_{\a}(\cN'),
  \end{align}
  where the equality follows from the faithfulness of the geometric \Renyi Thauma in Lemma~\ref{prop: reduction monotone faithful}. Therefore, we have $S(\cN' \to \cN) = n \geq {\widehat \theta_{\a}(\cN)}/{\widehat \theta_{\a}(\cN')}$, which concludes the proof.
\end{proof}

\vspace{0.2cm}
Together with the previous result in~\cite[Proposition 23]{Wang2019channelmagic}, we have
\begin{corollary}
Let $\cN'$ and $\cN$ be two qudit quantum channels. Then the number of channel $\cN'$ required to implement $\cN$ is bounded from below as  
  \begin{align}
    S(\cN' \to \cN) \geq \max\left\{\frac{\mana(\cN)}{\mana(\cN')}, \frac{\theta_{\max}(\cN)}{\theta_{\max}(\cN')}, \frac{\widehat \theta_{\a}(\cN)}{\widehat \theta_{\a}(\cN')}\right\},\quad \forall \a \in (1,2].
  \end{align}
\end{corollary}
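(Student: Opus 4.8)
The plan is to obtain this corollary as the straightforward conjunction of the Theorem just proved with two analogous bounds that are already available in the literature, so essentially no new argument is required. First I would recall \cite[Proposition 23]{Wang2019channelmagic}, which establishes that the mana and the max-Thauma of channels each give a lower bound on the synthesis cost, namely
\begin{align}
S(\cN' \to \cN) \geq \frac{\mana(\cN)}{\mana(\cN')}\qquad\text{and}\qquad S(\cN' \to \cN) \geq \frac{\theta_{\max}(\cN)}{\theta_{\max}(\cN')}.
\end{align}
These follow from exactly the same template as the Theorem above: writing an optimal simulation protocol as $\cN = \cF^{\scriptscriptstyle (n+1)} \circ \cN' \circ \cdots \circ \cN' \circ \cF^{\scriptscriptstyle (1)}$ with $n = S(\cN'\to\cN)$, one applies subadditivity of $\mana$ (resp. $\theta_{\max}$) under channel composition iteratively, and then uses faithfulness on $\CPWP$ to discard the contributions of the free operations $\cF^{\scriptscriptstyle (i)}$. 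The Theorem supplies the third ingredient, $S(\cN'\to\cN) \geq \widehat\theta_{\a}(\cN)/\widehat\theta_{\a}(\cN')$ for every $\a \in (1,2]$, via Lemma~\ref{prop: subadditivity} and Lemma~\ref{prop: reduction monotone faithful}.

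With the three inequalities in hand, the proof concludes in one line: $S(\cN'\to\cN)$ is a single (integer) quantity that is simultaneously lower bounded by each of the three ratios appearing inside the maximum, hence it is lower bounded by their maximum, which is exactly the claimed inequality; and since this holds for every $\a\in(1,2]$, the parameter can be left free (one could even replace the last ratio by its supremum over $\a$). The only minor point to note in passing is well-definedness of the ratios: each is meaningful precisely when the corresponding channel measure of $\cN'$ is nonzero, i.e. when $\cN'\notin\CPWP$ (for $\widehat\theta_\a$ this is Lemma~\ref{prop: reduction monotone faithful}), and if $\cN'$ is free while $\cN$ is not then $\cN$ cannot be synthesized at all and the bound is trivially respected. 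There is no real obstacle here — all the substance is in the Theorem and in \cite[Proposition 23]{Wang2019channelmagic}, and this corollary merely packages them together.
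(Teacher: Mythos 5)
Your proposal is correct and follows essentially the same route as the paper: the corollary is obtained by simply combining the new bound $S(\cN'\to\cN)\geq \widehat\theta_{\a}(\cN)/\widehat\theta_{\a}(\cN')$ from the preceding theorem with the mana and max-Thauma bounds of \cite[Proposition 23]{Wang2019channelmagic}, and noting that a quantity lower bounded by each ratio is lower bounded by their maximum. Your remark on well-definedness of the ratios when $\cN'\in\CPWP$ is a reasonable extra observation but not needed beyond what the paper states.
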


\begin{remark}
  Note that each lower bound is given by a fraction of two quantities. It is thus not known which fraction is tighter in general, despite that $\widehat \theta_{\a}(\cN) \leq \theta_{\max}(\cN) \leq \mana(\cN)$. 
\end{remark}

\section{Conclusions}
\label{sec: Conclusions}

We have established several fundamental properties of the geometric \Renyi divergence as well as its channel divergence. We further demonstrated the usefulness of these properties in the study of quantum channel capacity problems, strengthening the previously best-known result based on the max-relative entropy in general. We expect that the technical tools established in this work can find a diverse range of applications in other research areas, such as quantum network theory and quantum cryptography. For example, we illustrate one more application of the geometric \Renyi divergence in the task of quantum channel discrimination in  Appendix~\ref{app: Quantum channel discrimination}.

There are also some interesting problems left for future investigation. The Umegaki relative entropy is the most commonly studied quantum divergence  because of its operational interpretation as an optimal error exponent in the hypothesis testing problem (known as the quantum Stein's lemma)~\cite{Ogawa2000,Hiai1991}. One open question is to know whether the geometric \Renyi divergence as well as the Belavkin-Staszewski relative entropy have any operational interpretation.


\paragraph{Acknowledgment.}
We would like to thank Omar Fawzi for bringing to our attention the open question by Berta et al.~\cite[Eq.~(55)]{Berta2018} and for suggesting the chain rule for the geometric \Renyi divergence, which helped us simplify and unify the proofs of Proposition~\ref{amortization proposition},~\ref{prop: amortization bidirectional},~\ref{amortization proposition priviate capacity},~\ref{prop: amortization}. We also thank David Sutter for encouraging us to make some proof steps more precise. KF and HF acknowledge the support of the University of Cambridge Isaac Newton Trust
Early Career grant RG74916.

\bibliographystyle{alpha_abbrv}

{\small
\bibliography{bib_GRD_capacity}}

\newpage

\appendix

\section{Technical Lemmas}
\label{app: technical lemmas}

In this section, we present several technical lemmas that are used in the main text.


\begin{lemma}[\cite{fawzi2017lieb}]
\label{geometric SDP general lemma}
  For any positive semidefinite operators $X$ and $Y$ with $X \ll Y$, Hermitian operator $M$ and $\a(\ell) = 1 + 2^{-\ell}$ with $\ell \in \mathbb N$, the matrix inequality $G_{1-\a}(X, Y) \leq M$ holds if and only if
  \begin{align}\label{eq: SDP representation condition general}
    \exists\ \dbh{N_0,N_1, \cdots N_\ell},\ \text{\rm s.t.}\
    \dbp{\begin{matrix}
      M & X \\ X & N_\ell
    \end{matrix}},
    \left\{\dbp{
    \begin{matrix}
      X & N_{i} \\ N_i & N_{i-1}   \end{matrix}}\right\}_{i=1}^\ell, \dbe{N_0-Y}.   
  \end{align}
  When $\ell = 0$, the conditions in the loop are taken as trivial. Here the short notation that $\dbp{X}$, $\dbe{X}$ and $\dbh{X}$ represent the positive semidefinite condition $X \geq 0$, the equality condition $X = 0$ and the Hermitian condition $X = X^\dagger$, respectively.
\end{lemma}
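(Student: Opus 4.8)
The plan is to verify the equivalence by peeling off one factor of $\frac12$ from the exponent at a time, using the classical Schur-complement fact that a single weighted geometric mean with exponent $\frac12$ (resp.\ $-1$) is representable by a $2\times2$ block positive-semidefinite condition. Write $X\#_s Y$ for the weighted matrix geometric mean, so that $G_s(X,Y)=X\#_s Y$ and, with $\alpha(\ell)=1+2^{-\ell}$, the target is $G_{1-\alpha}(X,Y)=X\#_{-2^{-\ell}}Y$. By a perturbation argument (replace $X$ by $X+\eps\1$ and $Y$ by $Y+\eps\1$, invoke continuity of $\#_s$ on positive definite operators together with $X\ll Y$, and let $\eps\downarrow 0$) it suffices to prove the statement for $X,Y\succ 0$; I will note at the end where $X\ll Y$ enters to make this limit and the general $2\times2$ characterizations legitimate.

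I would record two elementary ingredients. First, for $X,B\succ 0$ and Hermitian $N$, one has $\left(\begin{smallmatrix} X & N\\ N & B\end{smallmatrix}\right)\succeq 0$ iff $NX^{-1}N\preceq B$; writing $\tilde N:=X^{-1/2}NX^{-1/2}$ and $C:=X^{-1/2}BX^{-1/2}$ this reads $\tilde N^2\preceq C$, hence $\tilde N\preceq |\tilde N|=(\tilde N^2)^{1/2}\preceq C^{1/2}$ by operator monotonicity of the square root, i.e.\ $N\preceq X^{1/2}C^{1/2}X^{1/2}=X\#B$; and conversely $N=X\#B$ attains the bound since $\left(\begin{smallmatrix}\1 & C^{1/2}\\ C^{1/2} & C\end{smallmatrix}\right)=\left(\begin{smallmatrix}\1\\ C^{1/2}\end{smallmatrix}\right)\left(\begin{smallmatrix}\1 & C^{1/2}\end{smallmatrix}\right)\succeq 0$. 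Thus $X\#B$ is the largest Hermitian $N$ with that block PSD, and dually $\left(\begin{smallmatrix} M & X\\ X & B\end{smallmatrix}\right)\succeq 0\iff M\succeq XB^{-1}X=X\#_{-1}B$, with $X\#_{-1}B$ the smallest such $M$. Second, the composition identity $X\#_s(X\#_t Y)=X\#_{st}Y$ (immediate from $X^{-1/2}(X\#_t Y)X^{-1/2}=(X^{-1/2}YX^{-1/2})^t$), together with monotonicity of $B\mapsto X\#B$ and antitonicity of $B\mapsto X\#_{-1}B=XB^{-1}X$.

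For the ``only if'' direction, assuming $G_{1-\alpha}(X,Y)\preceq M$, I would set $N_i:=X\#_{2^{-i}}Y$ for $i=0,\dots,\ell$. Then $N_0=X\#_1 Y=Y$; for $1\le i\le\ell$ the composition identity gives $X\#N_{i-1}=X\#_{1/2}(X\#_{2^{-(i-1)}}Y)=X\#_{2^{-i}}Y=N_i$, so each loop block $\left(\begin{smallmatrix} X & N_i\\ N_i & N_{i-1}\end{smallmatrix}\right)\succeq 0$; and $X\#_{-1}N_\ell=X\#_{-1}(X\#_{2^{-\ell}}Y)=X\#_{-2^{-\ell}}Y=G_{1-\alpha}(X,Y)\preceq M$, giving the final block. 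Hence all conditions in \eqref{eq: SDP representation condition general} hold. Conversely, given Hermitian $N_0,\dots,N_\ell$ satisfying those conditions, the first ingredient applied to each loop block yields $N_i\preceq X\#N_{i-1}$ with $N_0=Y$; iterating with monotonicity of $B\mapsto X\#B$ and then the composition identity gives $N_\ell\preceq \underbrace{X\#(X\#(\cdots(X\#Y)))}_{\ell\text{ means}}=X\#_{2^{-\ell}}Y$. The final block gives $M\succeq X\#_{-1}N_\ell$, and antitonicity of $B\mapsto X\#_{-1}B$ together with the bound just obtained gives $M\succeq X\#_{-1}(X\#_{2^{-\ell}}Y)=X\#_{-2^{-\ell}}Y=G_{1-\alpha}(X,Y)$. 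The base case $\ell=0$ is exactly the Schur statement $G_{-1}(X,Y)=XY^{-1}X\preceq M\iff\left(\begin{smallmatrix} M & X\\ X & Y\end{smallmatrix}\right)\succeq 0$ with $N_0=Y$ and an empty loop.

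The main obstacle — and essentially the only place requiring care — is the passage from $X,Y\succ 0$ to general $X\ll Y$: here $X^{-1/2},Y^{-1},N_\ell^{-1}$ must be read as Moore–Penrose inverses restricted to the appropriate supports, and one must check the $2\times2$ block characterizations and the identity $X\#_s(X\#_t Y)=X\#_{st}Y$ remain valid. This is precisely where $X\ll Y$ is used: it forces $\supp(X\#_{2^{-i}}Y)=\supp X$ for $i\ge 1$ (and $=\supp Y$ for $i=0$), keeping the chain of block inequalities consistent and making the $\eps\downarrow 0$ limit converge to the stated quantity. These support-bookkeeping points are handled as in \cite{fawzi2017lieb}; everything else is the routine algebra above.
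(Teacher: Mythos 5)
The paper does not actually prove this lemma; it is imported verbatim from \cite{fawzi2017lieb}, so there is no in-paper proof to compare against. Your argument is the standard one behind that reference — the iterated Schur-complement construction with $N_i = X\#_{2^{-i}}Y$, using that $X\# B$ is the largest Hermitian $N$ with $\bigl(\begin{smallmatrix} X & N \\ N & B\end{smallmatrix}\bigr)\succeq 0$, the composition identity $X\#_s(X\#_t Y)=X\#_{st}Y$, and the monotonicity/antitonicity of $B\mapsto X\# B$ and $B\mapsto XB^{-1}X$ — and it is correct as written for $X,Y\succ 0$ (note that in the ``if'' direction your Schur step only needs the upper-left block $X\succ 0$, so intermediate $N_i$ need not be invertible, and the final block forces $N_\ell\succ 0$ since $\ker N_\ell\subseteq\ker X$). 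The only genuine gap is the one you flag yourself: the passage to singular $X\ll Y$ requires either a careful pseudo-inverse treatment of the generalized Schur complements or a two-sided limiting argument (feasible points must be extracted in the limit $\eps\downarrow 0$, which needs a compactness/monotone-convergence step you have not spelled out), so that part remains deferred rather than proved.
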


The following lemma proves a transformer inequality of the weighted geometric matrix means. Here we require this result to hold for a specific range of the weighting parameter that to the best of our knowledge has not been stated properly before. 
\begin{lemma}[Transformer inequality] \label{lem_transformer}
  Let $X$ and $Y$ be two positive operators, $K$ be any linear operator, and $\alpha \in (1,2]$. Then it holds
  \begin{align}
    G_{1-\alpha}(K X K^\dagger, K Y K^\dagger) \leq K G_{1-\alpha}(X,Y) K^\dagger \, .
  \end{align}
  Furthermore, if $K$ is invertible the statement above holds with equality.
\end{lemma}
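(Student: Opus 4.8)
\textbf{Proof strategy for the transformer inequality (Lemma~\ref{lem_transformer}).}
The plan is to reduce the general statement to the invertible case by a limiting argument, and to prove the invertible case by showing that both sides are actually equal. First I would treat the case where $K$ is invertible. The key observation is that the weighted geometric mean $G_\beta(X,Y) = X^{1/2}(X^{-1/2}YX^{-1/2})^\beta X^{1/2}$ enjoys the congruence-covariance identity $G_\beta(AXA^\dagger, AYA^\dagger) = A\, G_\beta(X,Y)\, A^\dagger$ for \emph{any} invertible $A$ and \emph{any} real exponent $\beta$ for which the expression makes sense; this is the standard ``transformer equality'' for the Kubo--Ando geometric mean and its analytic continuation. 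Here $\beta = 1-\alpha \in [-1,0)$. I would verify this identity directly: writing $Z = X^{-1/2}YX^{-1/2}$, one checks that $X^{-1/2}K^{-1}(KXK^\dagger)^{1/2}$ is a unitary $U$ (since its product with its adjoint is $X^{-1/2}K^{-1}(KXK^\dagger)K^{-\dagger}X^{-1/2} = \1$), and then $(KXK^\dagger)^{-1/2}(KYK^\dagger)(KXK^\dagger)^{-1/2}$ is unitarily conjugate to $Z$ via $U$, so raising to the power $\beta$ commutes with the conjugation and one recovers $K G_\beta(X,Y) K^\dagger$ after reassembling. This gives equality when $K$ is invertible.

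For the general (non-invertible) case, I would approximate $K$ by $K_\varepsilon = K + \varepsilon \1$ (or, if $K$ is not square, embed appropriately and perturb), which is invertible for all but finitely many $\varepsilon > 0$, and similarly replace $X, Y$ by $X_\varepsilon = X + \varepsilon\1$, $Y_\varepsilon = Y + \varepsilon\1$ to ensure strict positivity. By the invertible case, $G_{1-\alpha}(K_\varepsilon X_\varepsilon K_\varepsilon^\dagger, K_\varepsilon Y_\varepsilon K_\varepsilon^\dagger) = K_\varepsilon G_{1-\alpha}(X_\varepsilon, Y_\varepsilon) K_\varepsilon^\dagger$. Now I would let $\varepsilon \to 0^+$. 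The right-hand side converges to $K\,G_{1-\alpha}(X,Y)\,K^\dagger$ by continuity of $G_{1-\alpha}$ on pairs of positive definite operators together with continuity of multiplication — here one uses that $G_{1-\alpha}(X_\varepsilon,Y_\varepsilon) \to G_{1-\alpha}(X,Y)$, which holds because $X \ll Y$ is not even needed when both arguments are regularized and the limit is taken along the regularization (the relevant continuity statement is exactly the one invoked in the proof of Lemma~\ref{lem: maximal Renyi channel divergence SDP}, citing \cite{Kubo1980,matsumoto2015new}). The left-hand side is more delicate: $K_\varepsilon X_\varepsilon K_\varepsilon^\dagger \to KXK^\dagger$ and similarly for $Y$, but the limiting operators need not be positive definite, so $G_{1-\alpha}$ of the limit is defined by the support condition. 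The inequality (rather than equality) in the conclusion emerges here precisely because $G_{1-\alpha}$, as a function of the pair of positive semidefinite arguments, is jointly \emph{upper} semicontinuous / satisfies $G_{1-\alpha}(\lim A_\varepsilon, \lim B_\varepsilon) \leq \lim G_{1-\alpha}(A_\varepsilon, B_\varepsilon)$ along decreasing regularizations — this monotone-limit property of the Kubo--Ando mean (and its negative-exponent extension via operator monotonicity / operator convexity of $t \mapsto t^{1-\alpha}$ on $(0,\infty)$, valid since $1-\alpha \in [-1,0)$) is what I would cite or prove.

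The main obstacle I anticipate is making the last step rigorous: the function $t \mapsto t^{1-\alpha}$ is operator \emph{convex} on $(0,\infty)$ for $\alpha \in (1,2]$ (equivalently $1-\alpha \in [-1,0)$), so $G_{1-\alpha}$ is \emph{jointly operator convex} in $(X,Y)$, and one must extract from this the correct semicontinuity/monotonicity behavior under the regularization $\varepsilon \downarrow 0$ so that the limit of the left-hand sides dominates $G_{1-\alpha}$ of the limit. I would handle this by invoking the integral (Kubo--Ando) representation of $G_{1-\alpha}$ as a monotone limit of sums of ``simple'' means of the form $X : (cY)$ (parallel sums), for which the transformer inequality for non-invertible $K$ is classical (Ando), and then passing to the limit; alternatively one can cite the known extension of Kubo--Ando theory to operator convex functions on $(0,\infty)$ with the $\leq$ replacing $=$ for general $K$. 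Once the invertible case is in hand, this is the only genuinely new ingredient, and it is exactly the ``specific range of the weighting parameter'' the lemma statement flags as not previously recorded.
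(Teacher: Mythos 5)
Your treatment of the invertible case is fine (and the equality claim there is correct), but the reduction of the general inequality to it has a genuine gap. After writing $G_{1-\alpha}(K_\varepsilon X_\varepsilon K_\varepsilon^\dagger, K_\varepsilon Y_\varepsilon K_\varepsilon^\dagger) = K_\varepsilon G_{1-\alpha}(X_\varepsilon,Y_\varepsilon) K_\varepsilon^\dagger$ you need, for the left-hand side, that $G_{1-\alpha}$ of the limiting pair is dominated by the limit of $G_{1-\alpha}$ along your regularization. The perturbed arguments $K_\varepsilon X_\varepsilon K_\varepsilon^\dagger$, $K_\varepsilon Y_\varepsilon K_\varepsilon^\dagger$ are neither monotone in $\varepsilon$ in the L\"owner order nor of the canonical form $A+\delta\1$ used to define the mean at singular arguments, so neither a ``monotone-limit'' property nor the defining regularization applies as stated; what you actually need is a lower-semicontinuity statement for $(A,B)\mapsto G_{1-\alpha}(A,B)$, and that is itself a claim requiring proof (it can be extracted from the Schur-complement/LMI characterization of $G_{1-\alpha}(A,B)\le M$ in Lemma~\ref{geometric SDP general lemma}, but then you are already doing the work the paper does directly). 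Your two fallbacks do not close this: Ando's classical transformer results for parallel sums and Kubo--Ando means concern operator \emph{monotone} weights $\beta\in[0,1]$ and give the \emph{opposite} inequality $G_\beta(KXK^\dagger,KYK^\dagger)\ge K G_\beta(X,Y)K^\dagger$, while the integral decomposition of $t^{1-\alpha}$ produces building blocks of the form $X(Y+\lambda X)^{-1}X$, not parallel sums; and ``the known extension of Kubo--Ando theory to operator convex functions with $\le$ for general $K$'' is essentially the statement being proven --- the paper explicitly remarks that this parameter range has not been properly recorded before, so the citation you lean on is the missing step, not a fix.

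For contrast, the paper's proof avoids limits and invertibility altogether: it writes $G_{1-\alpha}(A,B)=G_{-1}\bigl(A,G_{\alpha-1}(A,B)\bigr)$ via the identity $G_s(X,G_t(X,Y))=G_{st}(X,Y)$, uses the standard Kubo--Ando inequality (with $\ge$) for the inner mean $G_{\alpha-1}$ with $\alpha-1\in(0,1]$, flips the direction with the operator anti-monotonicity of $t\mapsto t^{-1}$ in the second slot, and handles the endpoint $G_{-1}(X,Y)=XY^{-1}X$ for arbitrary $K$ by the Schur-complement characterization $XY^{-1}X\le M \iff \left(\begin{smallmatrix} M & X\\ X & Y\end{smallmatrix}\right)\ge 0$ together with positivity of $Z\mapsto KZK^\dagger$; equality for invertible $K$ then follows by applying the inequality twice with $K$ and $K^{-1}$. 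If you want to salvage your route, the cleanest repair is to prove the needed semicontinuity through that same Schur-complement/LMI characterization, or to carry out the integral-representation argument with the blocks $X(Y+\lambda X)^{-1}X$, each of which again reduces to the $\beta=-1$ Schur-complement fact.
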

\begin{proof}
Before proving the assertion of the lemma we need to collect some basic properties.
We start by recalling the known result~\cite{Kubo1980} that for $\beta \in [0,1]$, we have
  \begin{align}\label{eq: fact0}
    G_{\beta}(K X K^\dagger, K Y K^\dagger) \geq K G_{\beta}(X,Y) K^\dagger \, .
  \end{align}
  As a next preparatory fact we show that the desired statement is correct for $\alpha = 2$, i.e.,
          \begin{align}\label{eq: fact1}
    G_{-1}( K X K^\dagger, K Y K^\dagger) \leq K G_{-1}(X,Y) K^\dagger \, .
  \end{align}
  To see this we recall that by Schur's complement~\cite[Theorem~1.3.3]{bhatia_psd_book} we have
  \begin{align}\label{eq: tmp1}
  G_{-1}(X,Y) = X Y^{-1}X \leq M \iff \begin{pmatrix}
    M & X\\ X & Y
  \end{pmatrix} \geq 0.
\end{align}
Choosing $M = G_{-1}(X,Y)$ thus gives
\begin{align}
  \begin{pmatrix}
    G_{-1}(X,Y) & X \\ X & Y
  \end{pmatrix} \geq 0 \, ,
\end{align}
which then implies
\begin{align}
  \begin{pmatrix}
    K G_{-1}(X,Y) K^\dagger & K X K^\dagger \\ K X K^\dagger & K Y K^\dagger
  \end{pmatrix} \geq 0 \, ,
\end{align} 
because $Z \mapsto K Z K^\dagger$ is a positive map~\cite[Exercise~3.2.2]{bhatia_psd_book}. 
Using \eqref{eq: tmp1} again then implies~\eqref{eq: fact1}.
Because the maps $t \mapsto t^{-1}$ is operator anti-monotone~\cite[Table~2.2]{Sutter2018} we have
  \begin{align}\label{eq: fact2}
    Y \geq \omega \Rightarrow G_{-1}(X,Y) \leq G_{-1}(X,\omega) \, .
  \end{align}
As a final property we recall a fact from~\cite[Equation~19]{Fawzi2017} stating that
  \begin{align}\label{eq: fact3}
    G_{s}(X,G_{t}(X,Y)) = G_{st}(X,Y) \, .
  \end{align}
Now we are ready to prove the assertion of the lemma. For any $\beta \in [-1,0)$, using~\eqref{eq: fact3} we have 
  \begin{align}
    G_{\beta}(KX K^\dagger, KY K^\dagger) 
    & = G_{-1}(KX K^\dagger,G_{-\beta}(KX K^\dagger,KY K^\dagger))\\
    & \leq G_{-1}(KX K^\dagger,K G_{-\beta}(X ,Y)K^\dagger)\\
    & \leq K G_{-1}(X,G_{-\beta}(X ,Y))K^\dagger\\
    & = K G_{\beta}(X,Y) K^\dagger \, ,
  \end{align}
where the first inequality step follows from \eqref{eq: fact0} and \eqref{eq: fact2}. The second inequality is implied by~\eqref{eq: fact1}. The final step uses~\eqref{eq: fact3} again.

The fact that the transformer inequality holds with equality in case $K$ is invertible follows by applying the inequality twice as
\begin{align}
G_{1-\alpha}(K X K^\dagger, K Y K^\dagger) 
&\leq K G_{1-\alpha}(X,Y) K^\dagger  \\
&=  K G_{1-\alpha}\big(K^{-1} K X K^\dagger (K^\dagger)^{-1} ,K^{-1} K Y K^\dagger (K^\dagger)^{-1} \big) K^\dagger \\
&\leq G_{1-\alpha}(K X K^\dagger, K Y K^\dagger) \, ,
\end{align}
which proves that the two inequalities above actually hold with equality.
\end{proof}

\begin{corollary}\label{coro: transformer inequality log}
  Let $X$ and $Y$ be two positive operators, $K$ be any linear operator. Let $D_{op}(X,Y) = X^{\frac{1}{2}} \log \big(X^{\frac{1}{2}} Y^{-1} X^{\frac{1}{2}}\big) X^{\frac{1}{2}}$ be the operator relative entropy.  Then the $D_{op}$ satisfies the transformer inequality:
  \begin{align}
    D_{op}(KX K^\dagger, KY K^\dagger) \leq K D_{op}(X,Y) K^{\dagger}.
  \end{align}
  Furthermore, if $K$ is invertible the statement above holds with equality.
\end{corollary}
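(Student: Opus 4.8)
The plan is to derive the operator relative entropy version of the transformer inequality as a limiting case of the weighted geometric mean transformer inequality in Lemma~\ref{lem_transformer}. The key observation is the standard integral/derivative representation connecting $G_\alpha$ to the operator relative entropy: for positive operators $X,Y$,
\begin{align}
D_{op}(X,Y) = \lim_{\alpha \to 1^+} \frac{G_{1-\alpha}(X,Y) - X}{1-\alpha} = \frac{d}{d\beta}\Big|_{\beta = 0} G_{\beta}(X,Y),
\end{align}
which follows by differentiating $t \mapsto t^\beta$ at $\beta = 0$ applied to the operator $X^{-1/2}YX^{-1/2}$ (sandwiched by $X^{1/2}$ on both sides), giving $X^{1/2}\log(X^{-1/2}YX^{-1/2})X^{1/2}$; this agrees with $X^{1/2}\log(X^{1/2}Y^{-1}X^{1/2})X^{1/2}$ up to the sign convention $\log(A^{-1}) = -\log A$, so one must be slightly careful that $D_{op}(X,Y) = -\tfrac{d}{d\alpha}\big|_{\alpha=1} G_{1-\alpha}(X,Y)$ matches the definition $D_{op}(X,Y) = X^{1/2}\log(X^{1/2}Y^{-1}X^{1/2})X^{1/2}$ stated in the corollary.

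First I would establish this derivative formula cleanly on the (dense, by continuity) set of positive definite $X,Y$, noting that $G_{1-\alpha}(X,Y)$ is real-analytic in $\alpha$ near $\alpha = 1$ with $G_0(X,Y) = X$. Second, I would apply Lemma~\ref{lem_transformer}: for every $\alpha \in (1,2]$,
\begin{align}
G_{1-\alpha}(KXK^\dagger, KYK^\dagger) \leq K\, G_{1-\alpha}(X,Y)\, K^\dagger,
\end{align}
and also the trivial equality $G_0(KXK^\dagger,KYK^\dagger) = KXK^\dagger = K\,G_0(X,Y)\,K^\dagger$. Subtracting the $\alpha=1$ identity and dividing by $\alpha - 1 > 0$ gives
\begin{align}
\frac{K X K^\dagger - G_{1-\alpha}(KXK^\dagger,KYK^\dagger)}{\alpha - 1} \geq K\left(\frac{X - G_{1-\alpha}(X,Y)}{\alpha-1}\right)K^\dagger.
\end{align}
Taking $\alpha \to 1^+$, the left side converges to $D_{op}(KXK^\dagger,KYK^\dagger)$ and the right side to $K\,D_{op}(X,Y)\,K^\dagger$, and since the cone of positive semidefinite operators is closed, the inequality survives the limit, yielding $D_{op}(KXK^\dagger,KYK^\dagger) \le K\,D_{op}(X,Y)\,K^\dagger$. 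The general case where $X,Y$ are merely positive semidefinite (with appropriate support conditions) follows by a continuity/perturbation argument, replacing $X,Y$ by $X+\epsilon\1$, $Y+\epsilon\1$ and letting $\epsilon \to 0$, using continuity of both $D_{op}$ and the map $Z\mapsto KZK^\dagger$.

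For the equality statement when $K$ is invertible, I would run exactly the same doubling trick as in the proof of Lemma~\ref{lem_transformer}: apply the just-proved inequality once to get $D_{op}(KXK^\dagger,KYK^\dagger) \le K D_{op}(X,Y) K^\dagger$, then rewrite $X = K^{-1}(KXK^\dagger)(K^\dagger)^{-1}$ and $Y = K^{-1}(KYK^\dagger)(K^\dagger)^{-1}$ and apply the inequality again with $K$ replaced by $K^{-1}$ to obtain the reverse inequality $K D_{op}(X,Y) K^\dagger \le D_{op}(KXK^\dagger,KYK^\dagger)$, forcing equality. The main obstacle I anticipate is purely bookkeeping rather than conceptual: making the interchange of limit and inequality fully rigorous (uniform control of the difference quotients, and handling the singular/non-full-rank case), together with double-checking the sign conventions so that the limit of $(X - G_{1-\alpha}(X,Y))/(\alpha-1)$ is precisely $+D_{op}(X,Y)$ with $D_{op}$ as defined in the corollary statement. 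None of this is deep, but it is where care is needed.
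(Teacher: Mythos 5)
Your strategy is exactly the paper's: take the $\alpha\to 1^+$ limit of the transformer inequality of Lemma~\ref{lem_transformer}, pass the operator inequality to the limit using closedness of the positive semidefinite cone, handle rank-deficient $X,Y$ by an $\epsilon$-perturbation, and obtain equality for invertible $K$ by applying the inequality twice with $K$ and $K^{-1}$ (the paper's proof uses the same limit identity, stated there as $\lim_{\gamma \to 0} -\tfrac{1}{\gamma}\big(G_{\gamma}(X,Y)-X\big)=D_{op}(X,Y)$, and leaves the equality case implicit). So in substance the proposal matches the paper.

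The one concrete problem is the sign bookkeeping, precisely the point you flagged: with the corollary's definition $D_{op}(X,Y)=X^{\frac12}\log\big(X^{\frac12}Y^{-1}X^{\frac12}\big)X^{\frac12}$, the correct identity is $D_{op}(X,Y)=\lim_{\alpha\to 1^+}\tfrac{1}{\alpha-1}\big(G_{1-\alpha}(X,Y)-X\big)=+\tfrac{d}{d\alpha}\big|_{\alpha=1}G_{1-\alpha}(X,Y)$, whereas your displayed formula with denominator $1-\alpha$ (equivalently $\tfrac{d}{d\beta}\big|_{\beta=0}G_{\beta}(X,Y)$) equals $X^{\frac12}\log\big(X^{-\frac12}YX^{-\frac12}\big)X^{\frac12}=-D_{op}(X,Y)$, and your later assertion $D_{op}(X,Y)=-\tfrac{d}{d\alpha}\big|_{\alpha=1}G_{1-\alpha}(X,Y)$ has the same sign error. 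This propagates: in your final step the two sides of the displayed ``$\geq$'' actually converge to $-D_{op}(KXK^\dagger,KYK^\dagger)$ and $-K\,D_{op}(X,Y)K^\dagger$, not to the quantities you state, and with the limits as you state them the chain would deliver $D_{op}(KXK^\dagger,KYK^\dagger)\geq K\,D_{op}(X,Y)K^\dagger$, the reverse of your stated conclusion. The fix is immediate and restores the paper's argument: divide $G_{1-\alpha}(KXK^\dagger,KYK^\dagger)\leq K\,G_{1-\alpha}(X,Y)K^\dagger$ minus the trivial $\alpha=1$ identity by $\alpha-1>0$ (keeping the inequality direction) and let $\alpha\to 1^+$, which gives $D_{op}(KXK^\dagger,KYK^\dagger)\leq K\,D_{op}(X,Y)K^\dagger$ directly; the doubling trick for invertible $K$ and the continuity argument for singular $X,Y$ then go through as you describe.
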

\begin{proof}
  Due to the fact that $\lim_{\gamma \to 0} -\frac{1}{\gamma} (x^{\gamma} - 1) = \log (x)$, we have the limit identity 
  \begin{align}\label{eq: op relative entropy limit}
    \lim_{\gamma \to 0} -\frac{1}{\gamma} (G_{\gamma}(X,Y) - X) = D_{op}(X,Y).
  \end{align}
  Then we have
  \begin{align}
    D_{op}(KX K^\dagger, KY K^\dagger) & = \lim_{\alpha \to 1} \frac{1}{\alpha-1}\left[ G_{1-\alpha}(KX K^\dagger, KY K^\dagger) -  KX K^\dagger\right]\\
    & \leq \lim_{\alpha \to 1} \frac{1}{\alpha-1}\left[ K G_{1-\alpha}(X, Y ) K^\dagger-  KX K^\dagger\right]\\\
    & = K \lim_{\alpha \to 1} \frac{1}{\alpha-1}\left[  G_{1-\alpha}(X, Y ) -  X \right] K^\dagger\\
    & = K D_{op}(X,Y) K^{\dagger},
  \end{align}
  where the first and last equalities follow from Eq.~\eqref{eq: op relative entropy limit}, the inequality follows from Lemma~\ref{lem_transformer}.
\end{proof}

\section{A hierarchy for constant-bounded subchannels}

\label{app: A complete hierarchy for the set of constant-bounded maps}
In this section we discuss the set of constant-bounded subchannels
\begin{align}
    \bcV_{cb} \equiv\big\{\cM \in \text{CP}(A:B)\,\big|\, \exists\, \sigma_B \in \cS(B)\ \text{s.t.}\ \cM_{A\to B}(\rho_A)\leq \sigma_B, \forall \rho_A\in \cS(A)\big\}.
\end{align}
Denote $\cN_\sigma$ as the constant map induced by the state $\sigma$. For any $\cM \in \bcV_{cb}$ the condition $\cM(\rho)\leq \sigma$ for all $\rho$ is equivalent that $\cN_\sigma - \cM$ is a positive map. In terms of their Choi matrices, we have $\1_A\ox \sigma_B - J_{\cM} \in \BP(A:B)$ where $\BP(A:B)$ is the cone of block positive operators. Thus we have
\begin{align}
    \bcV_{cb} =\big\{\cM \in \text{CP}(A:B)\,\big|\, \exists\, \sigma_B \in \cS(B)\ \text{s.t.}\ \1_A \ox \sigma_B - J_{\cM} \in \BP(A:B)\big\}.
\end{align}
Due to the difficulty of finding a semidefinite representation for $\BP$~\cite{fawzi2019set}, we do not expect that there is a semidefinite representation for the set $\bcV_{cb}$.  Nevertheless, the cone $\BP$ can be approximated by a complete hierarchy from the inside as
\begin{align}
    \DPScone_1^* \subseteq \DPScone_2^* \subseteq \cdots \subseteq \DPScone_k^* \subseteq \cdots \subseteq \BP,
\end{align}
where $\DPScone_k^*$ is the dual cone of the well-known DPS hierarchy~\cite{Doherty2002,Doherty2004} and is given by the semidefinite representation~\cite{Fang2019}
\begin{align}
\DPScone_k^* = \Biggl\{ M_{AB_1} \Bigg| & \;\; M_{AB_1}\ox \1_{B_{[2:k]}} = \big(Y_{AB_{[k]}} - \Pi_kY_{AB_{[k]}} \Pi_k\big) + \sum_{s=0}^k W_{s,AB_{[k]}}^{\sfT_{B_{[s]}}} \notag\\
& \quad \quad \text{ where } Y_{AB_{[k]}} \in \herm,\, W_{s,AB_{[k]}} \geq 0, \forall s \in [0:k] \Biggr\},
\end{align}
where the index $[s_1:s_2]\equiv \{s_1,s_1+1,\cdots s_2\}$, $[s]\equiv [1:s]$ and $\Pi_k$ is the projector on the symmetry subspace of $\cH_B^{\ox k}$.
Then we can construct a complete semidefinite hierarchy for the set $\bcV_{bc}$ as
\begin{align}
    \bcV_{cb}^1 \subseteq \bcV_{cb}^2  \subseteq \cdots \subseteq \bcV_{cb}^k  \subseteq \cdots \subseteq \bcV_{cb},
\end{align}
with each level given by
\begin{align}
    \bcV_{cb}^k =\left\{\cM \in \text{CP}(A:B)\,\big|\, \exists\, \sigma_B \in \cS(B)\ \text{s.t.}\ \1_A \ox \sigma_B - J_{\cM} \in \DPScone_k^*\right\},
\end{align}

Consider the first level of the hierarchy 
\begin{align}
    \bcV_{cb}^1 = \left\{\cM \in \text{CP}(A:B)\,\Big|\, \exists\, \sigma_B \in \cS(B), W_0, W_1 \geq 0\ \text{s.t.}\ \1_A \ox \sigma_B - J_{\cM} = W_0 + W_1^{\sfT_B}\right\}
\end{align}
Denote $R = \1_A \ox \sigma_B - W_0$, we obtain
\begin{align}
    \bcV_{cb}^1 = \left\{\cM \in \text{CP}(A:B)\,\Big|\, \exists\, \sigma_B \in \cS(B),\ \text{s.t.}\ \1_A\ox \sigma_{B} - R \geq 0, R^{\sfT_B} - J_{\cM}^{\sfT_B} \geq 0\right\}
\end{align}
By symmetrizing the conditions $X - Y \geq 0$ to $X \pm Y \geq 0$, we will retrieve the set $\bcV_\b$. 
Similarly, by using a different way of variable replacement $V = \1_A \ox \sigma - W_1$, we have
\begin{align}
    \bcV_{cb}^1 = \left\{\cM \in \text{CP}(A:B)\,\Big|\, \exists\, \sigma_B \in \cS(B),\ \text{s.t.}\ \1_A\ox \sigma_{B} - V^{\sfT_B} \geq 0, V - J_{\cM} \geq 0\right\}.
\end{align}
Then symmetrizing the conditions, we will obtain a set of subchannels $\bcV_\zeta$ which is exactly the zero set of the strong converse bound $C_\zeta$. 

The restriction to symmetric conditions ensures that the set is closed under tensor product, which is a key ingredient to proving the sub-additivity of the geometric \Renyi Upsilon-information in Proposition~\ref{prop: sub-additivity renyi upsilon}. Thus a further improvement of the result in the main text can be made by considering a symmetric restriction of a higher level set $\bcV_{cb}^k$. But we also note that the size of the SDP will exponentially increase  in the level of $k$.

\newpage
\section{A detailed comparison for generalized amplitude damping channels}
\label{app: Detailed comparison for generalized amplitude damping channel}

In this section we give a detailed comparison of our new strong converse bounds with previously known results for the generalized amplitude damping (GAD) channels. This class of channels has been systematically investigated in~\cite{Khatri2019}, with several converse bounds being established therein based on the data-processing inequality and the continuity of channel capacities as well as a few different techniques. 

Due to the covariance symmetry of the GAD channels under the Pauli-$z$ operator, the quantities introduced in this work do not provide advantage over the Rains information $R$ (resp. the relative entropy of entanglement $E_R$) in terms of the unassisted quantum (resp.  private) capacity. In the following, we will focus our comparison for the two-way assisted scenarios where both $R$ and $E_R$ are not known as valid converse bounds. The comparison result for the two-way assisted quantum capacity is given in Figure~\ref{GAD quantum capacity weak converse compare}. The red solid line is the previously best-known converse bound composed by several different quantities. It is clear that our new strong converse bound $\widehat R_{\a {\scriptscriptstyle (10)},\Theta}$ can be much tighter in most cases. Note that for the two-way assisted private capacity, we will obtain exactly the same result as Figure~\ref{GAD quantum capacity weak converse compare} by replacing $R_{\max}$ with $E_{\max}$ and $\widehat R_{\a{\scriptscriptstyle (10)},\Theta}$ with $\widehat E_{\a{\scriptscriptstyle (10)},\Sigma}$ respectively.

\begin{figure}[H]
\centering
\begin{adjustwidth}{-0.6cm}{0cm}
\begin{tikzpicture}
\node at (-5.5,-2.3) {\small $N = 0.1$};
\node at (0.2,-2.3) {\small $N = 0.2$};
\node at (5.9,-2.3) {\small $N = 0.3$};
\node at (-5.5,-7.3) {\small $N = 0.4$};
\node at (0.2,-7.3) {\small $N = 0.45$};
\node at (5.9,-7.3) {\small $N = 0.5$};

\node at (-5.7,0) {\includegraphics[width = 5.3cm]{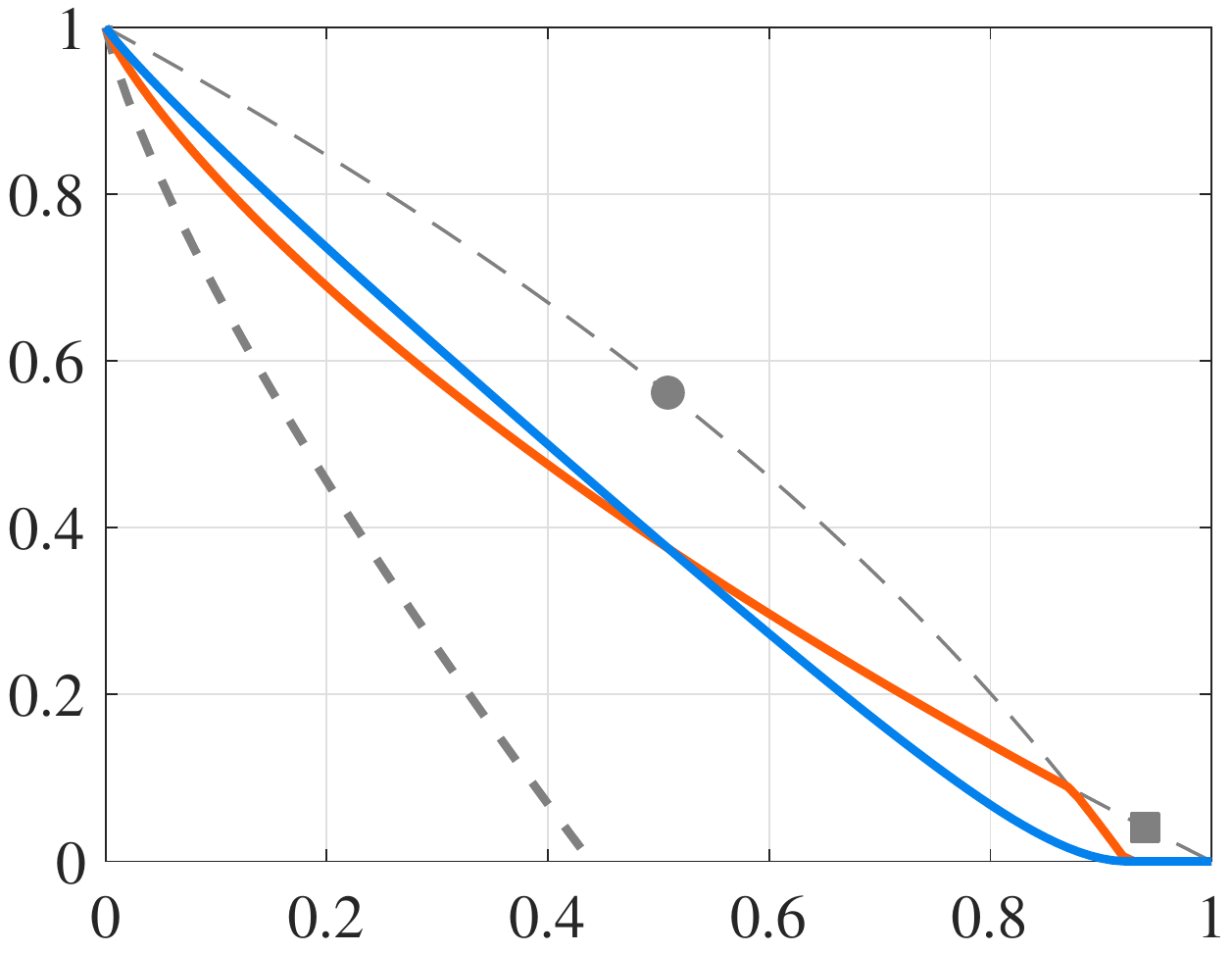}};

\node at (0,0) {\includegraphics[width = 5.3cm]{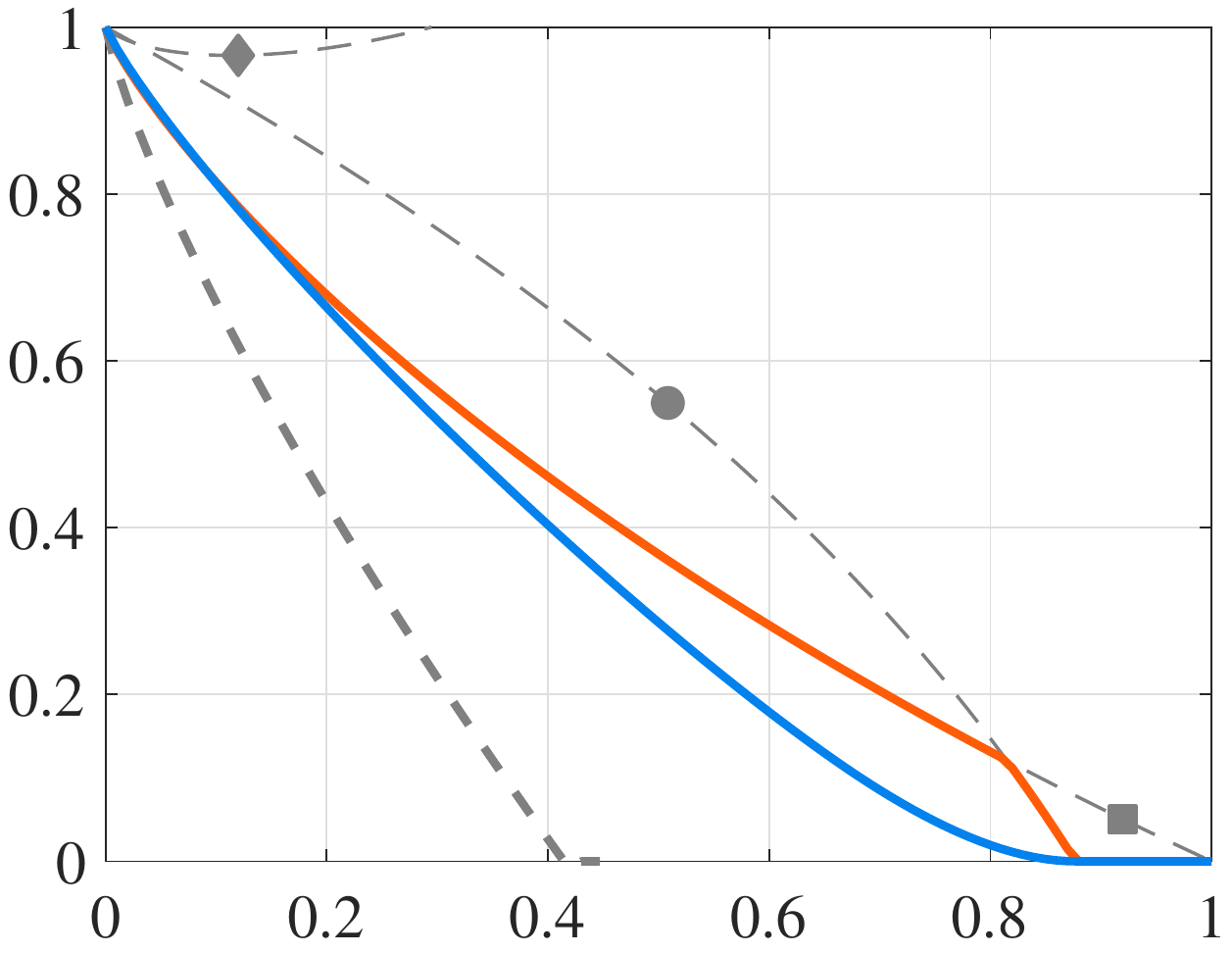}};

\node at (5.7,0) {\includegraphics[width = 5.3cm]{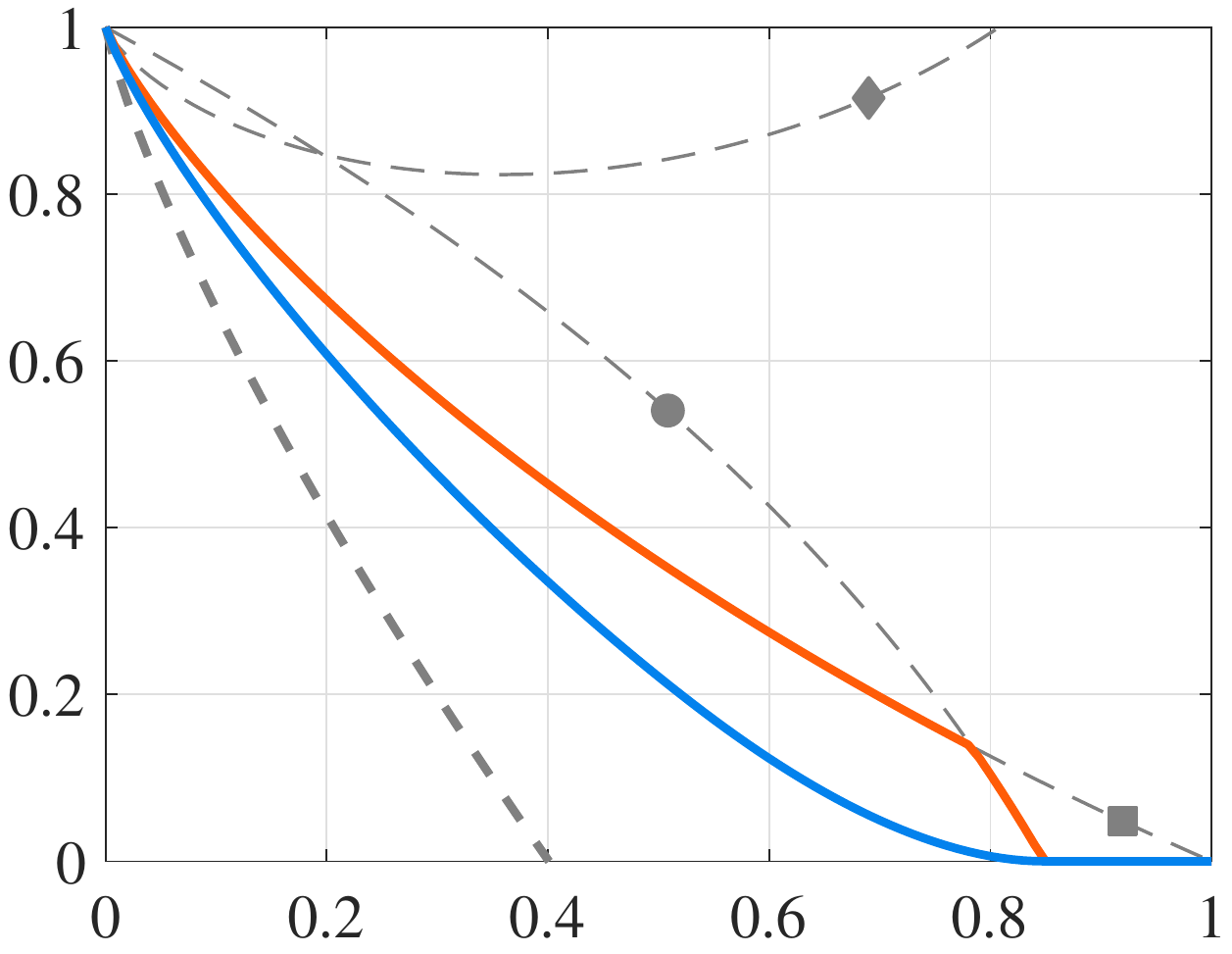}};

\node at (-5.7,-5) {\includegraphics[width = 5.3cm]{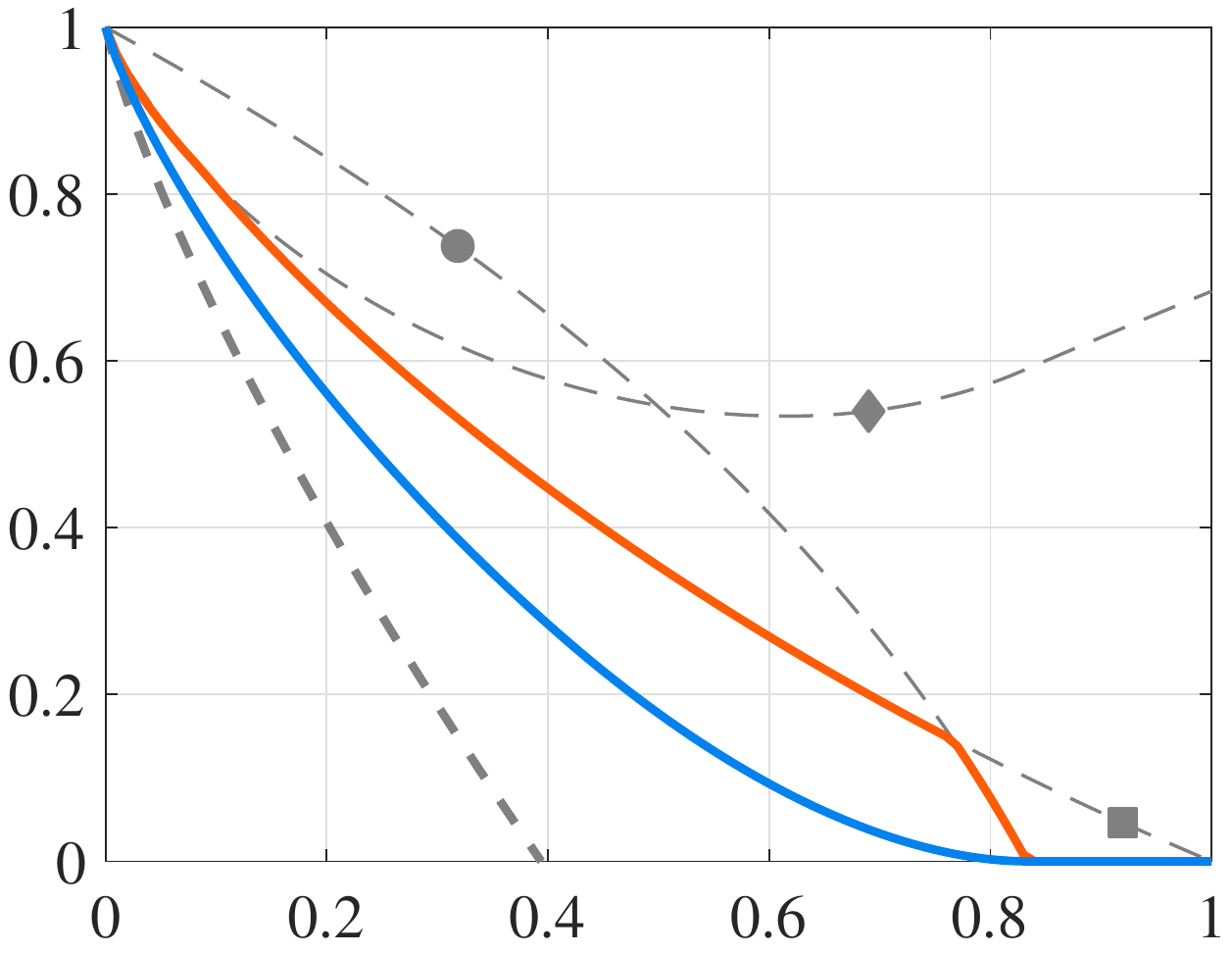}};

\node at (0,-5) {\includegraphics[width = 5.3cm]{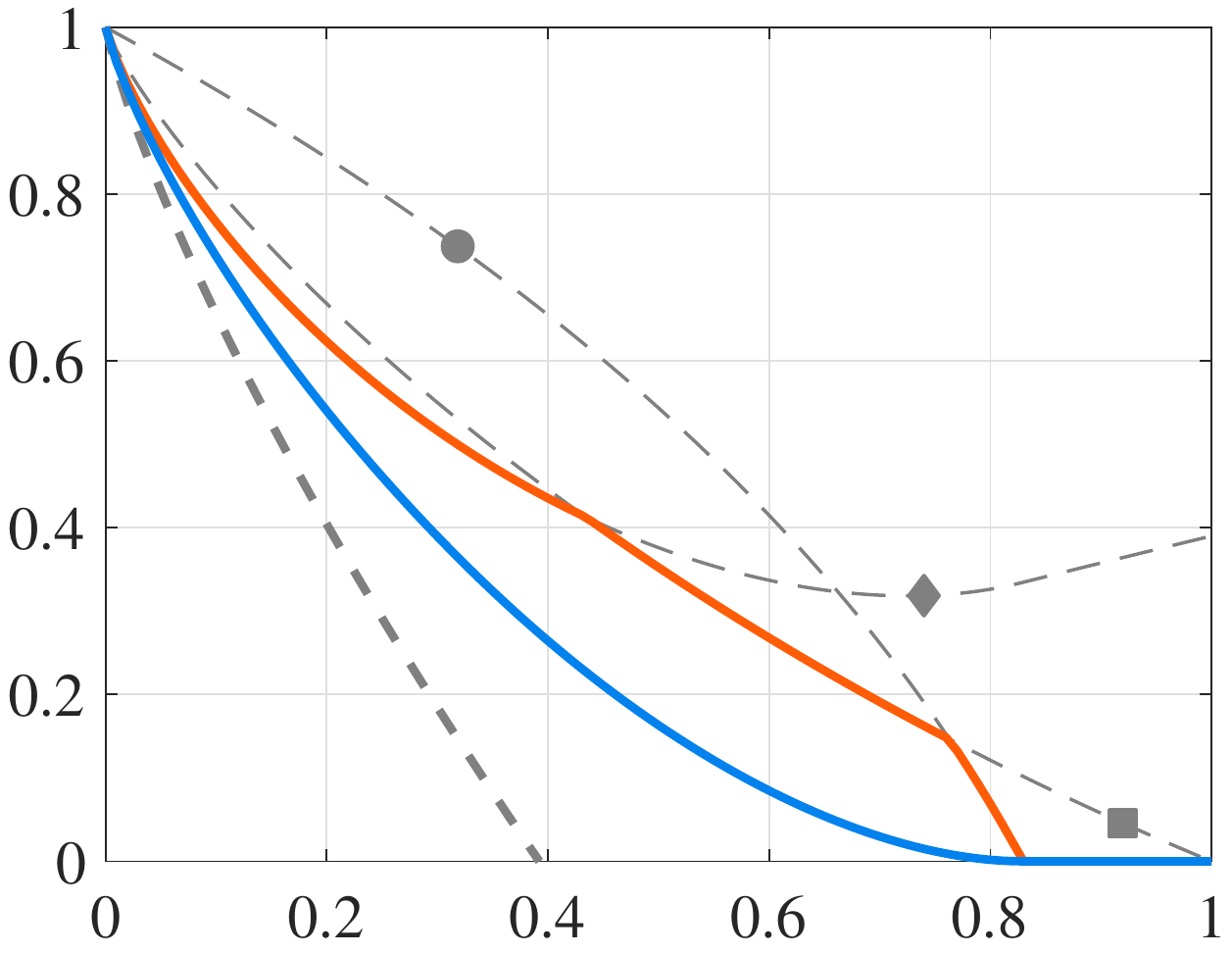}};

\node at (5.7,-5) {\includegraphics[width = 5.3cm]{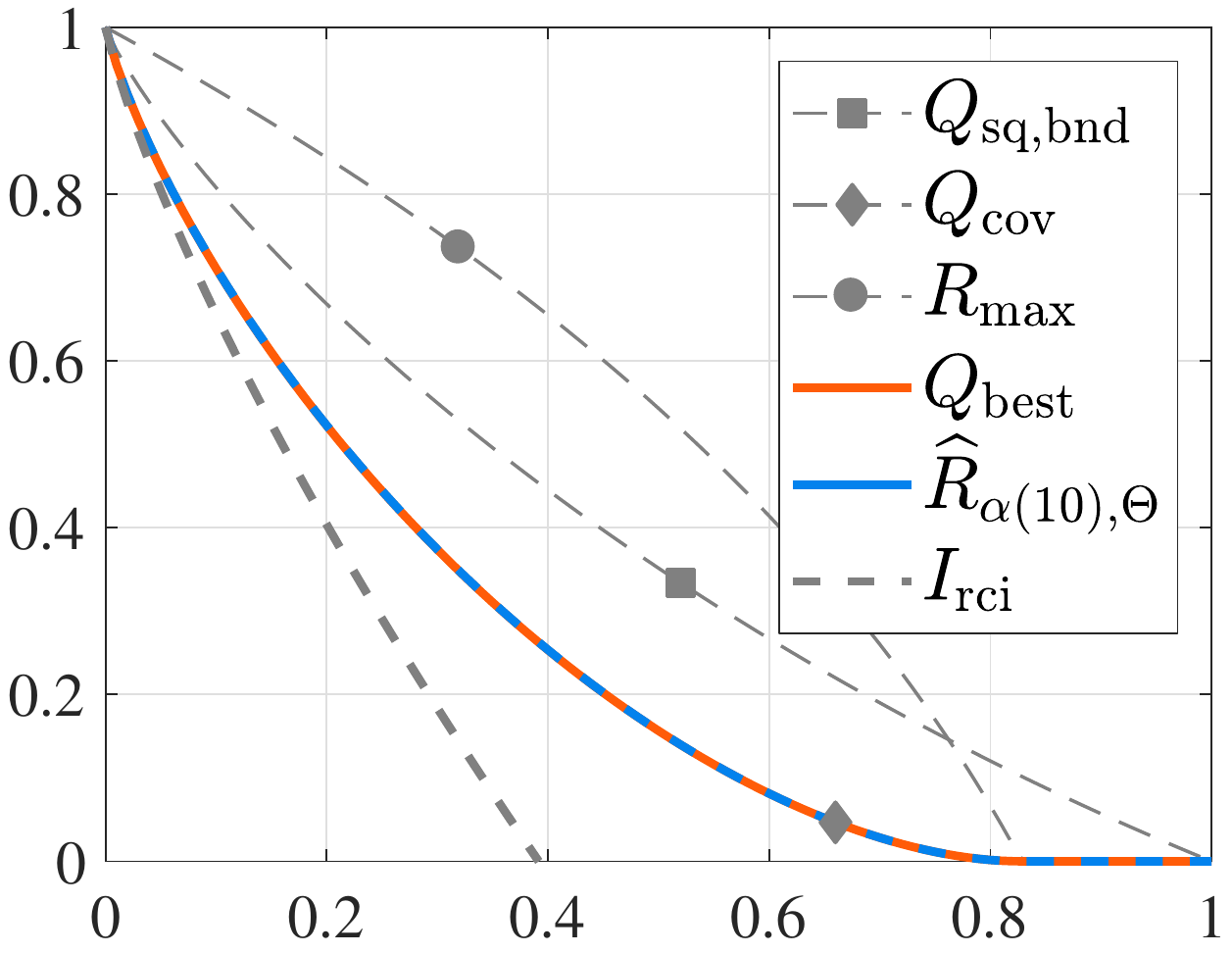}};

\end{tikzpicture}
\end{adjustwidth}
\caption{\small Comparison of the converse bounds on two-way assisted quantum capacities of the GAD channels. The quantity $Q_{\rm sq,bnd}$ is the squashed entanglement upper bound in~\cite[Eq.~(176)]{Khatri2019}. The quantity $Q_{\rm cov}$ is the approximate covariance upper bound in~\cite[Eq.~(205)]{Khatri2019}. The quantity $R_{\max}$ is the max-Rains information in~\cite{Wang2016a,Wang2017d} which is later proved to be a strong converse for two-way assisted quantum capacity in~\cite{Berta2017a}. The red solid line $Q_{\rm best}$ is the previously tightest upper bound composed by $Q_{\rm sq,bnd}$, $Q_{\rm cov}$ and $R_{\max}$, i.e., $Q_{\rm best} = \min\{Q_{\rm sq,bnd}, Q_{\rm cov},R_{\max}\}$. The blue solid line $\widehat R_{\a{\scriptscriptstyle (10)},\Theta}$ is our new strong converse bound in~[this work, Eq.~\eqref{eq:  SDP formula for maximal Rains theta info}] with level $\ell = 10$. The quantity $I_{\rm rci}$ is the reverse coherent information in~\cite[Eq.~(188)]{Khatri2019} which is a lower bound on the two-way assisted quantum capacity.}
\label{GAD quantum capacity weak converse compare}
\end{figure}


The comparison result for the classical capacity is given in Figure~\ref{GAD classical capacity weak converse compare}. The red solid line is the  previously best-known converse bound composed by several different quantities. It is clear that our new strong converse bound $\widehat \U_{\a{\scriptscriptstyle (10)}}$ can make further improvement at some parameter range, particularly for low to medium amplitude damping noise. In the range of high noise, the GAD channel becomes entanglement-breaking. Thus the $\ve$-entanglement breaking upper bound $C_{\rm EB}$ will be the tightest one, as expected.
We do not show the plot for $N=0.5$, because the channel becomes a qubit unital channel and thus its Holevo information is already tight~\cite{king2002additivity}.

\begin{figure}[H]
\centering
\begin{adjustwidth}{-0.6cm}{0cm}
\begin{tikzpicture}
\node at (-5.5,-2.3) {\small $N = 0.1$};
\node at (0.2,-2.3) {\small $N = 0.2$};
\node at (5.9,-2.3) {\small $N = 0.3$};
\node at (-5.5,-7.3) {\small $N = 0.4$};
\node at (0.2,-7.3) {\small $N = 0.45$};

\node at (-5.7,0) {\includegraphics[width = 5.3cm]{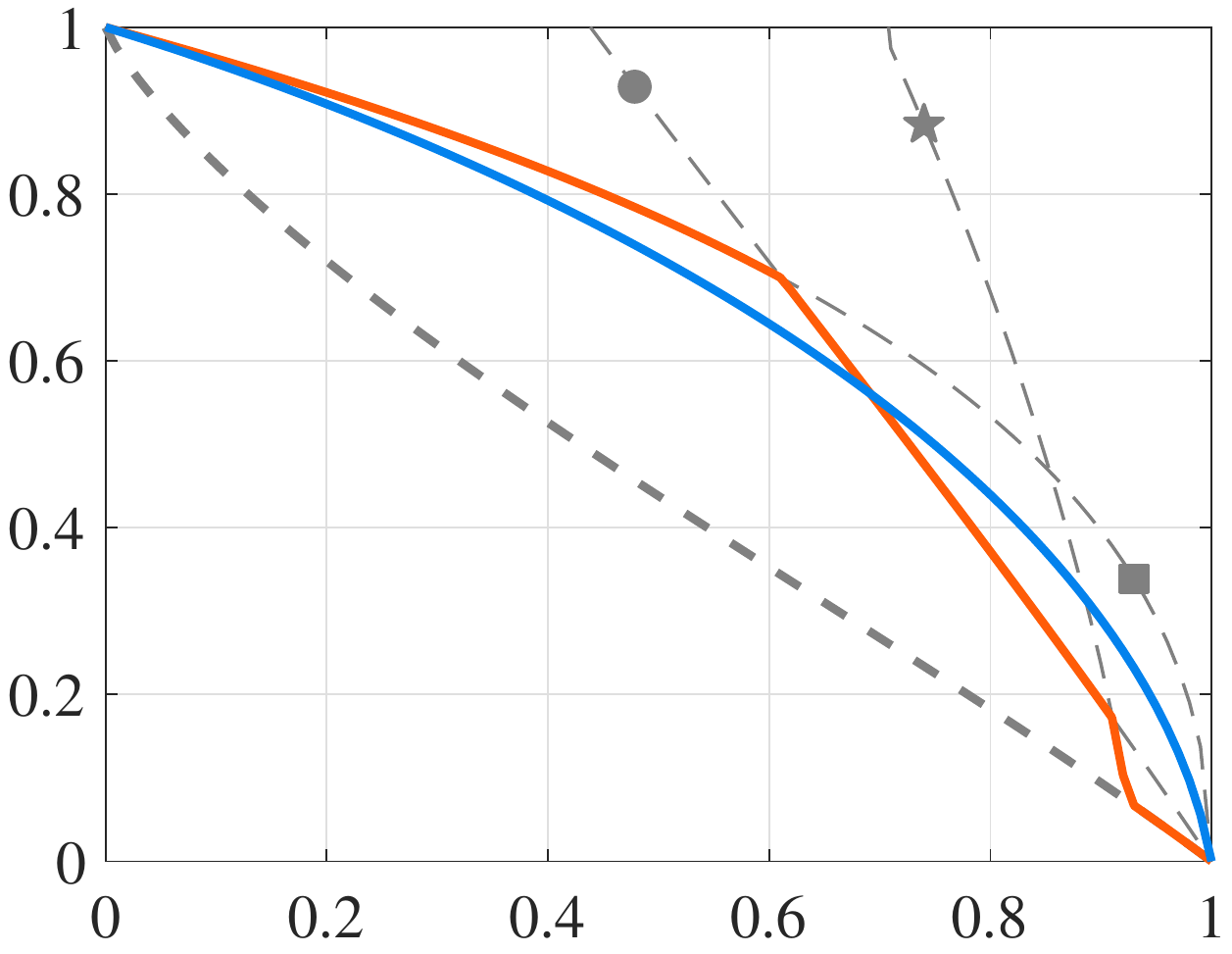}};

\node at (0,0) {\includegraphics[width = 5.3cm]{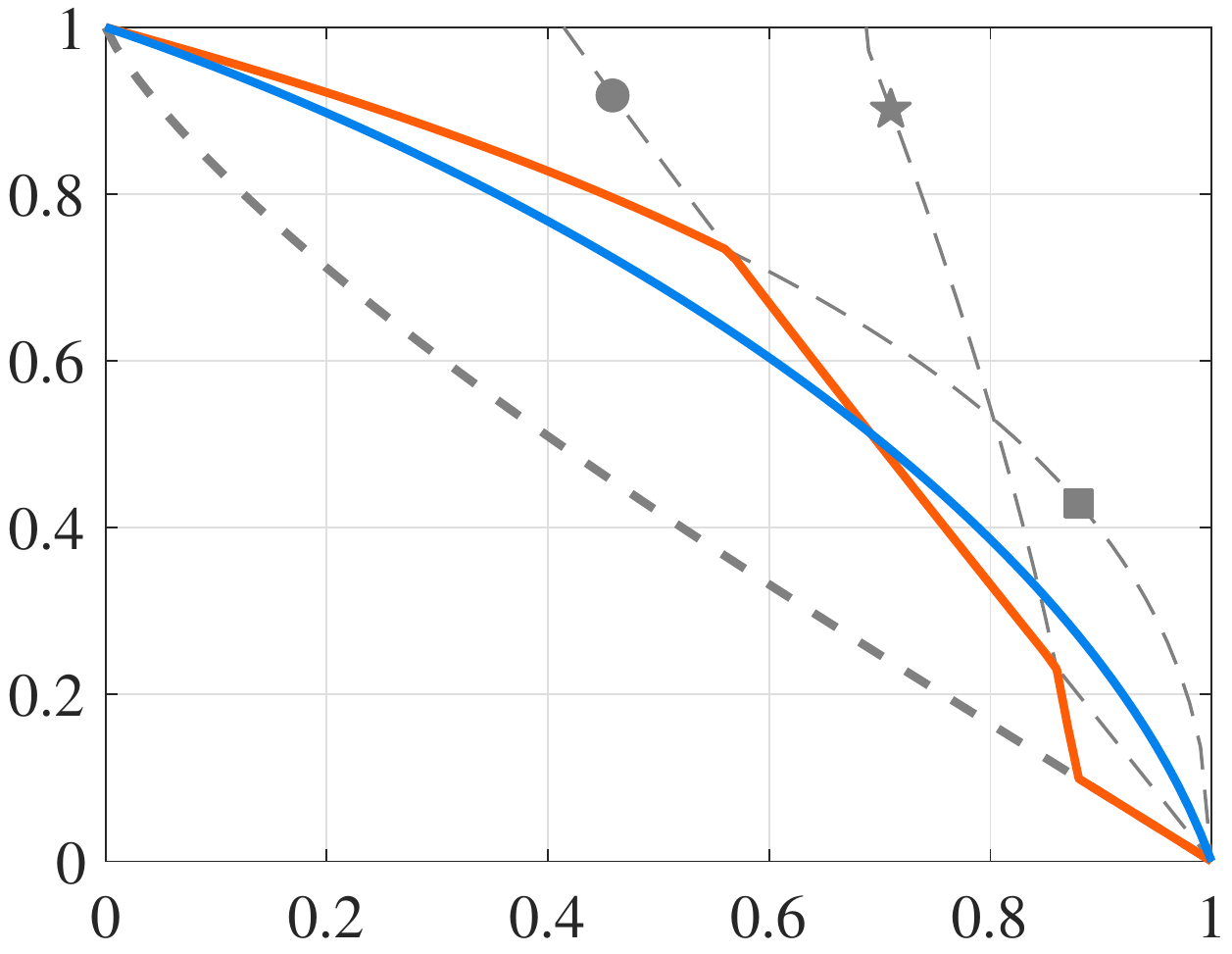}};

\node at (5.7,0) {\includegraphics[width = 5.3cm]{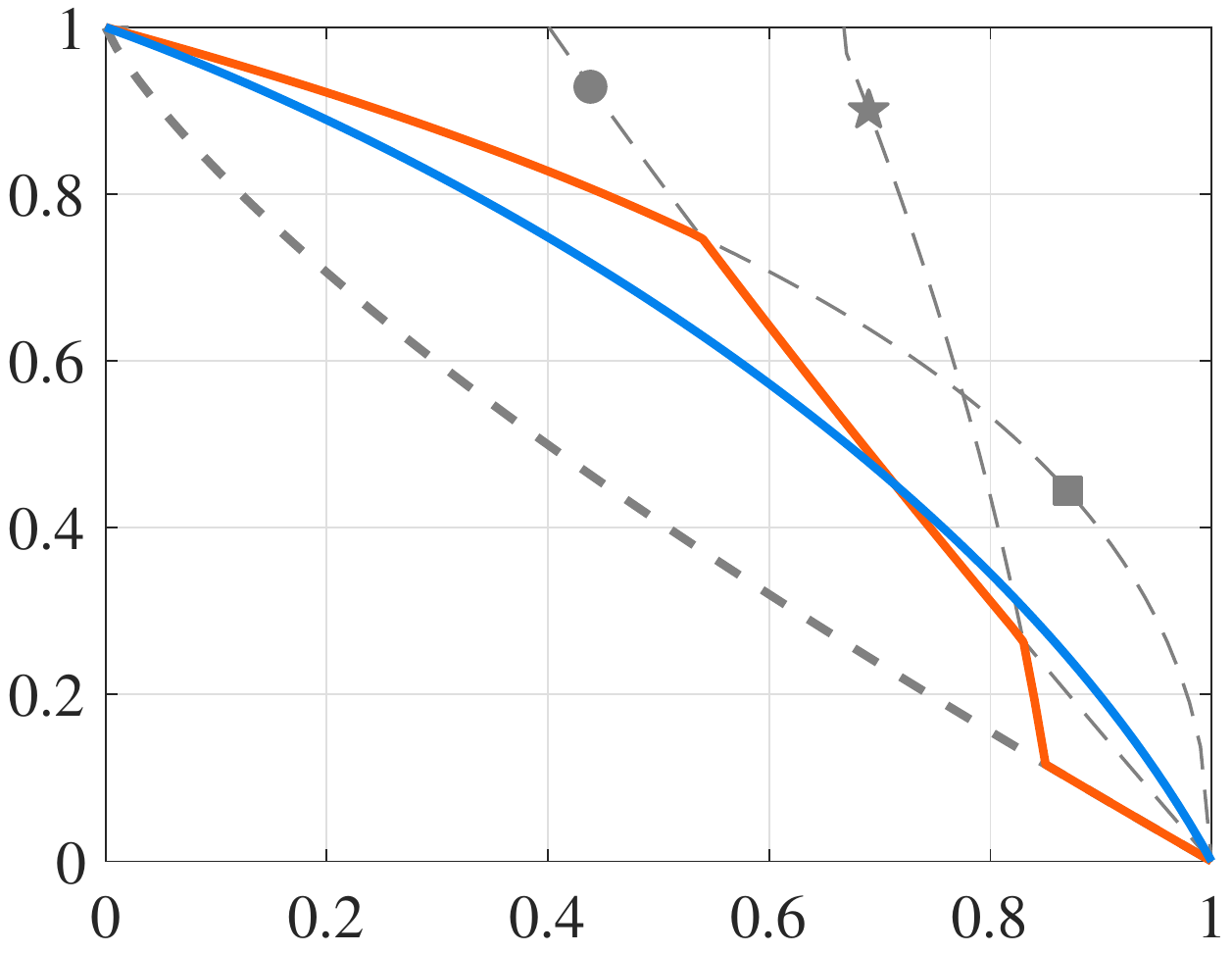}};

\node at (-5.7,-5) {\includegraphics[width = 5.3cm]{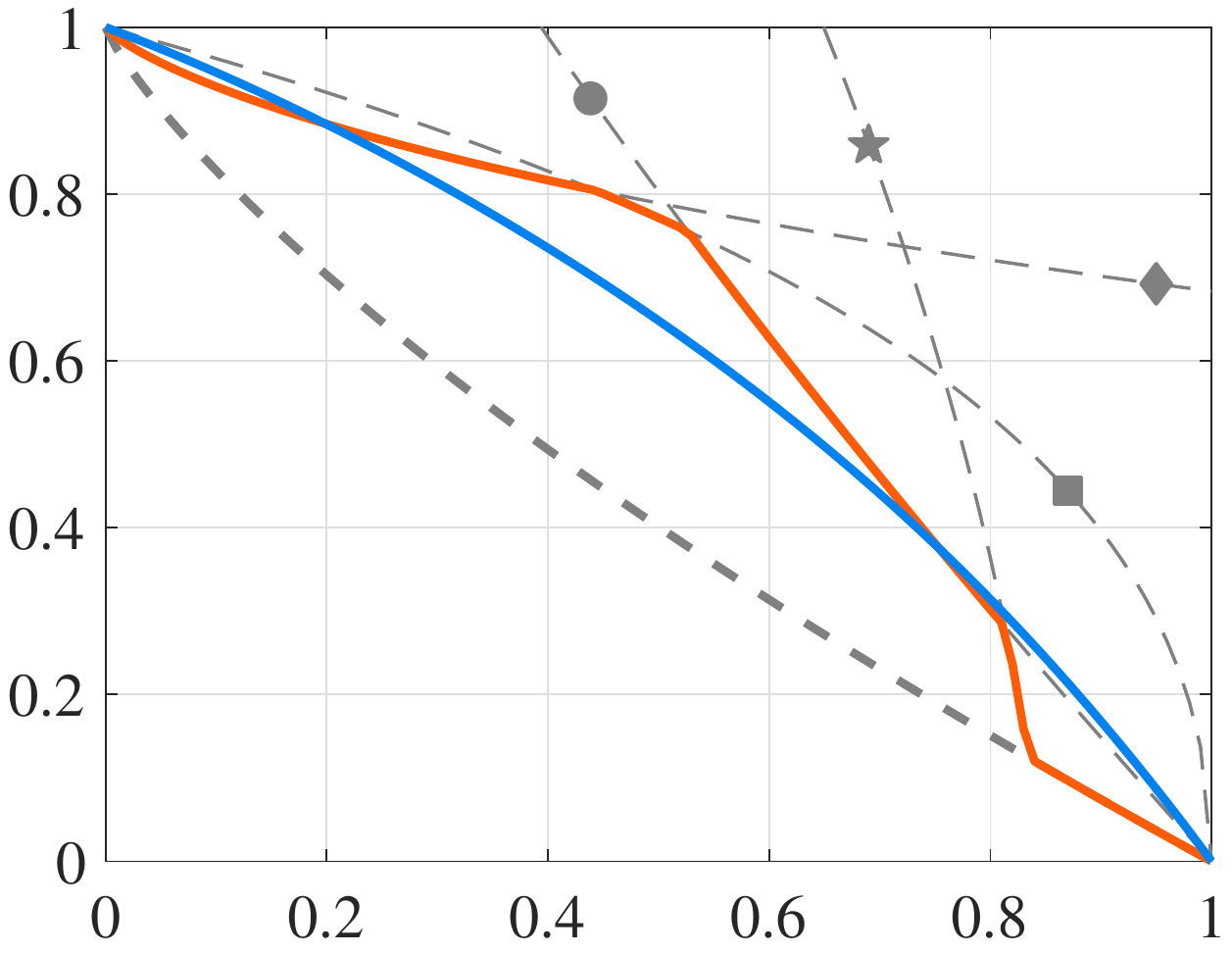}};

\node at (0,-5) {\includegraphics[width = 5.3cm]{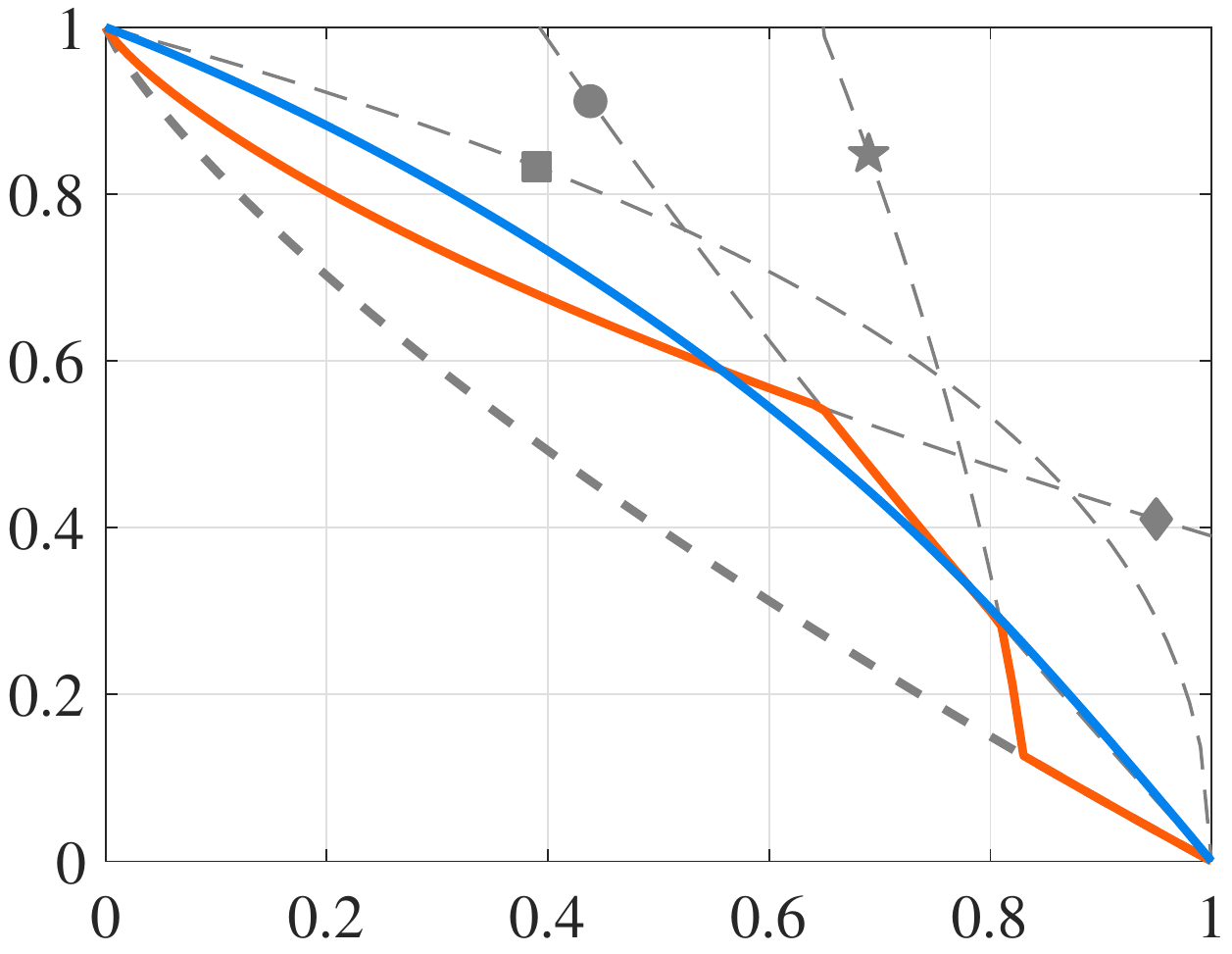}};

\node at (5.9,-4.9) {\includegraphics[height = 3.6cm]{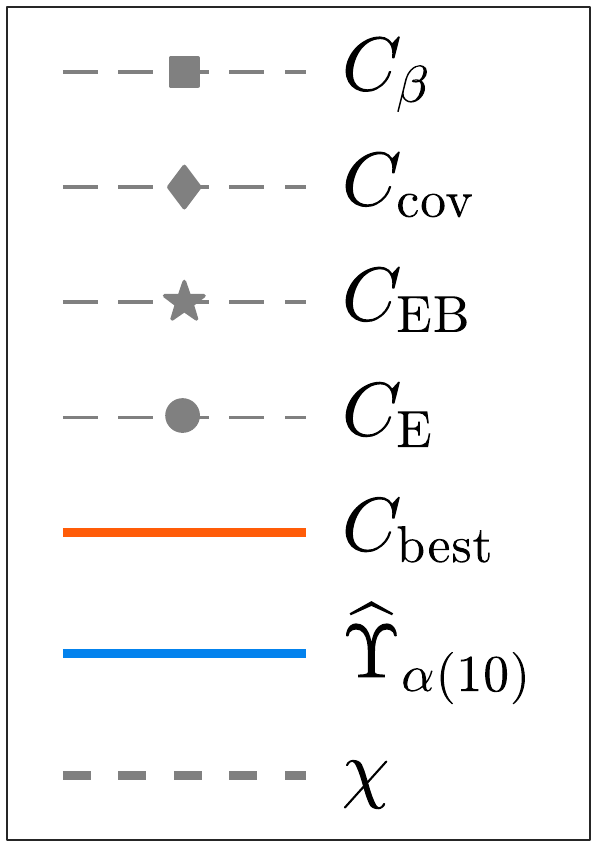}};

\end{tikzpicture}
\end{adjustwidth}
\caption{\small Comparison of the converse bounds on the classical capacity of the GAD channels. The quantity $C_\b$ is the strong converse bound in~\cite{Wang2016g}. The quantity $C_{\rm cov}$ is the $\ve$-covariance upper bound in~\cite[Eq.~(113)]{Khatri2019}. The quantity $C_{\rm EB}$ is the $\ve$-entanglement breaking upper bound in~\cite[Eq.~(112)]{Khatri2019}. The quantity $C_{\rm E}$ is the entanglement-assisted classical capacity in~\cite{li2007entanglement}. The red solid line $C_{\rm best}$ is the previously tightest upper bound composed by $C_\b$, $C_{\rm cov}$, $C_{\rm EB}$ and $C_{\rm E}$, i.e, $C_{\rm best} = \min\{C_\b, C_{\rm cov}, C_{\rm EB},C_{\rm E}\}$. The blue solid line $\widehat \U_{\a{\scriptscriptstyle (10)}}$ is our new strong converse bound in [this work, Eq.~\eqref{Renyi Upsilon information SDP min}] with level $\ell = 10$. The quantity $\chi$ is the Holevo information given in~\cite{li2007holevo} which is a lower bound on the classical capacity.}
\label{GAD classical capacity weak converse compare}
\end{figure}

\section{Application in quantum channel discrimination}
\label{app: Quantum channel discrimination}

A fundamental problem in quantum information theory is to distinguish between two quantum channels $\cN$ and $\cM$. In the asymmetric hypothesis testing setting (Stein's setting), we aim to minimize the type II error probability, under the condition that the type I error probability does not exceed a constant $\ve \in (0,1)$. 
More precisely, for any given two quantum channels $\cN$ and $\cM$, denote the corresponding type I and type II error of the adaptive protocol $\{Q,\cA\}$ as  $\alpha_n(\{Q,\cA\})$ and $\beta_n(\{Q,\cA\})$. Then the asymmetric distiguishibility is defined as
\begin{align}
  \zeta_n(\ve,\cN,\cM):=\sup_{\{Q,\cA\}} \left\{-\frac{1}{n}\log \beta_n(\{Q,\cA\}) \Big| \alpha_n(\{Q,\cA\}) \leq \ve\right\}.
\end{align}
Its asymptotic quantities are defined as
\begin{align}
  \underline{\zeta}(\ve,\cN,\cM):=\liminf_{n\to \infty} \zeta_n(\ve,\cN,\cM),\quad \overline{\zeta}(\ve,\cN,\cM):=\limsup_{n\to \infty} \zeta_n(\ve,\cN,\cM)
\end{align}
The best-known single-letter strong converse on $\overline{\zeta}(\ve,\cN,\cM)$ is given by the channel's max-relative entropy $D_{\max}(\cN\|\cM)$~\cite[Corollary 18]{Berta2018}, i.e.,
\begin{align}\label{eq: channel known result}
  D(\cN\|\cM) \leq \underline{\zeta}(\ve,\cN,\cM) \leq \overline{\zeta}(\ve,\cN,\cM) \leq D_{\max}(\cN\|\cM).
\end{align}
In the following, we sharpen this upper bound by the geometric \Renyi channel divergence in general. This gives a more accurate estimation of the fundamental limits of channel discrimination under adaptive strategies.

\begin{theorem}
  Let $\cN$ and $\cM$ be two quantum channels and $\ve \in (0,1)$, $\alpha \in (1,2]$. It holds
  \begin{align}
    D(\cN\|\cM) \leq \underline{\zeta}(\ve,\cN,\cM) \leq \overline{\zeta}(\ve,\cN,\cM) \leq \widehat D_{\alpha}(\cN\|\cM) \leq D_{\max}(\cN\|\cM).
  \end{align}
  Moreover, $\widehat D_\a(\cN\|\cM)$ is also a strong converse bound.
\end{theorem}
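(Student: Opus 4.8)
The plan is to mirror the known chain \eqref{eq: channel known result}, replacing the role of $D_{\max}(\cN\|\cM)$ by $\widehat D_\a(\cN\|\cM)$, and then appeal to the sandwich $\widehat D_\a \le D_{\max}$ from Lemma~\ref{thm: divergence chain inequality} (applied at the level of channels via the closed-form expression in Lemma~\ref{lem: maximal Renyi channel divergence SDP}) for the last inequality, and to the already-established bound $D(\cN\|\cM) \le \underline\zeta(\ve,\cN,\cM)$ from \cite[Corollary~18]{Berta2018} for the first inequality. So the only real work is the middle inequality $\overline\zeta(\ve,\cN,\cM) \le \widehat D_\a(\cN\|\cM)$, together with the remark that it is a strong converse, i.e.\ that when the rate exceeds $\widehat D_\a(\cN\|\cM)$ the type~II error cannot be made to vanish (equivalently, the type~I error goes to $1$).

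First I would recall the structure of an $n$-round adaptive channel discrimination protocol $\{Q,\cA\}$: an initial state $\rho^{(1)}_{R_1A_1}$, interleaved adaptive channels $\cA^{(i)}$, $n$ uses of the unknown channel (either $\cN$ or $\cM$ on the $A_i$ system), and a final binary measurement $Q$. The key point is the ``amortization collapse'' \eqref{eq_amortized} for the geometric \Renyi divergence, which is exactly the content of the chain rule Lemma~\ref{lem_chainRule}: applying the chain rule at each round and using that the adaptive channels $\cA^{(i)}$ are the same on both hypotheses (so they contribute nothing, by $\widehat D_\a(\cA^{(i)}\|\cA^{(i)}) = 0$) and that $\widehat D_\a$ is monotone under channels, one telescopes to obtain
\begin{align}
\widehat D_\a\big(\omega^{\cN}_{R_nB_n}\,\big\|\,\omega^{\cM}_{R_nB_n}\big) \le n\, \widehat D_\a(\cN\|\cM),
\end{align}
where $\omega^{\cN}$, $\omega^{\cM}$ are the pre-measurement states under the two hypotheses. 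This is precisely the step where the special properties of $\widehat D_\a$ (which fail for Petz/sandwiched \Renyi at the channel level) are used. Then, exactly as in the proof of Theorem~\ref{thm: main result quantum assisted}, I apply the data-processing inequality under the binary measurement $Q$ to reduce to the binary classical \Renyi divergence $\delta_\a(\alpha_n\|1-\beta_n)$, and use the monotonicity properties of $\delta_\a$ (as in~\cite{Polyanskiy2010b}) together with $\alpha_n \le \ve$ to conclude
\begin{align}
n\,\widehat D_\a(\cN\|\cM) \ge \frac{1}{\a-1}\log\big[\beta_n^{\,1-\a}(1-\ve)^{\a}\big] = nr\cdot\frac{\a-1}{\a}\cdot\frac{\a}{\a-1} \;\;\text{(schematically)},
\end{align}
which rearranges to $1-\alpha_n \le \beta_n^{(\a-1)/\a}\,2^{\,\ldots}$ and yields both $\overline\zeta(\ve,\cN,\cM)\le \widehat D_\a(\cN\|\cM)$ and the strong-converse statement: if $-\frac1n\log\beta_n = r > \widehat D_\a(\cN\|\cM)$ then $1-\alpha_n$ decays exponentially, so $\alpha_n \to 1$.

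The main obstacle is making the amortization/telescoping argument fully rigorous in the adaptive setting — i.e.\ checking that the chain rule Lemma~\ref{lem_chainRule} can be invoked round-by-round with the reference system growing, and that the adaptive channels genuinely drop out. This is routine given the chain rule and the amortized-divergence formulation in the Remark after Theorem~\ref{thm: summary of properties}, and it closely parallels the amortization argument already carried out in Proposition~\ref{amortization proposition} and Theorem~\ref{thm: main result quantum assisted}; the cleanest presentation is probably to state the single-round amortization inequality $\widehat D_\a(\cN(\rho_{RA})\|\cM(\sigma_{RA})) \le \widehat D_\a(\rho_{RA}\|\sigma_{RA}) + \widehat D_\a(\cN\|\cM)$ (immediate from Lemma~\ref{lem_chainRule}) and telescope. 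Everything else — the reduction to $\delta_\a$ via data processing and the elementary estimate on $\delta_\a$ — is identical in spirit to the computations in Section~\ref{sec: Quantum communication}, so I would not reproduce them in detail.
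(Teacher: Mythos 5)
Your proposal is correct and follows essentially the same route as the paper: the first and last inequalities are handled exactly as you say, and the middle inequality is obtained by combining the chain rule (amortization collapse) for $\widehat D_\alpha$ with the adaptive-protocol meta-converse of Berta et al.\ and the reduction, via data processing, to the binary classical R\'enyi divergence with $\alpha_n\le\ve$ — the paper simply cites \cite[Lemma 14]{Berta2018} for the telescoping step that you sketch explicitly. The only cosmetic issue is your ``schematic'' display, which should be tightened to the clean statement $-\frac1n\log\beta_n \le \widehat D_\alpha(\cN\|\cM)+\frac1n\frac{\alpha}{\alpha-1}\log\frac{1}{1-\ve}$, from which both the bound on $\overline\zeta$ and the strong converse follow as you indicate.
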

\begin{proof}
  The first two inequalities follow from~\eqref{eq: channel known result}. The last inequality follows from the relation that $\widehat D_{\a}(\rho\|\sigma) \leq D_{\max}(\rho\|\sigma)$ in Lemma~\ref{thm: divergence chain inequality}. In the following, we show that $\overline{\zeta}(\ve,\cN,\cM) \leq \widehat D_{\alpha}(\cN\|\cM)$. This follows a similar step as~\cite[Proposition 17]{Berta2018}. 

  Let $\{Q,\cA\}$ be an arbitary adaptive protocol for discriminating $\cN$ and $\cM$. Let $p$ and $q$ denote the final decision probabilities. As argued in \cite[Proposition 17]{Berta2018}, we can take $\a_n(\{Q,\cA\}) = \ve$. Then 
  \begin{align}
    \widehat D_{\alpha}(p\|q) \geq  \frac{1}{\a-1} \log p^\alpha q^{1-\alpha} = \frac{1}{\a-1} \log (1-\ve)^\alpha q^{1-\alpha}  = \frac{\alpha}{\alpha-1} \log (1-\ve) - \log q.
  \end{align}
  By applying the meta-converse in \cite[Lemma 14]{Berta2018} as well as the chain rule of the geometric \Renyi divergence, we have
  \begin{align}\label{eq: channel tmp1}
    -\frac{1}{n} \log q \leq \widehat D_{\alpha}(\cN\|\cM) + \frac{1}{n} \frac{\alpha}{\alpha-1} \log \frac{1}{1-\ve}.
  \end{align}
  Since Eq.~\eqref{eq: channel tmp1} holds for any channel discrimination protocol, we have $\overline{\zeta}(\ve,\cN,\cM) \leq \widehat D_{\alpha}(\cN\|\cM)$.
\end{proof}

\vspace{0.2cm}
Note that our new strong converse bound is also single-letter and efficient computable via semidefinite program (it even admits a closed-form expression as shown in Lemma~\ref{lem: maximal Renyi channel divergence SDP}). The following example of the GAD channels $\cA_{0.8,N_1}$ and $\cA_{0.7,N_2}$ demonstrates that $\widehat D_{\a}$ is much tighter than $D_{\max}$.

\begin{figure}[H]
  \centering
  \begin{tikzpicture}
    \node at (0,0) {\includegraphics[width=8cm]{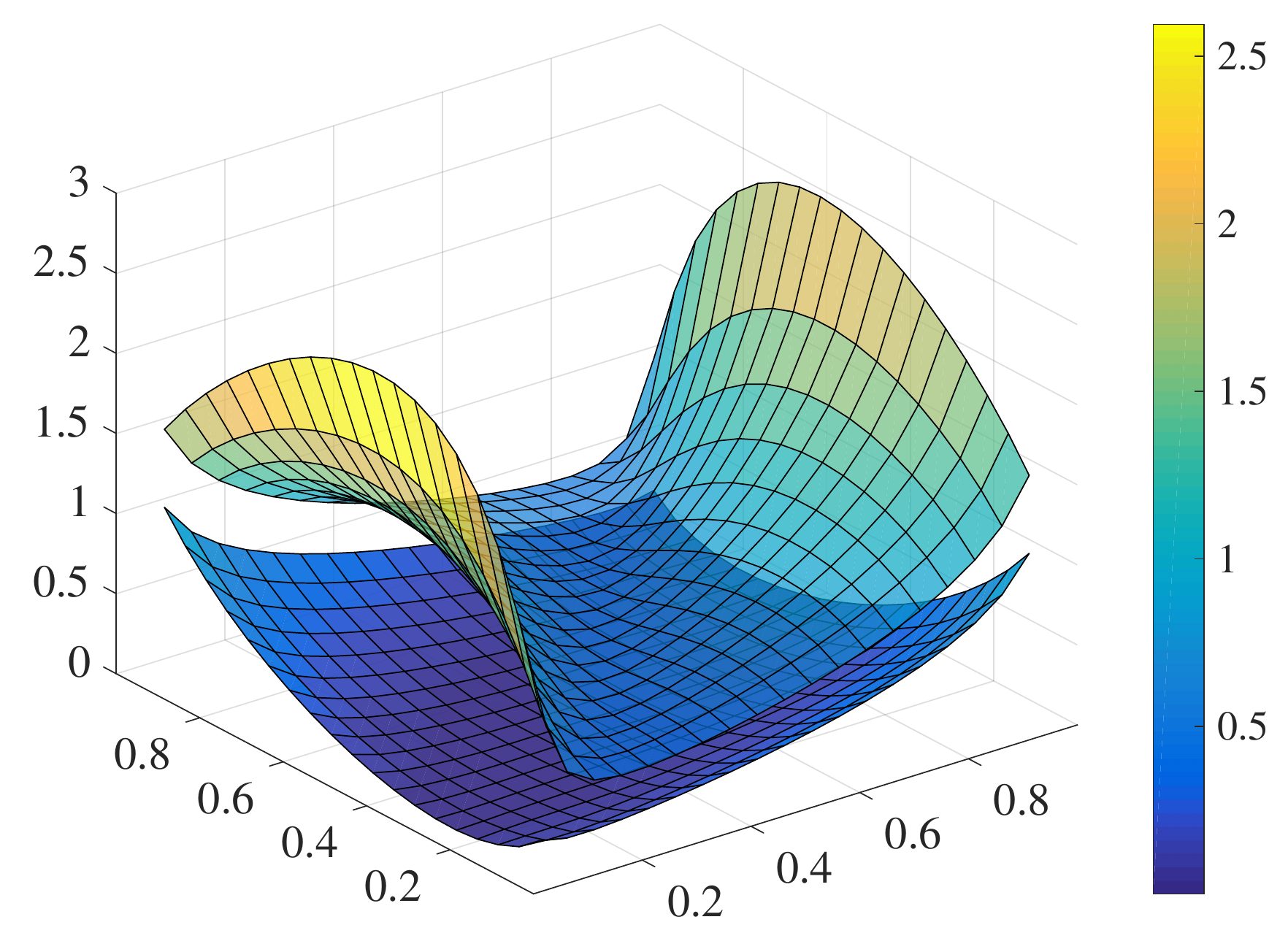}};
    \node at (2,-2.6) {\footnotesize $N_1$};
  \node at (-2.8,-2.5) {\footnotesize $N_2$};
  \end{tikzpicture}  
  \caption{\small This figure displays the difference between the upper and lower bounds in the Stein setting for the GAD channels $\cA_{0.8,N_1}$ and $\cA_{0.7,N_2}$. We vary the parameter $N_1,N_2 \in [0,1]$. The upper surface is the difference $D_{\max}(\cA_{0.8,N_1}\|\cA_{0.7,N_2}) - D(\cA_{0.8,N_1}\|\cA_{0.7,N_2})$. The lower surface is the difference $\widehat D_{\alpha(10)}(\cA_{0.8,N_1}\|\cA_{0.7,N_2}) - D(\cA_{0.8,N_1}\|\cA_{0.7,N_2})$.}
\end{figure}

\end{document}